\theoremstyle{plain}  
\newtheorem{theorem}{Theorem}[section]
\newtheorem{lemma}{Lemma}[section]
\newtheorem{prop}{Proposition}[section]
\newtheorem{rem}{Remark}[section]
\newcommand{\trueparam}{\theta^\dagger}
\newcommand{\truebeta}{\beta^\dagger}
\newcommand{\truesigma}{\sigma^\dagger}
\newcommand{\truegamma}{\gamma^\dagger}
\newcommand{\trueE}{\mathbb{E}_{\trueparam}}
\newcommand{\truedist}{\nu_{\trueparam}}
\newcommand{\est}{\hat{\theta}_n}
\newcommand{\estgamma}{\hat{\gamma}_n}
\newcommand{\f}{\frac}
\newcommand{\mrm}{\mathrm} 
\newcommand{\limit}{$n \to \infty$, $\Delta_n \to 0$ and $n \Delta_n \to \infty$}
\newcommand*{\filtration}[1]{\mathcal{F}_{t_{#1}}} 
\newcommand*{\sample}[2]{{#1}_{t_{#2}}}
\newcommand*{\contrast}[1]{\ell_{n, \Delta}(#1)}
\newcommand*{\contrastG}[1]{\ell_{n, \Delta}^{(1)}(#1)}
\newcommand*{\contrastW}[1]{\ell_{n, \Delta}^{(2)}(#1)}
\newcommand{\probconv}{\xrightarrow{\mathbb{P}_{\trueparam}}}
\newcommand*{\deriv}[2] {\partial_{#1_{#2}}}
\providecommand{\keywords}[1]{\textit{Keywords:} #1}
\newcommand{\ruby}[2]{
\leavevmode
\setbox0=\hbox{#1}
\setbox1=\hbox{\tiny #2}
\ifdim\wd0>\wd1 \dimen0=\wd0 \else \dimen0=\wd1 \fi
\hbox{
\kanjiskip=0pt plus 2fil
\xkanjiskip=0pt plus 2fil
\vbox{
\hbox to \dimen0{
\small \hfil#2\hfil}
\nointerlineskip
\hbox to \dimen0{\mathstrut\hfil#1\hfil}}}}
\renewcommand{\theenumi}{H\arabic{enumi}}
\renewcommand{\labelenumi}{(\theenumi)}
\title{Parameter Estimation with Increased Precision for Elliptic and Hypo-elliptic Diffusions}
\author[1]{Yuga Iguchi}
\author[1]{Alexandros Beskos}
\author[1]{Matthew Graham}
\affil[1]{University College London, London, UK}
\begin{document}
\maketitle

\begin{abstract}
This work aims at making a comprehensive contribution in the general area of parametric inference for discretely observed 
diffusion processes. Established approaches for likelihood-based estimation invoke a time-discretisation scheme for the approximation of the intractable transition dynamics of the Stochastic Differential Equation (SDE) model over finite time periods. The scheme is applied for a step-size $\delta>0$, that is either user-selected or determined by the data. Recent research 
has highlighted the critical effect of the choice of numerical scheme on the behaviour of derived parameter estimates in the setting of \emph{hypo-elliptic} SDEs. In brief, in our work, first, we develop two weak second order \emph{sampling schemes} (to cover both hypo-elliptic and elliptic SDEs) and 
produce a \emph{small time expansion} for the density of the schemes to form a proxy for the true intractable SDE transition density. 
Then, we establish a collection of analytic results for likelihood-based parameter estimates obtained via the formed proxies, thus 
providing a theoretical framework that 
showcases advantages from the use of the developed methodology for SDE calibration. 
We present numerical results from carrying out classical or Bayesian inference, for both elliptic and hypo-elliptic SDEs.
\end{abstract}

\keywords{CLT; data augmentation; hypo-elliptic diffusion; small time density expansion; stochastic differential equation.}




\section{Introduction} 

Our work is placed within the general framework of parametric inference for diffusion processes. 
Calibration approaches under broad observation regimes, in both Bayesian and classical settings, necessitate the use of a numerical scheme used as proxy for the underlying, typically intractable, Markovian dynamics of the model over finite time steps. Following the latest contributions in the area, we aim to make connections between approximation schemes (i.e., a research area mainly within the remit of stochastic analysis) and their impact on the accuracy and performance of induced likelihood-based inferential approaches (i.e., the field of statistical calibration for Stochastic Differential Equations (SDEs)). 

Let $(\Omega, \mathcal{F}, \{\mathcal{F}_t\}_{t \geq 0}, \mathbb{P})$ be a filtered probability space and $B = (B_{1,t}, \ldots, B_{d_R,t})_{t \ge 0}$ a  standard $d_R$-dimensional Brownian motion defined thereon, $d_R\ge 1$. We use the convention $B_{0,t} = t$. Consider the following general class of SDEs:
\begin{align}
\label{eq:model}
\begin{aligned}
dX_t = \left[ \begin{array}{c} dX_{R,t} \\ dX_{S,t}
\end{array} \right] = \left[ \begin{array}{c}  V_{R,0}(X_t, \beta) \\  V_{S,0}(X_t, \gamma)
\end{array} \right]dt  +  \sum_{1 \le k \le d_R}  \left[ \begin{array}{c}  V_{R,k} (X_t, \sigma) \\ \mathbf{0}_{d_S} \end{array} \right] \,dB_{k,t}, \qquad X_0 = x_0 \in \mathbb{R}^{d}, 
\end{aligned}
\end{align}
with $V_{R,0}:\mathbb{R}^{d}\times \Theta_{\beta}\to \mathbb{R}^{d_R}$, $V_{S,0}:\mathbb{R}^{d}\times \Theta_{\gamma}\to \mathbb{R}^{d_S}$,   $V_{R,k}:\mathbb{R}^{d}\times \Theta_{\sigma}\to \mathbb{R}^{d_R}$, $1\le k\le N$, where $d_S\ge 0$ and $d=d_R+d_S$. 
Also,   $\Theta_\beta\subseteq \mathbb{R}^{d_\beta}$, $\Theta_\gamma\subseteq \mathbb{R}^{d_\gamma}$, $\Theta_\sigma\subseteq \mathbb{R}^{d_\sigma}$, $d_\beta\ge 1$, $d_\gamma\ge 0$, $d_\sigma\ge 1$.  We set $\theta = (\beta, \gamma, \sigma)$ and $\Theta =   \Theta_\beta \times \Theta_\gamma \times \Theta_\sigma$. $X_{R,t}$ and $X_{S,t}$ denote the  rough and smooth components, respectively, of process $X_t$ that solves (\ref{eq:model}). For $d_S>0$, $\{ X_t \}_{t \geq 0}$ is a \emph{hypo-elliptic diffusion} if the law of $X_t$, $t > 0$, admits a density with respect  to (w.r.t.) the Lebesgue measure, while for $d_S=0$ the process is an elliptic one. 
For such a diffusion model, 
we consider the problem of estimating  $\theta \in \Theta$ given observations for state $X_t$ at the discrete-time instances 
$ 0 \leq t_0 < t_1 < \cdots < t_{n-1} < t_n$,  $n \ge 0
$. For simplicity, we assume that observation times are equidistant and set   
$\Delta: = t_{i} - t_{i-1}$,   $1\le i \le  n$.
%
%
%

Non-linear SDEs do not permit, in general, analytical solutions, thus one must rely on approximate, time-discretisation schemes to generate diffusion sample paths and obtain closed-form expressions for the SDE transition density. Development of approximation schemes with high accuracy can lead to effective parameter estimation methods, e.g.~to Bayesian data augmentation without excessive imputation of latent variables to cover a non-small step-size $\Delta$. Indeed, several works on Markov chain Monte-Carlo (MCMC) methods for diffusion processes suggest use of high order approximations, e.g.~the Milstein or a strong 1.5 order scheme \citep{kloe:92}, but practical application of such schemes  is restricted to limited classes of SDEs due to involvement of intractable random variables, such as the \emph{L\'evy area} $\textstyle{\int_0^\Delta (B_{k_1 ,s} d B_{k_2 ,s} - B_{k_2 ,s} d B_{k_1 ,s}}), \, k_1 \neq k_2$. Motivated by the above, we develop our work under the following strategy. First, we propose an explicit sampling scheme for the generation of sample paths for model (\ref{eq:model}), with higher order accuracy in distributional sense compared to classical (conditionally) Gaussian numerical schemes, e.g.~the Euler-Maruyama scheme. Then, we derive an Edgeworth-type density expansion of the above sampling scheme. Instances of the expansion will be used to provide proxies for the true intractable transition density and, ultimately, likelihood-based parameter estimates. Finally, we provide analytic results showcasing advantages of the obtained parameter estimates in both high/low-frequency observation regimes, where in the former case one assumes $\Delta\rightarrow 0$ and in the latter that $\Delta$ is fixed and not small enough.

One set of analytic results that we provide in this work  -- and in accordance with recent contributions involving hypo-elliptic SDEs by, e.g., \cite{dit:19, glot:21} -- will correspond to asymptotic results for the Maximum Likelihood Estimator (MLE), in the high-frequency regime, with $n\rightarrow \infty$ and $\Delta = \Delta_n\rightarrow 0$, and in a setting of \emph{complete} discrete-time observations. That is, the dataset $Y^{c}$ is 
as follows: 
\begin{align}
\label{eq:data} 
Y^{c} := \{X_{t_0}, X_{t_1},\ldots, X_{t_n}\}.
\end{align}
We refer to the dataset in (\ref{eq:data}) in the high-frequency setting as the `complete observation regime' in the sequel. 
We stress that the discretisation schemes developed in the above context are practically relevant for generic observation regimes, e.g.~within Bayesian data augmentation methods. Consideration of these latter methods motivates the derivation of further analytic results, now in a low-frequency regime. 
In particular, for elliptic SDEs,
we explicitly connect the weak order of the numerical scheme with the proximity between the true transition density over a period of size $\Delta$ and the transition density produced by  convolution of $M\ge 1$ proxy transition densities over time intervals of size $\delta=\Delta/M$. 

We briefly review recent works in the high-frequency regime. In the hypo-elliptic setting, a first main contribution is the work of \cite{poke:09} that uses an It\^o-Taylor expansion to add a noise term of size $\mathcal{O}(\Delta^{3/2})$ in the numerical scheme for the smooth component of SDE~(\ref{eq:model}), thus obtaining a non-degenerate (conditionally) Gaussian approximation of the true transition density. The class of models considered in \cite{poke:09} is restrictive and no analytical results are provided. 
\cite{dit:19} make a major contribution by starting  with a strong 1.5 order scheme 
before removing terms that do not affect  their asymptotic results.
Noise of size $\mathcal{O}(\Delta^{3/2})$ is propagated onto the smooth component (as in   \cite{poke:09}) and quantities of size $\mathcal{O}(\Delta^{2})$ are retained in the mean terms. These latter components remove the  bias for the estimates of drift parameters observed experimentally (within a Bayesian data augmentation setting) in \cite{poke:09}.  \cite{dit:19} provide analytic asymptotic results for a contrast estimator in the complete observation regime (\ref{eq:data}), under the condition $\Delta_n=o(n^{-1/2})$; this is a type of condition referred to as `rapidly increasing experimental design' in early investigations for elliptic SDEs in \cite{prak:88}.  The  estimation procedure in \cite{dit:19} is separated into two contrast functions, one for
$\gamma$ assuming knowledge of the true values of $(\beta, \sigma)$, and vice-versa for the contrast function for $(\beta, \sigma)$. The analysis provides marginal CLTs for the estimates of the parameters rather than an ideal joint CLT. The class of models covered is restricted to a scalar smooth component and diagonal diffusion coefficient matrix for the rough component. 
\cite{glot:20, glot:21} provide the most recent contributions. Closer to our purposes, \cite{glot:20} describe a non-adaptive approach (as opposed to one-step adaptive methods in \cite{glot:21}) and 
prove a CLT in the complete observation regime for $\Delta_n = o(n^{-1/2})$, as in \cite{dit:19}, but without strong restrictions on the class of models. 
In the elliptic setting, \cite{kessler1997estimation} developed contrast functions for the scalar case that, within regime (\ref{eq:data}), deliver estimates satisfying a CLT for $\Delta_n = o(n^{-1/q})$, for any integer $q\ge 2$. 
The method  in \cite{kessler1997estimation} is based on the use of a Gaussian density for the approximation of the transition density of the SDE  for small $\Delta$, together with high order  expansions in $\Delta$ for the mean and variance of the SDE transitions. 
 \cite{uchi:12} extended such results to general model dimension based on multi-step adaptive estimates.

Our main contributions can be summarised as follows:
\vspace{-0.3 \baselineskip} 
\begin{description}
\item[(a)] 
In the hypo-elliptic case
  we propose a new weak second order sampling scheme. 
The scheme is explicit 
as one can produce SDE sample paths by generating Gaussian variates. 
A related weak second order scheme 
is also put forward for the elliptic case. 
%
\item[(b)]
The above sampling scheme possess, in general, an intractable transition density 
due to involvement of polynomials of Gaussian variates. 
We derive a closed-form \emph{small time density expansion} of the scheme by making use of tools from Malliavin calculus.   
%
\item[(c)]
In the high-frequency regime, 
we develop our contrast function by selecting appropriate high order terms from the density expansion.  
We then prove a joint CLT for the deduced parameter estimates under $\Delta_n = o (n^{-1/3})$, which is the largest step-size permitted in the hypo-elliptic case to the best of our knowledge. In particular, a main improvement by the new estimator --  when compared with the existing estimator satisfying the CLT under $\Delta_n = o(n^{-1/2})$ -- is observed in the estimation of diffusion parameter $\sigma$. 
%
\item[(d)] 
In the low-frequency regime, 
we choose particular 
high order terms from the small time density expansion to develop our local weak third order transition density scheme. 
We then study the use of the developed scheme in the practical setting 
where its density is iteratively
applied with user-specified steps of size $\delta= \Delta/M$, $M\ge 1$, to cover fixed inter-observation times of length $\Delta$. We prove that the induced bias is $\mathcal{O} (M^{-2})$ for elliptic SDEs. 
%
\item[(e)] We show numerical examples illustrating the benefits of the new schemes in applications with hypo-elliptic SDEs in the high-frequency regime. In the context of Bayesian data augmentation, in the low-frequency regime,
we apply the sampling scheme to an elliptic model and observe that the bias 
in the induced posterior is reduced when comparing with the Euler-Maruyama scheme. 
\end{description} 
\vspace{-0.3 \baselineskip}  
The remaining part of the paper is organised as follows.
%
%
Section \ref{sec:scheme} puts forward our sampling schemes for elliptic and hypo-elliptic SDEs. 
Section \ref{sec:density_scheme} derives the closed-form small time density expansion for the sampling scheme.  Section \ref{sec:main} provides a collection of analytical results for parameter inference carried out via appropriate choice of high order terms from the developed density expansion, 
both in the high and low-frequency regimes.
Section \ref{sec:sim_study} provides numerical experiments related to the analytic results.
Section \ref{sec:conclusions} concludes our work. 
\\[-0.2cm]

\noindent \textbf{Notation.}
We define: 
\begin{align*}
V_0 (\cdot, \beta,\gamma)
=\big[\,V_{R,0}(\cdot, \beta)^{\top}, V_{S,0}(\cdot, \gamma)^{\top}\,\big]^{\top}, \quad 
V_k (\cdot, \sigma) 
=\big[\,V_{R,k}(\cdot, \sigma)^{\top}, \mathbf{0}_{d_S}^\top \,\big]^{\top},
\quad  1 \le k \le d_R. 
\end{align*}
%
\noindent We set 
$V_R= V_R(x , \sigma) =  \big[ V_{R,1}(x, \sigma),\ldots, V_{R, d_R}(x, \sigma) \big]\in \mathbb{R}^{d_R\times d_R}$, and define:
%
\begin{align}
\label{eq:VAR}
a_R(x , \sigma) = V_R(x, \sigma) V_R(x,  \sigma)^{\top}\in \mathbb{R}^{d_R\times d_R}.
\end{align}
%
Let 
$C_b^{\infty} (\mathbb{R}^n ;  \mathbb{R}^m)$ 
(resp.~$C_p^{\infty} (\mathbb{R}^n ;  \mathbb{R}^m)$), $m,n\ge 1$, be the space of smooth (i.e.~infinitely differentiable) bounded (resp.~of polynomial growth) functions $f :\mathbb{R}^n \to \mathbb{R}^m$  with bounded derivatives (resp.~with derivatives of polynomial growth).
We write $\textstyle \partial_{u}  =  \big[ \tfrac{\partial }{\partial u^{1}}, \ldots,  \tfrac{\partial }{\partial u^{n} } \big]^{\top}, \;  
\partial^{2}_{u}   = \partial_{u} \partial_{u}^{\top} \equiv \big(\tfrac{\partial^{2}}{\partial u^{i}\partial u^{j} }\big)_{i,j=1}^{n}$ for the standard differential operators acting upon maps $\mathbb{R}^{n}\to \mathbb{R}$, $n\ge 1$. 
%
 %
As above with $u=(u^{1},\ldots,u^{n})^{\top}\in\mathbb{R}^{n}$, we use superscripts to specify co-ordinates of a vector when needed. 
For sufficiently smooth functions 
$f = f(x)=f(x_R,x_S):\mathbb{R}^d \to \mathbb{R}$,  
we define:  
\begin{align*}
   \hat{V}_0 f =   \hat{V}_0 f (x, \theta) 
  &:=  \big\langle V_0, \partial_x f \big\rangle
  + \tfrac{1}{2} \sum_{1 \le k \le d_R} V_{R,k}^{\top} \big(\partial_{x_R}^{2} f\big) V_{R,k}; \\ 
  \hat{V}_k f = \hat{V}_k f (x, \sigma)
  &:= \big\langle V_{k},  \partial_{x} f  \big\rangle
  = \big\langle V_{R,k},  \partial_{x_R} f  \big\rangle , \qquad 1\le  k \le d_R.  
\end{align*}
%
 $\hat{V}_0$,  $\hat{V}_k$ apply to vector-valued functions by separate consideration of the scalar co-ordinates.
%
%
Note that the original SDE (\ref{eq:model}) is equivalently given as the Stratonovitch-type SDE with drift function defined as: 
%
%
\begin{align}
\label{eq:VO}
 \widetilde{V}_0 (x ,  \theta) = V_0(x ,  \theta) - \tfrac{1}{2} \sum_{1 \le k \le d_R} \hat{V}_k V_0 (x ,  \theta). 
\end{align}
We define the vector-valued function:
\begin{equation} \label{eq:commutator0}
 [V_k, V_l] = \hat{V}_k V_l (x ,  \theta) - \hat{V}_l V_k (x ,  \theta), \quad 1\le k,l\le d_R. 
\end{equation}
%
%
We write $\trueparam = (\truebeta, \truegamma, \truesigma) \in \Theta$ for the (assumed unique) true value of $\theta = (\beta, \gamma, \sigma)$. 
We denote iterated stochastic integrals w.r.t.~Brownian paths as:   
\vspace{-0.4 \baselineskip} 
\begin{align}
\label{eq:II}
I_\alpha (t) =  \int_{0}^{t}\cdots \int_{0}^{t_2} dB_{\alpha_1, t_1} \cdots dB_{\alpha_l, t_l}, \quad t >  0, \ \  \alpha \in \{0, 1, \ldots, d_R\}^l, \  l \ge 0. 
\end{align} 
We will sometimes write $\mathbb{P}_{\theta}$, $\mathbb{E}_{\theta}$ to emphasise the involvement of $\theta$ in calculations. Similarly, we will write $\mathbb{P}_{\theta^\dagger}$, $\mathbb{E}_{\theta^\dagger}$
to stress, when needed, that derivations are under the true parameter value,
and write 
$\textstyle{\probconv, \;  \xrightarrow{\mathcal{L}_{\theta^{\dagger}}}}
$
to express convergence in probability and in distribution, respectively,  under $\theta^{\dagger}$. 
For a multi-index $\alpha=(\alpha_1,\ldots,\alpha_l) \in \{1, \ldots, d\}^{l}, \, l \ge 1$ 
and a sufficiently smooth $f:\mathbb{R}^{d} \to \mathbb{R}$, 
we write
$\textstyle{\partial^\xi_\alpha f(\xi) := {\partial^{l}f(\xi)}/{\partial \xi_{\alpha_1}\cdots\partial \xi_{\alpha_l}}}, \; \xi \in \mathbb{R}^d$.  
\section{Sampling schemes for elliptic \& hypo-elliptic diffusions} 
\label{sec:scheme}
We propose explicit sampling schemes for the SDE (\ref{eq:model}). 
We introduce the conditions so that the law of $X_t$, $t>0$, admits a smooth Lebesgue density in Section \ref{sec:basic_con}. Then, we present the sampling schemes in Section \ref{sec:sampling_scheme}. Hereafter, we make use of the notation $X_t^x$, $t > 0$, when needed to emphasise the initial state $X_0 = x \in \mathbb{R}^d$. 
\subsection{Basic assumptions for diffusion class} 
\label{sec:basic_con}
We introduce basic conditions to characterise the SDE (\ref{eq:model}) we consider in this work. 
\vspace{-0.5\baselineskip} 
\begin{enumerate}
\renewcommand{\theenumi}{H\arabic{enumi}}
\renewcommand{\labelenumi}{(\theenumi)}
\item \label{assump:param_space}
$\Theta$ is a compact subset of $\mathbb{R}^{d_\theta}$. For each $x \in \mathbb{R}^d$, and any multi-index $\alpha \in \{1, \ldots, d\}^l$, $l\ge 0$,
the function 
$\theta \mapsto  \partial^x_{\alpha} V_j^i (x,  \theta), \;  0\le j \le  {d_R}, \; 1 \le i \le d,$
is continuous.
\item \label{assump:coeff}
For each $\theta \in \Theta$, $V_j (\cdot , \theta) \in C_b^{\infty} (\mathbb{R}^d ; \mathbb{R}^d)$, $0 \le j \le d_R$.
\item \label{assump:hypo1}
The matrix $a_R = a_R(x  ;  \sigma)$ defined in (\ref{eq:VAR}) is positive-definite for all $(x,\theta) \in \mathbb{R}^d\times \Theta$. 
\item \label{assump:hypo2}
For any $\theta \in \Theta$ and any $x\in \mathbb{R}^{d}$, the set of the $2d_R$ vectors:
\vspace{-0.2\baselineskip} 
\begin{align*}
\big\{  V_1 (x,  \theta), \ldots, V_{d_R} (x ,  \theta), 
[\widetilde{V}_0, V_1] (x ,  \theta), \ldots,  [\widetilde{V}_0, V_{d_R}](x ,  \theta)   \big\}
\vspace{-0.2cm} 
\end{align*} 
spans $\mathbb{R}^d$, 
where $\widetilde{V}_0 = \widetilde{V}_0 (x ,  \theta)$ is defined in (\ref{eq:VO}) and  
$[\widetilde{V}_0, V_j] (x , \theta)$ is defined as  (\ref{eq:commutator0}).
\end{enumerate}
\vspace{-0.5\baselineskip}
(\ref{assump:hypo2}) is relevant for hypo-elliptic SDEs and is stronger than  H\"ormander's condition \citep{nual:06}, thus the law of $X_t^x$ is absolutely continuous w.r.t.~the Lebesgue measure for any $t>0$, $(x, \theta) \in \mathbb{R}^d \times \Theta$. H\"ormander's condition allows for iterated Lie brackets of order larger than one (used here), e.g.~of order two $[\widetilde{V}_0, [\widetilde{V}_0,V_k]]$, $1\le k\le d_R$, or above, 
to obtain vectors spanning $\mathbb{R}^{d}$. We require  (\ref{assump:hypo2}) so that certain discretisation schemes arising in our methodology also have a Lebesgue density.
Note that (\ref{assump:coeff}), (\ref{assump:hypo2}) combined imply that 
the density of $X_t^x$, $t>0$, 
is infinitely differentiable \citep[Theorems 2.3.2, 2.3.3]{nual:06}. 
%
%
\subsection{Approximate sampling scheme}
\label{sec:sampling_scheme}
We propose approximate sampling schemes for elliptic/hypo-elliptic SDE (\ref{eq:model}) that satisfy the following three Criteria: 
\vspace{-0.5\baselineskip} 
\begin{enumerate}[leftmargin=0.8cm]
\item[(i)]   
The scheme is  explicit. 
%
%
\item[(ii)]   
The scheme has local weak third order accuracy, i.e.~for test functions $\varphi \in C_p^{\infty} (\mathbb{R}^d ;  \mathbb{R})$ there exists a constant $C=C(x , \theta) >0$ such that: 
$$
\big | \mathbb{E}_{\theta} [\varphi ( X_\Delta^x )]  -  \mathbb{E}_{\theta}  [\varphi ( \bar{X}_\Delta^x )]  \big| \leq C \Delta^3, 
  \quad \Delta \geq 0, 
$$
%
where $\bar{X}_\Delta^x$ denotes the approximated SDE position after a single step of size $\Delta > 0$ with an initial value $\bar{X}_0^x = x \in \mathbb{R}^d$. 
\item[(iii)] The distribution of $\bar{X}_{\Delta}^x$ admits a Lebesgue density. 
\end{enumerate}
\vspace{-0.5\baselineskip}
Criterion (ii) implies that the sampling scheme is a weak second order approximation of $\{X_t\}$ in the following sense.  Let $T >0$, $n\ge 1$, and consider the partition $0 = t_0 < t_1  < \cdots <  t_{n-1} < t_n = T$ with $t_i - t_{i-1} = T/n$,  $1\le i\le n$. Then, for $\varphi \in C_p^{\infty} (\mathbb{R}^d ; \mathbb{R})$, there exists a constant $C=C(x, T, \theta)>0$ such that  
$
  \big| \mathbb{E}_{\theta} 
  [
  \varphi (X_T^{x})]  -  \mathbb{E}_{\theta} [\varphi (\bar{X}_T^{(n), x}) ] 
  \big|
  \leq {C}/{n^2},
$
where $\bar{X}_T^{(n), x}$ is obtained after $n$ iterations of a single step of the scheme, starting from the initial state $x \in \mathbb{R}^d$. %
%
Criterion (iii) is required so that a well-defined contrast function can be obtained and estimates of the parameters be produced. 
%
%
%
%
%
\subsubsection{Sampling scheme for elliptic diffusions} 
\label{sec:review_milstein}
Before writing down our scheme satisfying Criteria (i)-(iii) above for elliptic diffusions, we present the ideas underlying the construction of a weak second order discretisation scheme (equivalently, a local weak third order scheme) following the moment matching techniques in \cite[Section~2.1.2]{mil:21}. 
We apply an It\^o-Taylor expansion 
\citep{kloe:92} for SDE model~(\ref{eq:model}) with $d_S=0$ and obtain that $X_{e,\Delta}^x = \hat{X}_{e,\Delta}^x + \rho_{e,\Delta} (x, \theta)$, where
%
\begin{align} 
\begin{aligned} \label{eq:expansion}
 \hat{X}_{e,\Delta}^x
 = &\; x 
   + V_{R, 0} (x ,  \theta) \Delta  
   + \sum_{1 \le k \le d_R} V_{R, k} (x ,  \theta) B_{k,\Delta} 
   + \sum_{0 \leq k_1, k_2 \leq d_R} \hat{V}_{k_1} {V}_{R, k_2} (x ,  \theta) I_{(k_1, k_2)} (\Delta), 
\end{aligned}
\end{align}
and $\rho_{e,\Delta} (x, \theta)$ is a residual term involving stochastic iterated integrals of order three and above. Then, it can be shown  \citep{mil:21} that for test functions $\varphi \in C_p^\infty (\mathbb{R}^{d_R}; \mathbb{R})$, $\hat{X}_\Delta^x$ has local weak third order accuracy, though it is not an explicit scheme due to the presence of  $I_{(k_1, k_2)}(\Delta)$, $1 \le k_1, k_2 \le d_R$, $k_1 \neq k_2$, in its expression. Thus, we aim at replacing $I_{(k_1, k_2)} (\Delta)$ in (\ref{eq:expansion}) by some tractable random variables $\{\xi_{k_1 k_2, \Delta} \}_{k_1, k_2}$ so that a scheme 
using $\xi_{k_1 k_2, \Delta}$ still achieves  local weak third order convergence. Lemma 2.1.5 in \cite{mil:21} states 
the local weak third order accuracy is preserved if the random variables $\{\xi_{k_1 k_2, \Delta} \}_{k_1, k_2}$ have finite moments up to 6th order and satisfy a collection of moment conditions.

Thus, we write the sampling scheme for elliptic diffusions as: 
\begin{align} 
\label{eq:scheme_ell}
\bar{X}_{e,\Delta}^x 
 = &\; x
     +  V_{R, 0} (x ,  \theta)  \Delta 
     + \sum_{1 \le k \le d_R} V_{R, k} (x ,  \theta) B_{k,\Delta}  
     +  \sum_{\substack{0 \leq k_1, k_2 \leq {d_R}}} \hat{V}_{k_1}  V_{R, k_2}  (x , \theta)\,\xi_{k_1 k_2,\Delta},   
\end{align} 
with the following specification of the random variables $\xi_{k_1 k_2,\Delta}$, $0 \leq k_1, k_2 \leq d_R$, satisfying the moment conditions given in Lemma 2.1.5 \cite{mil:21}: for $1 \le k_1, k_2 \le d_R$,
%
%
\begin{align}
\begin{aligned}
 \label{eq:rv}
 & \xi_{00, \Delta} = \tfrac{\Delta^2}{2}, \qquad 
 \xi_{k_10, \Delta} =  \xi_{0 k_1,\Delta} = \tfrac{1}{2} B_{{k_1}, \Delta} \Delta,  \qquad 
  \xi_{k_1 k_1, \Delta} = \tfrac{1}{2} B_{k_1, \Delta} B_{k_1, \Delta}  -  \tfrac{1}{2} \Delta;  \\
 & \qquad \xi_{k_1k_2, t} = \tfrac{1}{2} B_{k_1, \Delta} B_{k_2, \Delta}  +   \tfrac{1}{2} B_{k_1, \Delta} \widetilde{B}_{k_2, \Delta} \cdot \mathbf{1}_{k_1 < k_2} - \tfrac{1}{2} B_{k_2, \Delta}  \widetilde{B}_{k_1, \Delta} \cdot \mathbf{1}_{k_1 > k_2}. 
%
%
\end{aligned}
\end{align} 
Here, $\widetilde{B} := (\widetilde{B}_{2,t},  \ldots, \widetilde{B}_{d_R, t})_{t \geq 0}$ denotes a $(d_R-1)$-dimensional standard Brownian motion,  independent of $B$. 
%
%
Our  scheme is marginally different from the one appearing in  \cite{mil:21}, with the former scheme using $(2d_R-1)$ Gaussian variates, rather than $2d_R$ ones involved in the latter. The scheme is explicit as it only involves Gaussian variates -- thus Criterion~(i) stated above is satisfied. 
Criterion~(ii) is satisfied, given the above discussion. Criterion~(iii) is clearly satisfied under (\ref{assump:hypo1}).
\begin{rem} \label{rem:third_criteria}
Scheme (\ref{eq:scheme_ell}) under the choice of variates in (\ref{eq:rv}) can also be used in a hypo-elliptic setting as an explicit weak second order scheme. However, such a specification 
fails to meet Criterion (iii). Some more details are given in Remark \ref{rem:degeneracy}. Hence, in this work, we propose separate sampling schemes for the elliptic/hypo-elliptic classes. 
\end{rem}
In (\ref{eq:scheme_ell}) we introduced the extra subscript `$e$' for sampling scheme $\bar{X}_{e,\Delta}^x$ to stress that it involves the elliptic case (not the hypo-elliptic one). We adopt a similar convention for mathematical expressions, when needed, in the rest of the paper.  
\subsubsection{Sampling scheme for hypo-elliptic diffusions}  \label{sec:scheme_hypo}
We treat hypo-elliptic SDEs, thus $d_S \geq 1$.
We make use of an It\^o-Taylor expansion 
where for the rough component we retain all second order integrals and for the smooth one  we retain the second order and third order integrals for which the outside integrator is $dt$.
That is, we have $X_{\Delta}^x = \hat{X}_{\Delta}^x + \rho_{\Delta} (x, \theta)$ where:
%
\begin{align*} 
\hat{X}_{R,\Delta}^x  
& =  x_{R}
 + V_{R, 0} (x, \theta) \Delta 
 + \sum_{1 \le k \le d_R} 
 V_{R,k}  (x,  \theta)\,B_{k, \Delta}  
 + \sum_{\substack{0 \leq k_1, k_2 \leq {d_R}}} \hat{V}_{k_1}  V_{R, k_2}  (x ,  \theta) \,I_{(k_1,k_2)}(\Delta); \nonumber \\ 
\hat{X}_{S,\Delta}^x  & = x_{S}  
   + V_{S, 0} (x ,  \theta)\, \Delta 
   + \sum_{0 \le k \le d_R} \hat{V}_k V_{S,0} (x , \theta) \, I_{(k,0)}( \Delta) 
   + 
   \sum_{\substack{0 \le k_1, k_2 \le {d_R} \\ \mrm{s.t.} \,k_1 = k_2 \neq 0}} \hat{V}_{k_1} \hat{V}_{k_2} V_{S,0} (x , \theta)\, I_{(k_1, k_2,0)}(\Delta). 
%
%
\end{align*} 

%
%
As with the elliptic case, we next replace the non-explicit iterated integrals $I_{(k_1, k_2)} (\Delta)$, $I_{(k_1, k_2, 0)}(\Delta)$, $1 \le k_1, k_2 \le d_R$, $k_1 \neq k_2$, 
with explicit variates, based on moment conditions that ensure that the resulted scheme remains a weak second order one. The tools in \cite{mil:21} cover only the setting of double integrals, thus we need to carry out an  extension of such methodology in the presence of triple integrals. 
%
The extension (see the proof of Proposition \ref{prop:weak_app_hypo} stated below for details) gives rise to the sampling scheme
%
$\bar{X}_\Delta^x 
=  
[ ( \bar{X}_{R,\Delta}^{x} )^\top, (\bar{X}_{S,\Delta}^{x})^\top ]^\top$, $x = [x_R^\top, x_S^\top]^\top
\in \mathbb{R}^d$ for hypo-elliptic SDEs determined as follows:
\vspace{-0.5\baselineskip}  
\begin{align} 
\begin{aligned} \label{eq:scheme_hypo1}
\bar{X}_{R,\Delta}^x  
& =  x_{R}
 + V_{R, 0} (x, \theta) \Delta 
 + \sum_{1 \le k \le d_R} V_{R,k}  (x,  \theta)\,B_{k, \Delta}  
 + \sum_{\substack{0 \leq k_1, k_2 \leq {d_R}}} \hat{V}_{k_1}  V_{R, k_2}  (x ,  \theta) \,\zeta_{k_1 k_2,\Delta}; \\ 
\bar{X}_{S,\Delta}^x  & = x_{S}  
   + V_{S, 0} (x ,  \theta)\, \Delta 
   + \sum_{0 \le k \le d_R} \hat{V}_k V_{S,0} (x , \theta) \, \zeta_{k 0, \Delta}  
   + \sum_{\substack{0 \le k_1, k_2 \le {d_R} \\ \mrm{s.t.} \,k_1 = k_2 \neq 0}} 
   \hat{V}_{k_1} \hat{V}_{k_2} V_{S,0} (x , \theta)\,\eta_{k_1 k_2,\Delta}. 
\end{aligned}
\end{align} 
Variables $\zeta_{k_1 k_2, \Delta}$  are such that $\zeta_{00 ,\Delta} =  {\Delta^2}/{2}$ and for $1\le k_1,k_2\le d_R$:
%
%
\vspace{-0.3 \baselineskip}
\begin{align}
\label{eq:zetas} 
\zeta_{0 k_1, \Delta} = I_{(0, k_1)} (\Delta), \quad 
\zeta_{k_1 0, \Delta} =  
I_{(k_1, 0)} (\Delta), \quad 
\zeta_{k_1 k_2, \Delta} =  \xi_{k_1 k_2, \Delta}, 
\end{align}
for $\xi$'s as determined earlier in (\ref{eq:rv}) for the elliptic case.
Variables $\eta_{k_1 k_2 , t}$ are required to satisfy the following moment conditions, with $1 \leq k_1, k_2, k_3, k_4 \leq {d_R}$: 
{
\begin{align*} 
 & \mathbb{E} [\eta_{k_1 k_2 , \Delta}] = \mathbb{E} [I_{(k_1, k_2, 0)}(\Delta)] = 0, \quad
   \mathbb{E} [\eta_{k_1 0, \Delta}] = \mathbb{E} [I_{(k_1, 0, 0)}(\Delta)] = 0;  \\
 & \mathbb{E} [\eta_{0 k_1, \Delta}] = \mathbb{E} [I_{(0, k_1, 0)}(\Delta)] = 0,  \quad  
   \mathbb{E} [\eta_{k_1 k_2 , \Delta} B_{k_3,\Delta}] = \mathbb{E} [I_{(k_1, k_2, 0)}(\Delta) B_{k_3, \Delta}] = 0;   \\
 & \mathbb{E} [\eta_{k_1 0, \Delta} B_{k_2,\Delta}] 
 = \mathbb{E} [\eta_{0k_1,\Delta} B_{k_2,\Delta}]  
 = \mathbb{E} [I_{(k_1, 0, 0)}(\Delta) B_{k_2, \Delta}] 
 = \mathbb{E} [I_{(0, k_1, 0)}(\Delta) B_{k_2, \Delta}] 
 = \tfrac{\Delta^3}{6} \times  \mathbf{1}_{k_1 = k_2}; \\
%
%
 & \mathbb{E} [\eta_{k_1 0, \Delta} \zeta_{k_2 0, \Delta}] = \mathbb{E} [I_{(k_1, 0, 0)}(\Delta) I_{(k_2,0)}(\Delta)] = \tfrac{\Delta^4}{8} \times\mathbf{1}_{k_1 = k_2}; \\
 & \mathbb{E} [\eta_{0 k_1, \Delta} \zeta_{k_2 0, \Delta}] = \mathbb{E} [I_{(0, k_1, 0)}(\Delta) I_{(k_2,0)}(\Delta)] = \tfrac{\Delta^4}{6}\times \mathbf{1}_{k_1 = k_2}; \\     
 & \mathbb{E} [\eta_{k_1 k_2 , \Delta} \zeta_{k_3 k_4 , \Delta} ] 
    = \mathbb{E} [I_{(k_1, k_2, 0)}(\Delta) I_{(k_3, k_4)} (\Delta)] = \tfrac{\Delta^3}{6} \times\mathbf{1}_{k_1 = k_3, k_2 = k_4}; \\ 
 & \mathbb{E} [\eta_{k_1 k_2 , \Delta} \eta_{k_3 k_4, \Delta} ] 
    = \mathbb{E} [I_{(k_1, k_2, 0)}(\Delta) I_{(k_3, k_4, 0)}(\Delta)  ] = \tfrac{\Delta^4}{12}\times \mathbf{1}_{k_1 = k_3, k_2 = k_4}.
\end{align*}}
A particular choice for the $\eta$'s that we adopt for the rest of the paper is the one below:
\begin{align*}
 \eta_{k_1  0 , \Delta} & =  \tfrac{\Delta}{2} \zeta_{k_1 0, \Delta} - \tfrac{\Delta^2}{12} B_{k_1,\Delta},
\ \ 
 \eta_{0  k_1, \Delta}  =  \Delta \zeta_{k_1 0, \Delta} - \tfrac{\Delta^2}{3} B_{k_1,\Delta}, 
\ \ 
\eta_{k_1 k_2, \Delta} = \tfrac{1}{3} \zeta_{k_1 k_2, \Delta}\Delta  - \widetilde{\eta}_{k_1 k_2,\Delta}\Delta. 
\end{align*}
We have set: 
\begin{align*}  
\widetilde{\eta}_{k_1 k_2, \Delta}  
= &\;
\begin{cases}
\tfrac{1}{6 \sqrt{2}} ( \widetilde{B}_{k,\Delta} \widetilde{B}_{k, \Delta}  -   \Delta ),  & k_1 = k_2 = k;   \\[0.2cm]
\tfrac{1}{6 \sqrt{2}} ( \widetilde{B}_{k_1, \Delta} \widetilde{B}_{k_2,\Delta}  
+ \widetilde{B}_{k_1,\Delta} {W}_{k_2, \Delta}\cdot \mathbf{1}_{k_1 < k_2} 
 -   \widetilde{B}_{k_2, \Delta} {W}_{k_1, \Delta}
\cdot \mathbf{1}_{k_1 > k_2})  & k_1 \ne k_2; \\[0.2cm] 
\end{cases} 
\end{align*}
where $\widetilde{B} := (\widetilde{B}_{1, t},  \ldots, \widetilde{B}_{{d_R}, t})_{t \geq 0}$ and ${W} = ({W}_{2, t},  \ldots,  {W}_{{d_R}, t})_{t \geq 0}$ are standard Brownian motions, mutually independent and independent of $B$. 
\begin{rem}
   Generation of the Gaussian  variables $\zeta_{0 k, \Delta}$ and $\zeta_{k0, \Delta} = B_{k, \Delta}\, \Delta - \zeta_{0 k, \Delta}$ in~(\ref{eq:zetas}) must take under consideration the dependency structure amongst  $\{B_{k, \Delta}\}$, $\{\zeta_{0 k, \Delta}\}$. That is, we have: 
   %
   $
  \mathbb{E} [B_{k_1, \Delta} \zeta_{0 k_2, \Delta}] 
  = {\Delta^2}/{2} \cdot \mathbf{1}_{k_1 = k_2}, \
  \mathbb{E} [\zeta_{0 k_1, \Delta} \zeta_{0 k_2, \Delta}]
  = {\Delta^3}/{3} \cdot \mathbf{1}_{k_1 = k_2} 
   $
   for $1\le k_1, k_2 \le  d_R$. Thus, one can generate $\{ B_{k, \Delta} \}$, 
   $\{ \zeta_{0 k, \Delta} \}$ and $\{ \zeta_{k 0, \Delta} \}$ as follows: 
   \begin{align*}
       B_{k, \Delta} = \Delta^{1/2}\, Z_k, \quad 
       \zeta_{0 k, \Delta} {=}  
       \tfrac{\Delta^{3/2}}{2}
       \big( Z_k + \tfrac{1}{\sqrt{3}} \widetilde{Z}_k \big), \quad 
       \zeta_{k 0, \Delta} {=} 
       \Delta^{3/2} Z_k   - \zeta_{0 k, \Delta}, \qquad 1\le k \le d_R,
   \end{align*}
   where $Z_k, \widetilde{Z}_k \sim \mathscr{N}(0,1)$, $1\le k \le d_R$, are i.i.d.~random variables. 
\end{rem}

We have developed scheme (\ref{eq:scheme_hypo1}) that is explicit (Criterion (i)) and, as stated in the proposition below, it has local weak third order accuracy (Criterion (ii)), with the proof given in Section \ref{appendix:weak_app_hypo} of Supplementary Material. Criterion (iii) is discussed after the following proposition.
\begin{prop} 
\label{prop:weak_app_hypo}
Let $d_S \geq 1$ and consider $\varphi \in C_p^{\infty} (\mathbb{R}^d ; \mathbb{R})$. Under conditions (\ref{assump:param_space})--(\ref{assump:coeff}), for any $(x, \theta) \in \mathbb{R}^d \times \Theta$, there exist constants $C>0$, $q \ge 1$ such that:
\begin{align*}
  \big| \mathbb{E}_{\theta} [\varphi (X_\Delta^x)] - \mathbb{E}_{\theta} [\varphi (\bar{X}_\Delta^x)] \big| \leq C (1 + |x|^q) \Delta^3. 
\end{align*}
%
%
\end{prop}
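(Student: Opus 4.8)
The plan is to bound the one-step weak error by the triangle inequality, inserting the truncated It\^o--Taylor expansion $\hat{X}_\Delta^x$ introduced just before the statement as an intermediate object:
\[
\big| \mathbb{E}_{\theta}[\varphi(X_\Delta^x)] - \mathbb{E}_{\theta}[\varphi(\bar{X}_\Delta^x)] \big|
\le \big| \mathbb{E}_{\theta}[\varphi(X_\Delta^x)] - \mathbb{E}_{\theta}[\varphi(\hat{X}_\Delta^x)] \big|
+ \big| \mathbb{E}_{\theta}[\varphi(\hat{X}_\Delta^x)] - \mathbb{E}_{\theta}[\varphi(\bar{X}_\Delta^x)] \big|,
\]
and to bound each summand by $C(1+|x|^q)\Delta^3$ separately. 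The first term is a truncation-error estimate for the stochastic Taylor expansion; the second is a moment-matching estimate that reduces to the collection of identities listed for the variates $\zeta_{k_1 k_2,\Delta}$ and $\eta_{k_1 k_2,\Delta}$. Throughout, I would use that under (\ref{assump:coeff}) the coefficients $V_j$ are in $C_b^\infty$, so all moments of $X_\Delta^x$ and of $\hat{X}_\Delta^x$ are finite and grow at most polynomially in $x$, while $\varphi\in C_p^\infty$ has polynomially growing derivatives; these facts supply the factor $(1+|x|^q)$. Note that only (\ref{assump:param_space})--(\ref{assump:coeff}) are needed, as this step does not require existence of a density.

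For the first term I would invoke the weak-order truncation theory for the It\^o--Taylor expansion (\cite{kloe:92}). The residual $\rho_\Delta(x,\theta) = X_\Delta^x - \hat{X}_\Delta^x$ consists, by construction, of iterated integrals $I_\alpha(\Delta)$ whose multi-indices $\alpha$ have length plus number of zero entries strictly larger than those retained in $\hat{X}_\Delta^x$, multiplied by coefficient maps built from repeated application of $\hat{V}_{k}$ to $V_{R,0}, V_{S,0}, V_{R,k}$. The retained set of multi-indices is precisely the one ensuring that the weak expansion of $\mathbb{E}_\theta[\varphi(X_\Delta^x)]$ in integer powers of $\Delta$ agrees up to and including order $\Delta^2$; all omitted terms average to zero at orders $\Delta^0,\Delta,\Delta^2$ and first contribute at $O(\Delta^3)$. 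Concretely, I would Taylor-expand $\varphi$ around $\hat{X}_\Delta^x$, take expectations, and use $L^p$ bounds on the omitted iterated integrals (which are $O(\Delta^{5/2})$ in $L^2$ once the zero-mean structure of the rough component's order-$\Delta^{3/2}$ triple integrals is accounted for through chaos orthogonality) together with the polynomial growth of the coefficients. This is exactly the elliptic argument of \cite{mil:21} transported to the smooth coordinates.

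For the second term the strategy is moment matching, extending Lemma~2.1.5 of \cite{mil:21} from double to triple integrals, as flagged in Remark~\ref{rem:new2}. Both $\hat{X}_\Delta^x$ and $\bar{X}_\Delta^x$ are polynomials of degree at most two in the underlying random variables, and they differ only through the substitutions $I_{(k_1,k_2)}(\Delta)\mapsto \zeta_{k_1 k_2,\Delta}$ and $I_{(k_1,k_2,0)}(\Delta)\mapsto \eta_{k_1 k_2,\Delta}$, with $\zeta_{0k,\Delta},\zeta_{k0,\Delta},\zeta_{00,\Delta}$ kept exact. Expanding $\varphi$ by Taylor's theorem around $x$, the two expectations become identical linear combinations of mixed moments of these variables against products of increments $B_{k,\Delta}$; I would classify every such monomial by its order in $\Delta$, recalling $B_{k,\Delta}=O(\Delta^{1/2})$, $I_{(k_1,k_2)}=O(\Delta)$, $I_{(k,0)}=I_{(0,k)}=O(\Delta^{3/2})$, $I_{(k_1,k_2,0)}=O(\Delta^2)$, and verify that all monomials contributing to $\mathbb{E}_\theta[\varphi]$ below order $\Delta^3$ have matching expectation for the true integrals and the explicit variates, precisely because these are the identities displayed immediately before the statement. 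Finiteness of moments up to sixth order for the Gaussian-polynomial variates then controls the Taylor remainder, again with a polynomial-in-$x$ constant.

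The main obstacle I anticipate is the bookkeeping in the second step: one must confirm that the specific list of moment identities for the $\eta$'s (and their cross-moments with the $\zeta$'s and with $B_{k,\Delta}$) is \emph{exactly} the set required to annihilate every weak-error contribution of order up to $\Delta^{5/2}$ produced by the triple integrals in the smooth component, and that no additional higher-degree mixed moment (for instance $\mathbb{E}[\eta\,\eta\,B]$ or $\mathbb{E}[\eta\,\zeta\,\zeta]$) enters below order $\Delta^3$. Verifying that the explicit choice of $\eta_{k_1 k_2,\Delta}$ and $\widetilde{\eta}_{k_1 k_2,\Delta}$ satisfies all the listed identities is then a direct but lengthy computation exploiting the mutual independence of $\widetilde{B}, W$ from $B$ and the joint Gaussian law of $(B_{k,\Delta}, I_{(k,0)}(\Delta))$; this, rather than any conceptual difficulty, is where the bulk of the work lies.
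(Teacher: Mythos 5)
Your proposal is correct in substance, but its decomposition is genuinely different from the paper's. You insert the It\^o--Taylor truncation $\hat{X}_\Delta^x$ (which keeps the \emph{true} iterated integrals $I_{(k_1,k_2)}(\Delta)$, $I_{(k_1,k_2,0)}(\Delta)$) as the intermediate object, so that your first term is a pure truncation error and your second term is a moment-matching estimate covering \emph{both} substitutions $I_{(k_1,k_2)}\mapsto\zeta_{k_1 k_2,\Delta}$ and $I_{(k_1,k_2,0)}\mapsto\eta_{k_1 k_2,\Delta}$. The paper instead writes $\bar{X}_\Delta^x = Z_\Delta^x + \bar{Z}_\Delta^x$, where $Z_\Delta^x$ contains the $\zeta$-terms but \emph{no} triple-integral terms whatsoever, and $\bar{Z}_\Delta^x$ collects the $\eta$-terms; the comparison of $X_\Delta^x$ with $Z_\Delta^x$ is dispatched by direct citation of Lemma 2.1.5/Theorem 2.1.6 of \cite{mil:21} (the $\zeta$'s satisfy the classical moment conditions), and the remaining step shows that \emph{adding} $\bar{Z}_\Delta^x$ perturbs $\mathbb{E}_\theta[\varphi(\cdot)]$ by only $\mathcal{O}(\Delta^3)$, via a fifth-order Taylor expansion of $\varphi$ around $x$ that uses only the zero-mean, decorrelation and size properties of the $\eta$'s, never a comparison of the $\eta$'s with true triple integrals. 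This is consistent with Remark \ref{rem:new1}: the triple-integral terms are not needed for weak second order at all, so the paper only needs them to be \emph{harmless}, whereas your route uses the full matching identities as stated before the proposition. What each buys: the paper's argument is more economical, since the hard part is a citation and the $\eta$-step is elementary; yours is more symmetric and self-contained, but it requires extra care at one point, namely your bound on $\bigl|\mathbb{E}_\theta[\varphi(X_\Delta^x)]-\mathbb{E}_\theta[\varphi(\hat{X}_\Delta^x)]\bigr|$ cannot be a bare citation of the weak-order truncation theorems of \cite{kloe:92} or \cite{mil:21}, because $\hat{X}_\Delta^x$ retains a strict superset of the classical weak-second-order hierarchical set (the integrals $I_{(k_1,k_2,0)}$ do not appear in the standard scheme). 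You must therefore also argue that these extra retained terms are innocuous --- zero mean, $\mathcal{O}(\Delta^2)$ in $L^2$, with all cross-moments against the retained terms of size $\mathcal{O}(\Delta^3)$ --- which is precisely the additive-perturbation estimate the paper performs for $\bar{Z}_\Delta^x$; in effect your plan rearranges the same estimates rather than avoiding them, and with that point made explicit it closes.
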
 
\noindent  
Finally, w.r.t~Criterion (iii) we notice that scheme (\ref{eq:scheme_hypo1}) contains in its specification (i.e.~once some terms are removed from (\ref{eq:scheme_hypo1}))  the \emph{local Gaussian scheme} $\widetilde{X}_\Delta^{\mathrm{LG}, x} 
= \bigl[ ( \widetilde{X}_{R,\Delta}^x )^{\, \top}, 
( \widetilde{X}_{S,\Delta}^x )^{\, \top} \bigr]^\top $ given as:
\begin{align}
\begin{aligned}  \label{eq:LG} 
 \widetilde{X}_{\, R,\Delta}^x & =  x_R + V_{R, 0} (x, \beta) \Delta 
  + \sum_{1 \le k \le d_R} V_{R,k} (x, \sigma) B_{k,\Delta}; \\
  \widetilde{X}_{S,\Delta}^x  & =  x_S + 
  V_{S,0 } (x, \gamma) \Delta 
  + \hat{V}_0 V_{S,0} (x, \theta)  \tfrac{\Delta^2}{2} + \sum_{1 \le k \le d_R} \hat{V}_k V_{S,0} (x, \theta) I_{(k,0)}(\Delta),  
\end{aligned}
\end{align}
where the covariance matrix is given as:
\begin{align}
\label{eq:Var}
 \Sigma(\Delta, x; \theta)  =  
%
 \left[ \begin{array}{cc} 
\sum_{1 \le k \le d_R} V_{R,k} (x,  \sigma) V_{R,k}  (x,  \sigma)^{\top} \Delta
& \quad 
\sum_{1 \le k \le d_R} V_{R,k}(x, \sigma) \hat{V}_k V_{S,0} (x, \theta)^{\top} \tfrac{\Delta^2}{2} \\
\sum_{1 \le k \le d_R}  \hat{V}_k V_{S,0} (x, \theta) V_{R,k}(x, \sigma)^{\top}  \tfrac{\Delta^2}{2}
& \quad \sum_{1 \le k \le d_R} \hat{V}_k V_{S,0} (x,  \theta)  \hat{V}_k V_{S,0} (x,  \theta)^{\top} \tfrac{\Delta^3}{3} 
\end{array} \right]. 
\end{align} 
 $\Sigma (\Delta, x ; \theta)$ is positive definite for any $(\Delta, x, \theta) \in (0, \infty) \times \mathbb{R}^d \times \Theta$ under Assumptions (\ref{assump:hypo1})--(\ref{assump:hypo2}), thus Criterion (iii) is satisfied.
%

\begin{rem} 
\label{rem:new1}
Proposition \ref{prop:weak_app_hypo} holds 
for scheme (\ref{eq:scheme_hypo1}) with or without retainment of the triple integral term in the smooth component
%
$
\textstyle 
I_3 
:= \sum_{(k_1,k_2)\in \{0, \ldots, {d_R}\}^2 \setminus \{0,0\} } \hat{V}_{k_1} \hat{V}_{k_2} V_{S,0} (x ,  \theta)\,I_{(k_1, k_2,0)}(\Delta).
$
However, our analysis of  statistical methodology based on the above scheme will illustrate that inclusion of term $I_3$ (or, more precisely, of an explicit substitute for $I_3$) is necessary for obtaining a CLT for the MLE in the complete observation regime under the improved rate $\Delta_n = o(n^{-1/3})$. 
\end{rem}
\begin{rem} \label{rem:degeneracy}
A  difference versus the scheme for elliptic SDEs in (\ref{eq:scheme_ell}) is that 
in (\ref{eq:scheme_hypo1}) variables 
$
\textstyle{I_{(k,0)}(\Delta)}$, $\textstyle{I_{(0, k)}(\Delta)}$, $1 \le k \le {d_R}$,
are kept and are not replaced by $\textstyle{\tfrac{1}{2}B_{k,\Delta} \Delta}$. A replacement of $\textstyle{I_{(k,0)}(\Delta)}$  
by $\textstyle{\tfrac{1}{2}B_{k,\Delta} \Delta}$ 
would lead to a degenerate covariance matrix in place of $\Sigma(\Delta,x;\theta)$ and 
a violation of Criterion (iii). 
\end{rem}

\section{Small time density expansion} \label{sec:density_scheme}
%
%
%
We denote the transition densities of the given SDE~(\ref{eq:model})
and the sampling scheme $\bar{X}_\Delta^x $ by
%
\vspace{-0.5 \baselineskip}   
$$
y \mapsto p_\Delta^X (x, y; \theta)= \mathbb{P}_\theta[{X}_\Delta^x \in dy]/dy, \qquad  y \mapsto p^{\bar{X}}_\Delta (x, y; \theta) = \mathbb{P}_\theta [\bar{X}_\Delta^x 
\in dy] / dy 
$$
%
%
respectively, for $(\Delta, x, \theta) \in (0, \infty) \times \mathbb{R}^d \times \Theta$. The (one step) sampling scheme $\bar{X}_\Delta^x$ in (\ref{eq:scheme_ell}) or (\ref{eq:scheme_hypo1}), for elliptic or hypo-elliptic SDEs respectively, does not  admit, in general, a closed-form density function. 
To obtain a proxy for the density, we work with a  
small time expansion of the density of  $\bar{X}_\Delta^x $ given below in Lemma \ref{lemma:AE_scheme}.
%
%
%
The expansion will be called upon in Section \ref{sec:main} to provide  a method for statistical inference that will possess advantageous characteristics compared with
existing methods that use constrast functions induced by conditionally Gaussian discretisation schemes, e.g.~the local Gaussian scheme  (\ref{eq:LG}) in the hypo-elliptic case or the Euler-Maruyama scheme in the elliptic case.
%
 We proceed by assuming a hypo-elliptic setting with $d_S \geq 1$, but our analysis also covers the elliptic case  $d_S=0$. 
 \subsection{Background material} \label{sec:back}
 Before giving the small time  expansion formula for $p^{\bar{X}}_\Delta (x, y; \theta)$ we sketch its derivation via use of Malliavin calculus, a differential calculus on Wiener space. Full rigorous arguments are given in Section \ref{appendix:density_scheme} of Supplementary Material. We refer interested readers to, e.g., \cite{watanabe1987analysis, ike:14}.

We consider  Wiener functionals, $F = F(\omega) : \Omega \to \mathbb{R}^d$, 
 where for current  Section~\ref{sec:back},   $(\Omega, \mathcal{F},\{\mathcal{F}_{t}\}, \mathbb{P})$ is the probability space giving rise to the $3\times d_R-1$ Brownian motions involved in the definition of sampling scheme $\bar{X}_\Delta$ in (\ref{eq:scheme_hypo1}). 
 %
 %
 Malliavin calculus allows for $\delta_y (F)$, with  $\delta_y$ the Dirac measure at $y \in \mathbb{R}^d$, to be well-defined as an element of a Sobolev space of Wiener functionals, provided that $F$ satisfies regularity and non-degeneracy conditions, in which case such a Wiener functional is referred to as \textit{smooth and non-degenerate in the Malliavin sense}. The non-degeneracy is a sufficient condition for the existence of Lebesgue density for the law of $F$, and in the case of a Gaussian variate, non-degeneracy is equivalent to the positive definiteness of the covariance matrix. It is shown that the Lebesgue density, $p^F(y)$, of $F$ coincides with $\mathbb{E}\,[\,\delta_y (F)\,]$, i.e.~the \emph{generalised expectation} of $\delta_y (F)$, if $F$ is smooth and non-generate in the Malliavin sense. 
 We consider an $F=F^{\varepsilon}$, for small $\varepsilon \in (0,1)$,  of the form 
\vspace{-0.3 \baselineskip}   
\begin{align}
\label{eq:MF}
    F^{\varepsilon} 
    = f_0 + \varepsilon f_1 + \varepsilon^2 f_2 
    + \cdots + \varepsilon^{J}  f_{J}     
\end{align}
for Wiener functionals $f_i = f_i (\omega )$, $0 \le i \le J$, $J \ge 1$.
 We assume that $F^{\varepsilon}$ is smooth and non-degenerate in the Malliavin sense for any $\varepsilon \in (0,1)$. Then, Theorem 9.4 in \cite{ike:14} gives an asymptotic expansion of the density, $p^{F^{\varepsilon}}(y)$, of $F^\varepsilon$ as follows. For every $n\ge 0$, there exist Wiener functionals $\phi_k=\phi_k(\omega,y)$, $0\le k\le n-1$, and $\mathbb{R}$-valued  $r_n (\cdot)$ such that  
 %
 \begin{align} 
 \label{eq:expansion_malliavin}
    p^{F^{\varepsilon}}(y)
    = \mathbb{E}\,[\,\delta_y (F^{\varepsilon})\,]
    = \sum_{0 \le k \le n-1}
       \varepsilon^k \cdot
       \mathbb{E}\,[\,\phi_{k}(\cdot, y)\,] + \varepsilon^n \cdot r_{n} (y,\varepsilon),
 \end{align}
 with $|r_{n}(y,\varepsilon)| \leq C$ for constant $C>0$ independent of $\varepsilon \in (0,1)$, $y \in \mathbb{R}^d$, and the $\phi_k$'s are given via a formal Taylor expansion of $\varepsilon \to \delta_y (F^\varepsilon)$, e.g.,  $\phi_0 = \delta_y (f_0)$, $\phi_1 = \langle \partial_z \delta_y (z)|_{z=f_0}, f_1\rangle$.
\subsection{Density expansion for sampling scheme} \label{sec:density_expansion}
We develop a small time expansion for
the density $p^{\bar{X}}_\Delta (x, y; \theta)$ of $\bar{X}_{\Delta}^x $ in (\ref{eq:scheme_hypo1}) via reference
to (\ref{eq:MF}), (\ref{eq:expansion_malliavin}). We make use of the multi-index notation:
$
\| \alpha \|:= l + (\mathrm{\# \; of \; zeros \; in \; }\alpha), \, | \alpha | := l, 
$
for $\alpha \in \{0,1, \ldots, d\}^l, \, l\ge 0$.
Based on the expression for $\bar{X}_\Delta^x $ in (\ref{eq:scheme_hypo1}), we define  
$
\bar{X}_\Delta^{\varepsilon, x} 
 =  \big[\,(\bar{X}_{R,\Delta}^{\varepsilon, x})^{\top},\,  
(\bar{X}_{S,\Delta}^{\varepsilon, x})^{\top}\,\big]^{\top}
$ as follows:
\begin{align}
\bar{X}_{R,\Delta}^{\varepsilon, x}  
   & = x_{R}
     + \varepsilon^2 V_{R, 0} (x, \beta) \Delta
     + \varepsilon \sum_{1 \le k \le d_R} V_{R, k} (x ,  \sigma ) \,B_{k,\Delta}  
     + \sum_{\substack{0 \leq k_1, k_2 \leq d_R}}
       \varepsilon^{\|(k_1, k_2) \|} \hat{V}_{k_1}  V_{R,k_2} (x , \theta)\, 
         \zeta_{k_1 k_2, \Delta}; \nonumber \\
\bar{X}_{S, \Delta}^{\varepsilon, x}  
  & =  x_{S} + \varepsilon^2 V_{S,0} (x ,  \gamma )\,\Delta 
     +  \sum_{0 \leq k \leq d_R} \varepsilon^{\|(k,0)\|}\hat{V}_k V_{S,0}(x,  \theta)\,\zeta_{k 0, \Delta} 
     \label{eq:scheme_eps} \\
   & \qquad \qquad  \qquad \qquad \qquad \qquad  
   + 
   \sum_{\substack{0 \le k_1, k_2 \le {d_R} \\ \mrm{s.t.} \,k_1 = k_2 \neq 0}}  
   \varepsilon^{\|(k_1, k_2, 0)\|} 
   \hat{V}_{k_1} \hat{V}_{k_2} V_{S,0} (x ,  \theta)\,\eta_{k_1 k_2, \Delta}.  \nonumber 
\end{align}
 The rationale in the consideration of the above process is that, first, the latter is connected with   
 scheme $\bar{X}_{\Delta}^x $ in (\ref{eq:scheme_hypo1}) 
 via the equality in distributions
$\mathbb{P}_\theta \big[\bar{X}_{\Delta}^x \in dy \big] 
= 
\mathbb{P}_\theta \big[\bar{X}_1^{\sqrt{\Delta}, x} \in dy \big]
$
%
and, second, it is straightforward to perform the standardisation defined below upon process (\ref{eq:scheme_eps}). 
We cannot apply expansion (\ref{eq:expansion_malliavin}) on the density of $\bar{X}_1^{\varepsilon, x}$ as $\{\bar{X}_1^{\varepsilon, x} \}_{\varepsilon}$ is degenerate in the Malliavin sense as $\varepsilon \to 0$, e.g.~notice that $\bar{X}_1^{0 ,x} = x$ is deterministic. 
Instead, we  introduce 
$\widetilde{Y}^{\varepsilon}$
via an appropriate standarisation of  $\bar{X}_1^{\varepsilon, x}$ based upon the local Gaussian scheme $\widetilde{X}_\Delta^{\mathrm{LG}, x} 
= \bigl[ ( \widetilde{X}_{R,\Delta}^x )^{\, \top}, 
( \widetilde{X}_{S,\Delta}^x )^{\, \top} \bigr]^\top $ in (\ref{eq:LG}). Thus, the small time expansion ultimately obtained will be centred around the density of the local Gaussian scheme. 
In particular, we set  
\begin{align} %
m_{x, \theta,\varepsilon} (y)
:=
\left[ \begin{array}{c} m_{R, x, \theta,\varepsilon} (y_R) \\
m_{S,x, \theta,\varepsilon} (y_S)
\end{array}
\right] 
= \left[ 
\begin{array}{c}
\frac{{y_{R} - x_{R} - \varepsilon^2 V_{R,0}(x , \beta)}}{\varepsilon} 
\\
\frac{y_{S} - x_{S} - \varepsilon^2 V_{S,0}(x ,  \gamma) - \tfrac{\varepsilon^4}{2} \hat{V}_0 V_{S,0}(x , \theta) }{\varepsilon^3}
\end{array}
\right],
\label{eq:muLG}
\end{align}
and, define 
\begin{align}
\label{eq:many}
\widetilde{Y}^{\varepsilon} = 
\bigl[  
\bigl(\widetilde{Y}_{R}^{\varepsilon}\bigr)^\top, 
\bigl(\widetilde{Y}_{S}^{\varepsilon}\bigr)^\top
\bigr]^\top=
m_{x, \theta,\varepsilon} (\bar{X}^{\varepsilon,x}_{1})
 = \widetilde{Y}^{(0)} + \sum_{1 \le l \le 3} \varepsilon^l \cdot \widetilde{Y}^{(l)},
\end{align}
where one can obtain the $\mathcal{O}(1)$-term in (\ref{eq:many}),  
$\widetilde{Y}^{0} = [(\widetilde{Y}_{R}^{(0)})^\top, (\widetilde{Y}_{S}^{(0)})^\top ]^\top $, as
\begin{align*}
\widetilde{Y}_{R}^{(0)} = \sum_{1 \le k \le d_R}V_{R,k} (x,  \sigma) B_{k,1},  \qquad  
\widetilde{Y}_{S}^{(0)} = \sum_{1 \le k \le d_R} \hat{V}_k V_{S,0} (x,  \theta) I_{(k,0)} (1),  
\end{align*}
with the remaining terms 
$\widetilde{Y}^{(l)}$, $1\le l \le 3$, 
explicitly derived via (\ref{eq:scheme_eps})--(\ref{eq:many}).
$\widetilde{Y}^{(0)}$ follows a Gaussian distribution and is non-degenerate under \mbox{(\ref{assump:hypo1})--(\ref{assump:hypo2})}, as the  covariance matrix $\Sigma (1, x ;  \theta)$ in (\ref{eq:Var}) is positive definite. In brief, the above lead to non-degeneracy in the Malliavin sense of $\{\widetilde{Y}^{\varepsilon}\}_{\varepsilon}$, uniformly in $\varepsilon \in (0,1)$. 
Simple change of variables with $\varepsilon = \sqrt{\Delta}$ yields: 
\vspace{-0.3 \baselineskip}
\begin{align} 
\label{eq:change_rv}
p_\Delta^{\bar{X}} (x, y; \theta) 
= \mathbb{P}_\theta \big[\bar{X}_1^{\sqrt{\Delta}, x} \in dy \big]/dy
= \frac{1}{\sqrt{\Delta^{d_R + 3 d_S}}} \, p^{\widetilde{Y}^{{\sqrt{\Delta}}}} (\xi ; \theta) |_{\xi = m_{x, \theta, {\sqrt{\Delta}}}(y)}. 
\\[-0.8cm] \nonumber 
\end{align}
Thus, application of expansion (\ref{eq:expansion_malliavin}) for the density of $\widetilde{Y}^{\varepsilon}$, $p^{\widetilde{Y}^{\varepsilon}} (\xi ; \theta) 
= \mathbb{E} \,[\,\delta_\xi (\widetilde{Y}^{\varepsilon}) \,]$, produces the small time expansion of $p_\Delta^{\bar{X}} (x, y; \theta)$. 
Due to starting off from (\ref{eq:muLG})
the rightmost side of (\ref{eq:change_rv})  expands around the density of the local Gaussian scheme (\ref{eq:LG}), that is: 
\vspace{-0.5 \baselineskip}   
\begin{align}
\label{eq:exp2}
    \frac{1}{\sqrt{\Delta^{d_R + 3 d_S}}}\times 
      p^{\widetilde{Y}^{(0)}} (\xi ; \theta) \big|_{\xi = m_{x,\theta, \sqrt{\Delta}}(y)}
=p^{\widetilde{X}^{\mathrm{LG}}}_\Delta (x, y; \theta),
\end{align}
where $\xi \mapsto p^{\widetilde{Y}^{(0)}}(\xi ; \theta)$ denotes the density of the probability law of $\widetilde{Y}^{(0)}$ and 
$y\mapsto p^{\widetilde{X}^{\mathrm{LG}}}_\Delta (x, y; \theta)$ denotes the density of the local Gaussian scheme 
$
\widetilde{X}_\Delta^{\mathrm{LG}, x} 
$
defined in (\ref{eq:LG}) and given as:  
\begin{align} \label{eq:lg_density}
p^{\widetilde{X}^{\mathrm{LG}}}_\Delta (x, y; \theta)  
 = \tfrac{1}{\sqrt{(2 \pi)^d | \Sigma (\Delta, x ; \theta) |}} \exp \Big( - \tfrac{1}{2} \big(y - \mu(\Delta,x;\theta))^\top 
\Sigma^{-1}(\Delta, x; \theta) (y - \mu(\Delta,x;\theta)\big) \Big), 
\end{align}
for mean vector:
\begin{align}
\label{eq:Mean}
\mu(\Delta, x; \theta) = 
\left[  \begin{array}{c} 
    \mu_R(\Delta, x; \beta) \\
    \mu_S(\Delta, x; \theta)
\end{array} \right] 
= \left[ \begin{array}{c}
x_R + V_{R,0} (x ,  \beta)\Delta  \\
 x_S + V_{S,0} (x ,  \gamma)\Delta + \hat{V}_0 V_{S,0} (x , \theta) \tfrac{\Delta^2}{2}  
\end{array} \right], 
\end{align}
and covariance matrix $\Sigma (\Delta, x ; \theta)$ given in (\ref{eq:Var}). 
%
%
%
%

To state the asymptotic expansion of $p^{\bar{X}}_\Delta (x, y ; \theta)$, we introduce a class of Hermite polynomials based on the density of $\widetilde{Y}^{(0)}$. To simplify the notation we henceforth write $\widetilde{Y}=\widetilde{Y}^{(0)}$. Thus, we define: 
\begin{align} \label{eq:hermite_Y}
\mathcal{H}_{\alpha}^{\widetilde{Y}} (\xi; \theta) 
 := (-1)^{{|\alpha|}} \, {\partial^\xi_\alpha \, 
  p^{\widetilde{Y}} (\xi; \theta)}/{p^{\widetilde{Y}} ( \xi; \theta)}, \qquad \xi \in \mathbb{R}^d, \; \theta \in \Theta.
\end{align}
%
Also, we write 
%
$\mathcal{H}_{\alpha} (\Delta, x, y ; \theta) 
= \mathcal{H}_{\alpha}^{\widetilde{Y}} (\xi; \theta) |_{\xi = m_{x, \theta,\sqrt{\Delta}}(y)}, \, 
(\Delta, x, y, \theta) \in (0, \infty) \times 
\mathbb{R}^d \times \mathbb{R}^d \times \Theta. 
$
%
We then have the following result whose proof is provided in Section \ref{appendix:density_scheme} of Supplementary Material. 
\begin{lemma} \label{lemma:AE_scheme} 
Let $x, y \in \mathbb{R}^d$, $\Delta>0$, $\theta \in \Theta$, and assume that conditions 
(\ref{assump:param_space})--(\ref{assump:hypo2}) hold. Then,
for any integer $J \geq 4$, the transition density $y \mapsto p^{\bar{X}}_\Delta (x, y ; \theta)$ admits the following representation: 
\begin{align} \label{eq:density_expansion} 
\begin{aligned} 
p^{\bar{X}}_\Delta (x, y ; \theta)  
=   p^{\widetilde{X}^{\mathrm{LG}}}_\Delta (x, y ; \theta)  \Bigl\{ 1 + \sum_{1 \le l \le J-1} 
  \Delta^{{l}/{2}} \,  
  \Psi_l (\Delta, x, y; \theta)
 \Bigr\}  
+ \frac{\Delta^{{J}/{2}}}{\sqrt{\Delta^{d_R + 3 d_S}}} R^{J} (x, y ; \theta),  
\end{aligned} 
\end{align}
where 
%
%
$R^J (x, y ; \theta)$ is the residual term satisfying 
$\textstyle \sup_{x, y \in \mathbb{R}^d, \, \theta \in \Theta} | R^J (x, y ; \theta) |  < C $ 
for some constant $C > 0$, and $\Psi_l (\Delta, x, y; \theta)$, $l\ge 1$, have the general form:
\begin{align} 
\Psi_l (\Delta, x, y; \theta) = \sum_{1 \le k \le \nu(l)}
\sum_{\alpha \in \{1, \ldots, d\}^k} v_\alpha (x, \theta) \times \mathcal{H}_{\alpha} (\Delta, x, y ; \theta) , 
\end{align}
for some positive integer $\nu(l)$, where $v_\alpha : \mathbb{R}^d \times \Theta \to \mathbb{R}$ is explicitly given as a summation of products of $V_j (x, \theta)$, $0\le j \le d_R$, and their partial derivatives. For the first two $\Psi$-terms in the expansion, we get: 
\begin{align*} 
\Psi_1 &(\Delta, x, y; \theta)
= 
\hspace{-0.2cm} 
\sum_{\substack{1 \le j_1, j_2 \le d_R \\ 
1 \le i_1 \le d}}  
\biggl\{ \hat{V}_{j_1} V_{j_2}^{i_1} (x, \theta) 
\cdot \mathbf{1}_{1 \le i_1 \le d_R}
+ \tfrac{1}{3} \hat{V}_{j_1} \hat{V}_{j_2} V_0^{i_1} (x, \theta) \cdot \mathbf{1}_{d_R + 1 \le i_1 \le d} \biggr\} 
\cdot \widetilde{\Psi}^{i_1}_{j_1 j_2} (\Delta, x, y; \theta),  
\end{align*}
with 
\begin{align*}
& \widetilde{\Psi}^{i_1}_{j_1 j_2} (\Delta, x, y; \theta) 
=  \tfrac{1}{2}  \sum_{1 \le i_2, i_3 \le d_R}  
V_{j_1}^{i_2}(x , \theta) 
V_{j_2}^{i_3}(x , \theta)  \, 
\mathcal{H}_{(i_1,i_2,i_3)} (\Delta, x, y; \theta)  \\ 
&  \quad +  
\sum_{\substack{d_R + 1 \leq  i_2 \leq d \\ 1 \leq i_3 \leq d_R}}  
\Bigl\{ 
\tfrac{1}{3} \hat{V}_{j_1} V_0^{i_2} (x , \theta) 
V_{j_2}^{i_3} (x ,  \theta) 
+ \tfrac{1}{6} 
V_{j_1}^{i_3}(x , \theta) 
\hat{V}_{j_2} V_0^{i_2} (x , \theta) 
\Bigr\} 
\, \mathcal{H}_{(i_1,i_2,i_3)} (\Delta, x, y; \theta) 
\\
%
%
& \quad
+  \tfrac{1}{8}  \sum_{d_R + 1 \le i_2, i_3 \le d} 
\hat{V}_{j_1} V_0^{i_2} (x ,  \theta) 
\hat{V}_{j_2} V_0^{i_3} (x , \theta)
\mathcal{H}_{(i_1, i_2, i_3)}
(\Delta, x, y; \theta). 
\end{align*}
Also, 
$
\Psi_2 (\Delta, x, y; \theta)   = \Phi_{2} (\Delta, x, y; \theta)  
+  \widetilde{\Phi}_{2} (\Delta, x, y; \theta) 
$
with 
\begin{align} \label{eq:Phi_2}
 & \Phi_{2} (\Delta, x, y; \theta) 
= \tfrac{1}{2} \sum_{1 \le i_1, i_2 \le d_R} 
\sum_{1 \le k \le d_R} 
 \Bigl( \hat{V}_{k} V_{0}^{i_1} (x,  \theta)
 + \hat{V}_{0} V_{k}^{i_1} (x, \theta) \Bigr) V_k^{i_2}( x ,  \sigma )   
 \mathcal{H}_{(i_1,i_2)} (\Delta, x, y; \theta)  
\nonumber  \\
& \qquad  +  \tfrac{1}{4} \sum_{1 \le i_1, i_2 \le d_R} \sum_{1 \le k_1, k_2 \le d_R}   
\hat{V}_{k_1} V_{k_2}^{i_1}(x ,  \theta) \hat{V}_{k_1} V_{k_2}^{i_2}(x ,  \theta)  \mathcal{H}_{(i_1,i_2)} (\Delta, x, y; \theta)  \nonumber  \\ 
& \qquad +  \sum_{\substack{1 \leq i_1 \leq d_R \\ d_R + 1 \leq i_2 \leq d}}  \sum_{1 \le k \le d_R} 
  \Bigl( \tfrac{1}{3} \hat{V}_{k} V_{0}^{i_1} (x ,  \theta)
 + \tfrac{1}{6} \hat{V}_{0} V_{k}^{i_1} (x ,  \theta) \Bigr) 
 \hat{V}_k V_0^{i_2}(x , \theta)  
\mathcal{H}_{(i_1,i_2)} (\Delta, x, y; \theta)  \nonumber 
\\
& \qquad  + \tfrac{1}{6} \sum_{\substack{1 \leq i_1 \leq d_R \\ d_R + 1 \leq i_2 \leq d}} 
\sum_{1 \le k \le d_R} V_k^{i_1} (x ,  \theta)
\Bigl(\hat{V}_0 \hat{V}_k V_0^{i_2} (x , \theta) + \hat{V}_k \hat{V}_0 V_0^{i_2} (x ,  \theta) \Bigr)
\mathcal{H}_{(i_1,i_2)} (\Delta, x, y; \theta)   
\nonumber \\
& \qquad  +  \tfrac{1}{6} \sum_{\substack{1 \leq i_1 \leq d_R \\ d_R + 1 \leq  i_2 \leq d}} 
\sum_{1 \le  k_1, k_2  \le d_R} 
\hat{V}_{k_1} V_{k_2}^{i_1}(x, \theta) 
\hat{V}_{k_1} \hat{V}_{k_2} V_0^{i_2}(x, \theta)
\mathcal{H}_{(i_1,i_2)} (\Delta, x, y; \theta)  \nonumber \\
& \qquad + \sum_{d_R + 1 \le i_1, i_2 \le d}  
 \sum_{1 \le k \le d_R} 
 \hat{V}_k V_0^{i_1}  (x, \theta)
  \Bigl( \tfrac{1}{6} \hat{V}_0 \hat{V}_k V_0^{i_2} (x,  \theta) 
 +   \tfrac{1}{8}\hat{V}_k \hat{V}_0 V_0^{i_2} (x,  \theta) \Bigr) 
 \mathcal{H}_{(i_1,i_2)} (\Delta, x, y; \theta) 
\nonumber  \\
& \qquad  + \tfrac{1}{24} \sum_{d_R + 1 \le i_1, i_2 \le d} \sum_{1 \le k_1, k_2 \le d_R} 
  \hat{V}_{k_1} \hat{V}_{k_2} V_0^{i_1}(x,  \theta) \hat{V}_{k_1} \hat{V}_{k_2} V_0^{i_2}(x ,  \theta) 
  \mathcal{H}_{(i_1,i_2)} (\Delta, x, y; \theta);   
\end{align} 
 $\widetilde{\Phi}_2 (\Delta, x, y; \theta)$ 
involves higher order Hermite polynomials and is given in the form of
\begin{align*}
 \sum_{l=4,6} \sum_{\substack{\alpha \in \{1, \ldots, d \}^l}}  w_{\alpha} (x, \theta) \times  \mathcal{H}_{\alpha} (\Delta, x, y; \theta),
\end{align*}
where $w_\alpha : \mathbb{R}^d \times \Theta \to \mathbb{R}$ is of the same structure as the one described above for $v_\alpha$.  
\end{lemma}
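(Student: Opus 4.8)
The plan is to reduce the statement to the Watanabe--Yoshida asymptotic expansion (\ref{eq:expansion_malliavin}) applied to the standardised family $\{\widetilde{Y}^{\varepsilon}\}_{\varepsilon\in(0,1)}$ of (\ref{eq:many}), and then to translate the resulting $\varepsilon$-expansion back into a $\Delta$-expansion of $p^{\bar{X}}_\Delta(x,y;\theta)$ through the change of variables (\ref{eq:change_rv}) with $\varepsilon=\sqrt{\Delta}$. The first task is to verify that Theorem~9.4 of \cite{ike:14} applies, i.e.~that $\widetilde{Y}^{\varepsilon}=\widetilde{Y}^{(0)}+\varepsilon\widetilde{Y}^{(1)}+\varepsilon^2\widetilde{Y}^{(2)}+\varepsilon^3\widetilde{Y}^{(3)}$ is smooth in the Malliavin sense and \emph{uniformly} non-degenerate over $\varepsilon\in(0,1)$. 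Since $\widetilde{Y}^{(0)}$ is Gaussian with covariance $\Sigma(1,x;\theta)$, positive definite under (\ref{assump:hypo1})--(\ref{assump:hypo2}), the Malliavin covariance matrix of $\widetilde{Y}^{\varepsilon}$ is a smooth perturbation of a non-degenerate matrix; one shows its inverse has moments bounded uniformly in $\varepsilon$, using that each $\widetilde{Y}^{(l)}$ is a polynomial of the Gaussian variates of (\ref{eq:scheme_eps}) with deterministic coefficients controlled through the $C_b^{\infty}$ assumption (\ref{assump:coeff}). This legitimises (\ref{eq:expansion_malliavin}) with $F^{\varepsilon}=\widetilde{Y}^{\varepsilon}$ and any $n=J\ge 4$.

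Next I would compute the coefficients $\phi_k$ explicitly. Because $\widetilde{Y}^{\varepsilon}$ is a \emph{finite} polynomial in $\varepsilon$, the formal Taylor expansion of $\varepsilon\mapsto\delta_\xi(\widetilde{Y}^{\varepsilon})$ about $\varepsilon=0$ terminates at each fixed order: $\phi_k$ is a finite sum of terms $\partial^{\beta}_{z}\delta_\xi(z)\big|_{z=\widetilde{Y}^{(0)}}$ multiplied by monomials in $\widetilde{Y}^{(1)},\widetilde{Y}^{(2)},\widetilde{Y}^{(3)}$ whose total $\varepsilon$-weight equals $k$. Taking generalised expectations and using $\partial_{z_i}\delta_\xi(z)=-\partial_{\xi_i}\delta_\xi(z)$, each contribution becomes
\[
(-1)^{|\beta|}\,\partial^{\beta}_{\xi}\Bigl(\mathbb{E}\bigl[M \,\big|\, \widetilde{Y}^{(0)}=\xi\bigr]\,p^{\widetilde{Y}}(\xi;\theta)\Bigr),
\]
where $M$ is the relevant monomial. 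As $M$ is a polynomial of Gaussian variates, the conditional expectation $\mathbb{E}[M\mid \widetilde{Y}^{(0)}=\xi]$ is a polynomial in $\xi$; the Leibniz rule together with the identity $\partial^{\alpha}_{\xi}p^{\widetilde{Y}}(\xi;\theta)=(-1)^{|\alpha|}\mathcal{H}^{\widetilde{Y}}_{\alpha}(\xi;\theta)\,p^{\widetilde{Y}}(\xi;\theta)$ from (\ref{eq:hermite_Y}), and the fact that $\{\mathcal{H}^{\widetilde{Y}}_{\alpha}\}$ is a graded linear basis of the polynomials (so any product polynomial-times-Hermite re-expands into a finite linear combination of Hermite polynomials), gives $\mathbb{E}[\phi_k(\cdot,\xi)]=p^{\widetilde{Y}}(\xi;\theta)\sum_{\alpha}c_\alpha\,\mathcal{H}^{\widetilde{Y}}_\alpha(\xi;\theta)$. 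Inspection of the $\widetilde{Y}^{(l)}$ shows each $c_\alpha$ is a summation of products of the $V_j$ and their derivatives, yielding the asserted structure of $v_\alpha$.

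It then remains to reassemble. Substituting $\varepsilon=\sqrt{\Delta}$ into (\ref{eq:expansion_malliavin}), inserting into (\ref{eq:change_rv}), and evaluating at $\xi=m_{x,\theta,\sqrt{\Delta}}(y)$, the $k=0$ term reproduces the local Gaussian density by (\ref{eq:exp2}); the $k=l\ge 1$ terms carry the prefactor $\Delta^{l/2}$ and, after extracting the common Gaussian factor, assemble into $p^{\widetilde{X}^{\mathrm{LG}}}_\Delta(x,y;\theta)\,\Psi_l(\Delta,x,y;\theta)$ with $\mathcal{H}_\alpha(\Delta,x,y;\theta)=\mathcal{H}^{\widetilde{Y}}_\alpha(\xi;\theta)|_{\xi=m_{x,\theta,\sqrt{\Delta}}(y)}$; and the order-$\Delta^{J/2}$ term, times the Jacobian $\Delta^{-(d_R+3d_S)/2}$, gives $R^{J}$. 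The uniform bound on $r_J$ from Theorem~9.4 of \cite{ike:14} yields $\sup_{x,y,\theta}|R^{J}|<C$. The explicit formulas for $\Psi_1$ and $\Psi_2=\Phi_2+\widetilde{\Phi}_2$ follow by carrying out the above for $k=1,2$ with $\widetilde{Y}^{(1)},\widetilde{Y}^{(2)}$ read off from (\ref{eq:scheme_eps})--(\ref{eq:many}).

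The main obstacle is the \emph{uniformity} of the bounds over the unbounded state $x\in\mathbb{R}^d$ together with $\theta\in\Theta$. The Watanabe--Yoshida constant depends on $F^{\varepsilon}=\widetilde{Y}^{\varepsilon}$, hence on $(x,\theta)$, through the Malliavin--Sobolev norms and the inverse moments of the Malliavin covariance of $\widetilde{Y}^{\varepsilon}$; to obtain $\sup_{x,y,\theta}|R^{J}|<C$ one must show these are bounded uniformly in $(x,\theta,\varepsilon)$, which requires uniform non-degeneracy of $\Sigma(1,x;\theta)$ and is exactly where the boundedness of coefficients and derivatives in (\ref{assump:coeff}), the non-degeneracy conditions (\ref{assump:hypo1})--(\ref{assump:hypo2}), and compactness in (\ref{assump:param_space}) must be combined carefully. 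A secondary, purely computational burden is the bookkeeping behind the coefficients of $\Psi_1$ and $\Phi_2$: the triple-integral variates $\eta_{k_1 k_2,\Delta}$ retained in the smooth component of (\ref{eq:scheme_hypo1}) make the conditioning on the Gaussian leading term and the Hermite re-expansion for the smooth indices substantially heavier than in the elliptic double-integral setting of \cite{mil:21}.
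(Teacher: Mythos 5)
Your proposal is correct, and its skeleton coincides with the paper's proof: standardise the scheme into $\widetilde{Y}^{\varepsilon}$ via (\ref{eq:scheme_eps})--(\ref{eq:many}), establish smoothness and non-degeneracy \emph{uniformly in} $\varepsilon$ so that the Watanabe-type expansion (\ref{eq:expansion_malliavin}) applies (this is exactly the paper's Lemma \ref{lemma:uniform_nondeg}, proved there via Bally's perturbation criterion, matching your ``perturbation of a non-degenerate Gaussian'' argument), push the expansion through the change of variables (\ref{eq:change_rv}) with $\varepsilon=\sqrt{\Delta}$, identify the zeroth term with the local Gaussian density via (\ref{eq:exp2}), and bound the residual by Malliavin integration by parts with constants controlled uniformly in $(x,\theta,\varepsilon)$ -- your closing remark correctly isolates this uniformity as the step where (\ref{assump:param_space})--(\ref{assump:hypo2}) must be combined.

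Where you genuinely depart is the coefficient-extraction step. The paper converts each generalised expectation $\mathbb{E}_\theta[\partial^{\alpha}\delta_\xi(\widetilde{Y})\,I_\beta(1)]$ into Hermite form by \emph{iterated use of the duality formula} (\ref{eq:duality_formula}): the Malliavin derivative of $\widetilde{Y}$ transfers each stochastic integral onto extra derivatives of $\delta_\xi$, and then $\mathbb{E}_\theta[\partial^{\beta}\delta_\xi(\widetilde{Y})]=\mathcal{H}^{\widetilde{Y}}_{\beta}(\xi;\theta)\,p^{\widetilde{Y}}(\xi;\theta)$ finishes the job (Lemma \ref{representation_weight}; the time integrals $\int_0^1 s\,ds=\tfrac12$, $\int_0^1(1-s)s\,ds=\tfrac16$, etc.\ produce the rational coefficients). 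You instead write each contribution as $(-1)^{|\beta|}\partial^{\beta}_{\xi}\bigl(\mathbb{E}[M\mid\widetilde{Y}^{(0)}=\xi]\,p^{\widetilde{Y}}(\xi;\theta)\bigr)$ and exploit that every monomial $M$ is a polynomial of variables jointly Gaussian with $\widetilde{Y}^{(0)}$, so the conditional expectation is computed by finite-dimensional Gaussian linear algebra; the same rational coefficients then arise from the joint moments $\mathbb{E}[B_{k,1}I_{(k,0)}(1)]=\tfrac12$, $\mathbb{E}[I_{(k,0)}(1)^2]=\tfrac13$, and so on. This is legitimate and arguably more elementary, but one point you leave implicit deserves spelling out: Gaussian conditioning injects factors of $\Sigma_1^{-1}(x;\theta)$, through $\mathbb{E}[Z\mid\widetilde{Y}^{(0)}=\xi]=\mathrm{Cov}(Z,\widetilde{Y}^{(0)})^{\top}\Sigma_1^{-1}\xi$ and the conditional covariance, so it is not immediate that the final coefficients $v_\alpha$ are pure polynomials of the $V_j$ and their derivatives as the lemma asserts. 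They are, because re-expansion in the Hermite basis absorbs exactly these factors: $\Sigma_1^{-1}\xi$ is the vector of first-order Hermite polynomials, and the product identity $\mathcal{H}_{(j_1)}\mathcal{H}_{(j_2)}=\mathcal{H}_{(j_1,j_2)}+\Sigma^{-1}_{1,j_1j_2}$ implicit in (\ref{eq:hermite}) makes the inverse-covariance contributions cancel against the conditional-variance corrections, leaving joint moments as coefficients. The duality-formula route never generates these $\Sigma_1^{-1}$ factors in the first place, which is what it buys: Hermite coefficients appear directly and the argument iterates cleanly to all orders $l\ge 3$ (cf.\ the paper's display (\ref{eq:ibp_general})), whereas your route requires this Wick-type cancellation at every order. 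With that observation added, your plan delivers the full statement, including the explicit tables for $\Psi_1$ and $\Phi_2$, by finite bookkeeping.
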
 
%
%
%
\begin{rem}
The Hermite polynomials have explicit expressions. E.g., for $1 \le i_1, i_2 \le d$:
\begin{align}
& \mathcal{H}_{(i_1)}^{\widetilde{Y}} (\xi; \theta) |_{\xi = m_{x, \theta,\sqrt{\Delta}}(y)}  
 = \sum_{1 \le i \le d}
 {\Sigma}_{i_1 i}^{-1} (1, x; \theta) 
 m^i_{x, \theta,\sqrt{\Delta}}(y); 
 \nonumber \\ 
& \mathcal{H}_{(i_1, i_2)}^{\widetilde{Y}} (\xi; \theta) |_{\xi = m_{x, \theta,\sqrt{\Delta}}(y)}
 = \big\{ \mathcal{H}_{(i_1)}^{\widetilde{Y}} (\xi; \theta)
 \mathcal{H}_{(i_2)}^{\widetilde{Y}} (\xi; \theta)\big\} |_{\xi = m_{x, \theta,\sqrt{\Delta}}(y)}
 -  {\Sigma}_{i_1 i_2}^{-1} (1, x; \theta). \label{eq:hermite}
\end{align} 
%
%
%
%
\end{rem}  
\begin{rem} \label{rem:cf_expn} 
We compare our small time density expansion (\ref{eq:density_expansion}) with some closed-form density expansions that have appeared in the literature.
\begin{itemize}
    \item[(i)]  \cite{ait:02, ait:08} provides Hermite series expansions for the density $p^X_\Delta (x, y ; \theta)$ of 
    elliptic diffusions. 
    However, the domains for $x, y \in \mathbb{R}^d$ should be restricted to a compact set so that the related remainder term is bounded and convergence to the true density can be justified.
    \item[(ii)]  \cite{li:13} develops a closed-form expansion formula for $p^X_\Delta (x, y ; \theta)$ in the multivariate elliptic case using Malliavin-Watanabe calculus that enables one to obtain a remainder term that is uniformly bounded without imposing strong restrictions on the domain of $x, y \in \mathbb{R}^d$. 
    %
    Our approach also uses Malliavin-Watanabe calculus to obtain a closed-form expansion for 
    {$\textstyle{p^{\bar{X}}_\Delta (x, y ; \theta)}$} 
    with controllable remainder terms in a hypo-elliptic setting. 
\end{itemize}
\end{rem} 
\section{Analytic results for statistical inference}
\label{sec:main}
Making use of the  small time density expansion 
in Lemma \ref{lemma:AE_scheme}, 
we provide analytic results about statistical inference procedures under high and low-frequency observation regimes. 
We do not require all correction terms $\textstyle{ \Delta^{{l}/{2}} \Psi_l (\Delta,x,y ; \theta)}$, $l = 1, \ldots$, appearing in the density expansion, and aim at obtaining statistical benefits with use of such terms when necessary. 
Our results below identify the parts of the expansion that lead to parameter estimates of improved performance, for each of the two observation regimes. 
Interestingly, such parts differ between the two regimes.  
%
%
%
%
%
%
%
\subsection{High-frequency observation regime} \label{sec:highfreq}
\subsubsection{Contrast estimator}
We consider the complete observation regime (\ref{eq:data}). Such a data setting has been studied in the recent works focused on hypo-elliptic SDEs of \cite{dit:19, glot:21}, and in numerous earlier studies for the elliptic case, see e.g.~\cite{kessler1997estimation, uchi:12}.
%
We introduce the following notation:
\vspace{-0.5 \baselineskip}  
\begin{align}
\label{eq:S1}
\Sigma_1(x;\theta) &:= \Sigma(1, x; \theta), 
\end{align}   
where recall that $\Sigma$ is the covariance matrix of local Gaussian scheme defined in (\ref{eq:Var}). We consider the likelihood of the complete data in  (\ref{eq:data}). 
We work with the following proxy of $-2\times \textrm{log-likelihood}$, with 
$\mu=\mu(\Delta, x; \theta)$, $\Sigma=\Sigma(\Delta, x; \gamma,\sigma)$, $\Sigma_1=\Sigma_1(x;\gamma,\sigma)$ defined in (\ref{eq:Mean}), (\ref{eq:Var}), (\ref{eq:S1}), respectively: 
%
\begin{align}
\begin{aligned}
\ell_{n, \Delta} (\theta) 
&:=  \sum_{1 \le m \le n}  \big( X_{t_m} - \mu (\Delta, X_{t_{m-1}}; \theta)  \big)^{\top} 
\Sigma^{-1} (\Delta,X_{t_{m-1}} ; \gamma,\sigma) \big( X_{t_m} - \mu (\Delta, X_{t_{m-1}}; \theta)  \big) \\
& \qquad + \sum_{1 \le m \le n} \log |\Sigma_1(X_{t_{m-1}}; \gamma,\sigma) | 
 - 2 \Delta \sum_{1 \le m \le n}   \Phi_2 (\Delta, \sample{X}{m-1}, \sample{X}{m}; \theta),
 \end{aligned}
 \label{eq:contrast}
\end{align} 
where $\Phi_2$ is given in \eqref{eq:Phi_2}, namely, $\Phi_2$ is a term with second order Hermite polynomials appearing in the expression for $\Psi_2$ in the density expansion \eqref{eq:density_expansion}. 
We discuss the effect of the correction term $\Phi_2$ in the obtained CLT in Remark \ref{rem:vs_LocalGaussian} later in the paper. In the elliptic case, i.e.~$d=d_R$, $d_\gamma = 0$, the corresponding contrast function is as in (\ref{eq:contrast}) but with functions $\mu$, $\Sigma_1$, $\Sigma^{-1}$,  $\Phi_2$ replaced by $\mu_R$, $a_R$, $a_R^{-1}/\Delta$, $\Phi_{e, 2}$, respectively.    
%
We define the contrast estimator 
%
$\textstyle{ \hat{\theta}_n: = (\hat{\beta}_{n}, \hat{\gamma}_{n}, \hat{\sigma}_{n})  = \arg\min_{\theta\in\Theta} \ell_{n, \Delta} (\theta)}
$
%
%
and write the estimator as 
$\textstyle{\hat{\theta}_{e,n} = (\hat{\beta}_{n}, \hat{\sigma}_{n})}$ in the setting of elliptic diffusions.
\begin{rem} 
We state a property of term $\Phi_2$ defined in (\ref{eq:Phi_2}). For simplicity, we write 
$ 
\textstyle 
\Phi_2 (\Delta, x, y; \theta) 
= \sum_{1 \le i_1, i_2 \le d} G_{i_1 i_2} (x; \theta) \times \mathcal{H}_{(i_1, i_2)} (\Delta, x, y; \theta), \, x, y \in \mathbb{R}^d 
$ 
for appropriate functions $G_{i_1i_2}$ obtain via (\ref{eq:Phi_2}).
Terms $G_{i_1i_2}(x;\theta)$ also arise when considering the covariance of the normalised vector 
$m (\Delta, x, y; \theta) = m_{x, \theta, \sqrt{\Delta}} (y)$. 
%
That is, one obtains: for $1 \le i_1, i_2 \le d$, 
%
%
\begin{align} 
\mathbb{E}_\theta 
&\bigl[
m^{i_1} (\Delta, x, X_\Delta^x; \theta) 
\, 
m^{i_2} (\Delta, x, X_\Delta^x; \theta)  
\bigr] 
= \Sigma_{1, i_1 i_2} (x; \theta) 
+ 2 \Delta \times G_{i_1 i_2} (x; \theta) 
+ \widetilde{G}_{i_1 i_2} (\Delta^2, x; \theta),
\label{eq:expansion_cov}
\end{align}
for $\widetilde{G}_{i_1 i_2} (\cdot, \cdot; \theta): [0, \infty) \times  \mathbb{R}^d \to \mathbb{R}$ such that 
$|\widetilde{G}_{i_1 i_2} (h, x; \theta)| \le C h $ for some $C >0$, and $\widetilde{G}_{i_1 i_2} (h, \cdot; \theta) \in C_b^\infty (\mathbb{R}^d ; \mathbb{R})$ for all $(h, \theta) \in [0, \infty) \times \Theta$ under (\ref{assump:param_space})--(\ref{assump:coeff}). 
%
%
The above connection between $\Phi_2$ and the covariance expression (\ref{eq:expansion_cov}) plays a key role to obtain a CLT under the weaker condition $\Delta_n = o (n^{-1/3})$.
\end{rem}
%
%
\subsubsection{Conditions for high-frequency regime}
To state the main result, we introduce the following additional conditions.
\vspace{-0.4 \baselineskip}  
\begin{enumerate}
\setcounter{enumi}{4}
\renewcommand{\theenumi}{H\arabic{enumi}}
\renewcommand{\labelenumi}{(\theenumi)} 
\item  \label{assump:bd_deriv}
For each $x \in \mathbb{R}^d$, $\theta \mapsto V_j (x ,  \theta)$ is three times differentiable. 
For any $\alpha \in \{1, \ldots,  d_\theta\}^l$, with $l\in\{1,2\}$, the functions  
$
x \mapsto  
\partial^{\theta}_{\alpha}
V_j^i (x ,  \theta), 
\;  0 \le j \le d_R, \; 1 \le i \le d, 
$
have bounded derivatives of every order uniformly in $\theta \in \Theta$. 
\item \label{assump:lipschitz} 
For $1  \le  i \le d_S$, there exists a function 
$B^i\in C_{b}^{\infty} (\mathbb{R}^{d} ; \mathbb{R})$ 
such that for any $x \in \mathbb{R}^d$ and $\gamma', \gamma \in \Theta_\gamma$, 
\begin{align*}
     | V_{S,0}^{i} (x,  \gamma') - V_{S,0}^{i} (x, \gamma) | \leq B^i (x) |\gamma' - \gamma|.
 \end{align*}
%
%
\item \label{assump:moments}
The diffusion process $\{X_t\}_{t \geq 0}$ in (\ref{eq:model}) is ergodic under $\theta = \trueparam$, with  invariant distribution $\truedist$ on $\mathbb{R}^d$. Furthermore, all moments of $\truedist$  are finite. 
\item \label{assump:finite_moment}
It holds that for all $p \ge 1$, 
$\textstyle \sup_{t > 0} \mathbb{E}_{\trueparam} [|X_t|^p] < \infty$. 
\item \label{assump:ident}
The true parameters lie in the interior of $\Theta$. If it holds 
\begin{align*}
V_{R, 0} (x , \beta) = V_{R, 0} (x , \truebeta), \ \  
V_{S, 0} (x , \gamma) = V_{S, 0} (x , \truegamma), \ \ 
V (x , \sigma) = V (x , \truesigma), 
\end{align*}
for $x$ in a set of probability 1 under  $\nu_{\trueparam}$, then 
$\beta = \truebeta, \,  \gamma = \truegamma, \, \sigma = \truesigma$. 
\end{enumerate} 
\subsubsection{Asymptotic properties of the contrast estimator}
We can now prove that the estimator $\hat{\theta}_n$ has the following asymptotic properties.  
\begin{theorem}[Consistency] 
\label{consistency}
Under conditions (\ref{assump:param_space})--(\ref{assump:ident}), it holds that if $n \to \infty$, $\Delta_n \to 0$ and $n \Delta_n \to \infty$, then
$ 
\hat{\theta}_{n}  \xrightarrow{\mathbb{P}_{\theta^\dagger}} \theta^\dagger.
$
\end{theorem}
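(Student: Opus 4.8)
The plan is to treat $\hat\theta_n$ as a minimum-contrast estimator and to exploit that the three parameter blocks enter $\ell_{n,\Delta}$ at three separate scales, matching the rates $\sqrt{n/\Delta_n}$, $\sqrt{n}$, $\sqrt{n\Delta_n}$ that will later govern $\gamma,\sigma,\beta$. The workhorse is an ergodic law of large numbers for triangular arrays: for suitable $g(\Delta,\cdot;\theta)$, using ergodicity (\ref{assump:moments}), the uniform moment bound (\ref{assump:finite_moment}), the regularity of the coefficients (\ref{assump:coeff}), (\ref{assump:bd_deriv}), and the conditional-moment expansions of the increments supplied by Proposition \ref{prop:weak_app_hypo} together with \eqref{eq:expansion_cov}, one shows $\tfrac1n\sum_{m=1}^n g(\Delta_n, X_{t_{m-1}};\theta)\probconv \int g_0(x;\theta)\,\truedist(dx)$ uniformly in $\theta\in\Theta$, while martingale remainders built from the normalised increments vanish. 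Writing $\Sigma=D_\Delta\Sigma_1 D_\Delta$ with $D_\Delta=\mathrm{diag}(\sqrt{\Delta}\,I_{d_R},\sqrt{\Delta^3}\,I_{d_S})$, the quadratic form in \eqref{eq:contrast} becomes $m_m^\top\Sigma_1^{-1}(X_{t_{m-1}};\gamma,\sigma)\,m_m$ with $m_m=m(\Delta,X_{t_{m-1}},X_{t_m};\theta)$, which makes the scale separation explicit.

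I would first establish consistency of $\hat\gamma_n$. Since the smooth-drift mismatch enters $m_S$ as $(V_{S,0}(\cdot;\truegamma)-V_{S,0}(\cdot;\gamma))/\sqrt{\Delta}$, the quadratic form contains a \emph{nonnegative} term of order $\Delta^{-1}$, namely $\tfrac1\Delta D_n(\gamma,\sigma)$ with $D_n\ge0$ and $D_n\probconv \mathcal D(\gamma,\sigma):=\int (V_{S,0}(x;\truegamma)-V_{S,0}(x;\gamma))^\top W(x;\gamma,\sigma)(V_{S,0}(x;\truegamma)-V_{S,0}(x;\gamma))\,\truedist(dx)$, where $W$ is the (uniformly positive-definite) $SS$-block of $\Sigma_1^{-1}$. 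Comparing against the minimiser through $\ell_{n,\Delta}(\hat\theta_n)\le\ell_{n,\Delta}(\trueparam)$, and using that every other contribution to $\tfrac1n\ell_{n,\Delta}$ is $O_{\mathbb P}(1)$ (the trace and log-determinant terms) or $o_{\mathbb P}(1)$ (the $\Phi_2$ term, which is $O(\Delta)$), the nonnegativity of $\tfrac1\Delta D_n$ forces $D_n(\hat\gamma_n,\hat\sigma_n)=O_{\mathbb P}(\Delta)$, hence $\mathcal D(\hat\gamma_n,\hat\sigma_n)\to0$; by (\ref{assump:lipschitz}), (\ref{assump:ident}) and the uniform positive-definiteness of $W$ under (\ref{assump:hypo1})--(\ref{assump:hypo2}) this gives $\hat\gamma_n\probconv\truegamma$. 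The key point, and the reason the otherwise problematic cross-scale interaction is benign, is precisely that the dominant $\gamma$-term is nonnegative, so it can only reinforce the minimisation inequality.

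Next, for $\hat\sigma_n$ I would work at scale $\tfrac1n\ell_{n,\Delta}$: its $O(1)$ part converges to $\mathcal L_1(\gamma,\sigma)=\int\{\mathrm{tr}(\Sigma_1^{-1}(x;\gamma,\sigma)\,\Sigma_1(x;\truegamma,\truesigma))+\log|\Sigma_1(x;\gamma,\sigma)|\}\,\truedist(dx)$, whose $\sigma$-section at $\gamma=\truegamma$ is, by the matrix inequality $\mathrm{tr}(A^{-1}B)+\log|A|\ge \mathrm{tr}(I)+\log|B|$ with equality iff $A=B$, uniquely minimised at $\sigma=\truesigma$. Re-using $\ell_{n,\Delta}(\hat\theta_n)\le\ell_{n,\Delta}(\trueparam)$ together with $\tfrac1\Delta D_n\ge0$ yields $\mathcal L_1(\hat\gamma_n,\hat\sigma_n)\le\mathcal L_1(\truegamma,\truesigma)+o_{\mathbb P}(1)$; combined with $\hat\gamma_n\probconv\truegamma$ and continuity this gives $\hat\sigma_n\probconv\truesigma$. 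Finally, for $\hat\beta_n$ the relevant scale is $n\Delta_n$, where $n\Delta_n\to\infty$ is used both to average the drift and to kill the martingale cross-term. I would compare $\ell_{n,\Delta}(\hat\theta_n)$ with $\ell_{n,\Delta}(\truebeta,\hat\gamma_n,\hat\sigma_n)$ so that all $(\gamma,\sigma)$-only contributions cancel, leaving a term converging to $\mathcal B(\beta):=\int (V_{R,0}(x;\beta)-V_{R,0}(x;\truebeta))^\top \bar W(x)(V_{R,0}(x;\beta)-V_{R,0}(x;\truebeta))\,\truedist(dx)\ge0$, uniquely zero at $\beta=\truebeta$ by (\ref{assump:ident}); this delivers $\hat\beta_n\probconv\truebeta$, and hence $\hat\theta_n\probconv\trueparam$.

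The main obstacle is the coexistence of the three scales within a single contrast and, relatedly, the status of the correction term $\Phi_2$: it is $O(\Delta)$ inside $\tfrac1n\ell_{n,\Delta}$, but it lives at the same $(n\Delta)^{-1}$ scale as the $\beta$-identifying term and itself depends on $\beta$ through $V_{R,0}$ appearing in $\hat V_0$. The resolution is twofold: (i) always peel off a faster scale by comparing the minimiser against the truth with the faster-block parameters held fixed, so that the dominant terms cancel exactly and only nonnegative lower-scale quadratics remain; and (ii) control the $\Phi_2$ contribution via the conditional-moment identity \eqref{eq:expansion_cov}, which shows it perturbs the limiting $\beta$-contrast only by terms vanishing as $\Delta_n\to0$, so the unique minimiser $\beta=\truebeta$ is preserved. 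Throughout, uniform convergence on the compact set $\Theta$ (\ref{assump:param_space}) and the uniform positive-definiteness of $\Sigma_1$ under (\ref{assump:hypo1})--(\ref{assump:hypo2}) guarantee that the argmin passes to the limit.
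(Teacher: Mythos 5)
Your overall architecture coincides with the paper's: consistency is established sequentially ($\gamma$ first, then $\sigma$, then $\beta$), exploiting the scale separation $\Delta/n$, $1/n$, $1/(n\Delta)$ of the contrast, with ergodic averages plus martingale lemmas as the workhorse. Your $\gamma$-step is essentially the paper's Step 1, and your one-sided trick for $\sigma$ (dropping the nonnegative $\Delta^{-1}$ smooth-mismatch quadratic from the minimisation inequality instead of proving it vanishes at $\hat\gamma_n$) is a legitimate and even slightly cleaner variation, since it avoids needing the full limit of $\tfrac1n\ell_{n,\Delta}(\beta,\hat\gamma_n,\sigma)$. However, the $\beta$-step has a genuine gap, in two places.

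First, after comparing $\ell_{n,\Delta}(\hat\theta_n)\le\ell_{n,\Delta}(\truebeta,\hat\gamma_n,\hat\sigma_n)$, it is \emph{not} true that ``all $(\gamma,\sigma)$-only contributions cancel, leaving a nonnegative $\beta$-quadratic'': there remains the cross term between the $\beta$-mismatch and the $\gamma$-mismatch, namely (paper's $Q_{3,1}$)
\begin{align*}
\frac{2}{n\Delta}\sum_{m}\bigl(e(\sample{X}{m-1};\truebeta,\hat\gamma_n,\hat\sigma_n)-e(\sample{X}{m-1};\beta,\hat\gamma_n,\hat\sigma_n)\bigr)^{\top}\Sigma_1^{-1}(\sample{X}{m-1};\hat\gamma_n,\hat\sigma_n)\begin{bmatrix}\mathbf{0}_{d_R}\\ V_{S,0}(\sample{X}{m-1},\truegamma)-V_{S,0}(\sample{X}{m-1},\hat\gamma_n)\end{bmatrix},
\end{align*}
which under (\ref{assump:lipschitz}) is of order $|\hat\gamma_n-\truegamma|/\Delta$. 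No magnitude estimate can control this: even the paper's refined rate $|\hat\gamma_n-\truegamma|=o_{\mathbb P}(\sqrt{\Delta})$ would only give $o_{\mathbb P}(\Delta^{-1/2})$. The paper kills this term \emph{identically}, via the algebraic identity of Lemma \ref{lemma:aux_matrix}, $\Lambda_{1,SR}+\tfrac12\Lambda_{1,SS}\partial^{\top}_{x_R}V_{S,0}=\mathbf{0}$, combined with the structural fact that the smooth block of the $\beta$-mismatch drift equals exactly $\tfrac12\partial^{\top}_{x_R}V_{S,0}$ times its rough block; this exact cancellation, specific to the local-Gaussian covariance, is the crux of the hypo-elliptic $\beta$-step and is absent from your proposal. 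Second, the $\Phi_2$-difference surviving at the $(n\Delta)^{-1}$ scale (paper's $Q_{4,4}$) contains terms quadratic in $(V_{S,0}(\cdot,\truegamma)-V_{S,0}(\cdot,\hat\gamma_n))/\sqrt{\Delta}$, i.e.\ of order $|\hat\gamma_n-\truegamma|^2/\Delta$. The paper controls these via Lemma \ref{canonical_conv2}, which requires the intermediate rate $(\hat\gamma_n-\truegamma)/\sqrt{\Delta}\probconv 0$ of (\ref{eq:gamma_rate}), itself proved by a Taylor expansion of the $\gamma$-score and Lemma \ref{lemma:taylor_gamma_conv} — a step your proposal skips entirely. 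Your bound $D_n(\hat\gamma_n,\hat\sigma_n)=O_{\mathbb P}(\Delta)$ is no substitute: uniform convergence of $D_n$ to $\mathcal D$ does not transfer the rate to $\mathcal D(\hat\gamma_n,\hat\sigma_n)$, and even the best conceivable consequence, $|\hat\gamma_n-\truegamma|=O_{\mathbb P}(\sqrt{\Delta})$, falls short of the needed $o_{\mathbb P}(\sqrt{\Delta})$. Likewise, your appeal to (\ref{eq:expansion_cov}) to dismiss the $\Phi_2$ contribution is not the operative mechanism here; what is needed is the score-rate for $\hat\gamma_n$ together with the cancellation lemma.
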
  
%
%
%
\begin{theorem}[Asymptotic normality] \label{thm:asymp_norm}
Under conditions (\ref{assump:param_space})--(\ref{assump:ident}), it holds that if $n \to \infty$, $\Delta_n \to 0$, $n \Delta_n \to \infty$ and $\Delta_n=o(n^{-1/3})$, then:  
\begin{enumerate}
\item[(a)] if $d = d_R$, i.e.~in the setting of elliptic diffusions, 
\begin{align*} 
%
\Bigl[
 \sqrt{n \Delta}\, (\hat{\beta}_n - \beta^\dagger)^\top, 
 \sqrt{n} \,(\hat{\sigma}_n - \sigma^\dagger)^\top
\Bigr]^\top 
%
\xrightarrow{\mathcal{L}_{\theta^\dagger}} \mathscr{N} \bigl( 0, I_e (\theta^\dagger)^{-1} \bigr),
\end{align*}
\item[(b)] if $d > d_R$, i.e.~in the setting of hypo-elliptic diffusions,
\begin{align*} 
\Bigl[ \sqrt{n \Delta} \,(\hat{\beta}_n - \beta^\dagger)^\top, \,
 \sqrt{\tfrac{n}{\Delta}} \,(\hat{\gamma}_n - \gamma^\dagger)^\top, \, 
 \sqrt{n}\, (\hat{\sigma}_n - \sigma^\dagger)^\top 
\Bigr]^\top 
 \xrightarrow{\mathcal{L}_{\theta^\dagger}} \mathscr{N}\bigl( 0, I (\theta^\dagger)^{-1} \bigr). 
\end{align*}
\end{enumerate}
$
I_e (\trueparam) = \mrm{diag} \bigl[ 
\big( I_{e, ij}^{\, \beta}(\trueparam) \big)_{1 \leq i, j \leq d_{\beta}}, 
\bigl( I_{e, ij}^{\, \sigma}(\trueparam) \bigr)_{1 \leq i, j \leq d_{\sigma}} \bigr]
$ is the asymptotic precision matrix 
%
with block matrix elements:
\begin{align*}
& I_{e, ij}^{\, \beta}(\trueparam) 
  =  \int 
   \big(\, \partial_{\beta_i}  V_{R,0} (x,  \truebeta)^{\top}
  a_R^{-1} (x,  \sigma^\dagger) 
  \, \partial_{\beta_j}  V_{R,0} (x, \truebeta) \,\big) 
  \,\nu_{\trueparam} (dx); \\
 & I_{e, ij}^{\sigma}(\trueparam)   
 =\tfrac{1}{2}  \int
  \mathrm{tr} \bigl(\,
  \partial_{\sigma_i} a_R (x, \sigma^\dagger) \, 
  a_R^{-1} (x,  \sigma^\dagger)
  \,\partial_{\sigma_j} a_R (x,  \sigma^\dagger) 
  \,a_R^{-1} (x,  \sigma^\dagger) \,\bigr)
  \,\nu_{\trueparam} (dx). 
\end{align*}
Similarly, the asymptotic precision matrix $I(\trueparam)$ has the block-diagonal structure:
\begin{align}
\label{eq:precision}
I^{}(\trueparam) 
= 
\mrm{diag} \bigl[
\big( I_{ij}^{\, \beta}(\trueparam) \big)_{1 \leq i, j \leq d_{\beta}}, 
\big( I_{ij}^{\, \gamma}(\trueparam) \big)_{1 \leq i, j \leq d_{\gamma}}, 
\bigl( I_{ij}^{\, \sigma}(\trueparam) \bigr)_{1 \leq i, j \leq d_{\sigma}} 
\bigr], 
\end{align}
with block matrix elements:
\begin{align*}
 &I_{ij}^{\, \beta}(\trueparam) 
  =  \int 
   \big(\,
   \partial_{\beta_{i}} V_{R,0} (x , \truebeta)^{\top}
  {\Sigma}_{1, RR}^{-1} (x; \sigma^\dagger) \, 
  \partial_{\beta_{j}} V_{R,0} (x , \truebeta) \,\big)\,
  \nu_{\trueparam} (dx); \\
 & I_{ij}^{\, \gamma}(\trueparam) 
  = 4 \int 
   \big(\, 
   \partial_{\gamma_{i}} V_{S,0} (x , \truegamma)^{\top}
   \, 
   {\Sigma}_{1, SS}^{-1} (x; \gamma^\dagger, \sigma^\dagger)
   \,
   \partial_{\gamma_{j}} V_{S,0} (x , \truegamma) 
   \,\big)
   \,\nu_{\trueparam} (dx); \\
 &I_{ij}^{\sigma}(\trueparam)   =\tfrac{1}{2}  \int
  \mathrm{tr} \bigl(\,\partial_{\sigma_i}{\Sigma}_1 (x;  \gamma^\dagger, \sigma^\dagger)\, {\Sigma}_1^{-1}(x;   \gamma^\dagger, \sigma^\dagger)  \,\partial_{\sigma_j}{\Sigma}_1 (x;  \gamma^\dagger, \sigma^\dagger) \,{\Sigma}_1^{-1}(x;  \gamma^\dagger, \sigma^\dagger)\,\bigr)\,\nu_{\trueparam} (dx).
\end{align*} 
\end{theorem}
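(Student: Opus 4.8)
The plan is to treat $\hat{\theta}_n$ as an M-estimator and follow the classical score/Hessian route, adapted to the three different convergence rates. Since $\hat{\theta}_n$ is consistent (\cref{consistency}) and $\trueparam$ is interior (\ref{assump:ident}), with probability tending to one $\hat{\theta}_n$ solves $\partial_\theta \ell_{n, \Delta}(\hat{\theta}_n) = 0$, and a mean-value expansion gives
\[
0 = \partial_\theta \ell_{n,\Delta}(\trueparam) + \Big( \textstyle\int_0^1 \partial_\theta^2 \ell_{n,\Delta}\big(\trueparam + u(\hat{\theta}_n - \trueparam)\big)\, du \Big)(\hat{\theta}_n - \trueparam).
\]
I introduce the diagonal rate matrix $\Lambda_n = \mrm{diag}\big( \sqrt{n\Delta}\, \mbf{I}_{d_\beta},\, \sqrt{n/\Delta}\, \mbf{I}_{d_\gamma},\, \sqrt{n}\, \mbf{I}_{d_\sigma} \big)$, with the $\gamma$-block deleted in the elliptic case. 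Inserting $\Lambda_n^{-1}\Lambda_n$ and rearranging yields
\[
\Lambda_n(\hat{\theta}_n - \trueparam) = -\big( \Lambda_n^{-1}\, \partial_\theta^2 \ell_{n,\Delta}(\bar{\theta}_n)\, \Lambda_n^{-1} \big)^{-1} \Lambda_n^{-1}\, \partial_\theta \ell_{n,\Delta}(\trueparam),
\]
so the theorem reduces to (i) a law of large numbers $\Lambda_n^{-1}\partial_\theta^2 \ell_{n,\Delta}(\bar\theta_n) \Lambda_n^{-1} \probconv A(\trueparam)$ for a deterministic positive-definite $A(\trueparam)$, and (ii) a central limit theorem $\Lambda_n^{-1}\partial_\theta \ell_{n,\Delta}(\trueparam) \distconv \mathscr{N}(0, B(\trueparam))$ for the normalised score. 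Slutsky's lemma then gives $\Lambda_n(\hat{\theta}_n - \trueparam) \distconv \mathscr{N}(0, A^{-1}B A^{-1})$, and because the scheme matches the true conditional mean and covariance to the required order, $A(\trueparam)$ and $B(\trueparam)$ turn out to be proportional (the constant fixed by the $-2\log$-likelihood normalisation of the contrast), so that the sandwich collapses to the clean inverse $I(\trueparam)^{-1}$ stated in the theorem.

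For step (i) I would expand the three diagonal and the off-diagonal blocks of $\partial_\theta^2 \ell_{n,\Delta}$, substitute the small-time conditional-moment expansions of the increments $X_{t_m} - X_{t_{m-1}}$ from \cref{lemma:AE_scheme} (equivalently \eqref{eq:expansion_cov}), and invoke the ergodic law of large numbers for additive functionals of $\{X_t\}$ under (\ref{assump:moments})--(\ref{assump:finite_moment}). The normalisation by $\Lambda_n^{-1}$ is tuned so that each surviving block converges to the corresponding integral against $\nu_{\trueparam}$ — the $\beta$-block built from $\partial_\beta V_{R,0}^\top \Sigma_{1,RR}^{-1}\partial_\beta V_{R,0}$, the $\gamma$-block from $\partial_\gamma V_{S,0}^\top \Sigma_{1,SS}^{-1}\partial_\gamma V_{S,0}$, and the $\sigma$-block from the trace form in $\Sigma_1$ — so that $A(\trueparam)$ inherits the block-diagonal structure of $I(\trueparam)$. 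The off-diagonal blocks must be shown to vanish: the drift–diffusion cross-terms ($\beta$–$\sigma$ and $\gamma$–$\sigma$) drop out by the odd/even parity of Gaussian moments, while the $\beta$–$\gamma$ cross-term is removed by a direct computation exploiting the block form of $\Sigma^{-1}$ in \eqref{eq:Var} together with the hypo-elliptic structure of the coefficients. Finally, (\ref{assump:bd_deriv}) and the uniform moment bound (\ref{assump:finite_moment}), via consistency, justify replacing $\bar\theta_n$ by $\trueparam$ up to an $o_{\mathbb{P}}(1)$ error.

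The core of the argument is the score CLT (ii). Writing $\Lambda_n^{-1}\partial_\theta \ell_{n,\Delta}(\trueparam) = \sum_{m=1}^n \zeta_{m,n}$ as a triangular array of $\filtration{m-1}$-conditioned summands, I would verify the standard martingale CLT conditions (e.g.\ of the Genon-Catalot--Jacod type): (a) $\sum_m \mathbb{E}_{\trueparam}[\zeta_{m,n} \mid \filtration{m-1}] \probconv 0$; (b) $\sum_m \mathrm{Cov}_{\trueparam}[\zeta_{m,n}\mid \filtration{m-1}] \probconv B(\trueparam)$; and (c) a conditional Lyapunov/fourth-moment bound. Parts (b) and (c) follow from the same conditional-moment expansions and ergodic theorem as in step (i). The decisive and most delicate point is the bias control (a), and this is where the improved rate $\Delta_n = o(n^{-1/3})$ enters. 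The conditional mean of the rough-drift ($\beta$) summand carries the intrinsic $\mathcal{O}(\Delta^2)$ discrepancy $\tfrac{1}{2}\hat{V}_0 V_{R,0}\,\Delta^2$ between the true conditional mean of $X_{R,t_m}$ and $\mu_R$ (which, by \eqref{eq:Mean}, omits the second-order drift term); after summation and division by $\sqrt{n\Delta}$ this is of size $\sqrt{n}\,\Delta^{3/2}$, hence $o_{\mathbb{P}}(1)$ precisely when $\Delta_n = o(n^{-1/3})$. For the smooth-drift ($\gamma$) summand the $\tfrac{\Delta^2}{2}$ correction retained in $\mu_S$ reduces the mean discrepancy to $\mathcal{O}(\Delta^3)$, but the $\Delta^{-3}$ blow-up of $\Sigma_{SS}^{-1}$ reproduces the same threshold $\sqrt{n}\,\Delta^{3/2} = o(1)$. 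For the diffusion ($\sigma$) summand the leading bias is the $\mathcal{O}(\Delta)$ covariance correction $2\Delta\, G_{i_1 i_2}$ in \eqref{eq:expansion_cov}; the term $-2\Delta \sum_m \Phi_2$ in the contrast \eqref{eq:contrast} is designed exactly so that its derivative cancels this contribution, leaving only the residual $\widetilde{G}_{i_1 i_2}(\Delta^2, \cdot) = \mathcal{O}(\Delta^2)$ and a scaled bias $\sqrt{n}\,\Delta^2 = o(1)$; without $\Phi_2$ this block alone would force the stronger requirement $\Delta_n = o(n^{-1/2})$. Establishing this cancellation rigorously — tracking which Hermite-polynomial moments survive conditional expectation under $\trueparam$ and matching them against $\partial_\theta \Phi_2$ — is the main obstacle of the proof and is the step delivering the sharpened rate relative to \cite{dit:19, glot:21}.

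Assembling (i) and (ii) through Slutsky's lemma and the continuous-mapping theorem yields the joint CLT $\Lambda_n(\hat{\theta}_n - \trueparam) \distconv \mathscr{N}(0, I(\trueparam)^{-1})$, with the block-diagonal $I(\trueparam)$ identified in step (i); the elliptic statement, with precision matrix $I_e(\trueparam)$, follows verbatim upon deleting the $\gamma$-block and replacing $(\mu, \Sigma^{-1}, \Sigma_1, \Phi_2)$ by $(\mu_R, \Sigma_{RR}^{-1}, a_R, \Phi_{e,2})$ throughout.
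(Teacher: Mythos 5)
Your proposal is correct and follows essentially the same route as the paper's proof: a mean-value expansion of the score around $\trueparam$, a scaled-Hessian law of large numbers (the paper's Lemma \ref{lemma:conv_cov}, limit $2I(\trueparam)$), and a martingale-array CLT for the normalised score verified through conditional mean, conditional covariance and fourth-moment conditions (the paper's Lemma \ref{lemma:conv_normal}, limit $\mathscr{N}(0,4I(\trueparam))$, via Hall--Heyde), so that the sandwich collapses to $I(\trueparam)^{-1}$. Your identification of where $\Delta_n = o(n^{-1/3})$ enters — the $\sqrt{n\Delta^3}$ drift biases and the cancellation of the $\sqrt{n\Delta^2}$ term in the $\sigma$-score by the derivative of the $\Phi_2$ correction, leaving a $\sqrt{n\Delta^4}$ residual — matches the paper's argument exactly.
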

\noindent The proofs of Theorems \ref{consistency} and \ref{thm:asymp_norm} are given in Sections \ref{pf_consistency} and \ref{sec:pf_asymp_norm}, respectively, in Supplementary Material.
\begin{rem} \label{rem:vs_LocalGaussian}
\cite{glot:21} obtain an MLE using the transition density  of the local Gaussian scheme in (\ref{eq:LG}), that is 
$
\widetilde{\theta}_n 
= \mathrm{argmin}_{\theta \in \Theta} \widetilde{\ell}_{n ,\Delta} (\theta) 
$
where $\widetilde{\ell}_{n ,\Delta} (\theta)$ is given by~(\ref{eq:contrast}) without the last term $\textstyle -2 \Delta  \sum_{m = 1}^n \Phi_2 (\Delta, \sample{X}{m-1}, \sample{X}{m} ; \theta)$. 
%
%
They prove asymptotic normality for  $\widetilde{\theta}_n$ under the conditions $n \to \infty$, $ \Delta_{n} \to 0$, $n \Delta_n \to \infty$ and $\Delta_n= o(n^{-1/2})$.  
The asymptotic precision they obtain is identical to the one we find here in  (\ref{eq:precision}). Thus, addition of the term $\textstyle{- 2 \Delta  \sum_{m=1}^n  \Phi_2 (\Delta, \sample{X}{m-1}, \sample{X}{m}; \theta})$ allows for a weaker experimental design condition under which a CLT holds, even if it does not alter the asymptotic variance. 
%
\begin{rem} \label{rem:score}
We explain that the proposed estimator $\hat{\theta}_{n}$ contributes to an improvement mainly in the estimation of diffusion parameter $\sigma$ when compared with the existing estimator $\widetilde{\theta}_n$. This is clarified by observing the role of the weaker condition $\Delta_n = o(n^{-1/3})$ for the CLT. 
To prove the CLT, we show:
\begin{align}  \label{eq:score_conv}
\sum_{1 \le m  \le n} \mathbb{E}_{\trueparam} [ s (\Delta_n, \sample{X}{m-1}, \sample{X}{m}; \trueparam)  | \mathcal{F}_{t_{m-1}}] \probconv 0,
\\[-0.8cm] \nonumber 
\end{align}
as \limit, with the aid of $\Delta_n = o (n^{-1/3})$, where $s=s(\cdot;\cdot)$ is determined from the scaled score function as: 
$$ 
\bigl[
    \tfrac{1}{\sqrt{n \Delta_n}} 
    \partial_\beta^\top \, 
    \ell_{n, \Delta} (\trueparam), \, 
    \sqrt{\tfrac{\Delta_n}{n}} \partial_\gamma^\top \, 
    \ell_{n, \Delta} (\trueparam), \,  
    \tfrac{1}{\sqrt{n}} \partial_\sigma \, 
    \ell_{n, \Delta} (\trueparam) 
\bigr]^\top  
\equiv \sum_{1 \le m \le n} s (\Delta_n, \sample{X}{m-1}, \sample{X}{m}; \trueparam).  
$$
Since the contrast function (\ref{eq:contrast}) splits into a log-Gaussian part (related to the local Gaussian scheme) and a correction part (involving $\Phi_2$), we write the score function as: 
$
s (\Delta_n, \sample{X}{m-1}, \sample{X}{m-1}; \trueparam) \equiv 
s_{\mrm{LG}} (\Delta_n, \sample{X}{m-1}, \sample{X}{m-1}; \trueparam) 
+ s_{\Phi_2} (\Delta_n, \sample{X}{m-1}, \sample{X}{m-1}; \trueparam).  
$
%
Then, the expansion formula (\ref{eq:expansion_cov}) yields: 
\begin{align}
& \sum_{1 \le m \le n} \mathbb{E}_{\trueparam} 
[ 
{s}^{\, i_1}_{\mrm{LG}} (\Delta_n, \sample{X}{m-1}, \sample{X}{m}; \trueparam) | \mathcal{F}_{t_{m-1}} 
]
= \tfrac{1}{n} \sum_{1 \le m \le n} {g}^{i_1} \bigl(
\sqrt{n \Delta_n^3}, \sample{X}{m-1}; 
\trueparam \bigr), \quad   
1 \le i_1 \le d_{\beta} + d_\gamma;  \nonumber \\ 
& \sum_{1 \le m \le n} \mathbb{E}_{\trueparam} 
[ 
{s}^{\, i_2}_{\mrm{LG}} (\Delta_n, \sample{X}{m-1}, \sample{X}{m}; \trueparam) | \mathcal{F}_{t_{m-1}} 
]  = \tfrac{\sqrt{n \Delta_n^2}}{n} \sum_{1 \le m \le n}  
\mrm{tr} \bigl( 2 \cdot G (\sample{X}{m-1}; \trueparam) \partial_{\theta_{i_2}} \Sigma_{1}^{-1} (\sample{X}{m-1}; \truegamma, \truesigma)  \bigr) \nonumber \\ 
& \qquad + \tfrac{1}{n} \sum_{1 \le m \le n} 
{g}^{i_2} \bigl(
\sqrt{n \Delta_n^4}, \sample{X}{m-1}; 
\trueparam \bigr), \quad  d_{\beta} + d_\gamma + 1 \le i_2 \le d_\theta, \label{eq:score_lg} 
\end{align}
where $g^{i} (\cdot, \cdot; \trueparam): [0,\infty) \times \mathbb{R}^d \to \mathbb{R}, \, 1 \le i \le d,$
has the same regularity property as $\widetilde{G}_{i_1 i_2} (\cdot, \cdot; \trueparam)$ 
in (\ref{eq:expansion_cov}). The above two equations correspond to the score functions of the drift parameters $(\beta, \gamma) \in \Theta_\beta \times \Theta_\gamma$ and the diffusion parameter $\sigma \in \Theta_\sigma$, respectively. Also for the correction part, we show that 
\begin{align*}
& \sum_{1 \le m \le n} \mathbb{E}_{\trueparam} 
[ 
{s}^{\, i_1}_{\Phi_2} (\Delta_n, \sample{X}{m-1}, \sample{X}{m}; \trueparam) | \mathcal{F}_{t_{m-1}} 
]
= \tfrac{1}{n} \sum_{1 \le m \le n} \widetilde{g}^{i_1} \bigl(
\sqrt{n \Delta_n^3}, \sample{X}{m-1}; 
\trueparam \bigr), \quad 
1 \le i_1 \le d_{\beta} + d_\gamma; \nonumber \\
& \sum_{1 \le m \le n} \mathbb{E}_{\trueparam} 
[ 
{s}^{\, i_2}_{\Phi_2} (\Delta_n, \sample{X}{m-1}, \sample{X}{m}; \trueparam) | \mathcal{F}_{t_{m-1}} 
]  = - \tfrac{\sqrt{n \Delta_n^2}}{n} 
\hspace{-0.2cm} \sum_{1 \le m \le n} 
\mrm{tr} \bigl( 2 \cdot G (\sample{X}{m-1}; \trueparam) \partial_{\theta_{i_2}} \Sigma_{1}^{-1} (\sample{X}{m-1}; \truegamma, \truesigma)  \bigr) \nonumber \\ 
& \qquad + \tfrac{1}{n} \sum_{1 \le m \le n} \widetilde{g}^{i_2} \bigl(\sqrt{n \Delta_n^4}, \sample{X}{m-1}; 
\trueparam \bigr), \quad  d_{\beta} + d_\gamma + 1 \le i_2 \le d_\theta,  
\end{align*} 
where $\widetilde{g}^i$ has the same structure as $g^i$. Thus, inclusion of the $\Phi_2$-term in our contrast function $\ell_{n, \Delta} (\theta)$ in  \eqref{eq:contrast}, results in the cancellation of a quantity corresponding to the 1st term of size $\textstyle{\mathcal{O} (\sqrt{n \Delta_n^2})}$ on the right side of (\ref{eq:score_lg}), and then convergence (\ref{eq:score_conv}) holds under $\Delta_n = o (n^{-1/3})$. For details, see the proof of Lemma \ref{lemma:conv_normal} in Section \ref{appendix:pf_conv_normal} in Supplementary Material. Note that the required condition for $\Delta_n$ is weakened for the score w.r.t. diffusion parameter, while the condition for the score w.r.t. drift parameters remains the same after inclusion of the correction part (see $g^{i_1}$ and $\widetilde{g}^{i_1}$). 
Thus, it is expected that the new estimator performs better than the existing estimator $\widetilde{\theta}_n$ in the estimation of diffusion parameter.
\end{rem} 
\end{rem} 
\subsubsection{Contrast estimator for stochastic damping Hamiltonian systems}
\label{sec:stochastic_damping}
We focus on an important sub-class of hypo-elliptic diffusions used in applications, namely the \emph{stochastic damping Hamiltonian systems}, and write down in detail the form of the proposed contrast function for such a family of models. 
An SDE in the above class can be written as follows: 
%
\begin{align}
\begin{aligned} \label{eq:sdhs} 
d X_{R,t} & = - (c_\beta (X_{S,t}) X_{R,t} + g_\beta (X_{S,t}))
dt  + \Xi (\sigma)  \, d B_t, \quad  X_{R, 0} = x_{R, 0} \in \mathbb{R}^{\bar{d}};  \\ 
d X_{S, t} & = X_{R, t} dt, \quad X_{S, 0} = x_{S, 0} \in \mathbb{R}^{\bar{d}},  
\end{aligned}
\end{align}
where $d_S = d_R = \bar{d} \geq 1$, $g_\beta: \mathbb{R}^{\bar{d}} \to \mathbb{R}^{\bar{d}}$, $c_\beta: \mathbb{R}^{\bar{d}} \to \mathbb{R}^{\bar{d} \times \bar{d}}$ are smooth functions depending on some parameter $\beta \in \Theta_\beta$, and $\Xi (\sigma) = \mrm{diag} (\sigma)$ for $\sigma = (\sigma_1, \ldots, \sigma_{\bar{d}}) \in \Theta_\sigma$ with $\sigma_j > 0, \, 1 \le j \le \bar{d}$. 
System (\ref{eq:sdhs}) includes many models used in applications, including, e.g., the Langevin equation \citep{pavl:14}.
%

For $X_t = [X_{R,t}^\top, X_{S,t}^\top]^\top \in \mathbb{R}^{2 \bar{d}}$ given by (\ref{eq:sdhs}), the local Gaussian density (\ref{eq:lg_density}) is specified via the following mean vector $\mu (\Delta, x; \theta) \in \mathbb{R}^{2 \bar{d}}$ and covariance matrix $\Sigma (\Delta, x ; \theta) \in \mathbb{R}^{
2 \bar{d} \times 2 \bar{d}}$: 
\begin{align*} 
\mu (\Delta, x; \theta) 
= 
\begin{bmatrix}
 x_R - (c_\beta (x_S) x_R + g_\beta (x_S)) \cdot \Delta \\
 x_S + x_R \cdot \Delta - (c_\beta (x_S) x_R + g_\beta (x_S)) \cdot {\Delta^2}/{2}
\end{bmatrix}, \ \  
\Sigma (\Delta, x; \theta) 
=
\begin{bmatrix}
  \Delta &  {\Delta^2}/{2} \\
  {\Delta^2}/{2} & {\Delta^3}/{3}
\end{bmatrix}
\otimes \Xi (\sigma) \Xi (\sigma)^\top. 
\end{align*}
The diffusion matrix $\Xi$ is independent of the state $x \in \mathbb{R}^{2 \bar{d}}$ and the drift function is linear 
in~$X_{R,t}$ given $X_{S,t}$. Thus, most of the terms in the definition of $\Phi_2$ appearing in the contrast (\ref{eq:contrast}) are equal to $0$, e.g., 
%
$\textstyle 
\hat{V}_{j_1} V_{k_1}^{i_1} (x, \theta) = 0, \    
\hat{V}_{j_2} \hat{V}_{k_2} V_0^{i_2} (x, \theta) = 0$ for  
$ 1 \le i_1, k_1, k_2 \le \bar{d}, \,  
0 \le j_1, j_2 \le \bar{d}, \, \bar{d} + 1 \le i_2 \le 2 \bar{d}. 
$
%
%
Thus, the correction term $\Phi_2$ in the case of system (\ref{eq:sdhs}) takes up a simple form. For $x = (x_R, x_S), y \in \mathbb{R}^{2 \bar{d}}, \Delta > 0$ and  $\theta = (\beta, \sigma) \in \Theta$, we get:
\begin{align} \label{eq:Phi_2_sdhs}
\Phi_2 (\Delta, x, y ; \theta) 
& = - \sum_{1 \le i_1, i_2 \le \bar{d}}  c_{\beta, i_1 i_2} (x_S) \times (\sigma_{i_2})^2 
 \times 
 \Bigl\{ \tfrac{1}{2} \mathcal{H}_{(i_1 i_2)} (\Delta, x, y; \theta) 
+ \tfrac{1}{3} \mathcal{H}_{(i_1 i_2 + \bar{d})} 
(\Delta, x, y; \theta) \nonumber  \\ 
& \qquad \qquad 
+ \tfrac{1}{6} \mathcal{H}_{(i_1 + \bar{d}, i_2)} 
(\Delta, x, y; \theta)  
+ \tfrac{1}{8} \mathcal{H}_{(i_1 + \bar{d}, i_2 + \bar{d})}(\Delta, x, y; \theta) 
\Bigr\},  
\end{align} 
%
%
where $\textstyle \mathcal{H}_{(i_1 i_2)} (\Delta, x, y; \theta)$ is the second order Hermite polynomial defined in (\ref{eq:hermite}). 
%
%
\subsection{Low-frequency observation regime} \label{sec:lowfreq}
We consider the scenario of low-frequency observations, i.e.~the time interval among successive observations, $\Delta = t_{i} - t_{i-1}$, $1\le i \le n$, is now assumed fixed, and large enough so that approximation schemes must be combined with a \emph{data augmentation} approach. Thus, we move onto a Bayesian inference setting. 
Let $\delta = \delta_M := \Delta / M, \;  M\ge 1,$
be the user-induced step-size after imputation of $(M-1)$ data points amongst a pair of observations.
Given a transition density scheme, say 
$\bar{p}_{\delta} (x,y;\theta)$, 
the true (intractable) transition density is approximated as:
\begin{align*}
p_{\Delta}^X (X_{t_{i-1}}, X_{t_i}; \theta) 
& \approx 
\int_{\mathbb{R}^{d \times (M-1)}}
\big\{\prod_{1 \le j \le M} 
\bar{p}_{\delta} (x_{j-1},x_{j};\theta) 
\big\} \, dx_{1}\cdots dx_{M-1}, 
\quad  x_0= \sample{X}{i-1},
\; x_M = \sample{X}{i}.
%
\end{align*} 
%
%
In the case of elliptic diffusions, i.e., $d_S = 0, \, d_R = d$, \cite{gobet:08} showed that the bias induced by the Euler-Maruyama (EM) scheme is of size $\mathcal{O}(M^{-1})$. 
%

In Section \ref{sec:density_1}, \ref{sec:density_2} we develop two-types of explicit transition density schemes achieving local weak third order convergence via appropriate choice of higher order correction terms in the density expansion formula (\ref{eq:density_expansion}). 
We then illustrate in Section \ref{sec:disc_density} that the discretisation bias of the developed transition schemes is of size $\mathcal{O} (M^{-2})$, for the class of elliptic SDEs.   
\subsubsection{Local weak third order transition density scheme -- version I} \label{sec:density_1}
Making use of the small time expansion of the transition density $y \mapsto p^{\bar{X}}_\Delta (x, y ; \theta)$ in Lemma \ref{lemma:AE_scheme}, we obtain 
the following key result whose proof is given in Section \ref{appendix:local_wa_ver1} of Supplementary Material. 

%
%
\begin{prop}[Density expansion for local third order weak approximation] \label{prop:density_scheme}
Let $x, y \in \mathbb{R}^d$, $\Delta>0$, $\theta \in \Theta$, and assume that conditions 
(\ref{assump:param_space})--(\ref{assump:hypo2}) hold. Then, for any integer $J \geq 4$, the transition density $y \mapsto p^{\bar{X}}_\Delta (x, y ; \theta)$, admits the following representation: 
\vspace{-0.4 \baselineskip}  
\begin{align*} 
\begin{aligned} 
p^{\bar{X}}_\Delta (x, y ; \theta)  
  =   p^{\widetilde{X}^{\mathrm{LG}}}_\Delta (x, y ; \theta)  \bigl\{ 1 + \Psi^{\mrm{weak}} (\Delta, x, y; \theta)
+  R_1^J (\Delta, x, y ; \theta) \bigr\}  
   + \frac{\Delta^{{J}/{2}}}{\sqrt{\Delta^{d_R + 3 d_S}}}  R_2^J (x, y ; \theta),  
\end{aligned} 
\end{align*}
with residual terms $R_1^J (\Delta, x, y ; \theta)$ and $R_2^J (x, y ; \theta)$. In particular:
\begin{itemize}[leftmargin=0.2cm]
\item[(i)] We have the expression:
\begin{align}
\label{eq:Phi1}
\Psi^{\mrm{weak}} (\Delta, x, y; \theta) =  \sqrt{\Delta} \cdot  \Psi_1^{\mrm{weak}} (\Delta, x, y; \theta) + 
\Delta \cdot \Psi_2^{\mrm{weak}} (\Delta, x, y; \theta) 
+ \sqrt{\Delta^3} \cdot \Psi_3^{\mrm{weak}} (\Delta, x, y; \theta),
\end{align}
for the individual terms: \\[-0.6cm]
\begin{align}
\Psi_1^{\mrm{weak}} (\Delta, x, y; \theta) & =  \tfrac{1}{2}  \sum_{1 \le i_1, i_2, i_3 \le d_R} \sum_{1 \le k_1, k_2 \le d_R} \hspace{-0.4cm} 
   \hat{V}_{k_1} V_{k_2}^{i_1} (x , \theta) 
   V_{k_1}^{i_2}(x , \theta) V_{k_2}^{i_3}(x , \theta) \, \mathcal{H}_{(i_1,i_2,i_3)} (\Delta, x, y; \theta); 
   \label{eq:psi1_w} \\
\Psi_2^{\mrm{weak}} (\Delta, x, y; \theta) 
&  =  \tfrac{1}{2} \sum_{1 \le i_1, i_2 \le d_R} \sum_{1 \le k \le d_R} 
  \Bigl( \hat{V}_{k} V_{0}^{i_1} (x,  \theta)
 + \hat{V}_{0} V_{k}^{i_1} (x, \theta) \Bigr) V_k^{i_2}( x ,  \theta)  \mathcal{H}_{(i_1,i_2)} (\Delta, x, y; \theta) \nonumber \\
& \qquad  +  \tfrac{1}{4} \sum_{1 \le i_1, i_2 \le d_R} \sum_{1 \le k_1, k_2 \le d_R}   
\hat{V}_{k_1} V_{k_2}^{i_1}(x ,  \theta) \hat{V}_{k_1} V_{k_2}^{i_2}(x ,  \theta) \mathcal{H}_{(i_1,i_2)} (\Delta, x, y; \theta);  \label{eq:psi2_w} 
\\
\Psi_3^{\mrm{weak}} (\Delta, x, y; \theta)  
& = \tfrac{1}{2}  \sum_{1 \le i \le d_R} \hat{V}_{0} V_{0}^{i}(x,  \theta)  
  \mathcal{H}_{(i)} (\Delta, x, y; \theta).
\label{eq:psi3_w}
\end{align}
\item[(ii)] For any $\varphi \in C_p^{\infty} (\mathbb{R}^d ; \mathbb{R})$, there exist constants $C > 0$, $q \ge 1$ such that: 
\begin{align} \label{eq:weak_remainder}
\Bigl| \, \int_{\mathbb{R}^d} 
 \varphi (y) p_\Delta^{\widetilde{X}^{\mrm{LG}}} (x, y; \theta) R_1^J (\Delta, x, y ; \theta)
 dy \, \Bigr| \le C (1 + |x|^q) \Delta^3. 
\end{align}
\item[(iii)] $\textstyle{\sup_{x, y \in \mathbb{R}^d,  \theta \in \Theta}} | R_2^J (x, y ; \theta)|<C$, for a constant $C>0$.
\item[(iv)]   $\textstyle{ \int_{\mathbb{R}^d}  \Psi^{\mrm{weak}} (\Delta, x, y; \theta)  p^{\widetilde{X}^{\mathrm{LG}}}_\Delta (x, y ; \theta)   dy = 0}$. 
\end{itemize} 
Thus, from (ii) and (iii) with $J \ge 6$, we have that for any $\varphi \in C_p^{\infty} (\mathbb{R}^d ; \mathbb{R})$ and $(\Delta, x, \theta) \in (0, \infty) \times \mathbb{R}^d \times \Theta$, there exist constants $C>0$, $q \ge 1$ such that: 
\begin{align*} 
  \Big|\,
 \mathbb{E}_{\theta} [\varphi (\bar{X}_{\Delta}^x)]  -   \int_{\mathbb{R}^d} \varphi (y)  p^{\widetilde{X}^{\mathrm{LG}}}_\Delta (x, y ; \theta)  
 \bigl\{ 1 + \Psi^{\mrm{weak}} (\Delta, x, y; \theta) \bigr\}  dy \,\Big| 
\; \leq \; C (1 + |x|^q) \Delta^3. 
\end{align*} 
Using Proposition \ref{prop:weak_app_hypo}, the above gives immediately that, for any $\varphi \in C_p^{\infty} (\mathbb{R}^d ; \mathbb{R})$ and $(\Delta, x, \theta) \in (0, \infty) \times \mathbb{R}^d \times \Theta$, there exist constants $C>0$, $q\ge 1$ so that: 
\begin{align*}
\begin{aligned}
 \Big|\, \mathbb{E}_{\theta} [\varphi (X_{\Delta}^x)] 
 - \int_{\mathbb{R}^d} \varphi (y) 
 \, p^{\widetilde{X}^{\mathrm{LG}}}_\Delta  (x, y ; \theta) \bigl\{ 1 + \Psi^{\mrm{weak}} (\Delta, x, y; \theta) \bigr\} dy \, \Big|  
\; \leq \; C (1 + |x|^q) \Delta^3. 
\end{aligned}
\end{align*} 
\end{prop}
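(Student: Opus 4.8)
The plan is to derive the entire proposition from the full small-time expansion of Lemma~\ref{lemma:AE_scheme} by reorganising its terms according to the \emph{effective weak order} that each contributes once integrated against a test function. The engine is an integration-by-parts identity for the Hermite polynomials: using the change of variables $\xi = m_{x,\theta,\sqrt{\Delta}}(y)$ together with (\ref{eq:exp2}) and the defining relation (\ref{eq:hermite_Y}), and writing $r_\alpha$ for the number of rough indices ($\le d_R$) and $s_\alpha$ for the number of smooth indices ($>d_R$) of a multi-index $\alpha\in\{1,\ldots,d\}^l$, one shows that for any $\varphi\in C_p^\infty(\mathbb{R}^d;\mathbb{R})$,
\begin{align*}
\int_{\mathbb{R}^d}\varphi(y)\,p^{\widetilde{X}^{\mathrm{LG}}}_\Delta(x,y;\theta)\,\mathcal{H}_\alpha(\Delta,x,y;\theta)\,dy = \Delta^{r_\alpha/2 + 3 s_\alpha/2}\,\mathbb{E}_\theta\bigl[\partial^y_\alpha \varphi(\widetilde{X}_\Delta^{\mathrm{LG},x})\bigr].
\end{align*}
Since $\mathbb{E}_\theta[\partial^y_\alpha\varphi(\widetilde{X}_\Delta^{\mathrm{LG},x})]=\mathcal{O}(1)$ and the prefactors $v_\alpha$ in Lemma~\ref{lemma:AE_scheme} are smooth of polynomial growth, a summand $\Delta^{l/2}v_\alpha\mathcal{H}_\alpha$ of $\Delta^{l/2}\Psi_l$ integrates to a quantity of size $\Delta^{l/2 + r_\alpha/2 + 3 s_\alpha/2}$; I call this exponent the effective order of the summand.

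First I would classify all summands of $\sum_{l=1}^{J-1}\Delta^{l/2}\Psi_l$ by effective order. Because each $\Psi_l$ has no constant (order-zero) Hermite term and, by parity, carries only Hermite polynomials whose order matches $l$ modulo two, a short bookkeeping shows that the \emph{only} summands of effective order strictly below $3$ are those of effective order exactly $2$, and these are precisely the all-rough parts ($s_\alpha=0$) of $\Psi_1,\Psi_2,\Psi_3$: the third-order Hermite part of $\Psi_1$, the second-order Hermite part of $\Phi_2$ inside $\Psi_2$, and the first-order Hermite part of $\Psi_3$. Reading these off the explicit formulas for $\Psi_1$ and $\Phi_2$ in (\ref{eq:Phi_2}) of Lemma~\ref{lemma:AE_scheme} (and off the analogous formula for $\Psi_3$ derived in the supplement) yields exactly the three blocks (\ref{eq:psi1_w})--(\ref{eq:psi3_w}), establishing item~(i) with $\Psi^{\mrm{weak}}$ as in (\ref{eq:Phi1}). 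I then define $p^{\widetilde{X}^{\mathrm{LG}}}_\Delta R_1^J$ to be the sum of all remaining summands (every term of effective order $\ge 3$, i.e. the $s_\alpha>0$ parts of $\Psi_1,\Psi_2,\Psi_3$, the $\widetilde{\Phi}_2$ block, and the whole of $\Psi_4,\ldots,\Psi_{J-1}$) and set $R_2^J:=R^J$, so that the stated representation is a regrouping of (\ref{eq:density_expansion}). Item~(iii) is then immediate from the uniform bound on $R^J$ in Lemma~\ref{lemma:AE_scheme}, and item~(iv) follows at once by taking $\varphi\equiv 1$ in the identity above: every $\mathcal{H}_\alpha$ entering $\Psi^{\mrm{weak}}$ has $|\alpha|\ge 1$, so $\partial^y_\alpha\varphi\equiv 0$ and each block integrates to zero against $p^{\widetilde{X}^{\mathrm{LG}}}_\Delta$.

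The main obstacle is item~(ii), the bound $|\int \varphi\, p^{\widetilde{X}^{\mathrm{LG}}}_\Delta R_1^J\,dy|\le C(1+|x|^q)\Delta^3$. I would apply the integration-by-parts identity term by term to $R_1^J$: by construction every summand has effective order $\ge 3$, so each integrates to $\Delta^{l/2+r_\alpha/2+3s_\alpha/2}v_\alpha(x,\theta)\,\mathbb{E}_\theta[\partial^y_\alpha\varphi(\widetilde{X}_\Delta^{\mathrm{LG},x})]$ with exponent $\ge 3$. The polynomial factor $(1+|x|^q)$ is produced by the polynomial growth of $v_\alpha$ together with uniform-in-$\Delta$ moment bounds for $\widetilde{X}_\Delta^{\mathrm{LG},x}$, and the finitely many summands (bounded in number once $J$ is fixed) are summed. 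The delicate point is making the constant uniform in $\Delta\in(0,\Delta_0]$: this uses that the Gaussian envelope $p^{\widetilde{X}^{\mathrm{LG}}}_\Delta$ renders all $y$-integrals convergent after the change of variables, that $\partial^y_\alpha\varphi$ retains polynomial growth so its expectation under the non-degenerate, uniformly controlled law of $\widetilde{X}_\Delta^{\mathrm{LG},x}$ stays bounded, and that no effective-order-$2$ summand has been overlooked. These estimates are carried out in full in Section~\ref{appendix:local_wa_ver1} of the Supplementary Material \citep{supp}.

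Finally, the two displayed consequences follow by integrating the representation against $\varphi$: this gives $\mathbb{E}_\theta[\varphi(\bar{X}_\Delta^x)]$ equal to $\int\varphi\,p^{\widetilde{X}^{\mathrm{LG}}}_\Delta(1+\Psi^{\mrm{weak}})dy$ up to the $R_1^J$-integral, which is $\mathcal{O}((1+|x|^q)\Delta^3)$ by~(ii), and the $R_2^J$-integral, which is $\mathcal{O}((1+|x|^q)\Delta^3)$ for $J\ge 6$ after the same change of variables (the factor $\Delta^{J/2}$ absorbs the normalising power $\Delta^{-(d_R+3d_S)/2}$, and convergence of the remaining integral uses the integrability of the Malliavin residual $R^J$ established in the proof of Lemma~\ref{lemma:AE_scheme}). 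The second consequence then follows by adding and subtracting $\mathbb{E}_\theta[\varphi(X_\Delta^x)]$ and invoking the local weak third order accuracy of $\bar{X}_\Delta^x$ from Proposition~\ref{prop:weak_app_hypo}.
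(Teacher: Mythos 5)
Your proposal is correct and follows essentially the same route as the paper's own proof in Section \ref{appendix:local_wa_ver1}: the key identity you state is exactly Lemma \ref{lemma:wa_hermite} (with your effective order $l/2 + r_\alpha/2 + 3s_\alpha/2$ matching the paper's criterion $l + \|\alpha\|_R + 3\|\alpha\|_S \ge 6$ in $\sqrt{\Delta}$-units), the regrouping of (\ref{eq:density_expansion}) into $\Psi^{\mrm{weak}}$ plus $R_1^J$ with $R_2^J := R^J$ is the paper's decomposition, and items (iii) and (iv) are handled identically (the paper phrases (iv) via $\int \mathcal{H}_\alpha^{\widetilde{Y}} p^{\widetilde{Y}}\, d\xi = 0$, which is your $\varphi \equiv 1$ case). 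The only cosmetic difference is that where you invoke a chaos-parity argument to rule out low-order Hermite terms in $\Psi_3$ and in $\Psi_l$, $l \ge 4$, the paper instead proves the explicit structure directly (Lemma \ref{lemma:psi_3} and the observation that $\Psi_l$, $l \ge 4$, only involves $|\alpha| \ge 2$); both yield the same classification.
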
 
In the case of elliptic diffusions ($d = d_R$, $d_S = 0$), Proposition \ref{prop:density_scheme} is interpreted as follows: 
\begin{align*} 
\begin{aligned}
  p^{\bar{X}}_{e,\Delta} (x,  y ; \theta) 
= p^{\widetilde{X}^{\mathrm{EM}}}_{e, \Delta} (x,  y ; \theta)\, 
 \big\{ 1 + \Psi_e^{\mrm{weak}} (\Delta, x, y; \theta)  
   +  R_{e,1}^J (\Delta, x, y ; \theta)    \bigr\}
   + \frac{\Delta^{{J}/{2}}}{\sqrt{\Delta^{d_R}}} R_{e,2}^{J} (x, y ; \theta),  
\end{aligned}
\end{align*}
where $y \mapsto p^{\widetilde{X}^{\mathrm{EM}}}_{e,\Delta} (x, y ; \theta)$
denotes the density of the one-step Euler-Maruyama scheme given $\theta \in \Theta$ and the start point $x \in \mathbb{R}^{d_R}$, and terms $R_{e,1}^J (\Delta, x, y ; \theta)$, $R_{e,2}^J (x, y ; \theta)$ have the same properties as the corresponding ones in Proposition~\ref{prop:density_scheme}. 
The correction part $\Psi_e^{\mrm{weak}} (\Delta, x, y ; \theta)$ 
corresponds to $\Psi^{\mrm{weak}} (\Delta, x, y ; \theta)$ with $d = d_R$ and the Hermite polynomials replaced with  $\textstyle{\mathcal{H}_{\alpha}^{\widetilde{Y}_{e}} (\xi;\theta)}|_{\xi = \textstyle{m_{x, \theta,\sqrt{\Delta}}(y)}}$ whose definition has been adjusted in an apparent way to conform to the setting $d_S=0$.  

In light of Proposition \ref{prop:density_scheme}, 
we propose the use of the following transition density scheme for the hypo-elliptic case:
\begin{align} 
\label{eq:density_formula}
\bar{p}^{\,\mathrm{I}}_\Delta (x, y; \theta) := p^{\widetilde{X}^{\mathrm{LG}}}_\Delta (x, y ; \theta) 
 \big\{ 1 + \Psi^{\mrm{weak}} (\Delta, x, y; \theta) \big\},
\end{align}  
for $\Psi^{\mrm{weak}} (\Delta, x, y; \theta)$ as given in (\ref{eq:Phi1}),
with the following corresponding scheme for the elliptic case:
\begin{align}
\label{eq:density_formula_e}
\bar{p}^{\,\mathrm{I}}_{e,\Delta} (x, y; \theta) := 
p^{\widetilde{X}^{\mathrm{EM}}}_{e,\Delta} (x, y ; \theta) 
\big\{ 1 + \Psi_e^{\mrm{weak}} (\Delta, x, y; \theta)   \big\}. 
\end{align}
It is worth mentioning that the quantity 
$
\bar{p}^{\,\mathrm{I}}_{e,\Delta} (x, y; \theta)
$
has also appeared in results by \cite{iguchi:21-2, igu:22}, where a weak high order  approximation was developed for elliptic SDEs via a different approach.  
We also note that  
$ \bar{p}^{\,\mathrm{I}}_\Delta (x, y; \theta)$ 
has the following properties:
\vspace{-0.3 \baselineskip}  
\begin{itemize}
\item[(i)] The normalising constant is equal to 1, i.e. it holds that 
\begin{align*}
 \int_{\mathbb{R}^d} \bar{p}^{\,\mathrm{I}}_\Delta (x, y; \theta) dy =  \int_{\mathbb{R}^d} p^{\widetilde{X}^{\mathrm{LG}}}_\Delta (x, y ; \theta)  \bigl\{1 + \Psi^{\mrm{weak}} (\Delta, x, y ; \theta) \bigr\} dy  = 1. 
\end{align*}
\item[(ii)] $\Psi^{\mrm{weak}} (\Delta,x , y ; \theta)$, thus, also, 
$ \bar{p}^{\,\mathrm{I}}_\Delta (x, y; \theta)$, can, in general, take negative values, as do other closed-form expansions proposed by \cite{ait:08} and \cite{li:13}.
\end{itemize}  
\subsubsection{Local weak third order transition density scheme -- version II} \label{sec:density_2}
%
%
We construct an alternative transition density scheme that provides a proper, everywhere positive, density function. See also \cite{stra:10} for a related consideration in the context of the density expansion for elliptic SDEs put forward by \cite{ait:02}. 
We consider the truncated Taylor expansion of $\log (1 + z)$ and  introduce, for $z \in \mathbb{R}$, 
\begin{align}
\label{eq:kappaf}
 K(z) := \sum_{1 \le l \le 6} (-1)^{l+1} \tfrac{z^l}{l}.  
\end{align} 
For $\Delta > 0$, $x, y \in \mathbb{R}^d$, $\theta \in \Theta$, we define the unnormalised non-negative approximation: 
\begin{align} 
\label{eq:new_tran_density}
 \bar{p}^{\,\mathrm{II}}_\Delta (x, y ; \theta)
  := p^{\widetilde{X}^{\mathrm{LG}}}_\Delta (x, y; \theta) 
  \exp \bigl( K( \Psi^{\mrm{weak}} (\Delta, x, y; \theta))   \bigr). 
\end{align} 
The mapping $y\mapsto \bar{p}^{\,\mathrm{II}}_\Delta (x, y ; \theta)$ is integrable as the highest order term of $K(z)$ in (\ref{eq:kappaf}) is $-z^6$, so $\exp(K(z))$ is bounded. Notice that the normalising constant 
$
 \textstyle{ \bar{Z} (\Delta, x ; \theta):
 =\int_{\mathbb{R}^{d}} \bar{p}^{\,\mathrm{II}}_\Delta (x, y ; \theta)dy} 
$
is no longer $1$. Nevertheless, the unnormalised non-negative approximation 
$\bar{p}^{\,\mathrm{II}}_\Delta (x ,y ;\theta)$
delivers a local weak third order approximation for $X_{\Delta}^x$, in the following sense with its proof provided in Section \ref{appendix:diff_density_weak} of Supplementary Material. 
\begin{prop} \label{prop:local_WA_ver2}
For any $\varphi \in C_p^\infty (\mathbb{R}^d ; \mathbb{R})$ and $(\Delta, x, y) \in (0, \infty) \times \mathbb{R}^d \times \Theta$, there exist constants $C>0$, $q \ge 1$, such that:
\vspace{-0.5 \baselineskip}   
\begin{align} 
\label{eq:diff_density_weak}
 \Big| 
    \int_{\mathbb{R}^d}  
    \varphi (y) \bar{p}^{\,\mathrm{I}}_\Delta (x, y; \theta)  dy 
    - \int_{\mathbb{R}^d}  \varphi (y) 
     \bar{p}^{\,\mathrm{II}}_\Delta (x, y ; \theta) dy\,  \Big| 
\leq C (1 + |x|^q) \Delta^3, 
\end{align}
and then, from Proposition \ref{prop:density_scheme} (iv), 
\vspace{-0.5 \baselineskip}   
\begin{align*}
 \Big|\, \mathbb{E}_{\theta} [ \varphi (X_{\Delta}^x) ]
     -  \int_{\mathbb{R}^d}
        \varphi (y) \bar{p}^{\,\mathrm{II}}_\Delta (x, y ; \theta) dy  
    \,\Big|   \leq C ( 1 +|x|^q) \Delta^3, 
\end{align*}
for some constants $C>0$, $q \ge 1$. 
\end{prop}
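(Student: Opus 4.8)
The plan is to reduce the first bound (\ref{eq:diff_density_weak}) to a scalar estimate on the function $g(z):=\exp(K(z))-(1+z)$ together with a moment bound on $\Psi^{\mrm{weak}}$. Subtracting the two densities in (\ref{eq:density_formula}) and (\ref{eq:new_tran_density}) and factoring out the common local Gaussian density gives
\[
\int_{\mathbb{R}^d}\varphi(y)\,\bar p^{\,\mathrm{I}}_\Delta(x,y;\theta)\,dy-\int_{\mathbb{R}^d}\varphi(y)\,\bar p^{\,\mathrm{II}}_\Delta(x,y;\theta)\,dy = -\int_{\mathbb{R}^d}\varphi(y)\,p^{\widetilde{X}^{\mathrm{LG}}}_\Delta(x,y;\theta)\,g\bigl(\Psi^{\mrm{weak}}(\Delta,x,y;\theta)\bigr)\,dy .
\]
Thus it suffices to bound the right-hand side, which I would do by first controlling $g$ pointwise and then integrating a suitable power of $\Psi^{\mrm{weak}}$ against the (Gaussian) local Gaussian density.

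The key scalar estimate is that $|g(z)|\le C\,|z|^{7}$ for \emph{all} $z\in\mathbb{R}$. To see this, recall from (\ref{eq:kappaf}) that $K$ is the degree-$6$ Taylor polynomial of $\log(1+z)$, so for $z>-1$ one has $\exp(K(z))=(1+z)\exp(-R(z))$ with $R(z):=\log(1+z)-K(z)=\mathcal{O}(z^{7})$ near $z=0$; hence $g(z)=(1+z)\bigl(\exp(-R(z))-1\bigr)=\mathcal{O}(z^{7})$ as $z\to0$. Moreover, since the leading term of $K$ is $-z^{6}/6$, we have $K(z)\to-\infty$ as $|z|\to\infty$, so $\exp(K(z))$ is bounded and $g(z)\sim-(1+z)$ at infinity. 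Consequently $z\mapsto g(z)/z^{7}$ extends to a continuous function on $\mathbb{R}$ that tends to $0$ as $|z|\to\infty$, and is therefore bounded, which yields the claimed global estimate.

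It then remains to show the moment bound $\int_{\mathbb{R}^d}|\varphi(y)|\,p^{\widetilde{X}^{\mathrm{LG}}}_\Delta(x,y;\theta)\,|\Psi^{\mrm{weak}}(\Delta,x,y;\theta)|^{7}\,dy\le C(1+|x|^{q})\,\Delta^{7/2}$, and this is the main obstacle. From the decomposition (\ref{eq:Phi1}) the lowest-order-in-$\Delta$ contribution to $\Psi^{\mrm{weak}}$ is the $\sqrt{\Delta}\,\Psi_1^{\mrm{weak}}$ term, so $|\Psi^{\mrm{weak}}|^{7}$ is dominated by a sum of terms of order $\Delta^{7/2}$ and higher, each carrying a product of Hermite polynomials $\mathcal{H}_\alpha(\Delta,x,y;\theta)$. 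I would change variables via $\xi=m_{x,\theta,\sqrt{\Delta}}(y)$, under which $p^{\widetilde{X}^{\mathrm{LG}}}_\Delta(x,y;\theta)\,dy=p^{\widetilde{Y}}(\xi;\theta)\,d\xi$ and each $\mathcal{H}_\alpha$ becomes a fixed polynomial in $\xi$ whose coefficients (entries of $\Sigma_1^{-1}(x;\theta)$ and the weights $v_\alpha$) are controlled under (\ref{assump:param_space})--(\ref{assump:hypo2}). The resulting integral is a Gaussian expectation of a polynomial in $\xi$, multiplied by the polynomial-growth factor $|\varphi(y)|$ expressed through $\xi$; since all Gaussian moments are finite, $\varphi\in C_p^{\infty}$, and $\Theta$ is compact, this expectation is bounded by $C(1+|x|^{q})$. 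Collecting the $\Delta^{7/2}$ prefactor and noting $\Delta^{7/2}\le\Delta^{3}$ on the relevant range $\Delta\le\Delta_0$ yields (\ref{eq:diff_density_weak}), the higher-order terms only improving the power of $\Delta$.

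Finally, the second inequality follows by the triangle inequality. By definition (\ref{eq:density_formula}), $\int_{\mathbb{R}^d}\varphi(y)\,\bar p^{\,\mathrm{I}}_\Delta(x,y;\theta)\,dy=\int_{\mathbb{R}^d}\varphi(y)\,p^{\widetilde{X}^{\mathrm{LG}}}_\Delta(x,y;\theta)\{1+\Psi^{\mrm{weak}}(\Delta,x,y;\theta)\}\,dy$, and the concluding display of Proposition \ref{prop:density_scheme} (which rests on part (iv)) bounds the distance of this quantity from $\mathbb{E}_\theta[\varphi(X_\Delta^x)]$ by $C(1+|x|^{q})\Delta^{3}$. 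Adding (\ref{eq:diff_density_weak}) gives the stated bound on $|\mathbb{E}_\theta[\varphi(X_\Delta^x)]-\int_{\mathbb{R}^d}\varphi(y)\,\bar p^{\,\mathrm{II}}_\Delta(x,y;\theta)\,dy|$.
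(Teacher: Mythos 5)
Your proposal is correct and takes essentially the same route as the paper's proof: both rest on the fact that $K$ is the degree-six Taylor polynomial of $\log(1+z)$, so that $\exp(K(z))-(1+z)$ vanishes to seventh order at the origin while $\exp(K(z))$ stays globally bounded, combined with a moment bound showing $\Psi^{\mrm{weak}}$ is $\mathcal{O}(\sqrt{\Delta})$ when integrated against the local Gaussian density, and a final triangle inequality invoking the concluding display of Proposition \ref{prop:density_scheme}. Your repackaging of the scalar estimate as a single global bound $|\exp(K(z))-(1+z)|\le C|z|^{7}$ (in place of the paper's Taylor expansion with integral-form remainder $g(z)z^{8}/7!$), and your evaluation of the moment bound via the change of variables $\xi=m_{x,\theta,\sqrt{\Delta}}(y)$ to Gaussian moments (in place of the paper's pointwise polynomial-growth bound on $\Psi^{\mrm{weak}}$ together with moment bounds on $\widetilde{X}^{\mathrm{LG},x}_{\Delta}$), are only cosmetic differences.
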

\noindent 
The transition density scheme $\bar{p}^{\,\mathrm{II}}_\Delta (x, y ; \theta)$  in  (\ref{eq:new_tran_density}) has the  properties: 
\vspace{-0.5 \baselineskip}
\begin{itemize}
\item[(i)] It takes non-negative values for all $x, y \in \mathbb{R}^d$ and $\theta \in \Theta$. 
\item[(ii)] From Proposition \ref{prop:local_WA_ver2}, it  provides a local weak third order approximation for the solution of the SDE (\ref{eq:model}), for test functions $\varphi \in C_p^{\infty} (\mathbb{R}^d ; \mathbb{R})$. The discrepancy between the normalising constant $\bar{Z} (\Delta, x; \theta)$ and $1$ is of size $\mathcal{O} (\Delta^{3})$, by setting $\varphi (y)=1$ in (\ref{eq:diff_density_weak}). 
\end{itemize} 
\subsubsection{Discretisation based upon the transition density schemes} \label{sec:disc_density}
%
We finally illustrate the discretisation bias induced from the proposed transition density schemes for elliptic SDEs, i.e., $d_S = 0, d_R =d $. For the first-type transition density scheme in (\ref{eq:density_formula_e}), \cite{iguchi:21-2} showed that the discretisation bias is of size  $\mathcal{O}(M^{-2})$, i.e., there exist constants $C, c > 0$, $q\geq d_R/2$ and a non-decreasing $h(\cdot)$ such that for any $x, y \in \mathbb{R}^{d_R}$, 
\vspace{-0.4 \baselineskip}  
\begin{align} \label{eq:bd_conv_ver1}
    \Big|\, 
    p_{e,\Delta}^X (x, y;\theta)  - \bar{p}^{\,\mathrm{I},(M)}_{e,\Delta} (x, y;\theta)\, 
    \Big|
    \leq \frac{C}{M^2} \frac{h(\Delta)}{\Delta^q} 
    e^{- c\,\tfrac{|y - x|^2}{\Delta} },
\end{align}
\begin{rem}
The bound (\ref{eq:bd_conv_ver1}) was derived without reference to unknown SDE parameters, thus are not uniform in  $\theta$. Nevertheless, under compactness of $\Theta$, see (\ref{assump:param_space}), they can be readily 
adapted to be uniform also in  $\theta\in \Theta$.
\end{rem}
\noindent We have the following estimate for the second-type transition density scheme, which is a main new result obtained in this section, with a proof given in Section \ref{sec:pf_density_approx_second} of Supplementary Material.  
\begin{theorem} \label{thm:density_approx_second}
Let $\Delta> 0$. Assume $d_S = 0$ and that conditions (\ref{assump:param_space})--(\ref{assump:hypo1}) hold. Then, there exist constants $C, c > 0$, $q \geq d_R/2$ and a non-decreasing function $h(\cdot)$ such that for any $x, y \in \mathbb{R}^{d_R}$, $\theta \in \Theta$:
\begin{align} 
\label{eq:density_approx_second}
 \Big|\,
   p_{e,\Delta}^X (x , y; \theta) 
  - \bar{p}^{\,\mathrm{II},(M)}_{e,\Delta} (x , y; \theta) 
 \,\Big| 
 \leq \frac{C}{M^2}  \frac{h(\Delta)}{\Delta^{q}}
 e^{ - c\,\tfrac{|y-x|^2}{\Delta} }. 
\end{align}
\end{theorem}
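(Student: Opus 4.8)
The plan is to reduce the claim to the already-established Version~I bound~(\ref{eq:bd_conv_ver1}) by controlling the discrepancy between the two schemes. By the triangle inequality,
\begin{align*}
\big|\,p_{e,\Delta}^X (x,y;\theta) - \bar{p}^{\,\mathrm{II},(M)}_{e,\Delta}(x,y;\theta)\,\big|
\le \big|\,p_{e,\Delta}^X(x,y;\theta) - \bar{p}^{\,\mathrm{I},(M)}_{e,\Delta}(x,y;\theta)\,\big|
+ \big|\,\bar{p}^{\,\mathrm{I},(M)}_{e,\Delta}(x,y;\theta) - \bar{p}^{\,\mathrm{II},(M)}_{e,\Delta}(x,y;\theta)\,\big|,
\end{align*}
and the first summand is $\mathcal{O}(M^{-2})$ with the stated Gaussian weight by~(\ref{eq:bd_conv_ver1}). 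It therefore suffices to bound the difference of the two $M$-fold convolutions, built from the single-step kernels $\bar{p}^{\,\mathrm{I}}_{e,\delta}$ and $\bar{p}^{\,\mathrm{II}}_{e,\delta}$ with $\delta=\Delta/M$. Writing $\ast$ for the Chapman--Kolmogorov convolution of kernels and $\bar{p}^{(m)}$ for the $m$-fold convolution, I would use the telescoping identity
\begin{align*}
\bar{p}^{\,\mathrm{I},(M)}_{e,\Delta} - \bar{p}^{\,\mathrm{II},(M)}_{e,\Delta}
= \sum_{j=0}^{M-1}
\bar{p}^{\,\mathrm{I},(j)}_{e,\delta}\ast\big(\bar{p}^{\,\mathrm{I}}_{e,\delta}-\bar{p}^{\,\mathrm{II}}_{e,\delta}\big)\ast\bar{p}^{\,\mathrm{II},(M-1-j)}_{e,\delta},
\end{align*}
so that the global error is assembled from $M$ copies of the local difference $\bar{p}^{\,\mathrm{I}}_{e,\delta}-\bar{p}^{\,\mathrm{II}}_{e,\delta}$ sandwiched between transition kernels.

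The next step is a sharp estimate of the single-step difference. Since $d_S=0$ the local Gaussian scheme reduces to Euler--Maruyama, and from (\ref{eq:density_formula_e}), (\ref{eq:new_tran_density}) we have
\begin{align*}
\bar{p}^{\,\mathrm{I}}_{e,\delta}(z,z';\theta)-\bar{p}^{\,\mathrm{II}}_{e,\delta}(z,z';\theta)
= p^{\widetilde{X}^{\mathrm{EM}}}_{e,\delta}(z,z';\theta)\Big(1+\Psi_e^{\mrm{weak}}(\delta,z,z';\theta)-\exp\big(K(\Psi_e^{\mrm{weak}}(\delta,z,z';\theta))\big)\Big).
\end{align*}
Because $K(\cdot)$ in (\ref{eq:kappaf}) is exactly the degree-$6$ Taylor polynomial of $w\mapsto\log(1+w)$, one has $1+w-e^{K(w)}=\mathcal{O}(w^7)$ for bounded $w$; this is the algebraic reason the two schemes agree to the order needed. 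I would then convert the weak bound of Proposition~\ref{prop:local_WA_ver2} into a pointwise, Gaussian-weighted estimate of the form $|\bar{p}^{\,\mathrm{I}}_{e,\delta}-\bar{p}^{\,\mathrm{II}}_{e,\delta}|\le C\,\delta^{3}\,\delta^{-q}\,e^{-c|z'-z|^2/\delta}$, by splitting into a bulk region (where $|z'-z|\lesssim\sqrt{\delta\log(1/\delta)}$ and $\Psi_e^{\mrm{weak}}$ is small, so $\mathcal{O}((\Psi_e^{\mrm{weak}})^7)$ is genuinely of size $\delta^{3}$ after accounting for the effective-order rule~(\ref{eq:wa_trick})) and a tail region (where both densities are super-exponentially small and Gaussian integrability of the Hermite polynomials controls the remainder).

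With these ingredients I would stitch the telescoping terms together using Gaussian semigroup (heat-kernel) convolution estimates together with the Aronson-type upper bounds on $\bar{p}^{\,\mathrm{I},(j)}_{e,\delta}$ and $\bar{p}^{\,\mathrm{II},(M-1-j)}_{e,\delta}$ and their spatial derivatives, which hold under (\ref{assump:param_space})--(\ref{assump:hypo1}) by the non-degenerate Gaussian structure of the Euler kernel and the boundedness of the correction factors $1+\Psi_e^{\mrm{weak}}$, $\exp(K(\Psi_e^{\mrm{weak}}))$. Treating $z'\mapsto\bar{p}^{\,\mathrm{II},(M-1-j)}_{e,\delta}(z',y;\theta)$ as the test function in the local difference and integrating back in $z$ against $\bar{p}^{\,\mathrm{I},(j)}_{e,\delta}(x,z;\theta)$, each of the $M$ terms contributes at order $\delta^{3}$ relative to a Gaussian kernel rescaled to the full interval $\Delta$; summing gives $M\cdot\delta^{3}=M\cdot(\Delta/M)^{3}=\Delta^{3}M^{-2}$, i.e.\ the claimed $\mathcal{O}(M^{-2})$ with the weight $h(\Delta)\Delta^{-q}e^{-c|y-x|^2/\Delta}$. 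This adapts the machinery already developed in \cite{iguchi:21-2} for the Version~I bound, now applied to the scheme difference.

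The main obstacle I anticipate is precisely the passage in the second step from the weak $\mathcal{O}(\delta^3)$ statement of Proposition~\ref{prop:local_WA_ver2} to a \emph{pointwise} Gaussian-weighted bound: the factor $(\Psi_e^{\mrm{weak}})^7$ introduces Hermite polynomials of high degree whose growth in $z'$ must be dominated by the Gaussian weight, and the inverse powers of $\delta$ coming both from these polynomials and from the short remaining-time kernels $\bar{p}^{\,\mathrm{II},(M-1-j)}_{e,\delta}$ (whose derivatives are large when $M-1-j$ is small) must be tracked through the convolution so that the smoothing gained at each step exactly compensates and leaves the net exponent of $M$ equal to $-2$. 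Getting this bookkeeping right, uniformly in $\theta\in\Theta$ via compactness (\ref{assump:param_space}), is the crux of the argument.
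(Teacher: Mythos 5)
Your overall architecture coincides with the paper's: the triangle inequality through the Version-I bound (\ref{eq:bd_conv_ver1}); the telescoping decomposition of $\bar{p}^{\,\mathrm{I},(M)}_{e,\Delta}-\bar{p}^{\,\mathrm{II},(M)}_{e,\Delta}$ into $M$ terms, each with the one-step difference $\mathscr{R}_{\delta_M}=\bar{p}^{\,\mathrm{I}}_{e,\delta_M}-\bar{p}^{\,\mathrm{II}}_{e,\delta_M}$ sandwiched between multi-step kernels (this is exactly the paper's split into $E_1$ and $E_2$ in (\ref{eq:density_decomp_20})); the observation that $K$ in (\ref{eq:kappaf}) is the degree-$6$ Taylor polynomial of $\log(1+z)$, so that $e^{K(w)}-(1+w)$ vanishes to seventh order; a one-step pointwise bound of order $\delta_M^3$ times a Gaussian; and the final bookkeeping $M\cdot\delta_M^3=\Delta^3 M^{-2}$ via Gaussian convolution, cf.\ (\ref{eq:conv_gaussian}) and Lemma \ref{lemma:bd_E1_E2}. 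Your one-step estimate is essentially fine: since both one-step kernels are explicit Gaussian-times-factor expressions, the pointwise Gaussian-weighted bound follows by direct computation, and you do not really need to ``convert'' the weak statement of Proposition \ref{prop:local_WA_ver2}.

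The genuine gap is in how you control the multi-step Version-II segments $\bar{p}^{\,\mathrm{II},(k)}_{e,\delta_M}$ sitting after the local difference. You assert that Aronson-type upper bounds for these kernels ``hold by the non-degenerate Gaussian structure of the Euler kernel and the boundedness of the correction factors.'' That inference fails as stated: iterating the one-step bound $\bar{p}^{\,\mathrm{II}}_{e,\delta_M}\le C\,\delta_M^{-d_R/2}e^{-c|\cdot|^2/\delta_M}$ through Chapman--Kolmogorov makes the Gaussian parts convolve correctly but multiplies the constants, producing a factor $C^{k}$ after $k$ steps; since $k$ can be of order $M$, this destroys uniformity in $M$ and with it the $M^{-2}$ rate. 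Obtaining a constant uniform in $k$ is precisely the hard part, and it is the main content of the paper's proof: the $k$-fold Version-II kernel is represented on Wiener space as a generalised expectation $\widetilde{F}_{2,k}(x,y;\theta)=\mathbb{E}_\theta\bigl[\delta_y\bigl(\widetilde{X}^{\mathrm{EM},(M),x}_{e,k\delta_M}\bigr)\,G^k_{\delta_M,x,\theta}\bigr]$, where $G^k$ is the product of the $k$ exponential weights $e^{K(\pi_e(\cdot))}$ along the Euler--Maruyama chain; a single Malliavin integration by parts combined with the Kusuoka--Stroock estimate \citep{kusu:84} then yields the Gaussian bound, \emph{provided} the Malliavin--Sobolev norms $\|G^k_{\delta_M,x,\theta}\|_{j,2k}$ are bounded uniformly in $k$ --- this is estimate (\ref{eq:bd_exp_weight}), which itself rests on Lemma 2 of \cite{igu:22}. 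Your alternative route (using $z'\mapsto\bar{p}^{\,\mathrm{II},(M-1-j)}_{e,\delta_M}(z',y;\theta)$ as the test function in the weak bound) does not repair this: the constant in (\ref{eq:diff_density_weak}) depends on derivative bounds of the test function, and derivatives of the remaining-time kernel blow up like negative powers of $(M-1-j)\delta_M$, which no smoothing from the preceding kernels can compensate when the difference sits near the end of the chain (the paper's term $E_1$, where nothing follows the difference at all). So your skeleton is the right one, but the proposal is missing the mechanism --- Malliavin-calculus based in the paper --- that makes the post-difference segment of the chain controllable uniformly in the number of steps.
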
 
\section{Numerical experiments} \label{sec:sim_study}
We provide numerical experiments related to the analytic results in Section \ref{sec:main}. 
Note that our software implementation (\url{https://github.com/matt-graham/simsde}) of both the proposed contrast function and the sampling schemes is fully general.
The user is only required to specify functions for evaluating the drift and diffusion coefficient terms for given a process state and model parameters, with functions for evaluating the contrast estimator or simulating from the sampling scheme then automatically generated. This is achieved by symbolically computing the terms in the relevant estimator or sampling scheme expression using the Python-based computer algebra system SymPy (\url{https://www.sympy.org/}). The  generated functions can also optionally use the numerical primitives defined in the high-performance numerical computing framework JAX (\url{https://jax.readthedocs.io/}); this allows the derivatives of the generated functions to be computed using JAX’s automatic differentiation support.
We exploit the above to automatically generate efficient functions for computing the gradient of the contrast function in the optimisation-based simulation study in Section \ref{sec:sim_high} and the gradient of the posterior density in the Bayesian inference numerical experiments in Section \ref{sec:sim_low}.
\subsection{High-frequency observation regime}
\label{sec:sim_high}
We consider the stochastic Jansen--Rit neural mass model that describes the evolution of neural population in a local cortical circuit, that is, the interaction of the main pyramidal cells with the excitatory and inhibitory interneurons. See \cite{abl:17} for more details about the model. The model is defined as the following $6$-dimensional hypo-elliptic SDE driven by a $3$-dimensional Brownian motion ($d = 6$, $d_S = d_R =3$): 
\begin{align}
 \begin{aligned} \label{eq:jrnmm}
  d X_{R,t} & = \bigl( - \Gamma_\theta^2  X_{S,t} -2 \Gamma_\theta X_{R,t} + G (X_{S,t}, \theta ) \bigr) dt
  + \Sigma_\theta d B_{t}, \quad  X_{R,0} = x_{R, 0} \in \mathbb{R}^3;  \\ 
  d X_{S,t} & = X_{R,t} dt, \quad  X_{S,0} = x_{S, 0} \in \mathbb{R}^3, 
 \end{aligned}
\end{align}
for parameter specified as $\theta = (\beta, \sigma)$ with 
%
$ \beta  = (A, B, C, \mu, \nu_0, a, b, r, \nu_{\mathrm{max}}) \in  \mathbb{R}^5   
   \times (0, \infty)^4, 
\, 
\sigma = (\sigma_1, \sigma_2, \sigma_3)  \in (0, \infty)^3.
$ 
%
%
We have set $\Gamma_\theta := \mathrm{diag} (a, a, b)$, $\Sigma_\theta := \mathrm{diag} (\sigma_1, \sigma_2, \sigma_3)$ and 
\begin{align*} 
 G(x_S, \theta) := \bigl[
   A a \times \mathcal{S}_\theta  (x_S^2 - x_S^3), \, 
   A a \bigl(\mu +  C_1 \times \mathcal{S}_\theta  (C x_S^1) \bigr), \,
   B b \, C_2 \times \mathcal{S}_\theta  (C_3 \, x_S^1) 
 \bigr]^\top,  \quad x_S = (x_S^1, x_S^2, x_S^3),  
\end{align*}
where $C_1 = 0.8 \times C$, $C_2 = C_3 = 0.25 \times  C$ and $\mathcal{S}_\theta : \mathbb{R} \to [0, \nu_{\mathrm{\max}}]$ is a sigmoid function defined as
$   \mathcal{S}_\theta (z) := {\nu_{\mathrm{\max}}}/{\{1 + \exp (r (\nu_0 - z))\}}, \, z \in \mathbb{R}$.  
%
Following the numerical experiment in \cite{buc:20}, we fix a part of the parameter vector $\theta$ to
$A = 3.25, \, B = 22, \, \nu_0 = 6, \, 
a = 100, \, b = 50, \, \nu_{\mathrm{max}} = 5, \, 
r = 0.56, \, 
\sigma_1 = 0.01, \, \sigma_3 = 1.
$
Then, we estimate the parameter $(C, \mu, \sigma_2)$ from observations of all coordinates of model (\ref{eq:jrnmm}) using the new contrast and the local Gaussian estimators. We set the true parameter values  to 
$(C^\dagger, \mu^\dagger , \sigma_2^\dagger) = (135.0, 220.0, 2000.0)$, 
and generate synthetic datasets $Y_{\mathrm{JR}}$ from the local Gaussian scheme (\ref{eq:LG}) with discretisation step $10^{-4}$ on the time interval $[0,100]$. Then, we check the performance of the new contrast estimator 
versus the local Gaussian one 
in the following three scenarios for $n$ and $\Delta_n$ by subsampling from the synthetic datasets $Y_{\mathrm{JR}}$: 
\vspace{-0.5 \baselineskip}
\begin{description}
\item[JR-1.] $(n , \Delta_n) = (1.25 \times 10^4, 0.008)$ with the time interval of observations $T = 100$.
\item[JR-2.] $(n , \Delta_n) = (2.5 \times 10^4, 0.004)$ with  $T = 100$. \ \ \ \textbf{JR-3.} $(n , \Delta_n) = (5 \times 10^4, 0.002)$ with $T = 100$. 
\end{description}
\vspace{-0.5 \baselineskip} 
%
Since model \eqref{eq:jrnmm} belongs in the class of stochastic damping Hamiltonian system (\ref{eq:sdhs}) and matrix $\Gamma_\theta$ is diagonal, the correction term $\Phi_2$ in the proposed contrast function is given as follows:
\begin{align*}
\Phi_2 (\Delta, x, y ; \theta) 
& = - \sum_{1 \le i \le 3} 2 \Gamma_{\theta, ii} (\sigma_i)^2 \times \Bigl\{ 
\tfrac{1}{2} \mathcal{H}_{(i, i)}  (\Delta, x, y; \theta) 
 + \tfrac{1}{3} 
 \mathcal{H}_{(i, i + 3)} (\Delta, x, y; \theta)   \\ 
& \qquad \qquad  + \tfrac{1}{6} \mathcal{H}_{(i + 3, i)} (\Delta, x, y; \theta) 
 + \tfrac{1}{8} \mathcal{H}_{(i + 3, i + 3 )} (\Delta, x, y; \theta)  \Bigr\},  
\end{align*} 
with $\Delta > 0$,  $x, y \in \mathbb{R}^d$, $\theta \in \Theta$. 
As we noted in Section \ref{sec:stochastic_damping}, 
partial derivatives of the non-linear function $G(X_{S,t}, \theta)$ are not required in the computation of the new contrast function. For the minimisation of the contrast, we use the Adam optimiser \citep{kigma:15} with the following algorithmic specifications: (step-size) = $0.01$, (exponential decay rate for the first moment estimates) = $0.9$, (exponential decay rate for the second moment estimates) = $0.999$, (additive term for numerical stability) = $1 \times 10^{-8}$ and (number of iterations) = $20,000$. In Table \ref{table:est_JRNMM}, we summarise the mean and standard error from $50$ replicates of parameter estimates under scenario \textbf{JR-3}. 
The table indicates that both the local Gaussian and the new contrast estimate accurately the drift parameters $(C, \mu)$, but the new contrast produces better results for the diffusion parameter $\sigma_2$ as explained in Remark \ref{rem:score}. Thus, we focus on the estimation of the diffusion parameter $\sigma_2$ and compare the proposed contrast estimator with, first, the local Gaussian estimator and, second, the estimator based on the quadratic variation of $X_{R, t}^2$, i.e., 
$\textstyle 
\hat{\sigma}_{2, n}^{\mrm{QV}} :=  
\sqrt{\tfrac{1}{T} \sum_{1 \le m \le n} 
\bigl( X_{R, m \Delta_n}^2 -  X_{R, (m-1) \Delta_n }^2  \bigr)^2 }.
$ 
We summarise in Table \ref{table:vs_qv} the mean and standard deviation of (true value of $\sigma_2$) - (estimator of $\sigma_2$) from $50$ replicates of estimates under scenario \textbf{JR-1,2,3}. Also, in Figure \ref{fig:rmes_jrnmm} we plot the root mean squared errors (RMSEs) from $50$ replicates of estimates of $\sigma_2$. Note that the new contrast estimator for $\sigma_2$ gives faster convergence to the true value $\sigma_2^\dagger = 2000$, in agreement with a relatively larger $\Delta_n$ (as a function of $n$) permitted in the CLT for the new contrast -- recall that for the CLT, the new contrast requires $\Delta_n = o (n^{-1/3})$ versus $\Delta_n = o (n^{-1/2})$ required by the local Gaussian contrast. We also observe that the new contrast estimator outperforms $\hat{\sigma}_{2, n}^{\mrm{QV}}$. 
This is arguably not a coincidence for the particular experiment, as the contrast estimator makes use of observations from all co-ordinates (not only of $X_R^2$) via a likelihood-based contrast function, 
and MLEs are well-understood to possess optimal asymptotic properties, e.g.~due to the Cram\'er-Rao bound. 
%
\begin{table}
 \caption{\small Mean and standard error (in brackets) from $50$ replicates of parameter estimates in the Jansen-Rit neural mass model (\ref{eq:jrnmm}), under scenario \textbf{JR-3}.}
 \label{table:est_JRNMM}
 \centering 
  \begin{tabular}{cccc}
  \hline  
  Parameter & True value & {Local Gaussian contrast} & {New contrast} 
  \\
  \hline 
  \\[-0.3cm]
  $C$ & 135.0 & 134.80 (0.0062) & 134.80 (0.0062)  \\ 
  $\mu$ & 220.0 & 220.84 (0.6167) & 220.84 (0.6176) \\ 
  $\sigma_2 $ & 2000.0 & 
  1843.49 (4.3520) & 1989.00 (4.750) 
  \\
  \hline
  \end{tabular}
\end{table}
\begin{table}
 \caption{\small Mean and standard deviation (in brackets) of (true value of $\sigma_2$) - (estimator of $\sigma_2$)
 from $50$ replicates of three different estimates in the Jansen-Rit neural mass model (\ref{eq:jrnmm}), under scenarios \textbf{JR-1,2,3}.}
 \label{table:vs_qv}
 \centering 
  \begin{tabular}{cccc}
  \hline  
  Scenario & Quadratic variation & {Local Gaussian contrast} & {New contrast} 
  \\
  \hline 
  \\[-0.3cm]  
  \textbf{JR-1} & 460.34 (11.327) & 314.99 (8.913) & -105.16 (12.419)  \\ 
  \textbf{JR-2} & 246.26 (8.677) & 266.46 (6.695) & 39.46 (7.450) \\ 
  \textbf{JR-3} & 121.42 (6.588) & 156.51 (4.352) & 11.00 (4.750) 
  \\
  \hline
  \end{tabular}
\end{table} 
\begin{figure} 
\caption{
\small{Root mean squared errors (RMSE) for estimators for $\sigma_2$ (50 replicates) in the Jansen-Rit neural mass model (\ref{eq:jrnmm}) under scenarios \textbf{JR-1,2,3}.}} 
\label{fig:rmes_jrnmm} 
\centering
\includegraphics[keepaspectratio, width=7.3cm]{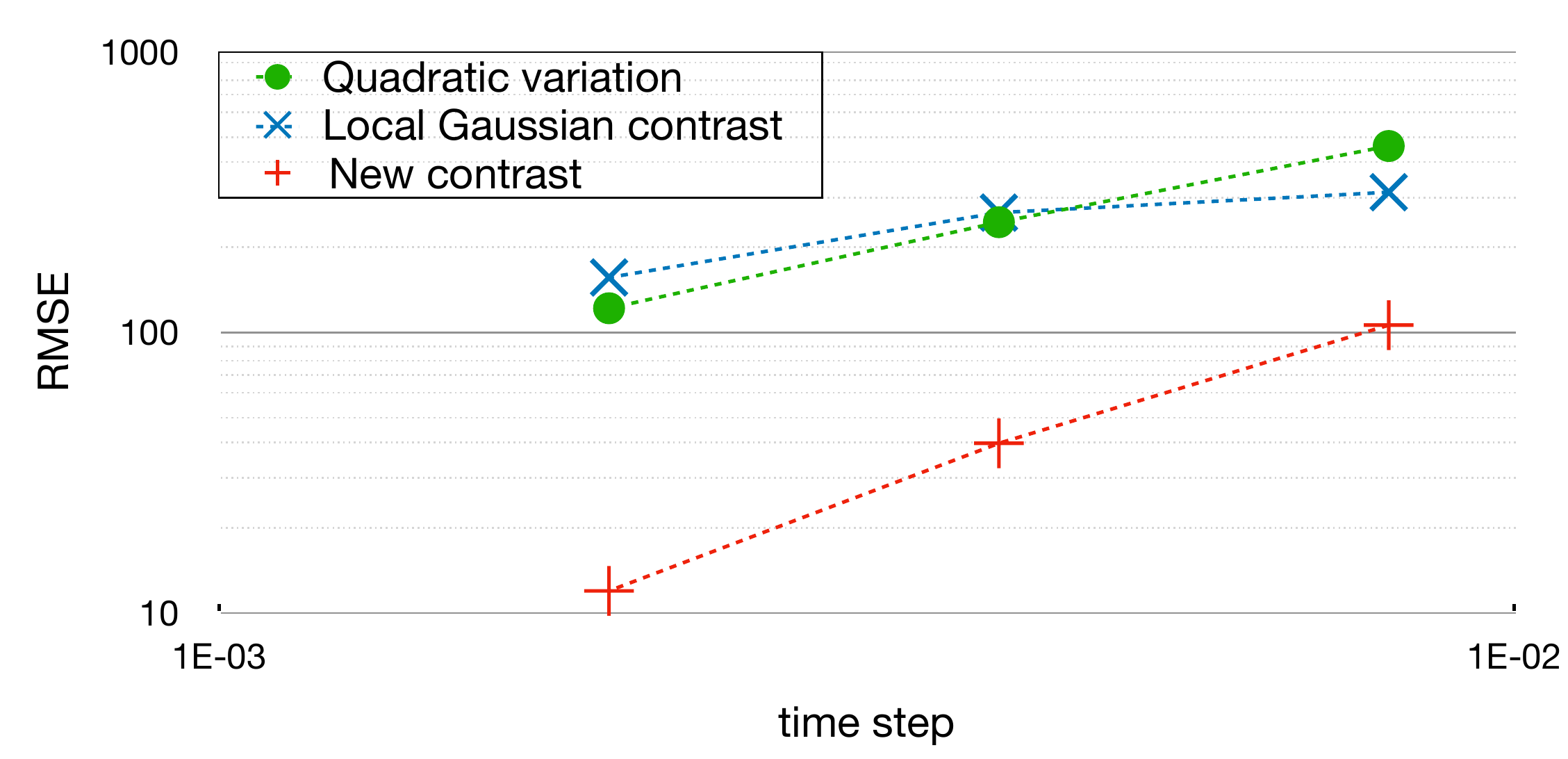}
\end{figure}
\subsection{Low-frequency observation regime}
\label{sec:sim_low}
We perform Bayesian inference, in a framework requiring data augmentation,  for a susceptible-infected-recovered (SIR) model with a time-varying contact rate. The model is specified as the following 3-dimensional SDE:
\begin{align} \label{eq:sir}
\begin{aligned}
 \left[ \begin{array}{c} 
    dS_t \\
    dI_t \\
    dC_t
\end{array} \right] 
 = \left[ \!\! \begin{array}{c}  
    - \tfrac{C_t S_t I_t }{N} \\
    \tfrac{C_t S_t I_t }{N} - \lambda I_t \\
    \bigl( \alpha  (\beta - \log C_t) + \tfrac{\sigma^2}{2}  \bigr) C_t 
\end{array} \!\! 
  \right] dt  
 +  
  \left[ \!\!  \begin{array}{ccc}  
  \sqrt{\tfrac{C_t S_t I_t }{N}} & 0 & 0 \\
  - \sqrt{\tfrac{C_t S_t I_t }{N}} & \sqrt{\lambda I_t} & 0 \\
  0 & 0 & \sigma \\
  \end{array}
  \!\! \right] \, 
  \left[ \begin{array}{c}
   d B_{1,t} \\
   d B_{2,t} \\
   d B_{3,t} \\
   \end{array}
  \right],
\end{aligned}
\end{align}
where $S$, $I$ account for the number of susceptible and infected individuals respectively, $C$ for the contact rate, $N$ is the population size and $\theta = (\alpha, \beta, \sigma, \lambda)$  the parameter vector.
Motivated by an implementation in  \cite{graham2022manifold}, we consider the following setting:
\begin{itemize}[leftmargin=0.5cm]
    \item Observations $Y = \{ Y_t \}_{t=0, \ldots, T}$  correspond to the number of daily infected individuals over the period of $14$ days, with time interval $\Delta = 1$ (day), and 
    they are assumed to be measured with additive Gaussian noise, as $Y_t = I_t + 
    \sigma_y \, \varepsilon_t$,  $0 \leq t \leq 13$, where $ \varepsilon_t \sim \mathscr{N}(0,1)$ and $\sigma_y > 0$ is constant.  
    \item For the parameter vector $\theta$ and the initial condition $C_0$, priors are determined via $\theta = g_\theta (u)$ and $C_0 = g_0 (u_0)$, where $u$, $u_0$  are vectors of independent standard normal random variables and $g_\theta$, $g_0$ are some tractable functions.
    \item Adopting a Bayesian data augmentation approach, 
    a  numerical scheme is called upon to impute instances of model (\ref{eq:sir}) 
    at times separated by discretisation step $\delta$. The MCMC method is based on a non-centred imputation approach.  That is, latent variables correspond to the Brownian increments driving the SDE,
    and they are collected in a vector of (apriori) independent standard normal random variables $v$.
\end{itemize}
In the above setting, the posterior law of the $d_q$-dimensional vector $q = [u^\top, u_0^\top, v^\top]^\top$ given the data $Y$ has a tractable density w.r.t.~the Lebesgue measure (for more details, see  Section 11 in the Supporting Material of \cite{graham2022manifold}). Thus, one can apply standard Hamiltonian Monte Carlo (HMC) to sample from the posterior with energy function $H : \mathbb{R}^{d_q} \times \mathbb{R}^{d_q} \to \mathbb{R}$ given as $H(q, p) = \ell(q) + \tfrac{1}{2} p^\top M^{-1} p$, where $\ell(q)$ is the negative log-posterior density and $M$ a diagonal mass matrix. We will run HMC on the posterior induced both by the Euler-Maruyama (EM) scheme and the proposed weak second order scheme, and we will show (numerically) that the involved bias is smaller in the case of the new scheme. 

The details of the design of the experiment are as follows. We use the observations taken from \cite{anon:78} as data $Y$. We fix $N = 763$, $S_0 = 762$, $I_0 = 0$, $\sigma_y = 5$. We assign priors $\log \alpha \sim \mathscr{N}(0, 1)$, $\beta \sim \mathscr{N} (0,1)$, $\log \sigma \sim \mathscr{N}(-3, 1)$, $\log \lambda \sim \mathscr{N} (0,1)$ and $\log C_0 \sim \mathscr{N} (0,1)$. We do not treat $\sigma_y$ as unknown here due to the widely varying posterior scales induced by non-constant $\sigma_y$ reported in \cite{graham2022manifold}.
The (data-imputing) numerical schemes are applied to the log-transformation $X_t = [\log S_t, \log I_t, \log C_t]^\top$ to ensure positiveness of $S$, $I$, and $C$, and avoid numerical issues with the square-root terms  in the diffusion coefficient. We use a dynamic integration-time HMC implementation (\cite{beta:17}) with a dual-averaging algorithm (\cite{hoff:14}) to adapt the integrator step-size. For the 
time-discretisation of the Hamiltonian dynamics we use the leapfrog integrator with St\"ormer-Verlet splitting. We set the mass matrix $M$ to identity. 
For each of the following three choices of numerical schemes,  we run four HMC chains of 1,500 iterations with the first 500 iterations used as an adaptive warm-up phase: 
\vspace{-0.5 \baselineskip}  
\begin{description}
   \item[SIR-baseline:] 
   $\bigl(\mathrm{Numerical \, scheme}, \delta_M \bigr) = 
   \bigl(\mathrm{EM},  0.001 \bigr).$ 
   \item[SIR-EM:] $ \bigl(\mathrm{Numerical \, scheme}, \delta_M \bigr) = 
   \bigl(\mathrm{EM},  0.05 \bigr). $
   \item[SIR-Weak2nd:] $ \bigl(\mathrm{Numerical \, scheme}, \delta_M \bigr) = 
   \bigl(\mathrm{Weak \, second \, order \, scheme},  0.05 \bigr). $
\end{description}
\vspace{-0.5 \baselineskip} 
Results from the HMC implementations are summarised in Figure \ref{fig:hmc_sir} and Table \ref{table:hmc_sir}. Figure~\ref{fig:hmc_sir} shows estimated pair-wise and marginal posteriors for parameters  $\alpha$, $\beta$,  $\sigma$,  $\lambda$ and $\log C_0$. Results for \textbf{SIR-EM} and \textbf{SIR-Weak2nd} are shown in separate plots, and each plot superimposes corresponding results from \textbf{SIR-baseline}, this latter scheme treated as providing the `correct' posterior quantities due to the use of very small  $\delta_M=0.001$.
Table \ref{table:hmc_sir} shows summary statistics that monitor the performance of the HMC algorithm, in particular, bulk effective sample size (ESS), tail ESS and improved $\hat{R}$ with rank-normalisation and folding (see \cite{veh:21} for analytical definitions). These are computed from the non-warm-up steps of the four HMC chains. Note that for all parameters and choices of numerical schemes, $\hat{R}$ is smaller than $1.01$  and ESS is larger than $400$, as \cite{veh:21} recommend. One can thus be reasonably confident that the estimated posteriors shown in Figure~\ref{fig:hmc_sir} are reliable representations of the true ones.  
In agreement with the analytical theory in this work, Figure \ref{fig:hmc_sir} illustrates that for all parameters the estimated  posteriors obtained via the weak second order scheme capture more accurately the `correct' baseline posteriors than the corresponding ones obtained via the EM scheme.  
\begin{figure} 
    \begin{tabular}{cc}
      \begin{minipage}[t]{0.49\hsize}
        \centering
        \includegraphics[keepaspectratio, scale=0.22]{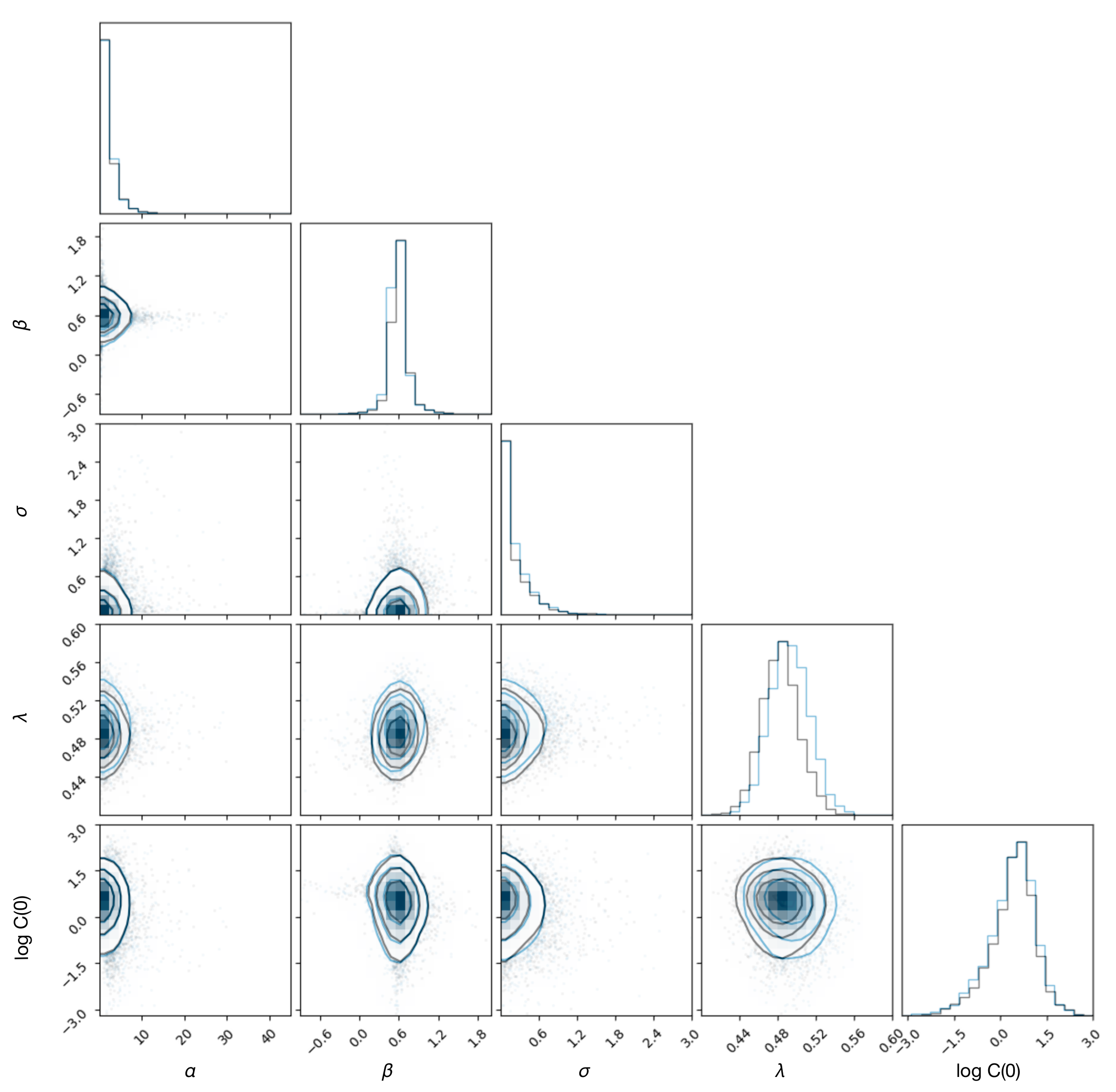}
        \subcaption{Euler-Maruyama scheme}
        \label{fig:hmc_em}
      \end{minipage} 
      \begin{minipage}[t]{0.49\hsize}
        \centering
        \includegraphics[keepaspectratio, scale=0.22]{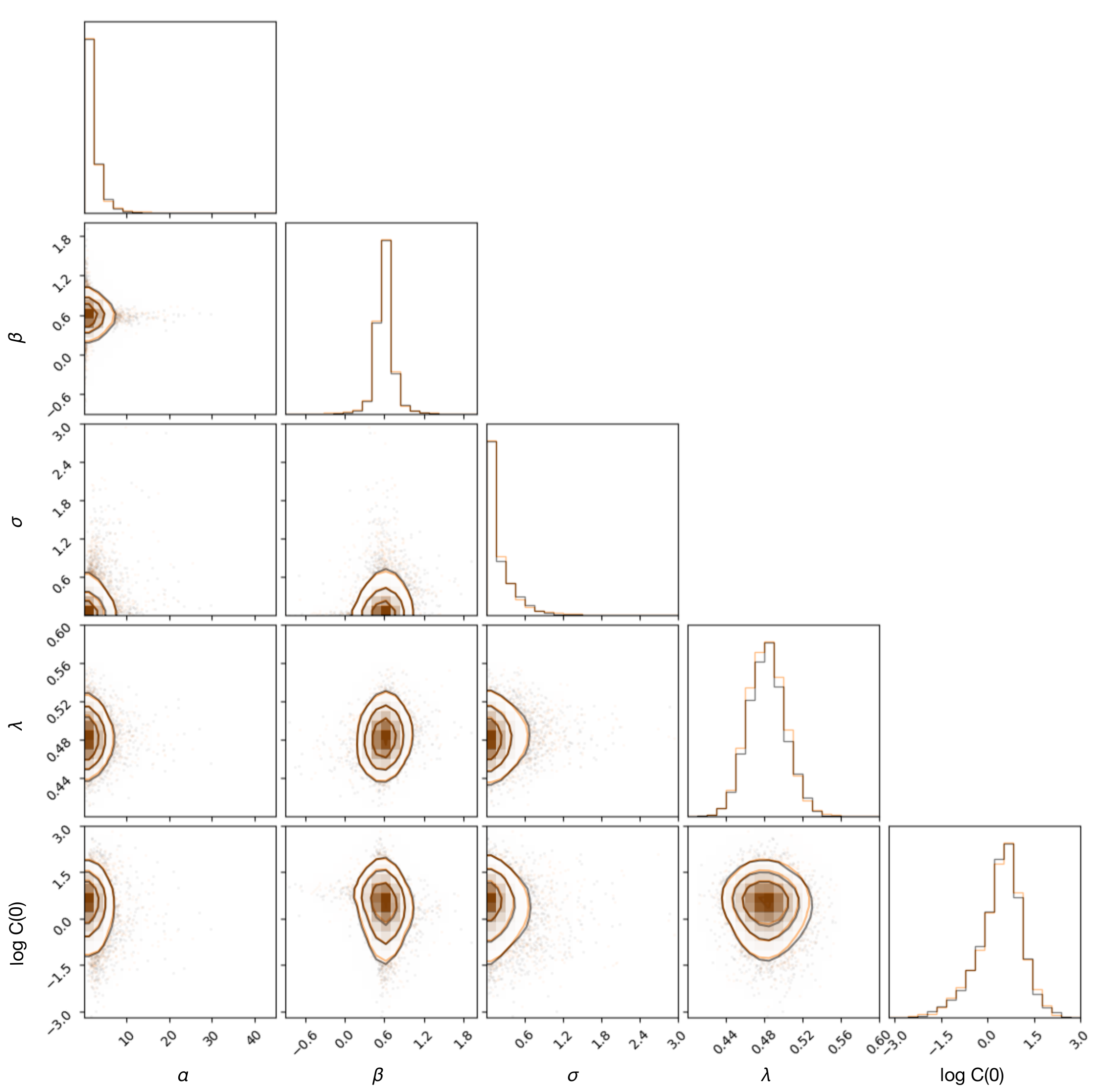}
        \subcaption{Weak second order scheme}
        \label{fig:hmc_2nd}
      \end{minipage} 
    \end{tabular}
     \caption{
     \small{
     Posterior Estimates for SIR Model. 
     The blue (left panel) and orange (right panel) histograms and contour plots are obtained by HMC that uses the EM scheme and the weak second order scheme, respectively, both with the same discretisation step $\delta_M = 0.05$.
     The black histograms and contour plots superimposed in both plots show the correct quantities, as obtained from HMC that uses the EM scheme with very small $\delta_M = 10^{-3}$. 
     } } 
     \label{fig:hmc_sir}
\end{figure}

\begin{table}
 \caption{\small Summary statistics of HMC for the SIR model.}
 \label{table:hmc_sir}
 \centering 
  \begin{tabular}{clccc}
\hline
Parameter & Scenario  & bulk ESS & tail ESS & 
improved $\hat{R}$
\\
\hline
\\[-0.3cm] 
                           & {SIR-baseline} & 4052.0    & 2811.0    & 1.0    \\
                           & {SIR-EM}        & 3874.0    & 3131.0    & 1.0   \\  
\multirow{-3}{*}{$\alpha$} & {SIR-Weak2nd}  & 3876.0    & 2813.0    & 1.0    
\\
\hline
\\[-0.3cm] 
                           & {SIR-baseline} & 4272.0    & 2934.0    & 1.0    \\
                           & {SIR-EM}        & 4058.0    & 2878.0    & 1.0    \\
\multirow{-3}{*}{$\beta$}  & {SIR-Weak2nd}  & 3880.0    & 2969.0  & 1.0  
\\
\hline
\\[-0.3cm] 
                           & {SIR-baseline} & 1720.0    & 2836.0    & 1.0    \\
                           & {SIR-EM}  & 1525.0    & 2634.0    & 1.0    \\
\multirow{-3}{*}{$\sigma$}  & {SIR-Weak2nd}  & 1781.0    & 2513.0    & 1.0   
\\
\hline
\\[-0.3cm]  
                           & {SIR-baseline} & 6596.0    & 2952.0    & 1.0    \\
                           & {SIR-EM}        & 7063.0    & 3052.0    & 1.0    \\
\multirow{-3}{*}{$\lambda$}   & {SIR-Weak2nd}  & 7024.0    & 2941.0    & 1.0  
\\
\hline
\\[-0.3cm] 
                           & {SIR-baseline} & 4555.0    & 2889.0    & 1.0    \\
                           & {SIR-EM}        & 5036.0    & 3092.0    & 1.0    \\
\multirow{-3}{*}{$\log C_0$} & {SIR-Weak2nd}  & 4695.0    & 3053.0    & 1.0 
\\
\hline
\\[-0.3cm]  
\end{tabular}
\end{table}
\section{Conclusions} 
\label{sec:conclusions}
This work begins by putting forward weak second order sampling schemes for elliptic and hypo-elliptic SDEs. 
Then, we develop a small time density expansion of the scheme as a proxy for the intractable SDE transition density. Via appropriate choice of the higher order expansion terms, 
we have provided analytical results both: in a high-frequency classical setting, showcasing the advantageous rate of $\Delta_n=o(n^{-1/3})$, achieved for hypo-elliptic models; and in a low-frequency Bayesian data augmentation setting, where we have deduced two local weak third order density schemes (\ref{eq:density_formula}) and (\ref{eq:new_tran_density}) and shown that the induced bias by the schemes is of size $\mathcal{O} (M^{-2})$ when covering the fixed time interval $\Delta > 0$ with inner time-steps $\delta = \Delta / M, \, M > 0$ specified by user. 

We iterate here that there is the flexibility to apply MCMC methods based only on the sampling schemes without reference to transition densities, via, e.g., non-centred model parameterisation approaches (see e.g.~\cite{papa:07, besk:15}), particle-filtering based MCMC methods \citep{andrieu:10}, and recent manifold-based algorithms for the case of observations without/with noise \citep{graham2022manifold}. Via the derivation of approximate transition densities, one is still given the option to use the wealth of data augmentation methods for diffusion models that require such a density expression.

\subsection*{Acknowledgements}
We thank two referees and the Associate Editor for their comments that led to major improvements in the content of the paper. 
YI is supported by the Additional Funding Programme for Mathematical Sciences, delivered by EPSRC (EP/V521917/1) and the Heilbronn Institute for Mathematical Research. 

\appendix 
\renewcommand{\thelemma}{\Alph{section}.\arabic{lemma}}

\section*{Supplementary Material}  
The supplementary material contains technical proofs omitted in the main text. The organisation of this supplementary material is as follows:
\begin{description}
\item[Section \ref{sec:pre}:]
 We recall the notation and conditions used in the main text. 
\item[Section \ref{sec:pf_main}:]
 Proofs of the main results Theorems \ref{consistency}, \ref{thm:asymp_norm} given in the main text.  
\item[Section \ref{sec:pf_technical_high_freq}:]
 Proof of technical results (Lemma \ref{lemma:conv_contrast_gamma}--\ref{lemma:conv_cov}) required by Theorems \ref{consistency}, \ref{thm:asymp_norm}.  
\item[Section \ref{appendix:weak_app_hypo}:] Proof of Proposition \ref{prop:weak_app_hypo} given in the main text. 
%
\item[Section \ref{appendix:density_scheme}:] Proof of Lemma \ref{lemma:AE_scheme} given in the main text. 
%
\item[Section \ref{appendix:local_wa_ver1}:] Proof of Proposition \ref{prop:density_scheme} given in the main text.
\item[Section \ref{appendix:diff_density_weak}:] Proof of Proposition \ref{prop:local_WA_ver2} given in the main text. 
%
\item[Section \ref{sec:pf_density_approx_second}:] Proof of Theorem \ref{thm:density_approx_second} given in the main text. 
\item[Section \ref{sec:fn}:] 
Additional numerical experiment under high frequency observations regime.  
\end{description} 
\section{Notation}  \label{sec:pre}
\noindent We set: 
\begin{align*}
V_0 (\cdot, \beta,\gamma)
=\big[\,V_{R,0}(\cdot, \beta)^{\top}, V_{S,0}(\cdot, \gamma)^{\top}\,\big]^{\top}, \quad 
V_k (\cdot, \sigma) 
=\big[\,V_{R,k}(\cdot, \sigma)^{\top}, \mathbf{0}_{d_S}^\top \,\big]^{\top},
\quad  1 \le k \le d_R. 
\end{align*}
%
\noindent We set 
$V_R= V_R(x , \sigma) =  \big[ V_{R,1}(x, \sigma),\ldots, V_{R, d_R}(x, \sigma) \big]\in \mathbb{R}^{d_R\times d_R}$, and define:
\begin{align}
\label{eq:VAR}
a_R(x , \sigma) := V_R(x, \sigma) V_R(x,  \sigma)^{\top}\in \mathbb{R}^{d_R\times d_R}.
\end{align}
%
We consider the following differential operators, acting upon  maps $\mathbb{R}^{n}\to \mathbb{R}$, $n\ge 1$,
\begin{gather*}
\partial_{u}  :=  \big[ \tfrac{\partial }{\partial u^{1}}, \ldots,  \tfrac{\partial }{\partial u^{n} } \big]^{\top}; 
\quad  
\partial^{2}_{u}   = \partial_{u} \partial_{u}^{\top} \equiv \big(\tfrac{\partial^{2}}{\partial u^{i}\partial u^{j} }\big)_{i,j=1}^{n}. \\[-0.5cm]
\end{gather*}
If needed, we use superscripts to specify co-ordinates of a vector.
We denote by $\mathscr{S}$ the space of functions 
$f : [0, \infty) \times \mathbb{R}^d \times \Theta \to \mathbb{R}$ so that there exists a constant $C>0$ such that $| f (h, x, \theta) | \leq C h$ for any $(h, x, \theta) \in [0, \infty) \times \mathbb{R}^d \times \Theta$,
and $\theta \mapsto f (h, x, \theta)$ is continuous on $\Theta$ for all  $(h ,x) \in [0, \infty) \times \mathbb{R}^d$.
We define:  
\begin{align*}
 \nu_i (\Delta; \theta) :=
 \begin{bmatrix}
    \frac{X_{R,t_i} - \mu_{R, \Delta} (\sample{X}{i-1}; \theta)}{\sqrt{\Delta}} \\
    \frac{X_{S,t_i} - \mu_{S, \Delta} (\sample{X}{i-1}; \theta)}{\sqrt{\Delta^3}} \\
 \end{bmatrix},  \ \   
\end{align*}
 for $1 \le i \le n$, where we have set: 
\begin{align}
 \mu_{\Delta} (x ; \theta) 
 \equiv  
 \begin{bmatrix}
 \mu_{R, \Delta} (x; \theta)  \\ 
 \mu_{S, \Delta} (x ; \theta)    
 \end{bmatrix} 
 \equiv 
 \begin{bmatrix}
x_{R}  + V_{R, 0} (x, \beta) \Delta  \\[0.1cm] 
x_{S}  + V_{S, 0} (x, \gamma) \Delta + \hat{V}_0 V_{S,0}  (x,  \theta) \tfrac{\Delta^2}{2} 
 \end{bmatrix},   \nonumber 
\end{align}
for $\theta = (\beta, \gamma, \sigma) \in \Theta$ and $x = [x_R^\top, x_S^\top ]^{\top} \in \mathbb{R}^d$. We notice that 
\begin{align} 
 &\mathbb{E} \bigl[ \nu_i^j (\Delta; \theta)  | \mathcal{F}_{t_{i-1}} \bigr]
  = \mathcal{O} ( \Delta^{\frac{3}{2}}), \ \ 1 \le j\le d,  \label{eq:mean_nu} 
 \end{align}
and for $1 \le k_1 \le  d_{\beta}$, $d_\beta + 1 \le  k_2 \le d_\beta + d_{\gamma}$, $d_\beta + d_{\gamma} + 1 \le k_3 \le d_\theta$: 
\begin{align} 
& \deriv{\theta}{k_1} \nu_i(\Delta; \theta)  = 
\begin{bmatrix}
  -  \deriv{\theta}{k_1} V_{R, 0} (\sample{X}{i-1}, \beta) \sqrt{\Delta} \\[0.1cm] 
  -  \deriv{\theta}{k_1} \hat{V}_0 V_{S, 0} (\sample{X}{i-1}, \theta)    \tfrac{\sqrt{\Delta}}{2}
\end{bmatrix};  \label{eq:nu_deriv_beta} \\[0.1cm]   
&  \deriv{\theta}{k_2} \nu_i(\Delta; \theta)  = 
\begin{bmatrix}
  \mathbf{0}_{d_R} \\[0.2cm]  
 - \deriv{\theta}{k_2} V_{S,0} (\sample{X}{i-1}, \gamma) \tfrac{1}{\sqrt{\Delta}} 
 -  \deriv{\theta}{k_2} \hat{V}_0 V_{S, 0} (\sample{X}{i-1}, \theta)    \tfrac{\sqrt{\Delta}}{2}
\end{bmatrix};  
\nonumber
\\[0.1cm]    
&  \deriv{\theta}{k_3} \nu_i(\Delta; \theta)  = 
\begin{bmatrix}
  \mathbf{0}_{d_R} \\[0.2cm]  
 -  \deriv{\theta}{k_3} \hat{V}_0 V_{S, 0} (\sample{X}{i-1}, \theta) \tfrac{\sqrt{\Delta}}{2}
\end{bmatrix}.  
\nonumber
\end{align}
For $\Delta > 0$, $x \in \mathbb{R}^d$ and $(\gamma, \sigma) \in \Theta_{\gamma} \times \Theta_{\sigma}$, we set:
\begin{align*}
    {\Sigma}_\Delta  (x; \gamma, \sigma) 
    \equiv  {\Sigma} (\Delta, x; \sigma) 
    \equiv  
    \begin{bmatrix}
    {\Sigma}_{\Delta, RR}  (x;  \sigma)  
    & {\Sigma}_{\Delta, RS}  (x;  \gamma, \sigma)  \\[0.1cm]  
    {\Sigma}_{\Delta, SR}  (x;  \gamma, \sigma) 
    & {\Sigma}_{\Delta, SS}  
    (x; \gamma, \sigma) 
    \end{bmatrix}, 
\end{align*} 
where each block matrix is determined as: 
\begin{align*}
\begin{aligned}
    & {\Sigma}_{\Delta, RR} (x;  \sigma)  
    = \Delta \, a_R (x, \sigma), \ \
    {\Sigma}_{\Delta, RS}  (x;  \gamma, \sigma)  
    = \tfrac{\Delta^2}{2}  \, a_R (x, \sigma) 
    \bigl( \partial_{x_R}^\top V_{S,0} (x, \gamma)  \bigr)^\top;  \\[0.2cm]
    & {\Sigma}_{\Delta, SR}  
    (x; \gamma, \sigma)  
    = {\Sigma}_{\Delta, RS} 
    (x;  \gamma, \sigma)^\top,
    \ \ 
    {\Sigma}_{\Delta, SS}  (x;  \gamma, \sigma)  
    = \tfrac{\Delta^3}{3} 
    \partial_{x_R}^\top V_{S,0} (x, \gamma) 
    \, a_R (x, \sigma) \, 
    \bigl( \partial_{x_R}^\top V_{S,0} (x, \gamma)  \bigr)^\top. \vspace{0.2cm}
\end{aligned}
\end{align*}
We note that under conditions (\ref{assump:hypo1})--(\ref{assump:hypo2}), matrices  $\Sigma_\Delta (x;  \gamma, \sigma)$, $\Sigma_{\Delta, RR} (x; \sigma) $ and $\Sigma_{\Delta, SS} (x; \gamma, \sigma)$ are positive definite for any $\Delta > 0$, $x \in \mathbb{R}^d$, $(\gamma, \sigma) \in \Theta_\gamma \times \Theta_\sigma$. Similarly, we use the following block expression for the inverse of matrix ${\Sigma}_1 (x ; \gamma, \sigma)$: 
\begin{align*}
{\Sigma}_1 ^{-1}  (x; \gamma, \sigma)
\equiv 
\begin{bmatrix}
{\Lambda}_{1, RR}  (x;  \gamma, \sigma)  & \Lambda_{1,RS}  (x; \gamma, \sigma)  \\[0.1cm]  
\Lambda_{1, SR}  (x; \gamma, \sigma)  & \Lambda_{1, SS}  (x; \gamma, \sigma) 
\end{bmatrix}.
\end{align*} 
In particular, we have: 
\begin{align}
  \Lambda_{1, SS} (x; \gamma, \sigma) = 4 \cdot \Sigma_{1, SS}^{-1} (x; \gamma, \sigma).
\end{align}
%
We recall the definition of $\Phi_2 (\Delta, \sample{X}{i-1}, \sample{X}{i}; \theta)$ and write 
\begin{align*}
\Phi_2 (\Delta, \sample{X}{i-1}, \sample{X}{i}; \theta) 
=  \sum_{i_1, i_2 = 1}^d {G}_{i_1i_2} (\sample{X}{i-1}; \theta) \mathcal{H}_{(i_1, i_2)}(\Delta, \sample{X}{i-1}, \sample{X}{i} ;  \theta), 
\end{align*} 
where $\mathcal{H}_{(i_1, i_2)}(\Delta, \sample{X}{i-1}, \sample{X}{i} ;  \theta)$ is the Hermite polynomial given as: 
%
\begin{align} 
\label{eq:hermite_second}
  \mathcal{H}_{(i_1, i_2)} (\Delta, \sample{X}{i-1}, \sample{X}{i}; \theta) 
 & =  \sum_{j_1, j_2 = 1}^d {\Sigma}_{1, i_1 j_1}^{-1} (\sample{X}{i-1}; \gamma, \sigma) {\Sigma}_{1, i_2 j_2}^{-1} (\sample{X}{i-1}; \gamma, \sigma) \nu_{i}^{j_1} (\Delta; \theta)  \nu_{i}^{j_2} (\Delta; \theta)  \nonumber \\  
& \qquad\qquad \qquad - {\Sigma}_{1, i_1 i_2}^{-1} (\sample{X}{i-1}; \gamma, \sigma),  
\end{align}
and the matrix $G(x; \theta) = (G_{i_1i_2}(x; \theta))_{1  \leq i_1, i_2  \leq d}$, $x \in \mathbb{R}^d, \; \theta \in \Theta$, is given by 
\begin{align} \label{eq:G_matrix}
  G (x ; \theta) = 
  \begin{bmatrix}
  G_{RR} (x ; \theta) & G_{RS} (x ; \theta)  \\
  G_{RS} (x ; \theta)^{\top} & G_{SS} (x ; \theta) 
  \end{bmatrix}, 
\end{align}
with
\begin{align*}
 G_{RR} (x ; \theta)
 & =  \tfrac{1}{2}  \sum_{k = 1}^{d_R}
  \bigl( \hat{V}_{k} V_{R, 0} (x,  \theta)
 + \hat{V}_{0} V_{R,k}(x, \theta) \bigr) V_{R,k} (x, \theta)^{\top}   
 + \tfrac{1}{4} \sum_{k_1, k_2 = 1}^{d_R}   
  \hat{V}_{k_1} V_{R, k_2} (x, \theta) \hat{V}_{k_1} V_{R, k_2}  (x,  \theta)^\top;  \\[0.2cm]
 G_{RS} (x ; \theta)
 & = \tfrac{1}{2}  \sum_{k = 1}^{d_R} 
  \Bigl( 
   \tfrac{1}{3} \hat{V}_{k} V_{R,0} (x , \theta)
 + \tfrac{1}{6} \hat{V}_{0} V_{R, k} (x , \theta)
  \Bigr)
   \hat{V}_k V_{S,0} (x ,  \theta)^{\top}
 + \tfrac{1}{12} \sum_{k  = 1}^{d_R}  V_{R, k} (x ,  \theta)  
\bigl( \hat{V}_{k} \hat{V}_{0}  V_{S,0} (x ,  \theta)  + \hat{V}_{0} \hat{V}_{k} V_{S,0} (x,   \theta)  \bigr)^{\top}
\nonumber \\
& \qquad \qquad+ \tfrac{1}{12} \sum_{k_1, k_2 = 1}^{d_R} 
\hat{V}_{k_1} V_{R, k_2} (x ,  \theta)  
\hat{V}_{k_1} \hat{V}_{k_2} V_{S,0}  (x ,  \theta)^{\top}; 
\nonumber \\[0.2cm]
G_{SS} (x ; \theta)
& =   \sum_{k=1}^{d_R} \hat{V}_k V_{S, 0} (x,  \theta) 
  \Bigl(\tfrac{1}{6} \hat{V}_{0} \hat{V}_{k} V_{S,0} (x , \theta) + \tfrac{1}{8} \hat{V}_{k} \hat{V}_{0}  V_{S,0} (x,  \theta)  
  \Bigr)^{\top}  
  +  \tfrac{1}{24}  \sum_{k_1, k_2 = 1}^{d_R} 
  \hat{V}_{k_1} \hat{V}_{k_2} V_{S,0}  (x , \theta) 
  \hat{V}_{k_1} \hat{V}_{k_2} V_{S,0} (x ,  \theta)^{\top}. 
  \nonumber 
\end{align*} 
We recall the notation for multiple stochastic integrals: 
\begin{align}
I_\alpha (t) =  \int_{0}^{t}\cdots \int_{0}^{t_2} dB_{\alpha_1, t_1} \cdots dB_{\alpha_l, t_l}, \ \ t >  0,   \nonumber 
\end{align} 
where $\alpha \in \{0, 1, \ldots, d_R \}^l, \; l \in \mathbb{N}$. Finally, we define $|\alpha| = l$ for a multi-index $\alpha \in \{1, \ldots, d \}^l$, $l \in \mathbb{N}$. 
\section{Proofs of Theorem \ref{consistency} \& \ref{thm:asymp_norm}} 
\label{sec:pf_main}
\subsection{Proof of Theorem \ref{consistency}} \label{pf_consistency}
We prove consistency of the estimator $\est$ in the complete observation regime (\ref{eq:data}). A number of required technical results are collected in Section \ref{sec:pf_lemma_consistency}. To show consistency, a careful consideration of the structure of the covariance matrix of the model is needed, due to hypo-ellipticity. 
Then, we proceed with our proof using a similar main strategy as in \cite{melnykova2020parametric} and \cite{glot:21}, i.e.:
 
%
\begin{description}
 \item[Step 1.] We prove consistency for the estimator $\hat{\gamma}_n$. Using the consistency, we show that, if $n \to \infty, n \Delta \to \infty$ and $\Delta \to 0$, then 
\begin{align} \label{eq:gamma_rate}
  \tfrac{1}{\sqrt{\Delta}} (\hat{\gamma}_n - \gamma^\dagger) \probconv 0.
\end{align}
 \item[Step 2.] Given the above convergence (\ref{eq:gamma_rate}) and condition (\ref{assump:lipschitz}), we prove consistency for $(\hat{\beta}_n, \hat{\sigma}_n)$.   
\end{description}

\subsubsection{Step 1.}
%
\noindent  Recall that 
$\theta = (\beta, \gamma, \sigma)$, and that the contrast function $\ell_{n, \Delta} (\theta)$ is defined so that it is \emph{minimized} at $\theta=\est$. 
We have the following result whose proof is given in Section \ref{sec:conv_contrast_gamma}:
%
\begin{lemma} 
\label{lemma:conv_contrast_gamma}
Assume that conditions (\ref{assump:coeff})--(\ref{assump:bd_deriv}),  (\ref{assump:moments}) and (\ref{assump:finite_moment}) hold.
If $n \to \infty$, $\Delta \to 0$ and $n \Delta \to \infty$, then  
\begin{align*}
\tfrac{\Delta}{n} \ell_{n, \Delta} (\theta) 
\probconv   
 \int
  \bigl( V_{S,0} (x ,  \truegamma) -  V_{S,0}  (x ,  \gamma) \bigr)^{\top} 
  {\Lambda}_{1,SS} (x ;  \gamma, \sigma) 
  \bigl( V_{S,0} (x , \truegamma) -  V_{S, 0} (x ,  \gamma) \bigr) \,\truedist(dx), 
\end{align*}
uniformly in $\theta = (\beta, \gamma, \sigma) \in \Theta$. 
\end{lemma}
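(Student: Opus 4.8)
The plan is to exploit the block-scaling structure of the local Gaussian covariance and reduce the normalised contrast to a single dominant term that is handled by an ergodic law of large numbers. First I would record the factorisation implied by (\ref{eq:Var}), namely $\Sigma(\Delta,x;\gamma,\sigma) = D_\Delta\,\Sigma_1(x;\gamma,\sigma)\,D_\Delta$ with $D_\Delta = \mathrm{diag}(\sqrt{\Delta}\,I_{d_R},\sqrt{\Delta^3}\,I_{d_S})$, so that $\Sigma^{-1} = D_\Delta^{-1}\Sigma_1^{-1}D_\Delta^{-1}$ and the quadratic part of $\ell_{n,\Delta}(\theta)$ rewrites as $\sum_{m=1}^n \nu_m(\Delta;\theta)^\top \Sigma_1^{-1}(X_{t_{m-1}};\gamma,\sigma)\,\nu_m(\Delta;\theta)$, where $\nu_m(\Delta;\theta) = D_\Delta^{-1}\bigl(X_{t_m}-\mu(\Delta,X_{t_{m-1}};\theta)\bigr)$ is exactly the normalised vector introduced in the Notation section.

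Next I would analyse the size of $\nu_m(\Delta;\theta)$ block by block. The rough block $\tfrac{1}{\sqrt{\Delta}}\bigl(X_{R,t_m}-\mu_R(\Delta,X_{t_{m-1}};\beta)\bigr)$ is $O_{\mathbb{P}}(1)$, since the rough increment carries Brownian noise of size $\sqrt{\Delta}$. For the smooth block I would expand the true increment $X_{S,t_m}-X_{S,t_{m-1}} = \int_{t_{m-1}}^{t_m} V_{S,0}(X_s,\truegamma)\,ds$ by an It\^o--Taylor argument to obtain
\[
\tfrac{1}{\sqrt{\Delta^3}}\bigl(X_{S,t_m}-\mu_S(\Delta,X_{t_{m-1}};\theta)\bigr) = \tfrac{1}{\sqrt{\Delta}}\bigl(V_{S,0}(X_{t_{m-1}},\truegamma)-V_{S,0}(X_{t_{m-1}},\gamma)\bigr) + R_m(\Delta;\theta),
\]
with $R_m$ admitting conditional moment bounds of order $O_{\mathbb{P}}(1)$ under (\ref{assump:finite_moment}). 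Multiplying the resulting $SS$-contribution of the quadratic form, which is of size $\Delta^{-1}$, by the factor $\Delta/n$ produces $\tfrac{1}{n}\sum_{m}\bigl(V_{S,0}(\truegamma)-V_{S,0}(\gamma)\bigr)^\top \Lambda_{1,SS}(X_{t_{m-1}};\gamma,\sigma)\bigl(V_{S,0}(\truegamma)-V_{S,0}(\gamma)\bigr)$, recalling that $\Lambda_{1,SS}$ is the $SS$ block of $\Sigma_1^{-1}$.

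I would then show that every remaining term vanishes in probability. The $RR$- and cross $RS$-parts of the quadratic form are $O_{\mathbb{P}}(1)$ and $O_{\mathbb{P}}(\Delta^{-1/2})$, hence after the $\Delta/n$ scaling and summation they are $O_{\mathbb{P}}(\Delta)$ and $O_{\mathbb{P}}(\sqrt{\Delta})$; the cross terms between the deterministic leading smooth term and the fluctuation $R_m$ are likewise $O_{\mathbb{P}}(\sqrt{\Delta})$; the log-determinant term contributes $O(\Delta)$; and since the Hermite polynomials in $\Phi_2$ from (\ref{eq:Phi_2}) are at most $O_{\mathbb{P}}(\Delta^{-1})$, the correction term $-\tfrac{2\Delta^2}{n}\sum_m \Phi_2$ is $O_{\mathbb{P}}(\Delta)$. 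All of these tend to $0$ as \limit. For the surviving term I would invoke a law of large numbers for functionals of the observed chain, combining the ergodicity in (\ref{assump:moments}) with the moment control in (\ref{assump:finite_moment}) to pass from $\tfrac{1}{n}\sum_m g(X_{t_{m-1}};\theta)$ to $\int g(x;\theta)\,\truedist(dx)$, which is precisely the claimed limit.

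Finally, the convergence must be made uniform in $\theta\in\Theta$. I would obtain this by combining compactness of $\Theta$ in (\ref{assump:param_space}) with the continuity and polynomial-growth bounds on $V_{S,0}$ and its $\theta$-derivatives in (\ref{assump:param_space}), (\ref{assump:bd_deriv}): these give equicontinuity of both $\theta\mapsto \tfrac{\Delta}{n}\ell_{n,\Delta}(\theta)$ and the limiting integral, so a standard uniform-LLN argument (pointwise convergence plus tightness of the modulus of continuity) upgrades pointwise to uniform convergence. The main obstacle I anticipate is the second step: controlling the remainder $R_m$ in the smooth-block expansion with conditional moment estimates sharp enough to survive the $\Delta/n$ scaling, and establishing the underlying triangular-array law of large numbers — because the summands depend on the shrinking step $\Delta$ as $n\to\infty$, this is not a direct application of the ergodic theorem but requires a Genon-Catalot--Jacod-type result bounding conditional expectations and conditional variances of the rescaled increments.
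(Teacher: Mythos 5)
Your proposal matches the paper's proof in all essentials: the same normalisation by $D_\Delta$ reducing the quadratic part to $\sum_m \nu_m^\top \Sigma_1^{-1}\nu_m$, the same identification of the smooth-block mean difference $(V_{S,0}(x,\truegamma)-V_{S,0}(x,\gamma))/\sqrt{\Delta}$ as the sole surviving contribution, the same order counting to dispose of the rough/cross quadratic terms, the log-determinant and the $\Phi_2$ correction, and the same two tools the paper invokes (a Kessler-type uniform ergodic LLN for the limit and the Genon-Catalot--Jacod conditional-moment lemma for the $\Delta$-dependent triangular-array sums). The only difference is presentational: the paper splits $X_{t_i}-\mu(\Delta,X_{t_{i-1}};\theta)$ into the residual at $\trueparam$ plus the mean difference, producing the three terms $\Gamma_{1,1},\Gamma_{1,2},\Gamma_{1,3}$, whereas you expand the smooth block of $\nu_m(\Delta;\theta)$ into its deterministic leading term plus a fluctuation $R_m$ --- these are the same decomposition written in a different order.
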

\noindent 
Lemma \ref{lemma:conv_contrast_gamma}, together with conditions (\ref{assump:param_space}) and (\ref{assump:ident}), gives directly that, as $n \to \infty$, $\Delta \to 0$, then $\tfrac{\Delta}{n} \ell_{n, \Delta} ({\beta}, \gamma^\dagger, {\sigma})\rightarrow 0$, in probability, uniformly in  $\theta \in \Theta$, thus
\begin{align} 
\label{eq:convergence_gamma}
 & \tfrac{\Delta}{n} \ell_{n, \Delta} ({\beta}, \gamma, {\sigma}) 
 - \tfrac{\Delta}{n} \ell_{n, \Delta} ({\beta}, \gamma^\dagger, {\sigma})  \\[0.2cm]
& \quad \xrightarrow{\mathbb{P}_{\theta^\dagger}} 
  \int
  \bigl( V_{S,0} (x ,  \truegamma) -  V_{S,0}  (x ,  \gamma) \bigr)^{\top} 
  \Lambda_{1,SS} (x; \gamma, \sigma) \bigl( V_{S,0} (x ,  \truegamma) -  V_{S, 0} (x ,  \gamma) \bigr)\,\truedist(dx),
  \nonumber
\end{align} 
uniformly in $\theta \in \Theta$. We will now show that (\ref{eq:convergence_gamma}) implies $\hat{\gamma}_n  \xrightarrow{\mathbb{P}_{\theta^\dagger}} \gamma^\dagger$. From the identifiability condition (\ref{assump:ident}) and the definition of the contrast estimator, it holds that for any $\varepsilon > 0$, 
\begin{align} \label{eq:estgamm_bd}
    \mathbb{P}_{\trueparam} 
    \left( |\estgamma - \truegamma | > \varepsilon \right) 
    \leq  
    \mathbb{P}_{\trueparam} 
    \left( \tfrac{\Delta}{n} \ell_{n, \Delta} (\hat{\theta}_n) < \tfrac{\Delta}{n} \ell_{n, \Delta} ( \hat{\beta}_n, \truegamma, \hat{\sigma}_n ) \right). 
\end{align}
Now we note that the right hand side of (\ref{eq:convergence_gamma})  must be no smaller than $0$ due to the non-negative definiteness of ${\Sigma}_1(x; \theta)$ for every $x \in \mathbb{R}^d$, $\theta \in \Theta$, and the convergence (\ref{eq:convergence_gamma}) holds uniformly in $\theta \in \Theta$. Hence, from condition (\ref{assump:param_space}), i.e, the compactness of the parameter space $\Theta$, we have that if $n \to \infty$, $n \Delta \to \infty$ and $\Delta \to 0$, then 
$ \textstyle{ \mathbb{P}_{\trueparam} 
\left( \tfrac{\Delta}{n} \ell_{n, \Delta} (\hat{\theta}_n) < \tfrac{\Delta}{n} \ell_{n, \Delta} ( \hat{\beta}_n, \truegamma, \hat{\sigma}_n ) \right) } \to 0, 
$
which leads to the consistency of $\estgamma$ from inequality (\ref{eq:estgamm_bd}).

Next, we show that convergence (\ref{eq:gamma_rate}) holds. The Taylor expansion of $\partial_{\gamma} \contrast{\hat{\theta}_n}, \; \hat{\theta}_n = (\hat\beta_n, \hat{\gamma}_n, \hat{\sigma}_n)$
around $\partial_{\gamma} 
\contrast{\hat{\beta}_n, \truegamma, \hat{\sigma}_n}$ gives 
%
%
\begin{align}  
\label{eq:taylor_gamma}
     A_\gamma \bigl( \hat{\beta}_n, \hat\sigma_{n} \bigr)
     = B_\gamma (\hat\theta_n) \times \tfrac{1}{\sqrt\Delta}
     (\estgamma - \truegamma), 
\end{align}
where we have defined for $\theta = (\beta, \gamma, \sigma) \in \Theta$, 
\begin{align*}
    A_\gamma \bigl( {\beta}, \sigma \bigr) 
    := -\tfrac{\sqrt\Delta}{n} \partial_{\gamma} 
    \contrast{ \beta, \truegamma, \sigma },  \quad
    B_\gamma ( \theta ) 
    := \tfrac{\Delta}{n} 
      \int_0^1 \partial_\gamma^2 \ell_{n, \Delta} 
      \bigl( {\beta}, \truegamma + \lambda (\gamma - \truegamma), \sigma \bigr) d\lambda.  
\end{align*}
Then, we have the following result whose proof is given in Section \ref{sec:pf_taylor_gamma_conv}:
%
\begin{lemma} \label{lemma:taylor_gamma_conv}
Assume that conditions (\ref{assump:coeff})--(\ref{assump:bd_deriv}),  (\ref{assump:moments}) and (\ref{assump:finite_moment}) hold. If $n \to \infty$, $\Delta \to 0$ and $n \Delta \to \infty$, then  
\begin{align}
& A_\gamma \bigl( {\beta}, \sigma \bigr)  \probconv 0; 
 \label{eq:A_gamma}  \\
& B_\gamma \bigl( \beta, \estgamma, \sigma \bigr)
 \probconv \int 
 \partial_\gamma V_{S,0}^\top (x , \truegamma) \,
 \Lambda_{1, SS} (x ; \truegamma, \sigma)  \,
 \bigl( \partial_\gamma V_{S,0}^\top  (x , \truegamma) \bigr)^\top \nu_{\trueparam} (dx). 
 \label{eq:B_gamma}
\end{align}
uniformly in $(\beta, \sigma) \in \Theta_{\beta} \times \Theta_\sigma$. 
\end{lemma}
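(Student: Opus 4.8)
The plan is to treat the two convergences \ref{eq:A_gamma} and \ref{eq:B_gamma} by the same two-step device used throughout the high-frequency literature for ergodic diffusions: write each normalised sum as a \emph{compensator} plus a \emph{martingale}, control the compensator through a law of large numbers for the ergodic chain $(\sample{X}{m})$ and the martingale through an $L^2$ (conditional-variance) bound. The three inputs I would rely on are: (i) the ergodic averaging principle $\tfrac1n\sum_{m=1}^n f(\sample{X}{m-1})\probconv\int f\,\truedist(dx)$, valid for $f$ of polynomial growth under \ref{assump:moments}--\ref{assump:finite_moment} as \limit; (ii) the conditional-moment estimate \ref{eq:mean_nu}, i.e.\ $\mathbb{E}[\nu_i^j(\Delta;\trueparam)\mid\filtration{i-1}]=\mathcal{O}(\Delta^{3/2})$, together with the covariance expansion \ref{eq:expansion_cov}; and (iii) the explicit parameter-derivative formulas \ref{eq:nu_deriv_beta} for $\partial_\theta\nu_i$, whose $\gamma$-block is dominated by the $\mathcal{O}(\Delta^{-1/2})$ term $-\partial_\gamma V_{S,0}/\sqrt{\Delta}$.

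For \ref{eq:A_gamma} I would first record that $(\sample{X}{m}-\mu)^\top\Sigma^{-1}(\sample{X}{m}-\mu)=\nu_m^\top\Sigma_1^{-1}\nu_m$ after the block rescaling of $\Sigma$ by $\mathrm{diag}(\sqrt{\Delta}\,I_{d_R},\sqrt{\Delta^3}\,I_{d_S})$. Differentiating this quadratic form and inserting \ref{eq:nu_deriv_beta}, the factor $\sqrt{\Delta}$ in $A_\gamma$ exactly cancels the $\Delta^{-1/2}$ in $\partial_\gamma\nu_m$, so the leading term is $\tfrac{2}{n}\sum_m(\partial_\gamma V_{S,0}(\sample{X}{m-1},\truegamma))^\top[\Sigma_1^{-1}\nu_m]_S$. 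Writing $\nu_m=\bar\nu_m+(\nu_m-\bar\nu_m)$ with $\bar\nu_m=\mathbb{E}_{\trueparam}[\nu_m\mid\filtration{m-1}]$, the martingale part tends to $0$ in $L^2$ since its conditional variance is $\mathcal{O}(1/n)$, while the compensator part is controlled by $\bar\nu_m=\mathcal{O}(\sqrt{\Delta})$. Here one must be mindful that $A_\gamma$ fixes $\gamma=\truegamma$ but leaves $(\beta,\sigma)$ free, so $\bar\nu_m$ is $\mathcal{O}(\sqrt{\Delta})$ rather than $\mathcal{O}(\Delta^{3/2})$, which still suffices after ergodic averaging. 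The remaining pieces — the subleading $\mathcal{O}(\sqrt{\Delta})$ part of $\partial_\gamma\nu_m$, the terms $\nu_m^\top(\partial_\gamma\Sigma_1^{-1})\nu_m$ and $\partial_\gamma\log|\Sigma_1|$, and the correction $-2\Delta\,\partial_\gamma\Phi_2$ — each carry an extra power of $\sqrt{\Delta}$ or $\Delta$ after the $\sqrt{\Delta}/n$ scaling, hence vanish by (i).

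For \ref{eq:B_gamma} I would differentiate twice and retain only the $\mathcal{O}(\Delta^{-1})$ contribution: the cross term $2(\partial_\gamma\nu_m)^\top\Sigma_1^{-1}(\partial_\gamma\nu_m)$ dominates, and since $\partial_\gamma\nu_m$ is supported on the smooth coordinates it isolates the $SS$-block of $\Sigma_1^{-1}$, namely $\Lambda_{1,SS}=4\Sigma_{1,SS}^{-1}$, producing $\tfrac{\Delta}{n}\sum_m\tfrac{1}{\Delta}(\partial_\gamma V_{S,0})^\top\Lambda_{1,SS}(\partial_\gamma V_{S,0})$. Using the consistency $\hat\gamma_n\probconv\truegamma$ established via Lemma \ref{lemma:conv_contrast_gamma}, the argument $\truegamma+\lambda(\hat\gamma_n-\truegamma)$ lies in a shrinking neighbourhood of $\truegamma$ uniformly in $\lambda\in[0,1]$, so $\int_0^1 d\lambda$ collapses onto the value at $\truegamma$; the ergodic law of large numbers then yields the stated integral. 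All other second-derivative terms are at most $\mathcal{O}(\Delta^{-1/2})$ and are killed by the $\Delta/n$ normalisation.

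The main obstacle is not any single term but the uniform control in $(\beta,\sigma)\in\Theta_\beta\times\Theta_\sigma$ together with the verification that every term sits at the asserted order under the joint regime \limit. Uniformity would be obtained by augmenting the pointwise arguments with a uniform law of large numbers, controlling also the $(\beta,\sigma)$-derivatives of the summands via \ref{assump:bd_deriv} and the boundedness of the relevant vector fields, so that the continuous limit maps upgrade to uniform convergence; condition \ref{assump:lipschitz} and the moment bounds \ref{assump:moments}--\ref{assump:finite_moment} supply the polynomial-growth integrability needed at each use of the ergodic average. The delicate bookkeeping lies in separating the rough and smooth scalings $(\sqrt{\Delta},\sqrt{\Delta^3})$ cleanly, so that the $\Delta^{-1}$-leading cross term in $B_\gamma$ and the $\Delta^{-1/2}$-leading term in $A_\gamma$ are correctly identified and all residuals demonstrably vanish.
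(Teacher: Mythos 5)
Your proof is correct in substance and follows essentially the paper's own route: the paper treats (\ref{eq:A_gamma}) by splitting the scaled score at $(\beta,\truegamma,\sigma)$ into its local-Gaussian and $\Phi_2$ parts and invoking Lemmas \ref{ergodic_thm} and \ref{canonical_conv} — the latter being precisely your compensator/martingale device (Lemma \ref{Genon_Catalot}) packaged as an auxiliary result — and it obtains (\ref{eq:B_gamma}) from the $\gamma\gamma$-block Hessian computation carried out in the proof of Lemma \ref{lemma:conv_cov}, which is the computation you sketch. Your key observation that fixing $\gamma=\truegamma$ keeps $\mathbb{E}_{\trueparam}[\nu_m \mid \filtration{m-1}]=\mathcal{O}(\sqrt{\Delta})$ uniformly in $(\beta,\sigma)$ is exactly the paper's identity $\nu_i(\Delta;\beta,\truegamma,\sigma)=\nu_i(\Delta;\trueparam)+\sqrt{\Delta}\,[\cdots]$.

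One claim needs repair, though the repair is available inside your own argument. For (\ref{eq:B_gamma}) you assert that all second-derivative terms other than the leading cross term are ``at most $\mathcal{O}(\Delta^{-1/2})$ and are killed by the $\Delta/n$ normalisation''. This is false as stated: at the intermediate point $\gamma_\lambda=\truegamma+\lambda(\estgamma-\truegamma)$ the vector $\nu_m(\Delta;\beta,\gamma_\lambda,\sigma)$ contains the block $\bigl(V_{S,0}(\cdot,\truegamma)-V_{S,0}(\cdot,\gamma_\lambda)\bigr)/\sqrt{\Delta}$, so terms such as $2\,\nu_m^\top\Sigma_1^{-1}\partial^2_\gamma\nu_m$ and $(\partial_\gamma\nu_m)^\top(\partial_\gamma\Sigma_1^{-1})\nu_m$ carry pieces of order $\Delta^{-1}$ times drift differences; after multiplication by $\Delta/n$ these are $O(1)$ times $|V_{S,0}(\cdot,\truegamma)-V_{S,0}(\cdot,\gamma_\lambda)|$ and vanish only through the consistency $\estgamma\probconv\truegamma$ combined with ergodic averaging, not through the scaling alone. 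Relatedly, you should state explicitly that this proof may use only consistency (from Lemma \ref{lemma:conv_contrast_gamma}) and must not invoke the rate (\ref{eq:gamma_rate}) or Lemma \ref{canonical_conv2} (whose proof relies on that rate), since Lemma \ref{lemma:taylor_gamma_conv} is precisely what delivers (\ref{eq:gamma_rate}); your argument does respect this, and the reason the $\gamma\gamma$-block is benign is that, after the $\Delta/n$ scaling, every drift-difference term appears with no inverse power of $\Delta$, so consistency suffices. A minor bookkeeping remark: the cross term you identify comes with the factor $2$, matching the limit obtained in the proof of Lemma \ref{lemma:conv_cov}; the discrepancy with the factor displayed in (\ref{eq:B_gamma}) is immaterial for the application, which uses only positive definiteness of the limit.
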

Here, we notice that the right hand side of (\ref{eq:B_gamma}) is  positive definite due to the positive definiteness of $\Lambda_{1, SS} (x ; \gamma, \sigma)$ for any $x \in \mathbb{R}^d$ and $(\gamma, \sigma) \in \Theta_{\gamma} \times \Theta_\sigma$. Hence, Lemma \ref{lemma:taylor_gamma_conv} and equation (\ref{eq:taylor_gamma}) give  convergence (\ref{eq:gamma_rate}).  
\subsubsection{Step 2.} 
\noindent For the proof, we define
%
$T(\theta) :=  \tfrac{1}{n} \contrast{\theta}, \ \  
Q(\theta) :=  \tfrac{1}{n \Delta} \contrast{\theta} - \tfrac{1}{n \Delta} \contrast{\truebeta, \gamma, \sigma}$, 
%
where $\theta \in \Theta$. We will first consider the limit of $T(\beta, \estgamma, \sigma)$ to show consistency of $\hat{\sigma}_n$, and then that of $Q(\beta, \estgamma, \sigma)$ to prove consistency of $\hat{\beta}_n$. We note that these results will rely on the consistency of $\estgamma$ and  convergence (\ref{eq:gamma_rate}).
In particular, convergence (\ref{eq:gamma_rate}) and condition (\ref{assump:lipschitz})
will allow for obtaining appropriate limits for $T(\beta, \estgamma, \sigma)$ and $Q(\beta, \estgamma, \sigma)$. 

\subsubsection*{Step 2-(i). Consistency of $\hat{\sigma}_n$} \label{sigma_consistency}
\noindent We have the following result due to convergence (\ref{eq:gamma_rate}).   
\begin{lemma} \label{lemma:conv_contrast_sigma}
Under conditions (\ref{assump:coeff})-(\ref{assump:finite_moment}), it holds that as $n \to \infty$, $\Delta \to 0$, $n \Delta \to \infty$,  
\begin{align} \label{conv_sigma_contrast}
T(\beta, \estgamma, \sigma) 
\xrightarrow{\mathbb{P}_{\theta^\dagger}}   
 \;\int
 \Bigl(   \mathrm{tr} \bigl( {\Sigma}_1^{-1} (x; \gamma^\dagger, \sigma) {\Sigma}_1 (x; \gamma^\dagger, \truesigma ) \bigr)
         + \log |{\Sigma}_1(x; \truegamma, \sigma)|  
 \Bigr)\,  
 \nu_{\theta^\dagger} (dx),  
\end{align}
uniformly in $(\beta, \sigma) \in \Theta_\beta \times \Theta_\sigma$.  
\end{lemma}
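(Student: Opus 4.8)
The plan is to exploit the exact scaling relation between $\Sigma(\Delta,x;\gamma,\sigma)$ and $\Sigma_1(x;\gamma,\sigma)=\Sigma(1,x;\gamma,\sigma)$. Writing $D_\Delta=\mathrm{diag}(\sqrt{\Delta}\,I_{d_R},\sqrt{\Delta^3}\,I_{d_S})$, formula (\ref{eq:Var}) gives $\Sigma(\Delta,x;\gamma,\sigma)=D_\Delta\,\Sigma_1(x;\gamma,\sigma)\,D_\Delta$, hence $\Sigma^{-1}(\Delta,x;\gamma,\sigma)=D_\Delta^{-1}\Sigma_1^{-1}(x;\gamma,\sigma)D_\Delta^{-1}$, so the quadratic term in (\ref{eq:contrast}) equals $\nu_m(\Delta;\theta)^\top\Sigma_1^{-1}(\sample{X}{m-1};\gamma,\sigma)\nu_m(\Delta;\theta)$, with $\nu_m$ the normalised residual recalled in Section \ref{sec:pre}. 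Thus $T(\beta,\estgamma,\sigma)$ splits into three pieces: the quadratic term $\tfrac1n\sum_m\nu_m^\top\Sigma_1^{-1}\nu_m$, the log-determinant term $\tfrac1n\sum_m\log|\Sigma_1(\sample{X}{m-1};\estgamma,\sigma)|$, and the correction $-\tfrac{2\Delta}{n}\sum_m\Phi_2$.

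First I would dispose of the correction term: since $\Phi_2$ is a sum of second-order Hermite polynomials (quadratic in $\nu_m$) with bounded coefficients, its conditional expectation under $\trueparam$ is $O(1)$ by (\ref{eq:expansion_cov}), so $\tfrac1n\sum_m\Phi_2=O_{\mathbb{P}}(1)$ uniformly in $(\beta,\sigma)$ and the prefactor $\Delta\to0$ forces this piece to $0$. For the log-determinant term, the consistency of $\estgamma$ established in Step 1, combined with the smoothness and polynomial-growth bounds from (\ref{assump:coeff}), (\ref{assump:bd_deriv}), allows replacing $\estgamma$ by $\truegamma$ up to $o_{\mathbb{P}}(1)$ uniformly in $\sigma$, after which an ergodic law of large numbers for discretely-sampled functionals of $\{X_t\}$ — valid under (\ref{assump:moments}), (\ref{assump:finite_moment}) — yields $\int\log|\Sigma_1(x;\truegamma,\sigma)|\,\truedist(dx)$.

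The heart of the argument is the quadratic term, where I would first replace the drift parameters inside $\nu_m$ by their true values. The rough block of $\nu_m(\Delta;\beta,\estgamma,\sigma)-\nu_m(\Delta;\truebeta,\truegamma,\sigma)$ carries a factor $\sqrt{\Delta}$ and is negligible, whereas the smooth block carries a factor $\Delta/\sqrt{\Delta^3}=\Delta^{-1/2}$ multiplying $V_{S,0}(x;\estgamma)-V_{S,0}(x;\truegamma)$; by the Lipschitz condition (\ref{assump:lipschitz}) this is bounded by $B(\sample{X}{m-1})\,|\estgamma-\truegamma|/\sqrt{\Delta}$, which tends to $0$ by the key rate (\ref{eq:gamma_rate}). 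Controlling the ensuing cross-term $\tfrac2n\sum_m\nu_m(\Delta;\truebeta,\truegamma,\sigma)^\top\Sigma_1^{-1}\bigl(\nu_m(\Delta;\beta,\estgamma,\sigma)-\nu_m(\Delta;\truebeta,\truegamma,\sigma)\bigr)$ via Cauchy--Schwarz, and the analogous diagonal term, reduces everything to $\tfrac1n\sum_m\nu_m(\Delta;\truebeta,\truegamma,\sigma)^\top\Sigma_1^{-1}(\sample{X}{m-1};\estgamma,\sigma)\nu_m(\Delta;\truebeta,\truegamma,\sigma)+o_{\mathbb{P}}(1)$; a further use of consistency of $\estgamma$ replaces $\estgamma$ by $\truegamma$ in $\Sigma_1^{-1}$. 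Finally, the conditional second-moment expansion (\ref{eq:expansion_cov}) gives $\mathbb{E}_{\trueparam}[\nu_m^{i_1}\nu_m^{i_2}\mid\filtration{m-1}]=\Sigma_{1,i_1i_2}(\sample{X}{m-1};\trueparam)+O(\Delta)$, whence $\mathbb{E}_{\trueparam}[\nu_m^\top\Sigma_1^{-1}(\cdot;\truegamma,\sigma)\nu_m\mid\filtration{m-1}]=\mathrm{tr}\bigl(\Sigma_1^{-1}(\sample{X}{m-1};\truegamma,\sigma)\Sigma_1(\sample{X}{m-1};\truegamma,\truesigma)\bigr)+O(\Delta)$, and the ergodic theorem delivers the stated trace integral.

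The main obstacle is twofold. First, establishing all limits \emph{uniformly} in $(\beta,\sigma)\in\Theta_\beta\times\Theta_\sigma$, which I would handle by combining compactness (\ref{assump:param_space}) with equicontinuity estimates for the relevant maps and a uniform version of the ergodic/LLN lemma, as used for Lemma \ref{lemma:conv_contrast_gamma}. Second, the delicate $\Delta^{-1/2}$ amplification in the smooth component: negligibility of the drift-parameter mismatch hinges essentially on the sharp rate (\ref{eq:gamma_rate}) rather than mere consistency of $\estgamma$, which is precisely where Step 1 feeds into Step 2. The martingale-plus-conditional-expectation decomposition needed to pass from sums of $\nu_m^\top\Sigma_1^{-1}\nu_m$ to their conditional means — showing the martingale remainder is $o_{\mathbb{P}}(1)$ after normalising by $n$ — is routine under the moment bounds (\ref{assump:finite_moment}).
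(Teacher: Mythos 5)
Your proposal is correct and follows essentially the same route as the paper's proof: the same split of $T$ into quadratic, log-determinant and $\Phi_2$ pieces, the same centering of the quadratic form at the true-parameter residuals $\nu_m(\Delta;\trueparam)$, and the same key point that the smooth-block drift mismatch, amplified by $\Delta^{-1/2}$, must be controlled by the rate (\ref{eq:gamma_rate}) together with the Lipschitz condition (\ref{assump:lipschitz}) (this is exactly the paper's Lemma \ref{canonical_conv2}), while the main term is finished via the conditional second-moment expansion, the ergodic LLN and a martingale lemma (Lemmas \ref{weak_Taylor_expansion}, \ref{ergodic_thm}, \ref{Genon_Catalot}, \ref{canonical_conv}). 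The only cosmetic differences — Cauchy--Schwarz for the cross term where the paper invokes Lemma \ref{canonical_conv} directly, and a more compressed treatment of the parameter mismatch inside $\Phi_2$, which the paper spells out by decomposing the Hermite polynomials evaluated at $(\beta,\estgamma,\sigma)$ — do not change the substance of the argument.
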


\noindent We show the proof of Lemma \ref{lemma:conv_contrast_sigma} in Section \ref{sec:conv_contrast_sigma}. 
Limit (\ref{conv_sigma_contrast}) with conditions (\ref{assump:param_space}), (\ref{assump:ident}) will provide the consistency for~$\hat{\sigma}_n$. In particular, we will show that 
if $n \to \infty$, $n \Delta_n \to \infty$, $\Delta_n \to 0$ then, 
\begin{align} \label{eq:sigma_likelihood}
\mathbb{P}_{\trueparam} 
   \left( 
   T (\hat{\theta}_n ) 
   <  
   T (\hat{\beta}_n, \hat{\gamma}_n, \truesigma) \right) 
   \to 0,
\end{align}
which implies consistency of $\hat{\sigma}_n$ from the same discussion as in the proof of consistency of $\estgamma$. 
We denote by $U(\sigma)$ the right-hand-side of (\ref{conv_sigma_contrast}), and will show that 
$\sigma \mapsto U(\sigma)$ takes the minimum value at $\sigma = \sigma^\dagger$ from the following discussion. We define 
\begin{align*}
  g (\sigma; x) 
  :=   \mathrm{tr} \bigl( {\Sigma}_1^{-1} (x; \truegamma, \sigma ) {\Sigma}_1 (x; \truegamma, \truesigma) \bigr) +      \log |{\Sigma}_1(x;  \truegamma, \sigma )|. 
 \end{align*} 
Taking the derivative of $g (\sigma; x)$ w.r.t.~$\sigma_i$, $1\le i \le d_\sigma$, yields:
\begin{align*}
\partial_{\sigma_i}  g (\sigma; x) 
%
 = &\; \mathrm{tr} \bigl( \partial_{\sigma_i}{\Sigma}_1^{-1} (x; \truegamma , \sigma) 
  \Sigma_1 (x; \truegamma, \truesigma)  \bigr) 
 - \mathrm{tr} \bigl( \partial_{\sigma_i} {\Sigma}_1^{-1} (x;\truegamma , \sigma) {\Sigma}_1 (x; \truegamma , \sigma)  \bigr), 
\end{align*} 
where we used that: 
\begin{align*}
\partial_{\sigma_i} \log |{\Sigma}_1(x;  \truegamma, \sigma )| 
= \mathrm{tr} \bigl( {\Sigma}_1^{-1} (x; \truegamma , \sigma) 
\partial_{\sigma_i}{\Sigma}_1 (x; \truegamma , \sigma)  \bigr)
= - \mathrm{tr} \bigl( \partial_{\sigma_i}{\Sigma}_1^{-1} (x; \truegamma , \sigma)  {\Sigma}_1 (x; \truegamma , \sigma) \bigr),  
\end{align*} 
obtained via consideration of the partial derivative of $\mathrm{tr} \bigl( {\Sigma}_1^{-1} (x; \truegamma , \sigma) {\Sigma}_1 (x; \truegamma , \sigma)   \bigr)\equiv \mathrm{tr}(I_d)$. Hence, we have $\partial_{\sigma} g (\sigma; x) |_{\sigma = \sigma^\dagger} = 0$ 
for all $x \in \mathbb{R}^d$, and then, 
$ 
\partial_{\sigma} U (\sigma) |_{\sigma = \sigma^\dagger} = 0.
$ 
From the compactness of the parameter space (\ref{assump:param_space}), we obtain the limit (\ref{eq:sigma_likelihood}), and deduce the consistency of $\hat{\sigma}_{n}$.
\subsubsection*{Step 2-(ii). Consistency of $\hat{\beta}_n$}  \label{beta_consistency} 
\noindent Due to convergence (\ref{eq:gamma_rate}), and the consistency of $\estgamma$ and $\hat{\sigma}_n$, we have the following result with  proof provided in Section \ref{sec:conv_contrast_beta}. 
%
\begin{lemma}  \label{lemma:conv_contrast_beta}
Under conditions (\ref{assump:coeff})--(\ref{assump:finite_moment}), it holds that as $n \to \infty$, $\Delta \to 0$, $n \Delta \to \infty$, 
\begin{align*} 
  Q (\beta, \estgamma, \hat{\sigma}_n) 
 \probconv  
 \int 
  \bigl( e(x; \trueparam)  -  e (x; \beta, \truegamma, \truesigma ) \bigr)^{\top}  
  {\Sigma}_1^{-1}  (x;\truegamma, \truesigma )
  \bigl( e(x; \trueparam)  -  e (x; \beta, \truegamma, \truesigma) \bigr)\, \nu_{\theta^\dagger} (dx),  
\end{align*} 
uniformly in $\beta \in \Theta_\beta$, where $e (\cdot ; \theta) : \mathbb{R}^d \to \mathbb{R}^d, \; \theta \in \Theta$, is defined as
\begin{align*}
  e (x ;  \theta)  = \big[\,  
     V_{R,0}  (x,  \beta)^{\top},\,  
     \tfrac{1}{2} \hat{V}_0 V_{S, 0} (x, \theta)^{\top}\,\big]^{\top},\quad  x \in \mathbb{R}^d. 
\end{align*}
\end{lemma}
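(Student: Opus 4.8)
The plan is to exploit the scaling relation $\Sigma(\Delta,x;\gamma,\sigma) = D_\Delta\,\Sigma_1(x;\gamma,\sigma)\,D_\Delta$, with $D_\Delta = \mrm{diag}(\sqrt{\Delta}\,I_{d_R},\sqrt{\Delta^3}\,I_{d_S})$, which rewrites the quadratic part of $\contrast{\theta}$ as $\sum_{m}\nu_m(\Delta;\theta)^\top \Sigma_1^{-1}(\sample{X}{m-1};\gamma,\sigma)\,\nu_m(\Delta;\theta)$ for the normalised residual $\nu_m(\Delta;\theta)$ of the notation section. The $\log|\Sigma_1|$ term is $\beta$-free, hence cancels in $Q(\beta,\gamma,\sigma)$. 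The key algebraic observation is that $\beta$ enters $\mu(\Delta,x;\theta)$ only through $V_{R,0}(x,\beta)$, so
\[
\nu_m(\Delta;\beta,\gamma,\sigma) = \nu_m(\Delta;\truebeta,\gamma,\sigma) - \sqrt{\Delta}\,\delta e_m, \qquad \delta e_m := e(\sample{X}{m-1};\beta,\gamma,\sigma) - e(\sample{X}{m-1};\truebeta,\gamma,\sigma),
\]
and, crucially, \emph{both} blocks of $\delta e_m$ are proportional to $\delta V_R := V_{R,0}(\cdot,\beta)-V_{R,0}(\cdot,\truebeta)$: the rough block equals $\delta V_R$, and the smooth block equals $\tfrac12(\partial_{x_R}^\top V_{S,0})\,\delta V_R$ since only the $\langle V_{R,0},\partial_{x_R}V_{S,0}\rangle$ part of $\hat V_0 V_{S,0}$ depends on $\beta$.

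Substituting and scaling by $1/(n\Delta)$ splits $Q$ into a \emph{square} term $\tfrac1n\sum_m \delta e_m^\top \Sigma_1^{-1}\delta e_m$, a \emph{cross} term $-\tfrac{2}{n\sqrt\Delta}\sum_m \delta e_m^\top \Sigma_1^{-1}\,\nu_m(\Delta;\truebeta,\gamma,\sigma)$, and the $\Phi_2$-correction $-\tfrac2n\sum_m\{\Phi_2(\cdot;\beta,\gamma,\sigma)-\Phi_2(\cdot;\truebeta,\gamma,\sigma)\}$. For the square term, plugging in $\gamma=\hat\gamma_n,\sigma=\hat\sigma_n$, using their consistency, continuity in $(\gamma,\sigma)$, and the ergodic law of large numbers under (\ref{assump:moments})--(\ref{assump:finite_moment}), gives precisely the stated integral uniformly in $\beta$ (the $\beta$-dependence being through smooth functions with uniformly controlled derivatives by (\ref{assump:bd_deriv})).

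The cross term is the main obstacle. I would decompose $\nu_m(\Delta;\truebeta,\gamma,\sigma)$ into its $\mathcal{F}_{t_{m-1}}$-conditional mean plus a martingale increment. The martingale part is a sum of martingale differences with conditional variance $\delta e_m^\top\Sigma_1^{-1}\delta e_m + O(\Delta)=O(1)$, so after the prefactor its $L^2$-norm is $O(1/(n\Delta))\to0$ and it is negligible. The conditional-mean (bias) part is delicate: by (\ref{eq:mean_nu}) the rough block of the mean is $O(\Delta^{3/2})$, but the smooth block has leading contribution $(V_{S,0}(\cdot,\truegamma)-V_{S,0}(\cdot,\gamma))/\sqrt\Delta$, which against the $1/\sqrt\Delta$ prefactor yields a term of order $|\gamma-\truegamma|/\Delta$ that (\ref{eq:gamma_rate}) alone cannot kill. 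This term is removed by an \emph{exact} cancellation: with $\delta e_m=[\,\delta V_R;\ \tfrac12(\partial_{x_R}^\top V_{S,0})\delta V_R\,]$ and the block-inverse identity $\Lambda_{1,RS} = -\tfrac12(\partial_{x_R}^\top V_{S,0})^\top \Lambda_{1,SS}$ (which follows from $\Sigma_{1,RR}=a_R$ and $\Sigma_{1,RS}=\tfrac12 a_R(\partial_{x_R}^\top V_{S,0})^\top$), the smooth column of $\Sigma_1^{-1}\delta e_m$ contracts to $\delta V_R^\top[\Lambda_{1,RS}+\tfrac12(\partial_{x_R}^\top V_{S,0})^\top\Lambda_{1,SS}]=0$. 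The residual bias is then of order $|\gamma-\truegamma|+|\sigma-\truesigma|+\Delta$, which vanishes under consistency of $\hat\gamma_n,\hat\sigma_n$ and (\ref{eq:gamma_rate}), once (\ref{assump:lipschitz}) is used to bound $V_{S,0}(\cdot,\truegamma)-V_{S,0}(\cdot,\gamma)$.

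Finally, for the $\Phi_2$-correction I would show $\tfrac1n\sum_m\Phi_2(\Delta,\sample{X}{m-1},\sample{X}{m};\theta)\probconv0$ for $\theta\in\{(\beta,\hat\gamma_n,\hat\sigma_n),(\truebeta,\hat\gamma_n,\hat\sigma_n)\}$: writing $\Phi_2=\sum_{i_1,i_2} G_{i_1i_2}\mathcal{H}_{(i_1,i_2)}$ and invoking the covariance expansion (\ref{eq:expansion_cov}), each Hermite term has $\mathcal{F}_{t_{m-1}}$-conditional mean of order $O(\Delta)+O(|\gamma-\truegamma|+|\sigma-\truesigma|)+O(|\gamma-\truegamma|^2/\Delta)$, all $o_P(1)$ by consistency and (\ref{eq:gamma_rate}) (note $|\hat\gamma_n-\truegamma|^2/\Delta=(|\hat\gamma_n-\truegamma|/\sqrt\Delta)^2\to0$), while the martingale fluctuation is again $O(1/n)$ in $L^2$; hence the correction contributes nothing to the limit. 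Collecting the three pieces yields the stated uniform-in-$\beta$ convergence of $Q(\beta,\hat\gamma_n,\hat\sigma_n)$. The hardest and most novel ingredient is the exact cancellation in the cross term, without which the weaker rate (\ref{eq:gamma_rate}) would be insufficient.
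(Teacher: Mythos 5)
Your proposal is correct and takes essentially the same route as the paper's proof: the paper likewise splits $Q$ into a square term, cross terms and the $\Phi_2$-difference, obtains the square term's limit from the ergodic law of large numbers together with consistency of $(\estgamma,\hat{\sigma}_n)$, and removes the dangerous contribution of order $|\estgamma-\truegamma|/\Delta$ by exactly your block-inverse cancellation, which is the paper's Lemma \ref{lemma:aux_matrix}, i.e.\ $\Lambda_{1,SR}+\tfrac{1}{2}\Lambda_{1,SS}\,\partial^{\top}_{x_R}V_{S,0}=\mathbf{0}_{d_S\times d_R}$ applied to $e(\cdot;\truebeta,\estgamma,\sigma)-e(\cdot;\beta,\estgamma,\sigma)$. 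The only differences are organizational: the paper separates your single cross term into a stochastic part $Q_1$ (handled by its Lemma \ref{canonical_conv}) and a deterministic drift-mismatch part $Q_3$, whose problematic piece $Q_{3,1}$ vanishes identically by that same identity, and it disposes of the $\Phi_2$-correction by computing the limit of the difference $Q_4$ and noting it is zero at $\sigma=\truesigma$, rather than via your martingale/conditional-mean split showing each averaged $\Phi_2$-term vanishes individually.
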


\noindent From this convergence, together with conditions (\ref{assump:param_space}) and (\ref{assump:ident}), we deduce $\hat{\beta}_n \probconv \truebeta$ using the same argument as for the consistency of $\hat{\gamma}_n$.

\subsection{Proof of Theorem \ref{thm:asymp_norm}}  \label{sec:pf_asymp_norm}
We now prove  asymptotic normality for estimator $\est$. Throughout the proof, we consider the hypo-elliptic SDE scenario, $d_R < d$. 

The Taylor expansion of $\partial_{\theta} \ell_{n, \Delta} (\hat{\theta}_n)$, $\hat{\theta}_n = (\hat{\beta}_n, \hat{\gamma}_n, \hat{\sigma}_n )$, around the true parameter vector $\theta^\dagger = (\beta^\dagger, \gamma^\dagger, \sigma^\dagger)$ yields
\begin{align} 
\label{eq:Taylor_likelihood}
  A_{n, \Delta} (\theta^\dagger) = \int_0^1 B_{n, \Delta} \bigl(\theta^\dagger + \lambda (\hat{\theta}_n - \theta^\dagger)  \bigr)\,d \lambda \times C_{n, \Delta} (\hat{\theta}_n), 
\end{align}
where, for $\theta \in \Theta$, we have defined
\begin{align*}
A_{n, \Delta} (\theta) & :=     
 -\Big[\,
 \tfrac{1}{\sqrt{n \Delta}}\,\partial_{\beta} \ell_{n, \Delta} (\theta)^{\top},\, 
 \sqrt{\tfrac{\Delta}{n}}\, \partial_{\gamma} \ell_{n, \Delta} (\theta)^{\top},\, 
 \tfrac{1}{\sqrt{n}}\,\partial_{\sigma}\,\ell_{n, \Delta} (\theta)^{\top} 
 \,\Big]^{\top};  \\
B_{n, \Delta} (\theta) & :=  M_{n, \Delta} \partial^2_{\theta} \ell_{n, \Delta} (\theta) M_{n, \Delta}, \quad  
C_{n, \Delta} (\theta) :=    
 \Big[\,
 \sqrt{n \Delta}\,({\beta} - \beta^\dagger)^{\top},\, 
 \sqrt{\tfrac{n}{\Delta}}\,(\gamma - \gamma^\dagger)^{\top},\, 
 \sqrt{n}\,(\sigma - \sigma^\dagger)^{\top} 
 \,\Big]^{\top}, 
\end{align*}
%
with a scaling matrix 
$ M_{n, \Delta} := \mathrm{diag} (v_{n, \Delta})$,
where $v_{n, \Delta}$ is a $d_{\theta}$-dimensional vector given by 
\begin{align*}
 v_{n, \Delta} 
 = 
 \Bigl[ \; 
 \underbrace{\tfrac{1}{\sqrt{n \Delta}}, \ldots , \tfrac{1}{\sqrt{n \Delta}}}_{d_{\beta} - \mathrm{dim}},  \;   
 \underbrace{\sqrt{\tfrac{\Delta}{n}}, \ldots, \sqrt{\tfrac{\Delta}{n}}}_{d_{\gamma} - \mathrm{dim}},  \; 
 \underbrace{\tfrac{1}{\sqrt{n}}, \ldots, \tfrac{1}{\sqrt{n}}}_{d_{\sigma} - \mathrm{dim}} 
 \; \Bigr]^{\top}, \quad  d_{\theta} = d_{\beta} + d_{\gamma} + d_{\sigma}.  
\end{align*}
%
%
%
We will continue from (\ref{eq:Taylor_likelihood}), using the  the following results.
\begin{lemma} \label{lemma:conv_normal}
Under conditions (\ref{assump:coeff})--(\ref{assump:bd_deriv}),  (\ref{assump:moments}) and (\ref{assump:finite_moment}), it holds that as $n \to \infty$, $\Delta \to 0$, $n \Delta \to \infty$ and $n \Delta^3 \to 0$, 
\begin{align*}
     A_{n, \Delta_n} (\theta^\dagger)  \xrightarrow{\mathcal{L}_{\theta^\dagger}} \mathscr{N} \bigl( 0, 4 I(\theta^\dagger) \bigr).
\end{align*} 
\end{lemma}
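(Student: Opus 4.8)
The plan is to treat $A_{n,\Delta_n}(\trueparam) = \sum_{m=1}^n s(\Delta_n, \sample{X}{m-1}, \sample{X}{m}; \trueparam)$ as a sum of a triangular array of random vectors adapted to $\{\filtration{m}\}$, and to apply a martingale central limit theorem for such arrays (of the Genon-Catalot--Jacod type used throughout the diffusion-estimation literature, e.g.\ in \cite{glot:21}). First I would decompose each summand into its predictable conditional mean and a martingale-difference remainder,
\begin{align*}
s_m := s(\Delta_n, \sample{X}{m-1}, \sample{X}{m}; \trueparam)
= \bigl( s_m - \E{\trueparam}[s_m \mid \filtration{m-1}] \bigr) + \E{\trueparam}[s_m \mid \filtration{m-1}],
\end{align*}
and set $\chi_m := s_m - \E{\trueparam}[s_m \mid \filtration{m-1}]$, so that $\{\chi_m\}$ is a martingale-difference array. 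The three ingredients to verify are then: (I) the predictable drift $\sum_{m=1}^n \E{\trueparam}[s_m \mid \filtration{m-1}] \probconv 0$; (II) the predictable quadratic variation $\sum_{m=1}^n \E{\trueparam}[\chi_m \chi_m^\top \mid \filtration{m-1}] \probconv 4 I(\trueparam)$; and (III) a conditional Lyapunov condition $\sum_{m=1}^n \E{\trueparam}[|\chi_m|^4 \mid \filtration{m-1}] \probconv 0$.

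The main obstacle is ingredient (I), and it is precisely here that the improved rate $\Delta_n = o(n^{-1/3})$ (i.e.\ $n\Delta_n^3 \to 0$) is forced; this is the content of the sketch around (\ref{eq:score_conv}) in Remark~\ref{rem:vs_LocalGaussian}. The contrast splits as $s_m = s_{\mrm{LG},m} + s_{\Phi_2,m}$, and the local-Gaussian part carries, in its drift- and diffusion-parameter blocks, a leading conditional-mean term of size $\mathcal{O}(\sqrt{n\Delta_n^2})$; see the first term on the right of (\ref{eq:score_lg}). I would compute these conditional means explicitly by expanding the quadratic form in $s_{\mrm{LG},m}$ via the covariance expansion (\ref{eq:expansion_cov}), whose second-order coefficient is exactly the matrix $G(\cdot;\trueparam)$ entering $\Phi_2$. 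The key cancellation is that the conditional mean of the correction part $s_{\Phi_2,m}$ produces the same $\mathcal{O}(\sqrt{n\Delta_n^2})$ quantity with opposite sign, so the two cancel and only residuals of size $\mathcal{O}(\sqrt{n\Delta_n^3})$ and $\mathcal{O}(\sqrt{n\Delta_n^4})$ survive; these vanish under $n\Delta_n^3 \to 0$. Without $\Phi_2$ one would instead need $n\Delta_n^2 \to 0$, i.e.\ the stronger $\Delta_n = o(n^{-1/2})$. The surviving residuals, being of the regularity class $\mathscr{S}$, are handled by averaging $\tfrac1n \sum_{m} g^i(\,\cdot\,, \sample{X}{m-1}; \trueparam)$ against the invariant law under ergodicity (\ref{assump:moments}).

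For ingredient (II) I would first evaluate the $\filtration{m-1}$-conditional covariance of $\chi_m$ at fixed $m$ using the local Gaussian structure and moment expansions in $\Delta_n$, tracking the distinct scalings $\tfrac{1}{\sqrt{n\Delta_n}}$, $\sqrt{\tfrac{\Delta_n}{n}}$, $\tfrac{1}{\sqrt{n}}$ carried by the $\beta$-, $\gamma$- and $\sigma$-blocks respectively. The block-diagonal form of the limit $4I(\trueparam)$ in (\ref{eq:precision}) emerges because the drift-score and diffusion-score contributions live in orthogonal (odd versus even) Hermite components and are hence asymptotically uncorrelated. The sum $\tfrac1n \sum_{m=1}^n (\,\cdot\,)(\sample{X}{m-1})$ of the resulting $\filtration{m-1}$-measurable integrands then converges in probability to the corresponding $\nu_{\trueparam}$-integral by the ergodic theorem (\ref{assump:moments}), recovering $4I(\trueparam)$. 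Ingredient (III) follows from crude $\mathcal{O}(\Delta_n^2)$-per-term bounds on $\E{\trueparam}[|\chi_m|^4 \mid \filtration{m-1}]$, obtained from the uniform-in-time moment control (\ref{assump:finite_moment}) and the polynomial growth of the coefficients, which sum to $\mathcal{O}(n\Delta_n^2) \to 0$. Combining (I)--(III) through the array martingale CLT then yields $A_{n,\Delta_n}(\trueparam) \distconv \mathscr{N}(0, 4 I(\trueparam))$, completing the argument.
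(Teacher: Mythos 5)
Your overall strategy coincides with the paper's proof. The paper writes each component of the scaled score as a sum $\sum_{i=1}^n L_i^j(\trueparam)$ over a triangular array adapted to $\{\filtration{i}\}$ and verifies exactly your three ingredients: the conditional-mean sum vanishes (their (\ref{eq:normal_conv_1})), with the same cancellation between the local-Gaussian score and the $\Phi_2$-score of the $\mathcal{O}(\sqrt{n\Delta^2})$ terms built from $G(\cdot;\trueparam)$ and (\ref{eq:expansion_cov}), which is what forces only $n\Delta^3\to 0$; the conditional second-moment sum converges to $4I_{j_1j_2}(\trueparam)$ by the ergodic theorem (their (\ref{eq:normal_conv_3})); and a conditional fourth-moment condition (their (\ref{eq:normal_conv_4})). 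The CLT is then invoked from Theorems 3.2 and 3.4 of Hall and Heyde \citep{hall:14}, which is the same family of martingale-array results as the Genon-Catalot--Jacod type theorem you cite. So ingredients (I) and (II), and the cancellation mechanism, are the paper's argument.

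However, your verification of ingredient (III) fails as stated. You bound each conditional fourth moment by $\mathcal{O}(\Delta_n^2)$ and conclude that the sum is $\mathcal{O}(n\Delta_n^2)\to 0$; but $n\Delta_n^2\to 0$ is \emph{not} implied by the hypotheses $n\to\infty$, $\Delta_n\to 0$, $n\Delta_n\to\infty$, $n\Delta_n^3\to 0$. For instance $\Delta_n = n^{-2/5}$ satisfies all of them while $n\Delta_n^2 = n^{1/5}\to\infty$. Indeed, permitting $n\Delta_n^2\to\infty$ is precisely the improvement this lemma achieves over the earlier $\Delta_n = o(n^{-1/2})$ results, so any step that secretly requires $n\Delta_n^2\to 0$ defeats the purpose. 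The correct accounting uses the scaling matrix $M_{n,\Delta}$: in every block the scaled summand is $\mathcal{O}_P(n^{-1/2})$ — e.g.\ the $\beta$-block summand is $\tfrac{1}{\sqrt{n\Delta}}\times\mathcal{O}(\sqrt{\Delta})\times\nu_i$, the $\gamma$-block is $\sqrt{\tfrac{\Delta}{n}}\times\mathcal{O}(\Delta^{-1/2})\times\nu_i$, the $\sigma$-block is $\tfrac{1}{\sqrt{n}}\times\mathcal{O}(1)$ — so each conditional fourth moment is $\mathcal{O}(n^{-2})$ times a function of $\sample{X}{i-1}$ with bounded moments, and the sum is $\mathcal{O}(n^{-1})\to 0$, independently of $\Delta_n$. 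This is exactly the paper's display
\begin{align*}
 M_{n, \Delta}^{j_1 j_1} M_{n, \Delta}^{j_2 j_2} M_{n, \Delta}^{j_3 j_3} M_{n, \Delta}^{j_4 j_4} \sum_{i=1}^n \trueE [L_{i}^{j_1} (\trueparam) L_{i}^{j_2} (\trueparam) L_{i}^{j_3} (\trueparam)
 L_{i}^{j_4} (\trueparam) | \mathcal{F}_{t_{i-1}}]  = \frac{1}{n^2} \sum_{i=1}^n
\mathcal{E} (1, \sample{X}{i-1}, \trueparam)  \probconv 0 .
\end{align*}
A secondary inaccuracy: your parity (odd-versus-even Hermite) explanation accounts for the asymptotic orthogonality of drift scores and diffusion scores, but not for the vanishing of the $\beta$--$\gamma$ cross-blocks of the limit (\ref{eq:precision}), since both of those scores are linear in $\nu_i$; their cancellation rests on the algebraic identity for $\Sigma_1^{-1}$ in Lemma \ref{lemma:aux_matrix}, which your sketch should invoke explicitly.
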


\begin{lemma} 
\label{lemma:conv_cov}
Under conditions (\ref{assump:param_space})--(\ref{assump:ident}), it holds that as $n \to \infty$, $\Delta \to 0$ and $n \Delta \to \infty$, 
\begin{align*}
 B_{n, \Delta} 
 \bigl(\theta^\dagger + \lambda (\hat{\theta}_n - \theta^\dagger)  \bigr)
  \xrightarrow{\mathbb{P}_{\theta^\dagger}} 2 I(\theta^\dagger), 
\end{align*}  
uniformly in $\lambda \in (0,1]$. 
\end{lemma}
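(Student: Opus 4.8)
The plan is to establish the convergence blockwise: first I would prove a uniform-in-$\theta$ limit for the deterministic-argument scaled Hessian $\theta \mapsto B_{n,\Delta}(\theta)$ on a neighbourhood of $\trueparam$, and then transfer it to the random convex combination $\trueparam + \lambda(\est - \trueparam)$ using the consistency already established in Theorem \ref{consistency}. First I would write $B_{n,\Delta}(\theta)$ in block form according to the partition $\theta = (\beta,\gamma,\sigma)$. Because of the scaling matrix $M_{n,\Delta}$, each block is a normalised additive functional of consecutive observations: the $(\beta,\beta)$ block carries the factor $\tfrac{1}{n\Delta}$, the $(\gamma,\gamma)$ block the factor $\tfrac{\Delta}{n}$, the $(\sigma,\sigma)$ block the factor $\tfrac1n$, and the cross blocks the corresponding geometric means of these. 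Each entry is thus reducible to the form $\tfrac1n\sum_{m=1}^n f(\Delta_n, X_{t_{m-1}}, X_{t_m})$ after collecting the $\Delta$-powers, where $f$ is built from the derivatives of $\mu$, $\Sigma^{-1}$, $\log|\Sigma_1|$ and the correction term $\Phi_2$. The derivatives of the normalised residual $\nu_i(\Delta;\theta)$ recorded in \eqref{eq:nu_deriv_beta} and the conditional-moment expansions \eqref{eq:mean_nu}, \eqref{eq:expansion_cov} dictate precisely which $\Delta$-powers combine with the scaling factors so that each block is genuinely $O(1)$.

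The analytic engine is an ergodic law of large numbers for triangular arrays of functionals of the discretely-sampled diffusion: for $f(\Delta,x,y)$ with conditional moments of polynomial growth controlled uniformly in $\Delta$, under (\ref{assump:moments})--(\ref{assump:finite_moment}) one has $\tfrac1n\sum_m f(\Delta_n,X_{t_{m-1}},X_{t_m}) \probconv \int f_0(x)\,\truedist(dx)$, with $f_0(x) = \lim_{\Delta\to0}\mathbb{E}_{\trueparam}[f(\Delta,x,X_\Delta^x)]$; this combines a conditional-expectation version (itself an application of the ergodic theorem for $\{X_{t_{m-1}}\}$) with a martingale-difference bound for the fluctuation around it, exactly as in the technical lemmas already invoked for Theorem \ref{consistency} (Lemmas \ref{lemma:conv_contrast_gamma}--\ref{lemma:conv_contrast_beta}). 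Applying it entrywise and evaluating the limits $f_0$ via the small-time expansions gives the diagonal blocks $2 I^\beta(\trueparam)$, $2 I^\gamma(\trueparam)$, $2 I^\sigma(\trueparam)$ of \eqref{eq:precision}, while each off-diagonal block vanishes because its leading-order limit is the $\truedist$-integral of a conditional expectation of an odd function of the Gaussian noise, hence zero. The factor $2$ and the appearance of $\Sigma_1^{-1}$ in place of $\Sigma^{-1}$ are traced through the $\Delta\to0$ normalisation.

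Next I would verify the convergence is uniform in $\theta$ over a compact neighbourhood of $\trueparam$; this uses the $C_b^\infty$-regularity assumptions (\ref{assump:coeff}), (\ref{assump:bd_deriv}) to obtain equicontinuity of $\theta\mapsto B_{n,\Delta}(\theta)$ together with the pointwise limit, yielding a continuous deterministic limit $\theta\mapsto\mathcal{B}(\theta)$ with $\mathcal B(\trueparam) = 2 I(\trueparam)$. Finally, since Theorem \ref{consistency} gives $\est\probconv\trueparam$, the argument $\trueparam+\lambda(\est-\trueparam)$ lies in that neighbourhood with probability tending to one and converges to $\trueparam$ uniformly in $\lambda\in(0,1]$; uniform convergence plus continuity of $\mathcal B$ then delivers $B_{n,\Delta}(\trueparam+\lambda(\est-\trueparam))\probconv 2 I(\trueparam)$ uniformly in $\lambda$, as required.

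The hard part will be the bookkeeping in the second step: matching the delicate $\Delta$-powers across the six distinct block-scalings so that every block is $O(1)$, confirming that the $\Phi_2$-correction and the $\log|\Sigma_1|$ contributions do not perturb the leading diagonal limits (consistent with Remark \ref{rem:vs_LocalGaussian}, where $\Phi_2$ relaxes the admissible rate but not the asymptotic precision), and checking that the mixed $\beta\gamma$, $\beta\sigma$ and $\gamma\sigma$ blocks vanish rather than merely staying bounded. In the hypo-elliptic case this requires careful use of the $\Delta^3$-scaling of $\Sigma_{SS}$ and the $\Delta^{-1/2}$ factor that $\partial_\gamma\nu_i$ carries in \eqref{eq:nu_deriv_beta}.
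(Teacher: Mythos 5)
Your overall architecture (splitting $B_{n,\Delta}$ into the local-Gaussian part and the $\Phi_2$ part, reducing each block to normalised additive functionals, and applying an ergodic law of large numbers) matches the paper, and your reading of the $\Phi_2$ contribution as asymptotically negligible is correct. However, the transfer step contains a genuine gap: in the hypo-elliptic case there is \emph{no} deterministic limit function $\mathcal{B}(\theta)$ on any fixed neighbourhood of $\trueparam$, so the strategy ``uniform convergence of $\theta\mapsto B_{n,\Delta}(\theta)$, plus continuity of the limit, plus consistency'' cannot be carried out. The reason is that the smooth coordinates of $\nu_i(\Delta;\theta)$ differ from those of $\nu_i(\Delta;\trueparam)$ by $\bigl(V_{S,0}(\sample{X}{i-1},\truegamma)-V_{S,0}(\sample{X}{i-1},\gamma)\bigr)/\sqrt{\Delta}$, so that the $(\sigma,\sigma)$ block of $B_{n,\Delta}(\theta)$ (scaled by $1/n$) contains the average of
\begin{align*}
\tfrac{1}{\Delta}\,\bigl(V_{S,0}(x,\truegamma)-V_{S,0}(x,\gamma)\bigr)^{\top}\,
\partial_{\sigma_{k_1}}\partial_{\sigma_{k_2}}\Lambda_{1,SS}(x;\gamma,\sigma)\,
\bigl(V_{S,0}(x,\truegamma)-V_{S,0}(x,\gamma)\bigr)\Big|_{x=\sample{X}{i-1}},
\end{align*}
which diverges at rate $\Delta^{-1}$ for every fixed $\gamma\neq\truegamma$ (the $(\gamma,\sigma)$ cross block carries analogous $\Delta^{-1/2}$ factors). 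Hence $B_{n,\Delta}(\theta)$ blows up off the slice $\{\gamma=\truegamma\}$, there is no equicontinuity, and consistency of $\estgamma$ alone does not make $\Delta^{-1}|\estgamma-\truegamma|^{2}$ small: one needs the faster rate $|\estgamma-\truegamma|=o_{\mathbb{P}}(\sqrt{\Delta})$, which plain consistency does not provide.

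This missing rate is exactly what the paper's proof supplies and what you never invoke. The paper works directly with the random evaluation point $\trueparam+\lambda(\est-\trueparam)$ and controls the divergent terms using convergence (\ref{eq:gamma_rate}), i.e.\ $(\estgamma-\truegamma)/\sqrt{\Delta}\probconv 0$, established in Step 1 of the consistency proof (Lemma \ref{lemma:taylor_gamma_conv}), combined with the Lipschitz condition (\ref{assump:lipschitz}) and Lemma \ref{canonical_conv2}, which shows that averages weighted by $\bigl(V_0^{j}(\cdot,\truegamma)-V_0^{j}(\cdot,\truegamma+\lambda(\estgamma-\truegamma))\bigr)/\sqrt{\Delta}$ vanish in probability uniformly in $\lambda\in(0,1]$. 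A secondary inaccuracy: the $(\beta,\gamma)$ off-diagonal block does not vanish by an odd-function symmetry argument; in the paper it vanishes through the exact matrix identity of Lemma \ref{lemma:aux_matrix}, namely $\Lambda_{1,SR}+\tfrac{1}{2}\Lambda_{1,SS}\,\partial_{x_R}^{\top}V_{S,0}=\mathbf{0}_{d_S\times d_R}$, applied to $\partial_{\beta}\mu_1^{\top}\Sigma_1^{-1}\partial_{\gamma}V_0$, while the remaining cross blocks vanish through the $\sqrt{\Delta}$-scalings and the same canonical convergence lemmas. With the rate (\ref{eq:gamma_rate}) and Lemma \ref{canonical_conv2} inserted where your continuity argument sits, the rest of your blockwise computation would go through.
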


\noindent We provide the proofs in Section \ref{sec:asymp_norm}. 
Via these results, we immediately obtain the stated asymptotic normality of $\hat{\theta}_n$.  

\section{Proof of Technical Results for Theorem \ref{consistency} and \ref{thm:asymp_norm}}
\label{sec:pf_technical_high_freq}
\subsection{Auxiliary Results}  \label{sec:aux_result}
\noindent Before we prove Lemmas \ref{lemma:conv_contrast_gamma}--\ref{lemma:conv_cov}, we prepare some auxiliary results in this section. 
\begin{lemma} \label{lemma:aux_matrix}
Assume Conditions (\ref{assump:hypo1})-(\ref{assump:hypo2}) holds. For any $x = (x_R, x_S) \in \mathbb{R}^d$, $(\gamma, \sigma) \in \Theta_{\gamma} \times \Theta_{\sigma}$, it holds that 
\begin{align*}
\Lambda_{1, SR} (x; \gamma, \sigma)
+ \tfrac{1}{2} 
\Lambda_{1, SS} (x; \gamma, \sigma ) 
\partial^\top_{x_R} V_{S,0} (x, \gamma) 
= \mathbf{0}_{d_S \times d_R}. 
\end{align*}
\end{lemma}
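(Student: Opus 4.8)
The plan is to reduce the identity to a single block equation of $\Lambda_1 \Sigma_1 = I_d$, exploiting the explicit block structure of $\Sigma_1(x;\gamma,\sigma) = \Sigma(1,x;\theta)$. First I would record the two relevant blocks obtained by setting $\Delta = 1$ in the definitions of Section~\ref{sec:pre}: namely $\Sigma_{1,RR}(x;\sigma) = a_R(x,\sigma)$ and $\Sigma_{1,SR}(x;\gamma,\sigma) = \tfrac{1}{2}\,\partial_{x_R}^\top V_{S,0}(x,\gamma)\, a_R(x,\sigma)$, the latter following from $\Sigma_{1,SR} = \Sigma_{1,RS}^\top$ together with the symmetry of $a_R$.

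Next I would invoke the defining relation $\Lambda_1 \Sigma_1 = I_d$, where $\Lambda_1 = \Sigma_1^{-1}$ exists under (\ref{assump:hypo1})--(\ref{assump:hypo2}) as noted in Section~\ref{sec:pre}, and extract its $(S,R)$ block:
\begin{align*}
\Lambda_{1,SR}(x;\gamma,\sigma)\,\Sigma_{1,RR}(x;\sigma) + \Lambda_{1,SS}(x;\gamma,\sigma)\,\Sigma_{1,SR}(x;\gamma,\sigma) = \mathbf{0}_{d_S \times d_R}.
\end{align*}
Substituting the two expressions above and factoring $a_R(x,\sigma)$ out on the right yields $\bigl(\Lambda_{1,SR}(x;\gamma,\sigma) + \tfrac{1}{2}\Lambda_{1,SS}(x;\gamma,\sigma)\,\partial_{x_R}^\top V_{S,0}(x,\gamma)\bigr)\, a_R(x,\sigma) = \mathbf{0}$.

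Finally, since (\ref{assump:hypo1}) guarantees that $a_R(x,\sigma)$ is positive definite and hence invertible, right-multiplication by $a_R^{-1}(x,\sigma)$ removes the trailing factor and delivers the claimed identity. I do not anticipate a genuine obstacle here: the statement is a purely algebraic consequence of the block form of $\Sigma_1$, and the only points requiring care are the correct bookkeeping of the block entries (notably the factor $\tfrac{1}{2}$ in $\Sigma_{1,SR}$ and the transpose conventions for $\partial_{x_R}^\top V_{S,0}$) and the legitimacy of the right-cancellation, which rests on the invertibility of $a_R$. As a cross-check one may instead compute $\Lambda_{1,SR}$ and $\Lambda_{1,SS}$ explicitly via the Schur complement $S = \Sigma_{1,SS} - \Sigma_{1,SR}\,a_R^{-1}\,\Sigma_{1,RS} = \tfrac{1}{12}\,\partial_{x_R}^\top V_{S,0}\, a_R\,(\partial_{x_R}^\top V_{S,0})^\top$, which recovers $\Lambda_{1,SS} = 4\,\Sigma_{1,SS}^{-1}$ and $\Lambda_{1,SR} = -\tfrac{1}{2}\Lambda_{1,SS}\,\partial_{x_R}^\top V_{S,0}$, again confirming the result.
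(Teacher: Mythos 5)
Your proof is correct and follows essentially the same route as the paper: both arguments combine the $(S,R)$ block of $\Lambda_1\Sigma_1 = I_d$ with the structural relation $\Sigma_{1,SR}(x;\gamma,\sigma) = \tfrac{1}{2}\,\partial_{x_R}^\top V_{S,0}(x,\gamma)\,\Sigma_{1,RR}(x;\sigma)$ and then cancel the invertible factor $\Sigma_{1,RR}=a_R$ (the paper inserts $\Sigma_{1,RR}\Sigma_{1,RR}^{-1}$ on the right, which is the same manipulation as your right-cancellation). Your Schur-complement cross-check, recovering $\Lambda_{1,SS}=4\,\Sigma_{1,SS}^{-1}$ and $\Lambda_{1,SR}=-\tfrac{1}{2}\Lambda_{1,SS}\,\partial_{x_R}^\top V_{S,0}$, is also consistent with the paper's stated identities.
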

\begin{proof}
We have  
\begin{align*}
& \Lambda_{1, SR} (x; \gamma, \sigma) 
+ \tfrac{1}{2} \Lambda_{1, SS} (x; \gamma, \sigma ) 
\partial^\top_{x_R} V_{S,0} (x, \gamma)  \nonumber \\[0.2cm] 
& \quad = 
\Lambda_{1, SR} (x; \gamma, \sigma)
\Sigma_{1, RR} (x; \sigma)
\Sigma_{1, RR}^{-1} (x; \sigma) 
+ \tfrac{1}{2} \Lambda_{1, SS} (x; \gamma, \sigma ) 
\partial^\top_{x_R} V_{S,0} (x, \gamma) 
\Sigma_{1, RR} (x; \sigma)
\Sigma_{1, RR}^{-1} (x; \sigma) \nonumber \\[0.2cm]
& \quad = \Bigl( 
\Lambda_{1, SR}
(x; \gamma, \sigma)
\Sigma_{1, RR}(x; \sigma) 
+ \Lambda_{1, SS} (x; \gamma, \sigma )
\Sigma_{1, SR}(x; \gamma, \sigma) \Bigr) 
\Sigma_{1, RR}^{-1} (x; \sigma) \nonumber \\[0.2cm] 
& \quad= \mathbf{0}_{d_S \times d_R}, 
\end{align*}
where we have used the equations: 
\begin{align*}
& \Sigma_{1, SR}(x; \gamma, \sigma) 
\equiv \tfrac{1}{2} \partial^\top_{x_R} V_{S,0} (x, \gamma) 
\Sigma_{1, RR}(x; \sigma);  \\[0.2cm]
& \Lambda_{1, SR} (x; \gamma, \sigma)
\Sigma_{1, RR} (x; \sigma) 
+ \Lambda_{1, SS} (x; \gamma, \sigma )
\Sigma_{1, SR}(x; \gamma, \sigma) = \mathbf{0}_{d_S \times d_R}. 
\end{align*}
\end{proof}
\begin{lemma} \label{Genon_Catalot}
Let $Y_{t_i}$, $U$ be random variables, with $Y_{t_i}$ being $\mathcal{F}_{t_i}$-measurable. The following two convergences imply $\textstyle{\sum_{i=1}^n Y_{t_{i}} \probconv U}$:
\begin{align*}
  \sum_{i=1}^n \trueE [Y_{t_{i}} | \filtration{i-1}]  \probconv  U,  \ \ \ 
  \sum_{i=1}^n \trueE\big[\,\bigl(Y_{t_{i}}\bigr)^2\,|\,\filtration{i-1}\,\big]  \probconv  0. 
\end{align*} 
\end{lemma}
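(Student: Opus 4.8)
The plan is to split each summand into its predictable projection and a martingale-difference residual, so that the first hypothesis directly handles the predictable part while the second hypothesis controls the residual through its predictable quadratic variation. First I would write
\[
Y_{t_i} = \trueE[Y_{t_i}\mid \filtration{i-1}] + \xi_i, \qquad \xi_i := Y_{t_i} - \trueE[Y_{t_i}\mid\filtration{i-1}],
\]
so that $\trueE[\xi_i\mid\filtration{i-1}]=0$ and $\{\xi_i\}$ is a martingale-difference array with respect to $\{\filtration{i}\}$. Summing gives $\sum_{i=1}^n Y_{t_i} = \sum_{i=1}^n \trueE[Y_{t_i}\mid\filtration{i-1}] + M_n$ with $M_n := \sum_{i=1}^n \xi_i$. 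By the first assumed convergence the predictable sum tends to $U$ in probability, so it remains to prove $M_n \probconv 0$.

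For the martingale part I would control the predictable quadratic variation $\langle M\rangle_n := \sum_{i=1}^n \trueE[\xi_i^2\mid\filtration{i-1}]$. Since $\trueE[\xi_i^2\mid\filtration{i-1}] = \trueE[Y_{t_i}^2\mid\filtration{i-1}] - (\trueE[Y_{t_i}\mid\filtration{i-1}])^2 \le \trueE[Y_{t_i}^2\mid\filtration{i-1}]$, the second assumption yields $\langle M\rangle_n \le \sum_{i=1}^n \trueE[Y_{t_i}^2\mid\filtration{i-1}] \probconv 0$. I would then invoke Lenglart's inequality: for every $\epsilon,\eta>0$,
\[
\mathbb{P}_{\trueparam}\big(\textstyle\max_{k\le n}|M_k|>\epsilon\big) \le \tfrac{\eta}{\epsilon^2} + \mathbb{P}_{\trueparam}\big(\langle M\rangle_n>\eta\big).
\]
Letting $n\to\infty$ and using $\langle M\rangle_n\probconv 0$ gives $\limsup_n \mathbb{P}_{\trueparam}(|M_n|>\epsilon)\le \eta/\epsilon^2$ for every $\eta>0$; sending $\eta\downarrow 0$ proves $M_n\probconv 0$, and hence the claim.

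If a self-contained argument is preferred, Lenglart's bound follows by stopping $M$ at $\tau_\eta := \inf\{k\ge 0: \langle M\rangle_{k+1}>\eta\}$, which is a stopping time because the increments $\trueE[\xi_i^2\mid\filtration{i-1}]$ are $\filtration{i-1}$-measurable, so $\langle M\rangle$ is predictable. The stopped martingale satisfies $\langle M^{\tau_\eta}\rangle_n = \langle M\rangle_{n\wedge\tau_\eta}\le\eta$, whence $\trueE[(M^{\tau_\eta}_n)^2]=\trueE[\langle M^{\tau_\eta}\rangle_n]\le\eta$; Doob's $L^2$-inequality together with Chebyshev then bounds the first term, while $\{M_n\neq M^{\tau_\eta}_n\}\subseteq\{\langle M\rangle_n>\eta\}$ bounds the second. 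The main obstacle is precisely that the second hypothesis supplies only convergence in probability of $\langle M\rangle_n$, not in $L^1$, so one cannot simply write $\trueE[M_n^2]=\trueE[\langle M\rangle_n]$ and apply Chebyshev; the localization at the predictable time $\tau_\eta$ (equivalently, Lenglart's inequality) is exactly what circumvents this, and it simultaneously accommodates the triangular-array nature of the problem, since each $Y_{t_i}$ depends on $n$ through $\Delta=\Delta_n$ and the estimate is uniform in the array.
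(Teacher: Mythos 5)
Your proof is correct, including the delicate points: the conditional Jensen bound $\trueE[\xi_i^2\mid\filtration{i-1}]\le\trueE[Y_{t_i}^2\mid\filtration{i-1}]$, the predictability of $\tau_\eta$ (so that localization is legitimate), and the observation that the Lenglart bound is applied for each fixed $n$ and hence survives the triangular-array dependence of $Y_{t_i}$ on $n$. The paper itself gives no proof, simply citing Lemma 9 of Genon-Catalot and Jacod (1993); your martingale-difference decomposition plus Lenglart domination is exactly the standard argument behind that cited result, so your write-up supplies, correctly and self-containedly, what the paper delegates to the reference.
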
 
\begin{proof}
See Lemma 9 in \cite{genon:93} . 
\end{proof} 

\begin{lemma} \label{ergodic_thm}
Let $f : \mathbb{R}^d \times \Theta \to \mathbb{R}$ be differentiable with derivatives of polynomial growth in $x \in \mathbb{R}^d$ uniformly in $\theta \in \Theta$. Assume that conditions 
(\ref{assump:coeff}), (\ref{assump:bd_deriv}), (\ref{assump:moments}), (\ref{assump:finite_moment}) hold. Then, as $n \to \infty$, $\Delta \to 0$ and $n \Delta \to \infty$,   
\begin{align*}
   \frac{1}{n} \sum_{i=1}^n f (\sample{X}{i-1}, \theta) \probconv 
   \int  f (x,  \theta) \truedist (dx),
\end{align*}
uniformly in $\theta \in \Theta$. 
\end{lemma}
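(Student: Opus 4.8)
The plan is to establish the result first for a fixed $\theta \in \Theta$ and then upgrade to uniformity over the compact set $\Theta$. For the pointwise part I would rewrite the normalised sum as a Riemann approximation of a continuous-time average. Setting $T = n\Delta$ and using $\tfrac{1}{n} = \tfrac{\Delta}{T}$, write
\[
\frac{1}{n}\sum_{i=1}^n f(X_{t_{i-1}},\theta)
= \frac{1}{T}\int_0^T f(X_s,\theta)\,ds
- \frac{1}{T}\sum_{i=1}^n \int_{t_{i-1}}^{t_i}\bigl(f(X_s,\theta)-f(X_{t_{i-1}},\theta)\bigr)\,ds.
\]
To control the correction term, apply It\^o's formula to $s\mapsto f(X_s,\theta)$, so that its drift involves $\hat{V}_0 f$ and its martingale part involves $\hat{V}_k f$. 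Under (\ref{assump:coeff}) the coefficients $V_j$ are bounded with bounded derivatives, so these quantities inherit polynomial growth in $x$ uniformly in $\theta$ from the hypothesis on $f$; combined with the uniform moment bound $\sup_{t>0}\mathbb{E}_{\trueparam}[|X_t|^p]<\infty$ of (\ref{assump:finite_moment}) and It\^o's isometry, this yields $\mathbb{E}_{\trueparam}|f(X_s,\theta)-f(X_{t_{i-1}},\theta)|\le C\sqrt{s-t_{i-1}}\le C\sqrt{\Delta}$, whence the correction term is $O(\sqrt{\Delta})$ in $L^1$ and vanishes in probability as $\Delta\to 0$.

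Next I would invoke the continuous-time ergodic theorem. Under ergodicity (\ref{assump:moments}) with invariant law $\truedist$, and since $f(\cdot,\theta)$ is $\truedist$-integrable because all moments of $\truedist$ are finite and $f$ has polynomial growth, one has $\tfrac{1}{T}\int_0^T f(X_s,\theta)\,ds \probconv \int f(x,\theta)\,\truedist(dx)$ as $T = n\Delta \to \infty$, irrespective of the deterministic starting point. Together with the Riemann reduction above this gives the pointwise statement $\tfrac{1}{n}\sum_i f(X_{t_{i-1}},\theta) \probconv \int f\,d\truedist$ for each fixed $\theta$.

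Finally I would promote pointwise to uniform convergence via compactness of $\Theta$ (\ref{assump:param_space}) and stochastic equicontinuity. By the mean value theorem and the polynomial growth of $\partial_\theta f$ (uniform in $\theta$, as provided by the differentiability hypothesis and (\ref{assump:bd_deriv})), we get $|f(x,\theta)-f(x,\theta')|\le C(1+|x|^q)\,|\theta-\theta'|$, so $g_n(\theta):=\tfrac{1}{n}\sum_i f(X_{t_{i-1}},\theta)$ is Lipschitz in $\theta$ with random coefficient $\tfrac{C}{n}\sum_i (1+|X_{t_{i-1}}|^q)$. Applying the already-established pointwise convergence to the dominating function $x\mapsto 1+|x|^q$ shows this coefficient is bounded in probability, while the limit $\theta\mapsto\int f\,d\truedist$ is itself Lipschitz. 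A finite-net argument — cover $\Theta$ by finitely many balls of radius $\varepsilon$, use pointwise convergence at the centres and the equicontinuity bound to interpolate — then delivers $\sup_{\theta\in\Theta}\bigl|g_n(\theta)-\int f\,d\truedist\bigr|\probconv 0$.

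The Riemann reduction and the finite-net interpolation are routine. I expect the genuine work to lie in two places: justifying the continuous-time ergodic theorem for a non-stationary initial condition and a merely polynomially growing $f$ (so that one cannot assume $f$ bounded and must lean on the moment control of both $\{X_t\}$ and $\truedist$), and, more delicately, making the equicontinuity argument truly uniform — one must ensure that the random Lipschitz coefficient is controlled \emph{simultaneously} with the pointwise convergence at the net centres, so that both estimates can be combined on a single high-probability event.
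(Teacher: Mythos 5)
Your proposal is correct and follows essentially the same route as the paper: the paper in fact omits the proof, noting the result is a multivariate version of Lemma 8 in \cite{kessler1997estimation}, and that lemma's argument is precisely your combination of a Riemann-sum reduction to the continuous-time ergodic average, the ergodic theorem under (\ref{assump:moments})--(\ref{assump:finite_moment}), and a compactness/equicontinuity step using the $\theta$-Lipschitz bound to upgrade pointwise to uniform convergence over $\Theta$. One small repair: invoking It\^o's formula on $f$ requires two $x$-derivatives, which the lemma does not assume, so replace that step by the mean value theorem bound
$|f(X_s,\theta)-f(X_{t_{i-1}},\theta)| \le C\bigl(1+|X_s|^q+|X_{t_{i-1}}|^q\bigr)\,|X_s-X_{t_{i-1}}|$
combined with the standard increment estimate $\mathbb{E}_{\trueparam}\bigl[|X_s-X_{t_{i-1}}|^2\bigr] \le C\Delta$, which yields the same $\mathcal{O}(\sqrt{\Delta})$ control uniformly in $\theta$.
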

\begin{proof}
This is a multivariate version of Lemma 8 in \cite{kessler1997estimation}, so we omit the proof.  
\end{proof}

\begin{lemma} \label{weak_Taylor_expansion}
Let $1 \le i \le  n$, $1 \leq j_1, j_2 \leq d$ and $\Delta > 0, \, \theta \in \Theta$. Under condition (\ref{assump:coeff}), 
it holds that: 
\begin{align*} 
 \mathbb{E}_{\theta} \Bigl[ \nu_i^{j_1} (\Delta ; \theta)  \nu_i^{j_2} (\Delta ; \theta)  | \filtration{i-1} \Bigr]  
& = {\Sigma}_{1, j_1 j_2} (\sample{X}{i-1}; \theta)
+ \Xi_{j_1 j_2}  (\sample{X}{i-1}; \theta) \Delta  
  + \mathcal{E} (\Delta^2, \sample{X}{i-1}, \theta),  
\end{align*} 
for $\mathcal{E}\in\mathscr{S}$
and a $d \times d$ matrix $\Xi = \Xi(x, \theta)$, such that for $1 \leq l_1, l_2 \leq d_R$, $d_R + 1 \leq l_3, l_4 \leq d$,  
\begin{align*}
&  {\Xi}_{l_1 l_2} (x ; \theta)  
 = \tfrac{1}{2} \sum_{k=1}^{d_R} 
  \big\{ \hat{V}_0 V_k^{l_1}(x , \theta)  V_k^{l_2}(x , \theta) 
 +  \hat{V}_0 V_k^{l_2}(x , \theta) V_k^{l_1}(x , \theta)  \bigr\}    \nonumber \\
& \quad  + \tfrac{1}{2} \sum_{k=1}^{d_R}  
\bigl\{ \hat{V}_k V_0^{l_1} (x , \theta) V_k^{l_2} (x , \theta)
+  \hat{V}_k V_0^{l_2} (x , \theta)V_k^{l_1} (x , \theta) \bigl\}  
\; + \; \tfrac{1}{2} \sum_{k_1, k_2 = 1}^{d_R} 
  \hat{V}_{k_1} V_{k_2}^{l_1} (x , \theta) 
  \hat{V}_{k_1} V_{k_2}^{l_2}(x , \theta);   \nonumber  \\[0.2cm]
& {\Xi}_{l_1 l_3} (x ; \theta)  
=    \sum_{k = 1}^{d_R} 
\Bigl( 
  \tfrac{1}{3} \hat{V}_{k_1}V_{0}^{l_1}(x,  \theta) 
 + \tfrac{1}{6} \hat{V}_{0}V_{k_1}^{l_1}(x,  \theta) 
 \Bigr) \hat{V}_{k_2}V_{0}^{l_3}(x,  \theta) \nonumber \\
 &  \quad + \tfrac{1}{6} \sum_{k = 1}^{d_R} 
 V_{k}^{l_1} (x ,  \theta) 
 \bigl(
  \hat{V}_{k} \hat{V}_0 V_0^{l_3} (x ,  \theta) 
  + \hat{V}_{0} \hat{V}_{k} V_0^{l_3} (x,   \theta) 
 \bigr) 
 + \tfrac{1}{6} \sum_{k_1, k_2 = 1}^{d_R}  
 \hat{V}_{k_1} V_{k_2}^{l_1} (x, \theta) 
 \hat{V}_{k_1} \hat{V}_{k_2} V_{0}^{l_3}(x,  \theta) ; \\[0.3cm] 
& {\Xi}_{l_3 l_1} (x ; \theta) = {\Xi}_{l_1 l_3} (x ; \theta); \nonumber \\[0.2cm]  
& {\Xi}_{l_3 l_4}  (x ;  \theta)
  =  \sum_{k = 1}^{d_R} 
\hat{V}_{k} V_0^{l_3} (x , \theta) 
 \Bigl(
   \tfrac{1}{6} \hat{V}_{0} \hat{V}_{k} V_0^{l_4} (x, \theta)  
   + \tfrac{1}{8} \hat{V}_{k} \hat{V}_0 V_0^{l_4} (x ,  \theta) 
 \Bigr) \\ 
& \quad + \sum_{k = 1}^{d_R}
\Bigl(\tfrac{1}{6} \hat{V}_{0} \hat{V}_{k} V_0^{l_3} (x,  \theta)
 + \tfrac{1}{8} \hat{V}_{k} \hat{V}_0 V_0^{l_3} (x,  \theta) 
\Bigr) 
\hat{V}_{k} V_0^{l_4} (x ,  \theta)  
 \; + \; \tfrac{1}{12} \sum_{k_1, k_2 = 1}^{d_R} 
 \hat{V}_{k_1} \hat{V}_{k_2} V_0^{l_3} (x, \theta) \hat{V}_{k_1} \hat{V}_{k_2} V_{0}^{l_4}(x ,  \theta). 
\end{align*} 
%

%
\end{lemma}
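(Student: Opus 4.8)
The plan is to compute the one-step conditional second moment directly from an It\^o--Taylor expansion of the genuine SDE solution and to read off the two leading coefficients. By the Markov property it suffices, with $x=\sample{X}{i-1}$ held fixed, to expand
\[
M_{j_1 j_2}(\Delta, x;\theta) := \mathbb{E}_\theta\bigl[\bigl((X_\Delta^x)^{j_1}-\mu_\Delta^{j_1}(x;\theta)\bigr)\bigl((X_\Delta^x)^{j_2}-\mu_\Delta^{j_2}(x;\theta)\bigr)\bigr],
\]
and then divide by $\Delta$, $\Delta^2$ or $\Delta^3$ according to whether the indices $j_1,j_2$ sit in the rough or smooth block, since $\nu_i^j$ carries the scaling $\Delta^{-1/2}$ (rough) or $\Delta^{-3/2}$ (smooth). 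First I would insert the It\^o--Taylor expansion used to define the scheme in Section \ref{sec:scheme_hypo}: for the rough coordinates keep all double integrals $I_{(k_1,k_2)}(\Delta)$, for the smooth coordinates keep $I_{(k,0)}(\Delta)$ and the triple integrals $I_{(k_1,k_2,0)}(\Delta)$, collecting everything else into a remainder $\rho_\Delta(x;\theta)$. The deterministic terms $V_{R,0}\Delta$ and $V_{S,0}\Delta+\hat{V}_0V_{S,0}\tfrac{\Delta^2}{2}$ produced by the expansion cancel exactly against $\mu_\Delta(x;\theta)$, so the centred increment has leading stochastic part of order $\Delta^{1/2}$ in the rough block and order $\Delta^{3/2}$ in the smooth block.

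Next I would expand the product and take expectations term by term, using the standard table of second moments of iterated integrals $\mathbb{E}[I_\alpha(\Delta) I_\beta(\Delta)]$, many entries of which are already recorded among the moment conditions defining $\zeta$ and $\eta$ in Section \ref{sec:scheme_hypo} (e.g.\ $\mathbb{E}[I_{(k_1,k_2)}I_{(k_3,k_4)}]=\tfrac{\Delta^2}{2}\mathbf{1}_{k_1=k_3,k_2=k_4}$, $\mathbb{E}[I_{(k_1,0,0)}I_{(k_2,0)}]=\tfrac{\Delta^4}{8}\mathbf{1}_{k_1=k_2}$, $\mathbb{E}[I_{(k_1,k_2,0)}I_{(k_3,k_4,0)}]=\tfrac{\Delta^4}{12}\mathbf{1}_{k_1=k_3,k_2=k_4}$, and so on). Keeping all contributions up to order $\Delta$ in the rough--rough block, $\Delta^2$ in the rough--smooth block and $\Delta^3$ in the smooth--smooth block, the leading coefficients assemble into $\Sigma_{1,j_1 j_2}(x;\theta)$ (this is precisely the covariance of the local Gaussian scheme $\widetilde{X}^{\mathrm{LG}}$, confirming the $\mathcal{O}(1)$ term), while the next coefficients assemble, block by block, into the matrix $\Xi$ displayed in the statement; equivalently $\Xi=2G$ with $G$ as in (\ref{eq:G_matrix}), which is the identity underlying (\ref{eq:expansion_cov}) and the role of $\Phi_2$. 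Each block is matched by direct comparison of the surviving products: for instance in the rough--rough block the three groups of terms arise from pairing $V_{R,k}B_{k,\Delta}$ with $\hat{V}_0 V_{R,k}I_{(0,k)}(\Delta)$, with $\hat{V}_k V_{R,0}I_{(k,0)}(\Delta)$, and from the double-integral terms against themselves.

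The remainder is collected into $\mathcal{E}(\Delta^2, x,\theta)$, and membership in $\mathscr{S}$ is established from (\ref{assump:coeff}): all coefficient functions $V_j$ and the iterated operators $\hat{V}_{k_1}\cdots V_{k_l}$ lie in $C_b^\infty$, and standard $L^p$ moment bounds on the It\^o--Taylor remainder $\rho_\Delta$ (of order $\Delta^{3/2}$ in the rough block and $\Delta^{2}$ in the smooth block, with constants controlled by the sup-norms of the coefficients) force every discarded contribution to the normalised second moment to be $\mathcal{O}(\Delta^2)$ uniformly in $x$, with $\theta$-continuity inherited from continuity of the coefficients in $\theta$. The main obstacle I anticipate is the bookkeeping for the smooth block: because the smooth increment is normalised by $\Delta^{3/2}$, contributions that are genuinely $\mathcal{O}(\Delta^4)$ before normalisation must be retained, which forces one to carry the triple integrals $I_{(k_1,k_2,0)}$ together with the second-order correction $\hat{V}_0 V_{S,0}\tfrac{\Delta^2}{2}$ and to track several near-cancellations. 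Verifying that the surviving terms reproduce exactly the stated $\Xi_{l_1 l_3}$ and $\Xi_{l_3 l_4}$ is where the computation is most delicate, and is also why the weak-order-three property of $\bar{X}_\Delta$ from Proposition \ref{prop:weak_app_hypo} cannot by itself serve as a shortcut here, the strong smooth-block scaling rendering its $\mathcal{O}(\Delta^3)$ error non-negligible after division by $\Delta^3$.
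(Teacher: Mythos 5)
Your proposal is correct and follows essentially the same route as the paper's own proof: an It\^o--Taylor expansion of the true SDE (rough block up to double integrals, smooth block with $I_{(k,0)}$ and the triple integrals $I_{(k_1,k_2,0)}$), cancellation of the deterministic part against $\mu_\Delta$, evaluation of the resulting products via the standard moment table for iterated integrals, and absorption of the cross terms with the remainders $\rho_R,\rho_S$ into an $\mathscr{S}$-class error, with the block-dependent normalisations $\Delta$, $\Delta^2$, $\Delta^3$ handled exactly as in the paper. Your closing remarks — that $\Xi=2G$ and that the weak-third-order property of $\bar{X}_\Delta$ cannot substitute for the explicit moment computation because of the $\Delta^{-3}$ normalisation in the smooth block — are both accurate and consistent with how the result is used in the paper.
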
 

\begin{proof}
The stochastic It\^o-Taylor expansion of 
$X_{\Delta} = \bigl[ X_{R, \Delta}^\top, X_{S, \Delta}^\top \bigr]^\top$ gives as: 
\begin{align} 
 X_{R,\Delta} 
&  = x_{R}
 + \sum_{k=0}^{d_R} V_{R,k}(x , \theta) B_{k, \Delta} 
 + \sum_{k_1, k_2 = 0}^{d_R} 
  \hat{V}_{k_1} V_{R, k_2}(x , \theta) I_{(k_1, k_2)}(\Delta) 
 + \rho_{R} (\Delta , \theta);  \label{eq:expansion_R}  \\ 
X_{S,\Delta} 
&  = x_{S}
 + V_{S,0}(x ,  \theta) \Delta  
 + \sum_{k = 0}^{d_R} 
   \hat{V}_{k} V_{S, 0}(x , \theta) I_{(k,0)}(\Delta)   
 + \sum_{k_1, k_2 = 0}^{d_R}  
   \hat{V}_{k_1} \hat{V}_{k_2}  V_{S, 0}(x ,  \theta) I_{(k_1, k_2, 0)}(\Delta) 
 + \rho_{S} (\Delta , \theta), \nonumber  
\end{align}
where 
$\rho_{R}(\Delta, \theta) := \rho_{R, 1} (\Delta , \theta)  +  \rho_{R, 2} (\Delta , \theta)$ 
and
$ \rho_{S} (t , \theta) := \rho_{S, 1} (\Delta , \theta)  +  \rho_{S, 2} (\Delta , \theta)$ 
with 
\begin{align*}
\rho_{R, 1} (\Delta , \theta) 
 & = \sum_{k=3,4,5}\,\, \sum_{\alpha \in \{0,1, \ldots, d_R\}^k} \hat{V}_{\alpha_1} \cdots \hat{V}_{\alpha_{k-1}} V_{R, \alpha_k} (x,  \theta) I_{\alpha} (\Delta); \\
\rho_{S, 1} (\Delta , \theta) 
& = \sum_{k= 3,4}\,\, \sum_{\alpha \in \{0,1, \ldots, d_R \}^k }
  \hat{V}_{\alpha_1} \cdots \hat{V}_{\alpha_{k}} V_{S, 0} (x,  \theta) 
I_{(\alpha_1, \ldots, \alpha_k, 0)} (\Delta).     
\end{align*}
Under condition (\ref{assump:coeff}), there exist constants $c_1, c_2 > 0$ so that 
$
 \bigl| \mathbb{E}_{\theta} [\rho_{R, 2} (\Delta , \theta) ] \bigr| 
  \leq c_1 \Delta^3
$
and 
\\ 
$
\bigl| \mathbb{E}_{\theta} [\rho_{S, 2} (\Delta , \theta) ] \bigr| \leq c_2 \Delta^3
$. 
From expansion (\ref{eq:expansion_R}) we have for $1 \leq j_1, j_2 \leq d_R$,  
\begin{align*}
\mathbb{E}_{\theta} \Bigl[ \nu_i^{j_1} (\Delta ; \theta)  \nu_i^{j_2} (\Delta ; \theta)  | \filtration{i-1} \Bigr]
& = {\Sigma}_{1, j_1 j_2} (\sample{X}{i-1}; \theta)  
+ \tfrac{\Delta}{2} \sum_{k = 1}^{d_R} V_{k}^{j_1} (\sample{X}{i-1}, \theta) 
 \bigl( \hat{V}_{0}V_{k}^{j_2}(\sample{X}{i-1}, \theta) + \hat{V}_{k}V_0^{j_2}(\sample{X}{i-1}, \theta) \bigr) \\ 
& \quad + 
\tfrac{\Delta}{2} \sum_{k = 1}^{d_R} 
\bigl(\hat{V}_{0}V_{k}^{j_1}(\sample{X}{i-1}, \theta) + \hat{V}_{k}V_0^{j_1}(\sample{X}{i-1}, \theta) \bigr) V_{k}^{j_2} (\sample{X}{i-1}, \theta)  \\ 
& \quad + 
\tfrac{\Delta}{2} \sum_{k_1, k_2 = 1}^{d_R} 
  \hat{V}_{k_1}V_{k_2}^{j_1}(\sample{X}{i-1}, \theta) \hat{V}_{k_1}V_{k_2}^{j_2}(\sample{X}{i-1}, \theta) 
+ \mathcal{E}_1 (\Delta^2, \sample{X}{i-1}, \theta),     
\end{align*}
for $\mathcal{E}_1\in\mathscr{S}$, where we have used that  
\begin{align*}
 & \mathbb{E} [B_{k_1, \Delta} B_{k_2, \Delta}] 
  = \Delta \times \mathbf{1}_{k_1 = k_2}, \ \ 
   \mathbb{E} [B_{k_1, \Delta} I_{(k_2,0)}(\Delta)] 
  = \mathbb{E} [B_{k_1, \Delta} I_{(0, k_2)}(\Delta)] 
  = \tfrac{\Delta^2}{2} \times  \mathbf{1}_{k_1 = k_2};   \\[0.2cm]
& \mathbb{E} [I_{(k_1, k_2)}(\Delta) I_{(k_3, k_4)} (\Delta)] 
  = \tfrac{\Delta^2}{2} \times  \mathbf{1}_{k_1 = k_3, k_2 = k_4},  \ \   
  \mathbb{E} [I_{(k_1, 0)}(\Delta) I_{(k_2, k_3)}(\Delta)] = 0, \\[-0.4cm]
\end{align*}
for $1 \leq k_1, k_2, k_3, k_4 \leq d_R$, and that 
\begin{align*} 
\mathbb{E}_{\theta} [\bigl( X_{R, t_i}^{j_1}  - \mu_{R, \Delta}^{j_1} (\sample{X}{i-1}; \theta) \bigr) 
\rho_{R}^{j_2} (\Delta ,  \theta) | \mathcal{F}_{t_{i-1}}  ] 
  = \mathcal{E}_2 (\Delta^3, \sample{X}{i-1} ,  \theta), \ \ 1 \leq j_1, j_2 \leq d_R,  \\[-0.4cm]
\end{align*}
for $\mathcal{E}_2\in\mathscr{S}$.
Similarly, for $1 \leq j_1 \leq d_R, \; d_R + 1 \leq j_2 \leq d$, we have
\begin{align*}
\mathbb{E}_{\theta} 
  \Bigl[ \nu_i^{j_1} (\Delta ; \theta)  \nu_i^{j_2} (\Delta ; \theta)  | \filtration{i-1} \Bigr]
& = {\Sigma}_{1, j_1 j_2} (\sample{X}{i-1}; \theta) 
+ \Delta \sum_{k  = 1}^{d_R} 
\Bigl( \tfrac{1}{3} \hat{V}_{k}V_{0}^{j_1}(\sample{X}{i-1} ,  \theta) 
+ \tfrac{1}{6} \hat{V}_{0}V_{k}^{j_1}(\sample{X}{i-1},  \theta) \Bigr) \hat{V}_{k}V_{0}^{j_2}(\sample{X}{i-1} ,  \theta)  \\ 
& \quad 
 + \tfrac{\Delta}{6} \sum_{k = 1}^{d_R} V_{k}^{j_1} (\sample{X}{i-1},  \theta) 
 \bigl(
  \hat{V}_{k} \hat{V}_0 V_0^{j_2} (\sample{X}{i-1},  \theta) 
  + \hat{V}_{0} \hat{V}_{k} V_0^{j_2} (\sample{X}{i-1} ,  \theta)  
 \bigr) \\ 
& \quad 
+ \tfrac{\Delta}{6} \sum_{k_1, k_2 = 1}^{d_R}
 \hat{V}_{k_1} V_{k_2}^{j_1}(\sample{X}{i-1},  \theta) 
 \hat{V}_{k_1} \hat{V}_{k_2} V_{0}^{j_2}(\sample{X}{i-1},  \theta)  
+ \mathcal{E}_3 (\Delta^2, \sample{X}{i-1} , \theta),   
\end{align*} 
for $\mathcal{E}_3\in\mathscr{S}$, where we used
\begin{align*} 
& \mathbb{E} [B_{k_1, \Delta} I_{(k_2, 0)}(\Delta)] 
  = \tfrac{\Delta^2}{2} \times \mathbf{1}_{k_1 = k_2};  \nonumber \\[0.2cm]  
& \mathbb{E} [B_{k_1, \Delta} I_{(k_2, 0, 0)}(\Delta)] 
 = \mathbb{E} [B_{k_1, \Delta} I_{(0, k_2, 0)}(\Delta)]
 = \tfrac{\Delta^3}{6} \times \mathbf{1}_{k_1 = k_2}, \ \ 
 \mathbb{E} [B_{k_1, \Delta} I_{(k_2, k_3, 0)}(\Delta)] 
 = 0; \\[0.2cm]
& \mathbb{E} [I_{(k_1, 0)}(\Delta) I_{(k_2, 0)}(\Delta)] 
  = \tfrac{\Delta^3}{3}  \times \mathbf{1}_{k_1 = k_2},  \ \ 
\mathbb{E}  [I_{(0, k_1)}(\Delta) I_{(k_2, 0)}(\Delta)] = \tfrac{\Delta^3}{6} \times \mathbf{1}_{k_1 = k_2};   \\[0.2cm]
& \mathbb{E} [I_{(k_1, k_2)}(\Delta) I_{(k_3, k_4, 0)}(\Delta)] 
 = \tfrac{\Delta^3}{6} \times  \mathbf{1}_{k_1 = k_3, k_2 = k_4},\\[-0.4cm]
\end{align*} 
for $1 \le k_1, k_2, k_3, k_4 \le d_R$, and 
\begin{align*}
& \mathbb{E}_{\theta} [\bigl( X_{R, t_i}^{j_1}  - \mu_{R, \Delta}^{j_1} (\sample{X}{i-1}; \theta) \bigr) 
\rho_{S}^{j_2} (\Delta, \theta) | \filtration{i-1}  ] 
  = \mathcal{E}_4 (\Delta^4, \sample{X}{i-1}, \theta);  \\[0.2cm]
& \mathbb{E}_{\theta} [\rho_{R}^{j_1} (\Delta, \theta) \bigl( X_{S, t_i}^{j_2}  - \mu_{S, \Delta}^{j_2} (\sample{X}{i-1}; \theta) \bigr)  | \filtration{i-1}   ]  
  = \mathcal{E}_5 (\Delta^4, \sample{X}{i-1}, \theta), \\[-0.4cm]
\end{align*}
for $1 \leq j_1 \leq d_R, \; 1 \leq j_2 \leq d_S$, where $\mathcal{E}_4, \mathcal{E}_5 \in \mathscr{S}$. Finally, for $d_R + 1 \leq j_1, j_2 \leq d$, 
\begin{align*}
 \mathbb{E}_{\theta} \Bigl[ \nu_i^{j_1} (\Delta ; \theta)  \nu_i^{j_2} (\Delta ; \theta)  | \filtration{i-1} \Bigr]
& = {\Sigma}_{1, j_1 j_2} (\sample{X}{i-1}; \theta)  
 + \Delta \sum_{k = 1}^{d_R} 
\hat{V}_{k} V_0^{j_1} (\sample{X}{i-1},  \theta) 
  \Bigl(
  \tfrac{1}{8} \hat{V}_{k} \hat{V}_0 V_0^{j_2} (\sample{X}{i-1},  \theta) 
  +   \tfrac{1}{6} \hat{V}_{0} \hat{V}_{k} V_0^{j_2} (\sample{X}{i-1} , \theta)  \Bigr) \\ 
& \quad   
 + \Delta \sum_{k = 1}^{d_R}
\Bigl( \tfrac{1}{8} \hat{V}_{k} \hat{V}_0 V_0^{j_1} (\sample{X}{i-1} , \theta) 
  + \tfrac{1}{6} \hat{V}_{0} \hat{V}_{k} V_0^{j_1} (\sample{X}{i-1} ,  \theta)  \Bigr) 
\hat{V}_{k} V_0^{j_2} (\sample{X}{i-1} , \theta) 
 \\ 
& \quad 
 + \tfrac{\Delta}{12} \sum_{k_1, k_2 = 1}^{d_R}  
 \hat{V}_{k_1} V_{k_2} V_0^{j_1} (\sample{X}{i-1} ,  \theta) \hat{V}_{k_1} \hat{V}_{k_2} V_{0}^{j_2}(\sample{X}{i-1} , \theta)  
+ \mathcal{E}_6 (\Delta^2, \sample{X}{i-1} ,  \theta),   
\end{align*} 
for $\mathcal{E}_6\in\mathscr{S}$, where we used 
\begin{align*}
& \mathbb{E} [I_{(k_1, 0)}(\Delta)I_{(k_2, 0)}(\Delta)]  
= \tfrac{\Delta^3}{3} \times \mathbf{1}_{k_1 = k_2}, \ \ 
 \mathbb{E} [I_{(k_1, 0)}(\Delta) I_{(k_2, 0, 0)} (\Delta)] 
  =  \tfrac{\Delta^4}{8} \times  \mathbf{1}_{k_1 = k_2} ;   \\[0.2cm]
& \mathbb{E} [I_{(k_1, 0)}(\Delta) I_{(0, k_2, 0)} (\Delta)] 
=  \tfrac{\Delta^4}{6}  \times \mathbf{1}_{k_1 = k_2},   \ \   
   \mathbb{E} [I_{(k_1, k_2, 0)}(\Delta) I_{(k_3, k_4, 0)}(\Delta)] 
   = \tfrac{\Delta^4}{12} \times \mathbf{1}_{k_1 = k_3, k_2 = k_4}, \\[-0.4cm]
\end{align*}
and 
\begin{align*}
& \mathbb{E}_{\theta} [\bigl( X_{S, t_i}^{j_1}  - \mu_{S, \Delta}^{j_1} (\sample{X}{i-1}; \theta) \bigr) 
\rho_{S}^{j_2} (\Delta, \theta) | \filtration{i-1} ] 
  = \mathcal{E}_7 (\Delta^5, \sample{X}{i-1} ,  \theta),    \ \ 1 \leq j_1, j_2 \leq d_S, \\[-0.4cm]
\end{align*}
%
for $\mathcal{E}_7\in\mathscr{S}$. The proof is now complete. 
\end{proof}
\begin{lemma} \label{canonical_conv}
Let $1 \leq j_1, j_2 \leq d$ and let $f : \mathbb{R}^d \times \Theta \to \mathbb{R}$ satisfy the regularity assumption of Lemma \ref{ergodic_thm}. Under conditions (\ref{assump:coeff}), (\ref{assump:bd_deriv}), (\ref{assump:moments}) and (\ref{assump:finite_moment}), it holds that as $n \to \infty$, $\Delta \to 0$, $n \Delta \to \infty$, 
\begin{align}
& \tfrac{1}{n} \sum_{i=1}^n 
 f (\sample{X}{i-1}, \theta)  \nu_i^{j_1} (\Delta, \trueparam) \nu_i^{j_2} (\Delta, \trueparam) 
   \probconv \int f (x, \theta) {\Sigma}_{1, j_1 j_2} (x; \truegamma, \truesigma) \truedist (dx);  \label{canonical_conv11} \\
 & \tfrac{1}{n \sqrt{\Delta}} \sum_{i=1}^n   f (\sample{X}{i-1}, \theta)  \nu_i^{j_1} (\Delta, \trueparam)   \probconv 0,  
 \label{canonical_conv12} 
\end{align}
uniformly in $\theta \in \Theta$. 
\end{lemma}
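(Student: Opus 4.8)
```latex
\textbf{Plan.} The two convergences in Lemma \ref{canonical_conv} are the workhorse identities that feed directly into the proofs of Lemmas \ref{lemma:conv_contrast_gamma}--\ref{lemma:conv_cov}, so the strategy is to reduce each to the ergodic limit of Lemma \ref{ergodic_thm} combined with the conditional moment expansions of Lemma \ref{weak_Taylor_expansion} and the martingale convergence criterion of Lemma \ref{Genon_Catalot}. The key fact throughout is that $\nu_i (\Delta; \theta^\dagger)$ is $\mathcal{F}_{t_i}$-measurable with its conditional first moment of size $\mathcal{O}(\Delta^{3/2})$ by (\ref{eq:mean_nu}) and its conditional second moment given by Lemma \ref{weak_Taylor_expansion}. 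I would first set $Y_{t_i} = \tfrac{1}{n} f(\sample{X}{i-1}, \theta)\, \nu_i^{j_1}(\Delta, \trueparam)\,\nu_i^{j_2}(\Delta, \trueparam)$ for (\ref{canonical_conv11}) and verify the two hypotheses of Lemma \ref{Genon_Catalot}.

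\textbf{Proof of (\ref{canonical_conv11}).} For the conditional-mean criterion, Lemma \ref{weak_Taylor_expansion} gives
\begin{align*}
 \sum_{i=1}^n \trueE[Y_{t_i} \mid \filtration{i-1}]
 = \tfrac{1}{n} \sum_{i=1}^n f(\sample{X}{i-1}, \theta)\,
 \bigl( {\Sigma}_{1, j_1 j_2}(\sample{X}{i-1}; \truegamma, \truesigma)
 + \Xi_{j_1 j_2}(\sample{X}{i-1}; \trueparam)\Delta + \mathcal{E}(\Delta^2, \sample{X}{i-1}, \trueparam) \bigr).
\end{align*}
The $\mathcal{O}(\Delta)$ and $\mathcal{O}(\Delta^2)$ pieces vanish in probability as $\Delta \to 0$ by Lemma \ref{ergodic_thm} (applied to the polynomially-growing maps $f \cdot \Xi_{j_1 j_2}$ and $f \cdot \mathcal{E}/\Delta^2 \cdot \Delta^2$, the latter bounded by $C\Delta^2$ uniformly), leaving the leading term, which converges to $\int f(x,\theta){\Sigma}_{1, j_1 j_2}(x;\truegamma,\truesigma)\,\truedist(dx)$, again by Lemma \ref{ergodic_thm}, uniformly in $\theta$. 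For the conditional-variance criterion I bound $\trueE[(Y_{t_i})^2 \mid \filtration{i-1}] \le \tfrac{C}{n^2} f(\sample{X}{i-1},\theta)^2\, \trueE[\nu_i^{j_1}\!\nu_i^{j_2}\!\nu_i^{j_1}\!\nu_i^{j_2} \mid \filtration{i-1}]$; the conditional fourth moment of the (appropriately scaled) $\nu$'s is $\mathcal{O}(1)$ uniformly (a standard It\^o-Taylor moment estimate under (\ref{assump:coeff})), so the sum is $\mathcal{O}(1/n) \to 0$ after invoking the moment control (\ref{assump:finite_moment}). Lemma \ref{Genon_Catalot} then yields (\ref{canonical_conv11}).

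\textbf{Proof of (\ref{canonical_conv12}).} Here I set $Y_{t_i} = \tfrac{1}{n\sqrt{\Delta}} f(\sample{X}{i-1}, \theta)\, \nu_i^{j_1}(\Delta, \trueparam)$. By (\ref{eq:mean_nu}), $\trueE[\nu_i^{j_1} \mid \filtration{i-1}] = \mathcal{O}(\Delta^{3/2})$, so $\sum_{i=1}^n \trueE[Y_{t_i}\mid \filtration{i-1}] = \tfrac{1}{n\sqrt{\Delta}} \sum_{i=1}^n f(\sample{X}{i-1},\theta)\,\mathcal{O}(\Delta^{3/2}) = \mathcal{O}(\Delta)\cdot \tfrac{1}{n}\sum_i |f(\sample{X}{i-1},\theta)|$, which tends to $0$ in probability since the normalised sum converges by Lemma \ref{ergodic_thm} and $\Delta \to 0$. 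For the variance criterion, $\sum_{i=1}^n \trueE[(Y_{t_i})^2\mid\filtration{i-1}] = \tfrac{1}{n^2\Delta}\sum_{i=1}^n f(\sample{X}{i-1},\theta)^2\,\trueE[(\nu_i^{j_1})^2\mid\filtration{i-1}]$, and since $\trueE[(\nu_i^{j_1})^2\mid\filtration{i-1}] = \mathcal{O}(1)$ by Lemma \ref{weak_Taylor_expansion}, this is $\mathcal{O}(\tfrac{1}{n\Delta}) \to 0$ using $n\Delta \to \infty$. Lemma \ref{Genon_Catalot} gives (\ref{canonical_conv12}), uniformly in $\theta$.

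\textbf{Main obstacle.} The two applications of Lemma \ref{Genon_Catalot} are routine once the conditional moments are in hand; the genuine work is establishing the \emph{uniformity in $\theta \in \Theta$} of all the limits, since $f(\cdot,\theta)$ and its polynomial-growth derivatives must be controlled simultaneously over the compact $\Theta$. The cleanest route is to lean on the uniform version of Lemma \ref{ergodic_thm} (which already asserts uniform convergence) together with the compactness assumption (\ref{assump:param_space}) and the uniform-in-$\theta$ polynomial-growth bounds guaranteed by (\ref{assump:bd_deriv}); a standard equicontinuity/finite-covering argument then upgrades pointwise to uniform convergence. Verifying that the remainder terms $\mathcal{E} \in \mathscr{S}$ and the fourth-moment bounds are genuinely uniform in $\theta$ is the step I would expect to require the most care.
```
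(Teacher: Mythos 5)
Your proposal is correct and takes essentially the same route as the paper's own proof: the same choice of summands, the same conditional-moment computations via (\ref{eq:mean_nu}) and Lemma \ref{weak_Taylor_expansion}, and the same conclusion via the martingale criterion of Lemma \ref{Genon_Catalot} combined with the uniform ergodic limit of Lemma \ref{ergodic_thm}. The paper simply absorbs your explicit $\Xi_{j_1 j_2}\Delta$ and $\mathcal{O}(\Delta^2)$ terms (and the bounded conditional fourth moments) into remainders of class $\mathscr{S}$, which is also how it secures the uniformity in $\theta$ that you flag as the delicate step.
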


\begin{proof}
First, we will prove (\ref{canonical_conv11}). 
Let 
$$ \xi_i^{j_1 j_2}  (\theta) := \tfrac{1}{n} f (\sample{X}{i-1}, \theta)  \nu_i^{j_1} (\Delta, \trueparam) \nu_i^{j_2} (\Delta, \trueparam), \ \  1 \le i \le n, \; 1 \leq j_1, j_2 \leq d.$$ 
We have 
\begin{align*}
\sum_{i=1}^n  \trueE \bigl[ \xi_i^{j_1 j_2} (\theta)  | \filtration{i-1} \bigr] 
 = &\; \tfrac{1}{n} \sum_{i=1}^n  f (\sample{X}{i-1}, \theta) 
 {\Sigma}_{1, j_1j_2} (\sample{X}{i-1}; \truegamma, \truesigma) 
  + \tfrac{1}{n} \sum_{i=1}^n \mathcal{E}_1 (\Delta, \sample{X}{i-1},  \theta);   \\ 
\sum_{i=1}^n  \trueE \bigl[ (\xi_i^{j_1 j_2} (\theta) )^2  | \filtration{i-1} \bigr] 
  = &\; \tfrac{1}{n^2} \sum_{i=1}^n \mathcal{E}_2 (1, \sample{X}{i-1}, \theta).
\end{align*}
for $\mathcal{E}_1, \mathcal{E}_2 \in \mathscr{S}$.
Hence, it follows from Lemma \ref{ergodic_thm} that 
\begin{align*}
\sum_{i=1}^n  \trueE \bigl[ \xi_i^{j_1 j_2}  (\theta) | \filtration{i-1} \bigr] 
&\; \probconv  \int f (x, \theta) 
{\Sigma}_{1, j_1 j_2} (x; \truegamma, \truesigma) \truedist (dx);   \nonumber \\ 
\sum_{i=1}^n  \trueE \bigl[ (\xi_i^{j_1 j_2} (\theta))^2  | \filtration{i-1} \bigr] &\; \probconv 0. 
\end{align*}
Then, Lemma \ref{Genon_Catalot} yields convergence (\ref{canonical_conv11}). Next, we turn to (\ref{canonical_conv12}). Let 
$$
\eta_i^{j_1}  (\theta) 
:= \tfrac{1}{n \sqrt{\Delta}} f (\sample{X}{i-1}, \theta)  \nu_i^{j_1} (\Delta, \trueparam), 
\qquad  1 \le i \le  n, \quad 1 \le j_1 \le  d. 
$$ 
Equation (\ref{eq:mean_nu}) and Lemma \ref{ergodic_thm} give that, as $n \to \infty$, $\Delta \to 0$ and $n \Delta \to \infty$,  
\begin{align}
\sum_{i=1}^n  \trueE \bigl[\eta_i^{j_1}  (\theta)  | \filtration{i-1} \bigr] 
 = &\; \tfrac{1}{n}  \sum_{i=1}^n  \mathcal{E}_3 (\Delta, \sample{X}{i-1}, \theta)  \probconv 0; \nonumber \\ 
 \sum_{i=1}^n  \trueE \bigl[ ( \eta_i^{j_1}  (\theta) )^2  | \filtration{i-1} \bigr] 
  = &\; \tfrac{1}{n \Delta} \times \tfrac{1}{n}  \sum_{i=1}^n  \mathcal{E}_4 (1, \sample{X}{i-1}, \theta)  \probconv 0, \nonumber  
\end{align} 
where $\mathcal{E}_3, \, \mathcal{E}_4 \in \mathscr{S}$. 
Finally, we obtain convergence (\ref{canonical_conv12}) due to Lemma \ref{Genon_Catalot}. 
\end{proof}  
%
%
\begin{lemma} \label{hermite_expectation}
Let 
$1 \leq i \leq n$, 
$1 \leq j_1, j_2 \leq d$, 
$1 \leq k_1 \leq d_{\beta}$, 
$d_{\beta} + 1 \leq k_2 \leq d_{\beta} + d_{\gamma}$,
$d_{\beta} + d_{\gamma} + 1 \leq k_3 \leq d_{\theta}$. 
Under condition (\ref{assump:coeff}), it holds that for any $\Delta >0$, $\theta \in \Theta$, 
\begin{align}
  \mathbb{E}_{\theta} [\mathcal{H}_{(j_1, j_2)} ( \Delta, X_{t_{i-1}}, X_{t_{i}} ; \theta)| \filtration{i-1} ] 
 & =  \mathcal{E}_1 (\Delta, X_{t_{i-1}}, \theta); \label{hermite_1}  \\[0.2cm]
 \mathbb{E}_{\theta}[ \deriv{\theta}{k_1} \mathcal{H}_{(j_1, j_2)} ( \Delta, X_{t_{i-1}}, X_{t_{i}} ; \theta)| \mathcal{F}_{t_{i-1}}]  
 & =  \mathcal{E}_{2} (\Delta^2, X_{t_{i-1}},\theta); \label{hermite_2}  
 \\[0.2cm]
 \mathbb{E}_{\theta}  [ \deriv{\theta}{k_2} 
 \mathcal{H}_{(j_1, j_2)} ( \Delta, X_{t_{i-1}}, X_{t_{i}} ; \theta)  | \mathcal{F}_{t_{i-1}}]  
 &=  \deriv{\theta}{k_2} {\Sigma}^{-1}_{1, j_1 j_2} (X_{t_{i-1}}; \gamma, \sigma) 
 +  \mathcal{E}_{3}  (\Delta, X_{t_{i-1}},  \theta);   \label{hermite_3}  \\[0.2cm]
 \mathbb{E}_{\theta}  [ \deriv{\theta}{k_3} \mathcal{H}_{(j_1, j_2)} ( \Delta, X_{t_{i-1}}, X_{t_{i}} ; \theta) 
| \mathcal{F}_{t_{i-1}}] 
 &=   \deriv{\theta}{k_3} {\Sigma}^{-1}_{1, j_1, j_2} (X_{t_{i-1}}; \gamma, \sigma) 
 +  \mathcal{E}_{4} (\Delta, X_{t_{i-1}}, \theta),   \label{hermite_4}
\end{align} 
where $\mathcal{E}_1, \mathcal{E}_2, \mathcal{E}_3, \mathcal{E}_4  \in \mathscr{S}$. 
\end{lemma}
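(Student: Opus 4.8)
The plan is to differentiate the Hermite polynomial $\mathcal{H}_{(j_1,j_2)}$ in (\ref{eq:hermite_second}) by the product rule, take conditional expectations, and track orders in $\Delta$ using Lemma \ref{weak_Taylor_expansion} together with (\ref{eq:mean_nu}) and the derivative formulas (\ref{eq:nu_deriv_beta}). The key observation is that $\mathcal{H}_{(j_1,j_2)}$ depends on $\theta$ only through the precision matrix $\Sigma_1^{-1}(\sample{X}{i-1};\gamma,\sigma)$ --- a function of $(\gamma,\sigma)$ alone, and $\filtration{i-1}$-measurable --- and through the normalised increments $\nu_i(\Delta;\theta)$. Writing $\Lambda := \Sigma_1^{-1}(\sample{X}{i-1};\gamma,\sigma)$ for brevity, each derivative $\deriv{\theta}{k}\mathcal{H}_{(j_1,j_2)}$ splits into a part where the derivative acts on $\Lambda$ (and on the additive constant $-\Lambda_{j_1 j_2}$) and a part where it acts on the product $\nu_i^{l_1}\nu_i^{l_2}$.

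The single algebraic fact I shall use repeatedly is the contraction identity
\begin{align*}
\sum_{l_1,l_2=1}^d \Lambda_{j_1 l_1}\Lambda_{j_2 l_2}\,\Sigma_{1,l_1 l_2}(\sample{X}{i-1};\gamma,\sigma) = \Lambda_{j_1 j_2},
\end{align*}
which follows from the symmetry of $\Sigma_1$ and $\Sigma_1^{-1}\Sigma_1 = I_d$. For (\ref{hermite_1}), Lemma \ref{weak_Taylor_expansion} gives $\mathbb{E}_\theta[\nu_i^{l_1}\nu_i^{l_2}\mid\filtration{i-1}] = \Sigma_{1,l_1 l_2} + \Xi_{l_1 l_2}\Delta + \mathcal{E}$; contracting against $\Lambda_{j_1 l_1}\Lambda_{j_2 l_2}$ and subtracting $\Lambda_{j_1 j_2}$, the $\mathcal{O}(1)$ terms cancel exactly and what remains is $\Delta\sum\Lambda\Lambda\,\Xi$ plus an $\mathcal{E}\in\mathscr{S}$ contribution, i.e.\ a term of the form $\mathcal{E}_1(\Delta,\sample{X}{i-1},\theta)$ as claimed (boundedness of all coefficients following from (\ref{assump:coeff})).

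For the derivative identities I shall treat the three parameter blocks separately, the distinction coming from the size of $\deriv{\theta}{k}\nu_i$ in (\ref{eq:nu_deriv_beta}). For a $\beta$-index $k_1$, $\Lambda$ is $\beta$-free, so only the $\nu$-product is differentiated; since $\deriv{\theta}{k_1}\nu_i$ is $\filtration{i-1}$-measurable of size $\mathcal{O}(\sqrt{\Delta})$ and $\mathbb{E}_\theta[\nu_i^{l}\mid\filtration{i-1}]=\mathcal{O}(\Delta^{3/2})$ by (\ref{eq:mean_nu}), every term is $\mathcal{O}(\Delta^2)$, yielding (\ref{hermite_2}). For a $\sigma$-index $k_3$, the derivative $\deriv{\theta}{k_3}\nu_i$ is again $\mathcal{O}(\sqrt{\Delta})$, so the $\nu$-product part is $\mathcal{O}(\Delta^2)$; the term where the derivative hits $\Lambda\Lambda$ produces, via the contraction identity and Lemma \ref{weak_Taylor_expansion}, the leading value $2\,\deriv{\theta}{k_3}\Lambda_{j_1 j_2}$, which combines with $-\deriv{\theta}{k_3}\Lambda_{j_1 j_2}$ from the additive constant to give $\deriv{\theta}{k_3}\Lambda_{j_1 j_2}$ modulo $\mathscr{S}$, i.e.\ (\ref{hermite_4}). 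The $\gamma$-index case (\ref{hermite_3}) has an identical structure but is the delicate one: here $\deriv{\theta}{k_2}\nu_i$ carries a leading term of size $\mathcal{O}(\Delta^{-1/2})$ on the smooth coordinates, and I must check that its contribution does not blow up. It does not, because that term is still $\filtration{i-1}$-measurable and is paired only with a conditional mean $\mathbb{E}_\theta[\nu_i^{l}\mid\filtration{i-1}]=\mathcal{O}(\Delta^{3/2})$, so the product is $\mathcal{O}(\Delta)\in\mathscr{S}$; the leading $\deriv{\theta}{k_2}\Lambda_{j_1 j_2}$ again emerges from the $\Lambda\Lambda$-derivative term through the contraction identity. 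Verifying these order counts and the exact cancellation of the $\mathcal{O}(1)$ pieces is the only substantive work, the remainder being bookkeeping controlled by (\ref{assump:coeff}).
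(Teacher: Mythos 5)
Your proposal is correct and follows essentially the same route as the paper's own proof: differentiate $\mathcal{H}_{(j_1,j_2)}$ by the product rule, invoke the conditional-moment expansion of Lemma \ref{weak_Taylor_expansion} together with (\ref{eq:mean_nu}) and (\ref{eq:nu_deriv_beta}), and use the contraction identity $\Sigma_1^{-1}\Sigma_1\Sigma_1^{-1}=\Sigma_1^{-1}$ both for the exact cancellation in (\ref{hermite_1}) and for the $2\,\deriv{\theta}{k}\Sigma_{1,j_1j_2}^{-1}-\deriv{\theta}{k}\Sigma_{1,j_1j_2}^{-1}=\deriv{\theta}{k}\Sigma_{1,j_1j_2}^{-1}$ bookkeeping in (\ref{hermite_3})--(\ref{hermite_4}). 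The only difference is cosmetic: the paper dismisses the $\gamma$-case as ``similar to (\ref{hermite_4})'', whereas you verify explicitly the one delicate point there, namely that the $\mathcal{O}(\Delta^{-1/2})$ piece of $\deriv{\theta}{k_2}\nu_i$ is harmless because it is $\filtration{i-1}$-measurable and only ever pairs with the $\mathcal{O}(\Delta^{3/2})$ conditional mean of $\nu_i$.
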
 

\begin{proof}
We first check (\ref{hermite_1}). Using Lemma \ref{weak_Taylor_expansion} with (\ref{eq:hermite_second}), we immediately have, for $\mathcal{E}_5  \in \mathscr{S}$,
\begin{align*}
& \mathbb{E}_{\theta} [ \mathcal{H}_{(j_1, j_2)} ( \Delta, X_{t_{i-1}}, X_{t_{i}} ; \theta)| \mathcal{F}_{t_{i-1}}]   \\ 
& =  \sum_{i_1, i_2 = 1}^d  
 {\Sigma}_{1, j_1 i_1}^{-1} (\sample{X}{i-1}; \gamma, \sigma) 
 {\Sigma}_{1, j_2 i_2}^{-1} (\sample{X}{i-1}; \gamma, \sigma) 
 {\Sigma}_{1, i_1 i_2} (\sample{X}{i-1}; \gamma, \sigma) 
 + \mathcal{E}_5 (\Delta, \sample{X}{i-1}, \theta)  
 - {\Sigma}_{1, j_1 j_2}^{-1} (\sample{X}{i-1}; \gamma, \sigma)  \nonumber \\[0.3cm] 
%
& = \mathcal{E}_1 (\Delta, \sample{X}{i-1}, \theta). 
\end{align*} 
Next, we turn to (\ref{hermite_2}). We have 
\begin{align*} 
& \mathbb{E}_{\theta}  [ \deriv{\theta}{k_1} \mathcal{H}_{(j_1, j_2)} (\Delta, \sample{X}{i-1}, \sample{X}{i}; \theta) | \filtration{i-1}]   \\[0.2cm]
& = \sum_{i_1 ,i_2 = 1}^{d}   
  {\Sigma}_{1, j_1 i_1}^{-1} (\sample{X}{i-1}; \gamma, \sigma) 
  {\Sigma}_{1, j_2 i_2}^{-1} (\sample{X}{i-1}; \gamma, \sigma) 
  \mathbb{E}_{\theta} [\deriv{\theta}{k_1} 
  \nu_i^{i_1}(\Delta; \theta)  \nu_i^{i_2}(\Delta; \theta) | \filtration{i-1}]      \\
& \quad  +  \sum_{i_1 ,i_2 = 1}^{d}   
  {\Sigma}_{1, j_1 i_1}^{-1} (\sample{X}{i-1}; \gamma, \sigma) 
  {\Sigma}_{1,j_2 i_2}^{-1} (\sample{X}{i-1}; \gamma, \sigma) 
  \mathbb{E}_{\theta} [ \nu_i^{i_1}(\Delta; \theta)  
  \deriv{\theta}{k_1} \nu_i^{i_2}(\Delta; \theta) | \filtration{i-1} ]  \\[0.3cm]
& =  \mathcal{E}_2  (\Delta^2, \sample{X}{i-1}, \theta), 
\end{align*}
where we used (\ref{eq:mean_nu}) and (\ref{eq:nu_deriv_beta}). Finally, let show (\ref{hermite_4}). The formula (\ref{hermite_3}) is derived in the similar way to (\ref{hermite_4}) and we will omit the derivation. From Lemma \ref{weak_Taylor_expansion}, we have 
\begin{align*}
& \mathbb{E}_{\theta} 
 \bigl[  \deriv{\theta}{k_3} \mathcal{H}_{(j_1, j_2)} (\Delta, \sample{X}{i-1}, \sample{X}{i}; \theta) | \filtration{i-1} 
 \bigr] \nonumber \\ 
& = \sum_{i_1, i_2 = 1}^d 
 \deriv{\theta}{k_3}
 \left( 
   {\Sigma}_{1, j_1 i_1}^{-1} (\sample{X}{i-1}; \gamma, \sigma) 
   {\Sigma}_{1, j_2 i_2}^{-1} (\sample{X}{i-1}; \gamma, \sigma) 
 \right)  {\Sigma}_{1, i_1i_2} (\sample{X}{i-1}; \gamma, \sigma)  
 - \deriv{\theta}{k_3} {\Sigma}_{1, j_1 j_2}^{-1} (\sample{X}{i-1}; \gamma, \sigma) 
+  \mathcal{E}_4 (\Delta, \sample{X}{i-1}, \theta)  \\
& =  \deriv{\theta}{k_3} {\Sigma}_{1, j_1 j_2}^{-1} (\sample{X}{i-1}; \gamma, \sigma)  + \mathcal{E}_4  (\Delta, \sample{X}{i-1}, \theta).  
\end{align*}
The proof is now complete. 
\end{proof} 

\subsection{Proof of Technical Results for Theorem 4.3} \label{sec:pf_lemma_consistency} 
  
\subsubsection{Proof of Lemma \ref{lemma:conv_contrast_gamma}} \label{sec:conv_contrast_gamma}
\noindent We define $\textstyle{\frac{\Delta}{n} \ell_{n, \Delta} (\theta)} = \Gamma_1 (\theta) + \Gamma_2 (\theta) + \Gamma_3 (\theta)$, where 
\begin{align*}
   \Gamma_1 (\theta) & =  
   \tfrac{\Delta}{n} \sum_{i=1}^n  \bigl( X_{t_i} -  \mu_{\Delta} (\sample{X}{i-1} ; \theta) \bigr)^{\top} \Sigma^{-1} (\Delta, \sample{X}{i-1} ; \gamma, \sigma) \bigl( X_{t_i} - \mu_{\Delta} (\sample{X}{i-1} ; \theta)  \bigr) ;   \nonumber \\ 
  \Gamma_2 (\theta) & =  \tfrac{\Delta}{n} \sum_{i=1}^n 
   \log |{\Sigma}_1 (\sample{X}{i-1}; \gamma, \sigma)|,     
  \ \ \ 
 \Gamma_3 (\theta) =  - \tfrac{\Delta^2}{n} \sum_{i=1}^n \Phi_2 (\Delta, X_{t_{i-1}}, X_{t_{i}}; \theta).     
\end{align*}
We will show that, as $n \to \infty$, $\Delta \to 0$, $n \Delta \to \infty$, 
\begin{align*}
 & \Gamma_1 (\theta) \xrightarrow{\mathbb{P}_{\trueparam}}   
  \int
  \bigl( V_{S,0} (x ,  \truegamma) -  V_{S,0} (x ,  \gamma) \bigr)^{\top}
  \Lambda_{1, SS} (x; \gamma, \sigma)
  \bigl( V_{S,0} (x,  \truegamma) -  V_{S, 0} (x , \gamma) \bigr)  \truedist(dx); \nonumber  \\ 
 &  \Gamma_2 (\theta) 
    \xrightarrow{\mathbb{P}_{\trueparam}} 0, 
 \ \ \  \Gamma_3 (\theta)  
    \xrightarrow{\mathbb{P}_{\trueparam}} 0,  \nonumber
\\[-0.4cm]
\end{align*}
uniformly in $\theta = (\beta, \gamma, \sigma) \in \Theta$. 

From Lemma \ref{ergodic_thm}, we immediately have 
$ 
\Gamma_2 (\theta) \xrightarrow{\mathbb{P}_{\trueparam}} 0, 
$ 
uniformly in $\theta \in \Theta$. We next turn to $\Gamma_1 (\theta)$. We introduce 
$\Gamma_1 (\theta) = \Gamma_{1,1} (\theta)  +   \Gamma_{1,2} (\theta)  +  \Gamma_{1,3} (\theta)$  
with 
{\small \begin{align*}
%
%
\Gamma_{1,1} (\theta) & 
  =  
    \tfrac{\Delta}{n} \sum_{i=1}^n  
   \nu_i(\Delta ; \trueparam)^{\top} 
   {\Sigma}_1^{-1} (\sample{X}{i-1}; \gamma, \sigma) 
   \nu_i(\Delta ;  \trueparam);  \\ 
\Gamma_{1,2} (\theta) & 
 = \tfrac{2 \Delta}{n} \sum_{i=1}^n  \bigl( X_{t_i} - \mu_{\Delta}  
  (\sample{X}{i-1}; \trueparam)  \bigr)^{\top} 
  \Sigma^{-1} (\Delta, \sample{X}{i-1} ; \gamma, \sigma) 
  \bigl( \mu_{\Delta} ( \sample{X}{i-1}; \trueparam) -  \mu_{\Delta} (\sample{X}{i-1}; \theta) \bigr);  \\ 
\Gamma_{1,3} (\theta) 
& =  \tfrac{\Delta}{n}  \sum_{i=1}^n \bigl( \mu_{\Delta} ( \sample{X}{i-1}; \trueparam) 
 -  \mu_{\Delta} (\sample{X}{i-1}; \theta) \bigr)^{\top}  
 \Sigma^{-1} (\Delta, \sample{X}{i-1} ; \gamma, \sigma)
  \bigl( \mu_{\Delta} ( \sample{X}{i-1}; \trueparam) -  \mu_{\Delta} (\sample{X}{i-1}; \theta) \bigr). 
\end{align*} }
%
From Lemmas \ref{Genon_Catalot}, \ref{ergodic_thm} and \ref{canonical_conv}, we immediately have $\Gamma_{1,k} (\theta) \probconv 0, \;  k = 1,2.$ Since we have
\begin{align*} 
 \Gamma_{1,3} (\theta)   
& = \tfrac{1}{n} \sum_{i=1}^n   
\bigl( V_{S,0} (x ,  \truegamma) - V_{S,0}  (x ,  \gamma) \bigr)^{\top} 
\Lambda_{1, SS} (x; \gamma, \sigma)  
\bigl( V_{S,0} (x ,  \truegamma) - V_{S,0}  (x ,  \gamma) \bigr) \Bigl|_{x= \sample{X}{i-1}} 
\nonumber \\
& \qquad \qquad + \tfrac{1}{n} \sum_{i=1}^n  \mathcal{E} (\Delta, \sample{X}{i-1}, \theta),
\end{align*}
for $\mathcal{E}  \in \mathscr{S}$, Lemmas \ref{Genon_Catalot} and \ref{ergodic_thm} yield 
\begin{align*}
   \Gamma_{1,3} (\theta)   \probconv 
   \int 
    \bigl( V_{S,0} (x,  \truegamma) -  V_{S,0} (x, \gamma) \bigr)^{\top}
    {\Lambda}_{1, SS}
    (x; \gamma, \sigma) 
    \bigl( V_{S,0} (x,  \truegamma) -  V_{S,0} (x ,  \gamma) \bigr)  \truedist(dx),  
\end{align*} 
uniformly in $\theta \in \Theta$. Thus, we have the convergence of $\Gamma_1(\theta)$ in probability. Similarly, for $\Gamma_3 (\theta)$, due to term $\Delta$ in front of the definition of $\Gamma_3 (\theta)$, we have $\Gamma_3 (\theta) \rightarrow 0$, in probability under $\mathbb{P}_{\theta^\dagger}$, 
uniformly in $\theta \in \Theta$.  The proof is now complete.
\subsubsection{Proof of Lemma \ref{lemma:taylor_gamma_conv}}
\label{sec:pf_taylor_gamma_conv}
\noindent Convergence (\ref{eq:A_gamma}) is deduced from the proof of Lemma \ref{lemma:conv_cov} in Section \ref{pf_conv_cov}, so here we only provide the proof of convergence (\ref{eq:B_gamma}). 
We have for any $(\beta, \sigma) \in \Theta_{\beta} \times \Theta_{\gamma}$, 
\begin{align*}
    \tfrac{\sqrt\Delta}{n} 
    \partial_{\gamma_k} \contrast{\beta, \truegamma, \sigma} 
    = A_{\gamma_k}^{(1)} (\beta, \sigma)
     + A_{\gamma_k}^{(2)} (\beta, \sigma), 
     \ \ 1 \le k \le d_\gamma, 
\end{align*}
where we have set: 
\begin{align*}
    & A_{\gamma_k}^{(1)} (\beta, \sigma)
    \equiv \tfrac{\sqrt\Delta}{n} 
    \sum_{i = 1}^n \, \biggl\{
    - 2 \sum_{\substack{1 \le j_1 \le d \\ d_R +1 \le j_2 \le d}} \nu_{i}^{j_1} (\Delta; \beta, \truegamma, \sigma)
    \Sigma^{-1}_{1, j_1 j_2} (\sample{X}{i-1}; \truegamma, \sigma) \tfrac{\partial_{\gamma_k} V_0^{j_2} (\sample{X}{i-1}, \truegamma)}{\sqrt\Delta} \nonumber \\[0.3cm]
    & \quad - \sqrt\Delta 
    \sum_{\substack{1 \le j_1 \le d \\ d_R +1 \le j_2 \le d}} \nu_{i}^{j_1} (\Delta; \beta, \truegamma, \sigma)
    \Sigma^{-1}_{1, j_1 j_2} (\sample{X}{i-1}; \truegamma, \sigma) {\partial_{\gamma_k} \hat{V}_0 V_0^{j_2} (\sample{X}{i-1}, \beta, \truegamma, \sigma )} \nonumber \\[0.1cm]
    & \qquad + 
    \nu_i (\Delta, \beta, \truegamma, \sigma)^\top
    \partial_{\gamma_k} \Sigma^{-1}_{1} (\sample{X}{i-1}; \truegamma, \sigma)
    \nu_i (\Delta, \beta, \truegamma, \sigma) 
    +  \partial_{\gamma_k} 
    \log |\Sigma_1 (\sample{X}{i-1}; \truegamma, \sigma)| \biggr \};  \\[0.5cm] 
     & A_{\gamma_k}^{(2)} (\beta, \sigma)
    \equiv - \tfrac{2 \sqrt{\Delta^3}}{n} \sum_{i=1}^n
    \partial_{\gamma_k} \Phi_2 (\Delta, \sample{X}{i-1}, \sample{X}{i} ;  \beta, \truegamma, \sigma).
\end{align*}
Since we have 
\begin{align*}
    \nu_i (\Delta, \beta, \truegamma, \sigma)
    = \nu_i ( \Delta, \trueparam) 
     + \sqrt\Delta 
     \begin{bmatrix}
      V_{R,0} (\sample{X}{i-1}, \truebeta) 
      - V_{R,0} (\sample{X}{i-1}, \beta) \\[0.1cm] 
      \tfrac{1}{2} \hat{V}_0 V_{S,0} (\sample{X}{i-1}, \trueparam) 
      - \tfrac{1}{2} \hat{V}_0 V_{S,0} (\sample{X}{i-1}, \beta, \truegamma, \sigma) 
     \end{bmatrix}, 
\end{align*}
we immediately obtain from Lemmas \ref{ergodic_thm}, \ref{canonical_conv} that, as $n \to \infty$, $n \Delta_n \to \infty$ and $\Delta_n \to 0$, then $A_{\gamma_k}^{(1)} (\beta, \sigma) \rightarrow 0$, in probability under $\mathbb{P}_{\theta^\dagger}$,  uniformly in $\beta $ and $\sigma$, for all $1 \le k \le d_\gamma$. Similarly, due to the presence of $\sqrt{\Delta^3}$, 
we obtain $A_{\gamma_k}^{(2)} (\beta, \sigma) \rightarrow 0$, in probability under $\mathbb{P}_{\theta^\dagger}$, uniformly in $\beta $ and $\sigma$. We now obtain convergence (\ref{eq:A_gamma}) and the proof is complete. 
\subsubsection{Proof of Lemma \ref{lemma:conv_contrast_gamma}}
\label{sec:conv_contrast_sigma} 
\noindent We write $\textstyle{ T(\theta) = \sum_{l=1}^3 T_l (\theta)}, \; \theta \in \Theta$, where 
 \begin{align}
T_1 (\theta) & :=  
 \tfrac{1}{n} \sum_{i=1}^n \bigl(X_{t_i} - \mu_{\Delta} (\sample{X}{i-1}; \theta) \bigr)^{\top}
 \Sigma^{-1} (\Delta, \sample{X}{i-1} ; \gamma, \sigma) \bigl(X_{t_i} - \mu_{\Delta} (\sample{X}{i-1}; \theta) \bigr);  \nonumber \\
T_2 (\theta) & :=  \tfrac{1}{n} \sum_{i=1}^n 
 \log |{\Sigma}_1 (\sample{X}{i-1}; \gamma, \sigma)|,  \ \ \ 
T_3 (\theta) :=  - \tfrac{2 \Delta}{n} \sum_{i=1}^n \Phi_2 (\Delta, \sample{X}{i-1}, \sample{X}{i}; \theta). \nonumber 
\end{align}
We will show that as $n \to \infty$, $\Delta \to 0$, $n \Delta \to \infty$, 
\begin{align}
  T_{1} (\beta, \hat{\gamma}_n, \sigma)  
   & \xrightarrow{\mathbb{P}_{\trueparam}} 
 \int 
  \mathrm{tr} \bigl( {\Sigma}^{-1}_1 (x; \truegamma, \sigma) 
  {\Sigma}_1 (x; \truegamma, \truesigma) \bigr) 
  \nu_{\trueparam}(dx);  \label{T1_conv} \\ 
 T_2  (\beta, \hat{\gamma}_n, \sigma)  
 & \xrightarrow{\mathbb{P}_{\trueparam}} 
  \int  \log | {\Sigma}_1  (x;  \truegamma, \sigma) | \nu_{\trueparam}(dx) ;  \label{T2_conv}\\ 
  T_3  (\beta, \hat{\gamma}_n, \sigma) 
 & \xrightarrow{\mathbb{P}_{\trueparam}} 0,  \label{T3_conv}
\end{align} 
uniformly in $(\beta, \sigma) \in \Theta_\beta \times \Theta_\sigma$. 
\\

\noindent
{\bf Proof of (\ref{T1_conv}) and (\ref{T2_conv}).}
Via Lemma \ref{ergodic_thm}, we immediately obtain convergence (\ref{T2_conv}). We will check convergence (\ref{T1_conv}). We have 
\begin{align} \label{T1_decomp}
T_1 (\theta) = T_{1,1} (\theta) + T_{1,2} (\theta) + T_{1,3} (\theta),
\end{align}
where $\textstyle{T_{1, i} (\theta) = \frac{1}{\Delta}\Gamma_{1, i} (\theta), \; i = 1,2,3}$. 
From Lemma \ref{canonical_conv} and the continuous mapping theorem, as $n \to \infty$, $\Delta \to 0$, $n \Delta \to \infty$, 
\begin{align*}
T_{1,1} ( \beta, \hat{\gamma}_n, \sigma )  
& \probconv   \sum_{j_1, j_2 = 1}^d \int  
{\Sigma}_{1, j_1 j_2}^{-1}  (x; \truegamma, \sigma) 
{\Sigma}_{1, j_1 j_2} (x; \truegamma, \truesigma \, ) \truedist (dx)  
=  \int \mathrm{tr} \bigl( 
{\Sigma}^{-1}_1  (x; \truegamma, \sigma) 
{\Sigma}_1  (x; \truegamma, \truesigma ) \bigr) \truedist (dx),
\end{align*} 
uniformly in $(\beta, \sigma) \in \Theta_{\beta} \times \Theta_{\sigma}$. We look at the convergence of term $ T_{1,2} (\beta, \hat{\gamma}_n, \sigma )$. We have 
\begin{align*}
& T_{1,2} (\beta, \hat{\gamma}_n, \sigma )  
= \tfrac{1}{n} \sum_{i=1}^n \, \biggl\{ {2 \sqrt{\Delta}}
  \sum_{\substack{1 \leq j_1 \leq d \\ 1 \leq j_2 \leq d_R}}     \nu_i^{j_1} (\Delta; \trueparam)
 {\Sigma}^{-1}_{1, j_1 j_2} (x;  \estgamma, \sigma) 
  \bigl( V_0^{j_2} (x ,  \truebeta) - V_0^{j_2} (x , \beta) \bigr)    \nonumber \\ 
& \quad +  {2}
\sum_{\substack{1 \leq j_1 \leq d \\ d_R + 1 \leq j_2 \leq d}}   \nu_i^{j_1} (\Delta; \trueparam) 
{\Sigma}^{-1}_{1, j_1 j_2} (x;  \estgamma, \sigma) 
  \frac{ V_0^{j_2} (x ,  \truegamma) - V_0^{j_2} (x ,  \estgamma)  }{\sqrt{\Delta}}    \\[0.2cm]
& \quad  +  {\sqrt{\Delta}} \sum_{\substack{1 \leq j_1 \leq d \\ d_R + 1 \leq j_2 \leq d}}    \nu_i^{j_1} (\Delta; \trueparam) 
 {\Sigma}^{-1}_{1, j_1 j_2} (x;  \estgamma, \sigma)  
 \bigl( 
  \hat{V}_0 V_0^{j_2} (x,  \trueparam) - 
  \hat{V}_0 V_0^{j_2} \left(x , (\beta, \estgamma, \sigma ) \right) 
  \bigr) \biggr\} \bigr|_{x=\sample{X}{i-1}} .   \nonumber 
\end{align*} 
Using Lemma \ref{canonical_conv}, we have $T_{1,2} (\beta, \hat{\gamma}_n, \sigma)  \probconv 0$, uniformly in $(\beta, \sigma) \in \Theta_{\beta} \times \Theta_{\sigma}$. Next, we turn to term $T_{1,3} (\beta, \hat{\gamma}_n, \sigma )$. We have 
$ \textstyle{ T_{1,3} (\beta, \hat{\gamma}_n, \sigma ) 
= \sum_{l=1}^3 T_{1,3,l}(\beta, \hat{\gamma}_n, \sigma )} $, where
\begin{align}
T_{1,3,1} (\beta, \hat{\gamma}_n, \sigma)  
  & =  \tfrac{2}{n} \sum_{i=1}^n \sum_{\substack{1 \leq j_1 \leq d_R \\ d_R + 1 \leq j_2 \leq d}}
  {\Sigma}_{1, j_1 j_2}^{-1} (\sample{X}{i-1};  \estgamma, \sigma)  
  \bigl( V_0^{j_1} (\sample{X}{i-1} , \truebeta)  - V_0^{j_1} (\sample{X}{i-1} , \beta)\bigr)   \nonumber \\[0.1cm]
 & \qquad \qquad \times \bigl( V_0^{j_2} (\sample{X}{i-1} , \truegamma)  - V_0^{j_2} (\sample{X}{i-1} , \estgamma)\bigr);   \nonumber \\[0.3cm]
  T_{1,3,2}(\beta, \hat{\gamma}_n, \sigma ) 
  & =  \tfrac{1}{n} \sum_{i=1}^n \sum_{j_1, j_2 = d_R + 1}^d    
  {\Sigma}_{1, j_1 j_2}^{-1} (\sample{X}{i-1};  \estgamma, \sigma ) 
\frac{\bigl( V_0^{j_1} (\sample{X}{i-1} ,  \truegamma) 
 - V_0^{j_1} (\sample{X}{i-1} , \estgamma)\bigr)}{\sqrt{\Delta}}  \nonumber \\[0.1cm]
& \qquad \qquad \times 
\frac{\bigl( V_0^{j_2} (\sample{X}{i-1} , \truegamma)  
- V_0^{j_2} (\sample{X}{i-1} , \estgamma) \bigr)}{\sqrt{\Delta}}; \nonumber \\[0.3cm]
T_{1,3,3} (\beta, \hat{\gamma}_n, \sigma ) 
& =  \tfrac{1}{n} \sum_{i=1}^n \mathcal{E} (\Delta, \sample{X}{i-1}, (\beta, \hat{\gamma}_n, \sigma )). 
\nonumber  
\end{align}  
for $\mathcal{E}  \in \mathscr{S}$. Due to Lemma \ref{ergodic_thm} and the consistency of $\estgamma$, we directly have
$ T_{1,3,k} (\beta, \hat{\gamma}_n, \sigma )  \rightarrow 0$, for $k = 1, 3$, in probability under $\mathbb{P}_{\theta^\dagger}$, uniformly in $(\beta, \sigma) \in \Theta_\beta \times \Theta_\sigma$. We also have $T_{1,3,2}(\beta, \hat{\gamma}_n, \sigma )  \rightarrow 0$, in probability under $\mathbb{P}_{\theta^\dagger}$, due to the result in the following Lemma.
%
\begin{lemma} \label{canonical_conv2}
Let $f : \mathbb{R}^d \times \Theta \to \mathbb{R}$ be a function satisfying the regularity assumption of Lemma \ref{ergodic_thm} and $d_R + 1 \leq j_1, j_2 \leq d$. Under conditions (\ref{assump:coeff}) and (\ref{assump:bd_deriv})--(\ref{assump:finite_moment}), it holds that, as $n \to \infty, \; \Delta \to 0, \; n \Delta \to \infty$, then 
%
\begin{align} 
& \tfrac{1}{n} \sum_{i=1}^n 
 f (\sample{X}{i-1} , \theta) 
 \frac{V_0^{j_1}(\sample{X}{i-1}, \truegamma) 
 - V_0^{j_1}(\sample{X}{i-1}, \truegamma + \lambda(\estgamma - \truegamma ) )}{\sqrt{\Delta}}  \probconv 0;  \label{eq:conv_1}  \\[0.1cm]  
& \tfrac{1}{n} \sum_{i=1}^n  \biggl\{  f (\sample{X}{i-1} , \theta) 
 \frac{V_0^{j_1}(\sample{X}{i-1}, \truegamma) 
 - V_0^{j_1}(\sample{X}{i-1}, \truegamma + \lambda(\estgamma - \truegamma ) )}{\sqrt{\Delta}} \nonumber \\
& \qquad \qquad \qquad \qquad  \times \frac{V_0^{j_2}(\sample{X}{i-1}, \truegamma) - V_0^{j_2}(\sample{X}{i-1}, \truegamma + \lambda(\estgamma - \truegamma ) )}{\sqrt{\Delta}} \biggr\} \probconv 0,  \label{eq:conv_2}
\end{align}
uniformly in $\theta \in \Theta$ and $\lambda \in [0,1]$. 
\end{lemma}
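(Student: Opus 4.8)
The plan is to reduce both displays to a product of a factor that vanishes in probability by the already-established rate for $\estgamma$ and an average that is bounded by the ergodic theorem. The key structural observation is that since $d_R + 1 \le j_1, j_2 \le d$, each $V_0^{j_1}$ is a smooth-coordinate drift component, so the increment $V_0^{j_1}(x, \truegamma) - V_0^{j_1}(x, \truegamma + \lambda(\estgamma - \truegamma))$ is exactly of the type controlled by the Lipschitz condition (\ref{assump:lipschitz}). First I would apply (\ref{assump:lipschitz}) to get, for $\lambda \in [0,1]$,
\begin{align*}
\bigl| V_0^{j_1}(\sample{X}{i-1}, \truegamma) - V_0^{j_1}(\sample{X}{i-1}, \truegamma + \lambda(\estgamma - \truegamma)) \bigr|
\le B^{j_1 - d_R}(\sample{X}{i-1}) \, |\estgamma - \truegamma|,
\end{align*}
where I used $|\lambda(\estgamma - \truegamma)| \le |\estgamma - \truegamma|$, so that the bound is uniform in $\lambda$, and $B^{j_1 - d_R} \in C_b^\infty(\mathbb{R}^d ; \mathbb{R})$.

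Next, for the first statement I would bound the absolute value of the sum in (\ref{eq:conv_1}) by
\begin{align*}
\frac{|\estgamma - \truegamma|}{\sqrt{\Delta}} \times \frac{1}{n} \sum_{i=1}^n
\bigl| f(\sample{X}{i-1}, \theta) \bigr| \, B^{j_1 - d_R}(\sample{X}{i-1}).
\end{align*}
The first factor converges to zero in probability by (\ref{eq:gamma_rate}), established in Step 1 of the proof of Theorem \ref{consistency}, and is independent of both $\theta$ and $\lambda$. The second factor converges by Lemma \ref{ergodic_thm} to $\int |f(x,\theta)| B^{j_1 - d_R}(x) \, \truedist(dx)$, uniformly in $\theta \in \Theta$; this integral is finite because $f$ is of polynomial growth, $B^{j_1 - d_R}$ is bounded, and all moments of $\truedist$ are finite under (\ref{assump:moments})--(\ref{assump:finite_moment}). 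Hence the product is $o_{\mathbb{P}}(1)$, uniformly in $\theta$ and $\lambda$, giving (\ref{eq:conv_1}).

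For the second statement (\ref{eq:conv_2}) I would apply the Lipschitz bound twice, producing the factor
\begin{align*}
\frac{|\estgamma - \truegamma|^2}{\Delta} = \Bigl( \tfrac{|\estgamma - \truegamma|}{\sqrt{\Delta}} \Bigr)^2 \xrightarrow{\mathbb{P}_{\trueparam}} 0,
\end{align*}
again by (\ref{eq:gamma_rate}) and the continuous mapping theorem, while the residual average $\tfrac{1}{n} \sum_i |f(\sample{X}{i-1}, \theta)| B^{j_1 - d_R}(\sample{X}{i-1}) B^{j_2 - d_R}(\sample{X}{i-1})$ is again $O_{\mathbb{P}}(1)$ uniformly in $\theta$ via Lemma \ref{ergodic_thm}. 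The product is $o_{\mathbb{P}}(1)$, uniformly in $\theta$ and $\lambda$, which completes the proof. The only point requiring care—rather than a genuine obstacle—is tracking the uniformity in $\theta$ and $\lambda$; this is handled cleanly because the $\theta$- and $\lambda$-dependence is isolated either in the ergodic average (where uniform convergence is furnished by Lemma \ref{ergodic_thm}) or inside the Lipschitz bound (which is uniform in $\lambda$ by the contraction $|\lambda \cdot| \le |\cdot|$), while the vanishing factor $|\estgamma - \truegamma|/\sqrt{\Delta}$ carries no $\theta$ or $\lambda$ dependence at all.
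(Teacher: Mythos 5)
Your proposal is correct and follows essentially the same route as the paper: apply the Lipschitz condition (\ref{assump:lipschitz}) to bound the increments uniformly in $\lambda$, factor out $|\estgamma - \truegamma|/\sqrt{\Delta}$ (squared, for the second display), invoke the rate (\ref{eq:gamma_rate}) to kill that factor, and control the remaining ergodic average via Lemma \ref{ergodic_thm}. The paper proves only (\ref{eq:conv_2}) explicitly and notes that (\ref{eq:conv_1}) follows similarly, exactly as in your argument.
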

\noindent We will give the proof of Lemma \ref{canonical_conv2} in Section \ref{pf_canonical_conv2}. 
\\

\noindent
{\bf Proof of (\ref{T3_conv}).}
From (\ref{eq:hermite_second}), we have $$\textstyle{ \mathcal{H}_{(i_1, i_2)} (\Delta, \sample{X}{i-1}, \sample{X}{i}; \theta) = \sum_{l = 1}^5 \eta_{(i_1, i_2)}^{(l)} (\Delta, \sample{X}{i-1}, \sample{X}{i}; \theta)},$$ where 
%
\begin{align}
&  \eta_{(i_1, i_2)}^{(1)} (\Delta, \sample{X}{i-1}, \sample{X}{i}; \theta) 
 = \sum_{j_1, j_2 =1}^d 
 {\Sigma}^{-1}_{1, i_1 j_1} (\sample{X}{i-1}; \gamma, \sigma )
 {\Sigma}^{-1}_{1, i_2 j_2} (\sample{X}{i-1};\gamma, \sigma ) 
 \nu_i^{j_1} (\Delta; \trueparam)
 \nu_i^{j_2} (\Delta; \trueparam);  \nonumber  \\[0.2cm] 
& \eta_{(i_1, i_2)}^{(2)} (\Delta, \sample{X}{i-1}, \sample{X}{i}; \theta) 
 = \sum_{j_1, j_2 =1}^d 
{\Sigma}^{-1}_{1, i_1 j_1} (\sample{X}{i-1}; \gamma, \sigma )
{\Sigma}^{-1}_{1, i_2 j_2} (\sample{X}{i-1}; \gamma, \sigma )  \nu_i^{j_1} (\Delta; \trueparam)  
\cdot \frac{\mu_{i-1}^{j_2} (\Delta; \trueparam) - \mu_{i-1}^{j_2} (\Delta; \theta )}{\Delta^{\frac{1}{2} + \mathbf{1}_{\{j_2 \geq d_R + 1\}}}};  \nonumber \\[0.2cm] 
%
 & \eta_{(i_1, i_2)}^{(3)} (\Delta, \sample{X}{i-1}, \sample{X}{i}; \theta) 
 = \sum_{j_1, j_2 =1}^d 
 {\Sigma}^{-1}_{1, i_1 j_1} (\sample{X}{i-1}; \gamma, \sigma )
 {\Sigma}^{-1}_{1, i_2 j_2} (\sample{X}{i-1}; \gamma, \sigma )  \nu_i^{j_2} (\Delta; \trueparam) 
\cdot \frac{\mu_{i-1}^{j_1} (\Delta; \trueparam) - \mu_{i-1}^{j_1} (\Delta; \theta )}{\Delta^{\frac{1}{2} + \mathbf{1}_{\{j_1 \geq d_R + 1\}}}};  \nonumber \\[0.1cm] 
%
 & \eta_{(i_1, i_2)}^{(4)} (\Delta, \sample{X}{i-1}, \sample{X}{i}; \theta) 
 = \sum_{j_1, j_2 =1}^d 
 {\Sigma}^{-1}_{1, i_1 j_1} (\sample{X}{i-1}; \gamma, \sigma)
 {\Sigma}^{-1}_{1, i_2 j_2} (\sample{X}{i-1}; \gamma, \sigma)  \nonumber \\ 
 & \qquad  \qquad  \qquad  \qquad  \qquad  \qquad  \qquad  \qquad  
  \times \frac{\mu_{i-1}^{j_1} (\Delta; \trueparam) - \mu_{i-1}^{j_1} (\Delta; \theta  )}{\Delta^{\frac{1}{2} + \mathbf{1}_{\{j_1 \geq d_R + 1\}}}}
 \frac{\mu_{i-1}^{j_2} (\Delta; \trueparam) - \mu_{i-1}^{j_2} (\Delta; \theta  )}{\Delta^{\frac{1}{2} + \mathbf{1}_{\{j_2 \geq d_R + 1\}}}};  
 \nonumber \\[0.3cm] 
& \eta_{(i_1, i_2)}^{(5)} (\Delta, \sample{X}{i-1}, \sample{X}{i}; \theta)  
 = - {\Sigma}^{-1}_{1, i_1 i_2} (\sample{X}{i-1}; \gamma, \sigma ). \nonumber  
\end{align}
Lemmas \ref{ergodic_thm}, \ref{canonical_conv}, \ref{canonical_conv2} and the consistency of $\estgamma$ give that, as $n \to \infty, \, \Delta \to 0, \, n \Delta \to \infty$,  
\begin{align*}
   \tfrac{\Delta}{n} \sum_{l = 1}^5  
    \eta_{(i_1, i_2)}^{(l)} \Bigl( 
    \Delta, \sample{X}{i-1}, \sample{X}{i}; \beta, \hat{\gamma}_n, \sigma 
    \Bigr) \probconv 0, 
\end{align*}
uniformly in $(\beta, \sigma) \in \Theta_\beta \times \Theta_\sigma$, thus also $T_{3} (\beta, \hat{\gamma}_n, \sigma ) \probconv 0$. The proof is now complete.  

\subsubsection{Proof of Lemma \ref{lemma:conv_contrast_beta}} \label{sec:conv_contrast_beta} 
\noindent We write $\textstyle{Q (\theta) = \sum_{l=1}^4 Q_l (\theta)}$, where
\begin{align*}
& Q_1 \bigl( \theta \bigr) 
 =  \tfrac{2}{n \Delta} \sum_{i=1}^n  
 \bigl( X_{t_i} - \mu_{\Delta} (\sample{X}{i-1}; \trueparam )  \bigr)^{\top} 
 \Sigma^{-1} (\Delta, \sample{X}{i-1}; \gamma, \sigma) \\
  & \qquad \qquad  \qquad \qquad     \times\bigl( \mu_{\Delta} (\sample{X}{i-1}; \truebeta, \gamma, \sigma)  - \mu_{\Delta} (\sample{X}{i-1}; \theta) \bigr);   \\[0.3cm]
& Q_2 \bigl( \theta \bigr) 
 =  \tfrac{1}{n \Delta} \sum_{i=1}^n \bigl( 
 \mu_{\Delta} (\sample{X}{i-1}; \truebeta, \gamma, \sigma)  
   - \mu_{\Delta} (\sample{X}{i-1}; \theta) \bigr)^{\top}  
   \Sigma^{-1} (\Delta, \sample{X}{i-1} ; \gamma, \sigma )   \\[0.1cm]
 & \qquad \qquad  \qquad \qquad     \times  \bigl( 
 \mu_{\Delta} (\sample{X}{i-1}; \truebeta, \gamma,  \sigma)  - 
  \mu_{\Delta} (\sample{X}{i-1}; \theta ) \bigr); \nonumber \\[0.3cm]
& Q_3 \bigl( \theta \bigr) 
  = \tfrac{2}{n \Delta} \sum_{i=1}^n \bigl( \mu_{\Delta} (\sample{X}{i-1}; \truebeta, \gamma, \sigma) 
     - \mu_{\Delta} (\sample{X}{i-1}; \theta) \bigr)^{\top}  
     \Sigma^{-1} (\Delta, \sample{X}{i-1}; \gamma, \sigma)  \\[0.1cm]
 &\qquad \qquad  \qquad \qquad  \times  \bigl( \mu_{\Delta} (\sample{X}{i-1}; \trueparam)  - \mu_{\Delta} (\sample{X}{i-1}; \truebeta, \gamma, \sigma) \bigr);   \\[0.3cm]
& Q_4 \bigl( \theta \bigr)  
 = - \tfrac{2}{n} \sum_{i=1}^n  
 \Bigl\{ \Phi_2 (\Delta, \sample{X}{i-1}, \sample{X}{i};  \theta ) - \Phi_2 (\Delta, \sample{X}{i-1}, \sample{X}{i};  \truebeta, \gamma, \sigma)  \Bigr\}. \nonumber
\end{align*} 
We will show that, as $n \to \infty,$ $\Delta \to 0$, $n \Delta \to \infty$, 
\begin{align}
& Q_1 \bigl( \beta, \hat{\gamma}_n , \sigma \bigr)
 \xrightarrow{\mathbb{P}_{\trueparam}} 0;  \label{Q1_conv} \\[0.3cm]
&  Q_2 \bigl( \beta, \hat{\gamma}_n , \sigma\bigr)  \xrightarrow{\mathbb{P}_{\trueparam}} 
\int \bigl( e  (x; \truebeta, \truegamma, \sigma )  
  -  e (x; \beta, \truegamma, \sigma ) \bigr)^{\top} 
 {\Sigma}^{-1}_1 (x; \truegamma, \sigma ) 
 \nonumber \\[0.1cm] 
& \qquad \qquad \qquad \qquad \qquad \qquad  \times
\bigl( e (x; \truebeta, \truegamma, \sigma ) 
-  e (x; \beta, \truegamma, \sigma ) \bigr) \nu_{\trueparam} (dx);  \label{Q2_conv}  \\[0.3cm]
& Q_3 \bigl( \beta, \hat{\gamma}_n , \sigma \bigr)  \probconv
2  \int \bigl( e (x; \truebeta, \truegamma, \sigma) 
 -  e (x; \beta, \truegamma, \sigma)    \bigr)^{\top} 
 {\Sigma}^{-1}_1 (x; \truegamma, \sigma)  \nonumber \\
& \qquad \qquad \qquad \qquad \qquad \qquad  \times 
\bigl( e (x; \trueparam) - e (x; \truebeta, \truegamma, \sigma) \bigr) \truedist (dx);   \label{Q3_conv}  \\[0.3cm]
& Q_4 \bigl( \beta, \hat{\gamma}_n , \sigma \bigr)  
\probconv
2 \sum_{\substack{1 \leq i_1, i_2, j_1, j_2  \leq d }}  \int 
\Bigl\{ 
  {G}_{i_1i_2}(x; \truebeta, \truegamma, \sigma ) 
- {G}_{i_1i_2}(x; \beta, \truegamma, \sigma ) 
\Bigr\} \nonumber \\
& \quad  \times  \Bigl\{ 
  {\Sigma}^{-1}_{1, i_1 j_1} (x; \truegamma, \sigma) 
 {\Sigma}^{-1}_{1, i_2 j_2} (x; \truegamma, \sigma ) 
 {\Sigma}_{1, j_1 j_2}  (x; \truegamma, \truesigma) 
 - {\Sigma}^{-1}_{1, i_1 i_2} (x; \truegamma, \sigma)  \Bigr\} \truedist (dx), \label{Q4_conv}
\end{align} 
uniformly in $(\beta, \sigma) \in \Theta_\beta \times \Theta_\sigma$, where for each $\theta \in \Theta$, $e (\cdot ; \theta) : \mathbb{R}^d \to \mathbb{R}^d$ is given by 
\begin{align*}
 e (x; \theta)  = \big[\,  
     V_{R,0}  (x,  \beta)^{\top},\,  
     \tfrac{1}{2} \hat{V}_0 V_{S, 0} (x , \theta)^{\top}\,\big]^{\top},\quad  x \in \mathbb{R}^d;  
\end{align*}
recall here that the $d \times d$ matrix $G = G (x; \theta)$ is defined in (\ref{eq:G_matrix}). Upon showing the above convergences, the proof of Lemma 5.4 will be complete since we have $Q_3  (\beta, \hat{\gamma}_n , \hat{\sigma}_n)  \rightarrow 0$, $Q_4  (\beta, \hat{\gamma}_n , \hat{\sigma}_n ) \rightarrow 0$, in probability under $\mathbb{P}_{\theta^\dagger}$, due to the consistency of $\hat{\sigma}_n$ and 
\begin{align*}
  \sum_{j_1, j_2 = 1}^d  {\Sigma}^{-1}_{1, i_1 j_1} (x; \gamma, \sigma) 
 {\Sigma}^{-1}_{1, i_2 j_2} (x; \gamma, \sigma ) 
 {\Sigma}_{1, j_1 j_2}  (x; \gamma, \sigma) 
 - {\Sigma}^{-1}_{1, i_1 i_2} (x; \gamma, \sigma) 
 = 0,  
\end{align*}
for any $x \in \mathbb{R}^d, \;  (\gamma, \sigma) \in \Theta_\gamma \times \Theta_\sigma$. 
\\ 

\noindent 
{\bf Proof of (\ref{Q1_conv}) and (\ref{Q2_conv}).}
From Lemma \ref{canonical_conv}, we immediately have that, as $n \to \infty$, $\Delta \to 0$ and $n \Delta \to \infty$, $Q_1  (\beta, \hat{\gamma}_n , \sigma) \rightarrow 0$, in probability under $\mathbb{P}_{\theta^\dagger}$. 
%
%
Thus, Lemma \ref{ergodic_thm} and the consistency of $\hat{\gamma}_n$ yield the convergence (\ref{Q2_conv}). 
\\

\noindent
{\bf Proof of (\ref{Q3_conv}).}
We have $Q_3  (\beta, \hat{\gamma}_n , \sigma) = Q_{3,1}  (\beta, \hat{\gamma}_n , \sigma)
 + Q_{3,2}  (\beta, \hat{\gamma}_n , \sigma)$, where 
\begin{align*}
& Q_{3,1}  (\beta, \hat{\gamma}_n , \sigma) 
= \tfrac{2}{n \sqrt\Delta} 
 \sum_{i=1}^n  \sum_{\substack{1 \leq j_1 \leq d \\ d_R + 1 \leq j_2 \leq d}} {\Sigma}^{-1}_{1, j_1 j_2} (\sample{X}{i-1}; \hat{\gamma}_n, \sigma) \nonumber \\
&  \quad \times  \bigl( e_{j_1} (\sample{X}{i-1}; \truebeta, \estgamma, \sigma)
- e_{j_1} (\sample{X}{i-1}; \beta, \estgamma, \sigma)    \bigr) 
\frac{\bigl( V_0^{j_2} (\sample{X}{i-1}, \truegamma) - V_0^{j_2} (\sample{X}{i-1} ,  \estgamma ) \bigr)}{\sqrt{\Delta}};  \\[0.3cm] 
& Q_{3,2} (\beta, \hat{\gamma}_n , \sigma)   
=  \tfrac{2}{n} \sum_{i=1}^n  \sum_{j_1, j_2 = 1}^d 
{\Sigma}^{-1}_{1, j_1 j_2} (\sample{X}{i-1}; \hat{\gamma}_n, \sigma )  \nonumber \\
& \quad \times  \bigl( e_{j_1} (\sample{X}{i-1}; \truebeta, \estgamma, \sigma) - e_{j_1} (\sample{X}{i-1}; \beta, \estgamma, \sigma)    \bigr)
\bigl( e_{j_2} (\sample{X}{i-1}; \trueparam) 
  - e_{j_2} (\sample{X}{i-1}; \truebeta, \estgamma, \sigma) \bigr).
  \\[-0.4cm]
\end{align*} 
Notice that
\begin{align*} 
e(\sample{X}{i-1}; \truebeta, \estgamma, \sigma)
- e(\sample{X}{i-1}; \beta, \estgamma, \sigma) = 
\begin{bmatrix}
V_{R, 0} (\sample{X}{i-1}, \truebeta ) 
- V_{R, 0} (\sample{X}{i-1}, \beta )  \\[0.2cm]  
\tfrac{1}{2} \partial_{x_R}^\top  V_{S,0} (x, \hat{\gamma}_n ) 
\bigl( V_{R, 0} (\sample{X}{i-1}, \truebeta ) 
- V_{R, 0} (\sample{X}{i-1}, \beta ) \bigr) 
\end{bmatrix}.  
\end{align*}
Hence, $Q_{3,1} (\beta, \hat{\gamma}_n , \sigma)= 0$ from Lemma \ref{lemma:aux_matrix}. 
Furthermore, from Lemma \ref{ergodic_thm}, we directly have 
\begin{align*}
& Q_{3,2}  (\beta, \hat{\gamma}_n , \sigma)  
\probconv
2  \int \bigl( e (x; \truebeta, \truegamma, \sigma) -  e (x; \beta, \truegamma, \sigma) \bigr)^{\top} 
{\Sigma}^{-1}_1 (x; \truegamma, \sigma)  \nonumber \\
& \qquad \qquad \qquad \qquad \qquad \qquad  \times \bigl( e (x; \trueparam) - e (x; \truebeta, \truegamma, \sigma) \bigr) \truedist (dx).  
\end{align*}
We have now obtained the convergence (\ref{Q3_conv}). 
\\

\noindent
{\bf Proof of (\ref{Q4_conv}).}
Recall $\textstyle{ \mathcal{H}_{(i_1, i_2)} (\Delta, \sample{X}{i-1}, \sample{X}{i}; \theta) = \sum_{l = 1}^5 \eta_{(i_1, i_2)}^{(l)} (\Delta, \sample{X}{i-1}, \sample{X}{i}; \theta)}$, see Section \ref{sec:conv_contrast_sigma}. We have 
$\textstyle{ -\tfrac{2}{n} \sum_{i = 1}^n \Phi_2 (\Delta, \sample{X}{i-1}, \sample{X}{i} ;  \beta, \estgamma, \sigma) 
= \sum_{k=1}^5 Q_{4,k}(\beta, \estgamma, \sigma)}
$, 
where 
\begin{align*}
Q_{4,k}(\beta, \estgamma, \sigma)  
 = - \tfrac{2}{n} \sum_{i=1}^n  \sum_{i_1, i_2 = 1}^d  {G}_{i_1i_2}(\sample{X}{i-1}; \beta, \hat{\gamma}_n , \sigma ) 
 \eta_{(i_1, i_2)}^{(k)} (\Delta, \sample{X}{i-1}, \sample{X}{i}; \beta, \estgamma, \sigma ). 
\end{align*}
%
We will check the convergence of each term $Q_{4,k} (\beta, \estgamma, \sigma)$. From Lemmas \ref{ergodic_thm}, \ref{canonical_conv} and the consistency of $\hat{\gamma}_n$, we obtain 
\begin{align*}
&  Q_{4,1} (\beta, \estgamma, \sigma)  \probconv  
-2  \sum_{i_1, i_2, j_1, j_2 = 1}^d \int 
{G}_{i_1i_2}(x; \beta, \truegamma, \sigma ) 
{\Sigma}^{-1}_{1, i_1 j_1} (x; \truegamma, \sigma)  
{\Sigma}^{-1}_{1, i_2 j_2} (x; \truegamma, \sigma ) 
{\Sigma}_{1, j_1 j_2}  (x; \truegamma, \truesigma) \truedist (dx); \nonumber \\[0.3cm] 
& Q_{4,2}  (\beta, \estgamma, \sigma)  
 \probconv  0,  \ \ \ 
 Q_{4,3}  (\beta, \estgamma, \sigma)  
  \probconv  0, \ \ \ 
%
Q_{4,5}  ((\beta, \estgamma, \sigma))   
  \probconv  
 2  \int 
 \mathrm{tr}
 \left( 
   {G} (x; \beta, \truegamma, \sigma ) 
   {\Sigma}^{-1}_1  (x; \truegamma, \sigma) \right) 
  \truedist (dx). 
\end{align*}
For term $Q_{4,4}  (\beta, \estgamma, \sigma)$, we have 
\begin{align*}
Q_{4,4}  (\beta, \estgamma, \sigma)   
& = - \tfrac{2}{n} \sum_{i=1}^n  
 \sum_{\substack{1 \le i_1, i_2 \le d \\ d_R + 1 \le j_1, j_2 \le d}}    
 \Biggl\{  
 {G}_{i_1i_2}(\sample{X}{i-1}; \beta, \hat{\gamma}_n , \sigma) 
 {\Sigma}^{-1}_{1, i_1 j_1} (\sample{X}{i-1}; \hat{\gamma}_n, \sigma)  
 {\Sigma}^{-1}_{1, i_2 j_2} (\sample{X}{i-1}; \hat{\gamma}_n, \sigma)   
 \nonumber \\
& \qquad  \times  \f{V_0^{j_1} (\sample{X}{i-1} , \truegamma) - V_0^{j_1} (\sample{X}{i-1} ,  \estgamma)  }{\sqrt{\Delta}}
\f{V_0^{j_2} (\sample{X}{i-1} , \truegamma) - 
 V_0^{j_2} (\sample{X}{i-1} ,  \estgamma) }{\sqrt{\Delta}}  \Biggr\} 
\nonumber \\[0.2cm] 
& \quad - \tfrac{4}{n} \sum_{i=1}^n  
\sum_{\substack{1 \le i_1, i_2 \le d, \\ 1 \leq j_1 \leq d_R \\ d_R + 1 \leq j_2 \leq d}} 
 \Biggl\{ 
{G}_{i_1i_2}(\sample{X}{i-1}; \beta, \hat{\gamma}_n , \sigma) 
 {\Sigma}^{-1}_{1, i_1 j_1} (\sample{X}{i-1},  \hat{\gamma}_n, \sigma) 
 {\Sigma}^{-1}_{1, i_2 j_2} (\sample{X}{i-1},  \hat{\gamma}_n, \sigma)  
 \nonumber \\
& \qquad  \times 
 \bigl( {V_0^{j_1} (\sample{X}{i-1}, \truebeta) - V_0^{j_1} (\sample{X}{i-1},  \beta)  } \bigr)
\bigl( V_0^{j_2} (\sample{X}{i-1}, \truegamma) - V_0^{j_2} (\sample{X}{i-1},  \estgamma) \bigr) \Biggr\} 
\nonumber \\[0.2cm]
& \quad - \tfrac{2}{n}  \sum_{i=1}^n   
\sum_{\substack{1 \le i_1, i_2 \le d \\ d_R + 1 \le j_1, j_2 \le d}}   
 \Biggl\{ {G}_{i_1, i_2}(\sample{X}{i-1}; \beta, \hat{\gamma}_n , \sigma)
 {\Sigma}^{-1}_{1, i_1 j_1} (\sample{X}{i-1}; \hat{\gamma}_n, \sigma) 
 {\Sigma}^{-1}_{1,i_2 j_2} (\sample{X}{i-1}; \hat{\gamma}_n, \sigma) \nonumber \\
& \qquad  \times 
\bigl( {\hat{V}_0 V_0^{j_1} (\sample{X}{i-1}, \trueparam) 
 - \hat{V}_0 V_0^{j_1} (\sample{X}{i-1},  (\beta, \estgamma, \sigma))  } \bigr)
\bigl( V_0^{j_2} (\sample{X}{i-1}, \truegamma) 
- V_0^{j_2} (\sample{X}{i-1}, \estgamma) \bigr) \Biggr\}   \nonumber \\[0.2cm] 
& \quad +  \tfrac{1}{n} \sum_{i=1}^n  \mathcal{E} (\Delta, \sample{X}{i-1}, (\beta, \estgamma, \sigma)).  \nonumber 
\end{align*} 
for $\mathcal{E}  \in \mathscr{S}$.
Due to Lemmas \ref{ergodic_thm}, \ref{canonical_conv2} and the consistency of $\hat{\gamma}_n$, we immediately have that, as $n \to \infty, \; \Delta \to 0, \; n \Delta \to \infty$, $Q_{4,4}  (\beta, \estgamma, \sigma)
 \rightarrow  0$, in probability under $\mathbb{P}_{\theta^\dagger}$, uniformly in $(\beta, \sigma) \in \Theta_\beta \times \Theta_\sigma$. Finally, as $n \to \infty$, $\Delta \to 0$, $n \Delta \to \infty$,
\begin{align}
& -\tfrac{2}{n} \sum_{i = 1}^n \Phi_2 (\Delta, \sample{X}{i-1}, \sample{X}{i} ;  \beta, \estgamma, \sigma)  
\probconv \nonumber \\  
&-2 \sum_{i_1, i_2, j_1, j_2 = 1}^d  \int 
{G}_{i_1 i_2}(x; \beta, \truegamma, \sigma )  \label{conv_weight} 
\\[0.2cm]
&  \qquad\qquad \times 
 \Bigl\{
 {\Sigma}^{-1}_{1,i_1 j_1} (x; \truegamma, \sigma) 
 {\Sigma}^{-1}_{1,i_2 j_2} (x; \truegamma, \sigma ) 
 {\Sigma}_{1, j_1 j_2}  (x; \truegamma, \truesigma) 
 - {\Sigma}^{-1}_{1, i_1 i_2} (x; \truegamma, \sigma)   
 \Bigr\} \,  \truedist (dx),   \nonumber
\end{align}
uniformly in $(\beta, \sigma) \in \Theta_{\beta} \times \Theta_{\sigma}$. From (\ref{conv_weight}), we obtain the convergence (\ref{Q4_conv}). 


\subsubsection{Proof of Lemma \ref{canonical_conv2}} \label{pf_canonical_conv2}
\noindent  
We will prove only convergence (\ref{eq:conv_2}) since (\ref{eq:conv_1}) is deduced in a similar manner. Define for $\theta \in \Theta$ and $d_R + 1 \le j_1, j_2 \le d$,
\begin{align*}
 \zeta_{\lambda} (\theta) & 
 :=   \tfrac{1}{n} \sum_{i=1}^n  \biggl\{ f (\sample{X}{i-1}, \theta)
 \frac{V_0^{j_1}(\sample{X}{i-1} ,  \truegamma) - V_0^{j_1}(\sample{X}{i-1} , \truegamma + \lambda(\estgamma - \truegamma ) )}{\sqrt{\Delta}} \nonumber \\
& \qquad \qquad \qquad \qquad  \times \frac{V_0^{j_2}(\sample{X}{i-1} , \truegamma) - V_0^{j_2}(\sample{X}{i-1} , \truegamma + \lambda(\estgamma - \truegamma ) )}{\sqrt{\Delta}} \biggr\} .  
\end{align*}
It holds that, as $n \to \infty, \; \Delta \to 0, \; n \Delta \to \infty$, then 
\begin{align*}
  | \zeta_{\lambda} (\theta)| 
  \leq  {\lambda^2} \left| \tfrac{1}{\sqrt\Delta} (\estgamma - \truegamma) \right|^2 
   \tfrac{1}{n}  \sum_{i=1}^n | f (\sample{X}{i-1}, \theta) 
   B^{j_1} (\sample{X}{i-1}) B^{j_2} (\sample{X}{i-1})|
   \probconv 0,
\end{align*}
uniformly in $\theta \in \Theta, \; \lambda \in (0,1]$, where we used the condition (\ref{assump:lipschitz}), Lemma \ref{ergodic_thm} and the convergence (\ref{eq:gamma_rate}) in the main text. 
Now, we obtain $\zeta_\lambda (\theta) \rightarrow 0$, in probability under $\mathbb{P}_{\theta^\dagger}$, uniformly in $\theta \in \Theta$ and the proof is complete.  
\subsection{Proof of Technical Results for Theorem \ref{thm:asymp_norm}} \label{sec:asymp_norm}
\subsubsection{Proof of Lemma \ref{lemma:conv_normal}}  \label{appendix:pf_conv_normal} 
\noindent From the definition of the contrast $\ell_{n, \Delta}(\theta), \; \theta \in \Theta$, we have 
\begin{align*}
  - M_{n, \Delta}^{jj} \partial_{\theta_j} \ell_{n, \Delta} (\theta)  
  =  - M_{n, \Delta}^{jj} \sum_{i=1}^n L_{i}^j (\theta),   
\end{align*}
for $1 \le j  \le  d_{\theta}$, where $L_{i}^j (\theta)$ is explicitly given later. Let $1 \leq j_1, j_2, j_3, j_4 \leq d_{\theta}$. Then, we will show the following results:
\begin{description}
\item[a.] As $n \to \infty$, $\Delta \to 0$, $n \Delta \to \infty$ and $n \Delta^3 \to 0$, 
\begin{align}
 &   M_{n, \Delta}^{j_1 j_1} \sum_{i=1}^n \trueE [L_{i}^{j_1} (\trueparam) | \mathcal{F}_{t_{i-1}}] 
 \xrightarrow{\mathbb{P}_{\trueparam}} 0.  \label{eq:normal_conv_1}  
\end{align}
\item[b.] As $n \to \infty$, $\Delta \to 0$ and $n \Delta \to \infty$,
\end{description}
\begin{align}
 & M_{n, \Delta}^{j_1 j_1} M_{n, \Delta}^{j_2 j_2} 
 \sum_{i=1}^n \trueE [L_{i}^{j_1} (\trueparam) L_{i}^{j_2} (\trueparam) | \mathcal{F}_{t_{i-1}}] 
 \xrightarrow{\mathbb{P}_{\trueparam}} 4 I_{j_1 j_2} (\trueparam);   \label{eq:normal_conv_3} \\    
 &  M_{n, \Delta}^{j_1 j_1} M_{n, \Delta}^{j_2 j_2} M_{n, \Delta}^{j_3 j_3} M_{n, \Delta}^{j_4 j_4} \sum_{i=1}^n \trueE [L_{i}^{j_1} (\trueparam) L_{i}^{j_2} (\trueparam) L_{i}^{j_3} (\trueparam) 
 L_{i}^{j_4} (\trueparam) | \mathcal{F}_{t_{i-1}}] 
 \xrightarrow{\mathbb{P}_{\trueparam}} 0.   \label{eq:normal_conv_4}    
\end{align} 
Upon showing the above results, the proof of Lemma \ref{lemma:conv_normal} is complete due to Theorems 3.2 and 3.4 in \cite{hall:14}. We consider the partial derivatives of $$\ell_{n, \Delta} (\trueparam) = \ell_{n, \Delta}^{(1)} (\trueparam) +  \ell_{n, \Delta}^{(2)} (\trueparam),$$ where 
\begin{align} 
 & \ell_{n, \Delta}^{(1)} (\theta) 
 =   \sum_{i=1}^n  \nu_{i}(\Delta; \theta)^{\top} 
{\Sigma}^{-1}_1  (X_{t_{i-1}}; \gamma, \sigma) \nu_{i}(\Delta; \theta)  
+ \sum_{i=1}^n \log | {\Sigma}_1 (\sample{X}{i-1}; \gamma, \sigma) |; \label{eq:l1}   \\
 & \ell_{n, \Delta}^{(2)} (\theta)
=  - 2  \Delta \sum_{i=1}^n \Phi_2 (\Delta, X_{t_{i-1}}, X_{t_i}; \theta). \label{eq:l2}
\end{align} 
We have for $1 \leq j \leq d_{\theta}$: 
\begin{align*}
\partial_{\theta_j} \ell_{n, \Delta} (\theta) |_{\theta = \trueparam} 
 &  =    \Bigl\{ \partial_{\theta_j} \ell_{n, \Delta}^{(1)} (\theta) + \partial_{\theta_j} \ell_{n, \Delta}^{(2)} (\theta) \Bigr\} \Big|_{\theta = \trueparam}
   = \sum_{i=1}^n \bigl\{ L_i^{j, (1)}(\trueparam) + L_i^{j, (2)}(\trueparam) \bigr\}, 
\end{align*} 
where we have set, for $1 \leq j_1 \leq d_\beta, \; d_\beta + 1 \leq j_2 \leq d_\beta + d_{\gamma}, \; d_\beta + d_{\gamma}  + 1 \leq j_3 \leq d_{\theta}$, 
\begin{align}
& L_i^{j_1, (1)}(\trueparam) 
  =  - 2 \sqrt{\Delta}
   \bigl(
   \partial_{\theta_{j_1}} \mu_1  (\sample{X}{i-1}; \beta) 
   \bigr)^{\top}
   {\Sigma}^{-1}_1 (\sample{X}{i-1} ; \gamma, \sigma) 
    \nu_i (\Delta; \theta) \bigr|_{\theta = \trueparam};  
    \label{L_1_1} \\[0.3cm]
&  L_i^{j_2, (1)} (\trueparam)
 = \biggl\{   - 2 
 \sum_{\substack{d_R + 1 \le i_1 \le d \\ 1 \le i_2 \le d}} {\Sigma}^{-1}_{1, i_1 i_2} (\sample{X}{i-1} ; \gamma, \sigma) 
 \frac{\partial_{\theta_{j_2}} V_0^{i_1} ( \sample{X}{i-1}, \gamma)}{\sqrt{\Delta}} \, 
 \nu_i^{i_2} ( \Delta; \theta )  \nonumber   \\[0.1cm]
& \qquad  - \sqrt{\Delta} 
 \sum_{\substack{d_R + 1 \le i_1 \le d \\ 1 \le i_2 \le d}} 
 {\Sigma}^{-1}_{1, i_1 i_2} (\sample{X}{i-1} ; \gamma, \sigma)  
 \partial_{\theta_{j_2}} \hat{V}_0 V_0^{i_1} (\sample{X}{i-1}, \theta ) \, 
 \nu_i^{i_2} (\Delta; \theta )   \nonumber   \\[0.1cm]
& \qquad  + 
  \nu_i (\Delta; \theta )^\top 
  \partial_{{\theta}_{j_2}} {\Sigma}^{-1}_1  (\sample{X}{i-1}; \gamma, \sigma)  \,  
  \nu_i (\Delta; \theta ) 
 +  \partial_{\theta_{j_2}} \log |{\Sigma}_1 (\sample{X}{i-1}; \gamma, \sigma)|  \biggr\} 
 \Bigr|_{\theta = \trueparam } ;  \label{L_1_2} \\[0.3cm]    
&  L_i^{j_3, (1)} (\trueparam)  
 = \biggl\{  
 \nu_i (\Delta; \theta)^{\top} \partial_{\theta_{j_3}} {\Sigma}^{-1}_1  (\sample{X}{i-1}; \gamma, \sigma)  
    \nu_i (\Delta; \theta) 
 +  \partial_{\theta_{j_3}}  \log | {\Sigma}_1 (\sample{X}{i-1}; \gamma, \sigma)|
   \nonumber \\[0.1cm]
& \quad \quad   
 - 2 \sqrt{\Delta}   \, 
    \partial_{\theta_{j_3}} \mu_{1} (\sample{X}{i-1}; \theta)^\top
   {\Sigma}^{-1}_1 (\sample{X}{i-1} ; \gamma, \sigma ) 
   \, \nu_i (\Delta; \theta) 
 \biggr\}  \Bigr|_{\theta = \trueparam}, \label{L_1_3}
\end{align} 
and
\begin{align} \label{L_2}
L_i^{j, (2)}(\trueparam)
= - 2 \Delta \times \partial_{\theta_{j}} \Phi_2 (\Delta, X_{t_{i-1}}, X_{t_{i}} ; \theta) \bigr|_{\theta = \trueparam}, 
\ \ 1 \leq j \leq d_{\theta}.  
\end{align} 

\noindent
{\bf Proof of (\ref{eq:normal_conv_1}).} Using Lemmas \ref{weak_Taylor_expansion}, \ref{hermite_expectation}, we have for $1 \leq j_1 \leq d_{\beta}, \; d_{\beta} + 1 \leq j_2  \leq d_{\beta} + d_{\gamma}$, 
\begin{align*}
& \trueE [ L_i^{j_1, (1)}(\trueparam)  | \filtration{i-1}] 
 =  \mathcal{E}^{j_1, (1)} (\Delta^2, \sample{X}{i-1}, \trueparam); \\ 
 &\trueE [ L_i^{j_1, (2)}(\trueparam)  | \filtration{i-1}] 
 = \mathcal{E}^{j_1, (2)} (\Delta^3, \sample{X}{i-1}, \trueparam); \\
& \trueE [ L_i^{j_2, (1)}(\trueparam)  | \filtration{i-1}] 
%
  =   \mathcal{E}^{{j_2, (1)}} (\Delta, \sample{X}{i-1},  \trueparam); \\   
 &\trueE [ L_i^{j_2, (2)}(\trueparam) | \filtration{i-1}] 
 =  \mathcal{E}^{{j_2, (2)}} ( \Delta^2, \sample{X}{i-1},  \trueparam), 
\end{align*} 
for some $\mathcal{E}^{{j_{i_1}, (i_2)}} \in \mathscr{S}$,  $i_1, i_2 = 1,2$. Notice that in the above computation, we used (\ref{eq:mean_nu}) and
\begin{align} 
 \partial_{\theta_{j_2}} \log |{\Sigma}_1 (\sample{X}{i-1}; \gamma, \sigma)|
 & = 
 \mathrm{tr}  
  \Bigl( {\Sigma}^{-1}_1   (\sample{X}{i-1}; \gamma , \sigma) 
   \partial_{\theta_{j_2}} {\Sigma}_1   (\sample{X}{i-1}; \gamma, \sigma)  \Bigr) \nonumber  \\ 
 & = - \mathrm{tr} 
  \Bigl(  \partial_{\theta_{j_2}} 
  {\Sigma}^{-1}_1  (X_{t_{i-1}}; \gamma, \sigma) 
  {\Sigma}_1   (\sample{X}{i-1}; \gamma, \sigma) \Bigr).  \label{eq:deriv_logdet}
%
\end{align}
%
It follows from Lemmas \ref{ergodic_thm}, \ref{canonical_conv} that
as $n \to \infty$, $\Delta \to 0$, $n \Delta \to \infty$ and $n \Delta^3 \to 0$,  
\begin{align*}
   \tfrac{1}{\sqrt{n \Delta}} \sum_{i=1}^n \trueE [L_i^{j_1, (1)} + L_i^{j_1, (2)}  | \filtration{i-1}]  
   \probconv &\;  0; \\ 
   \tfrac{\sqrt{\Delta}}{\sqrt{n}} \sum_{i=1}^n 
   \trueE [L_i^{j_2, (1)} + L_i^{j_2,  (2)}  | \filtration{i-1}]  
   \probconv &\; 0.   
\end{align*} 
Now, let $d_\beta + d_\gamma + 1 \leq j_3 \leq d_\theta$. 
Lemmas \ref{weak_Taylor_expansion}, \ref{hermite_expectation} give 
\begin{align}
& \trueE [ L_i^{j_3,  (1)} (\trueparam)  | \filtration{i-1}] 
  = \Bigl\{ \mathrm{tr} 
 \bigl( \partial_{\theta_{j_3}} 
 {\Sigma}^{-1}_1 (X_{t_{i-1}}; \gamma, \sigma) {\Sigma}_1 (X_{t_{i-1}}; \gamma, \sigma)   \bigr)
 + \partial_{\theta_{j_3}} \log |{\Sigma}_1 (\sample{X}{i-1}; \gamma, \sigma)| \nonumber \\[0.1cm]
& \qquad + 
\Delta\,\mathrm{tr} 
 \bigl( \Xi (\sample{X}{i-1}; \theta )  \partial_{\theta_{j_3}} {\Sigma}^{-1}_1  (\sample{X}{i-1}; \gamma, \sigma ) \bigr)  \Bigr\} \Bigr|_{\theta = \trueparam}
 + \mathcal{E}^{j_3, (1)} (\Delta^2, \sample{X}{i-1}, \trueparam)  \nonumber \\[0.2cm]
& \quad  = 2 \Delta \mathrm{tr} 
\bigl( G  (\sample{X}{i-1}; \theta)  
\partial_{\theta_{j_3}} {\Sigma}^{-1}_1 (\sample{X}{i-1}; \gamma, \sigma)  
\bigr) |_{\theta = \trueparam} 
+ \mathcal{E}^{j_3, (1)} (\Delta^2, \sample{X}{i-1}, \trueparam), 
 \label{L_sigma_1}    
\end{align}
for $\mathcal{E}^{j_3, (1)} \in \mathscr{S}$, where we used (\ref{eq:deriv_logdet}). Furthermore, we have from Lemma \ref{hermite_expectation}
\begin{align}  \label{L_sigma_2} 
& \trueE [ L_i^{j_3, (2)} (\trueparam)  | \filtration{i-1}] 
 =  - 2  \Delta \times  \trueE [ \partial_{\theta_{j_3}} 
   \Phi_2 (\Delta, X_{t_{i-1}}, X_{t_{i}}; \theta) |_{\theta = \trueparam} ]  \nonumber \\
& \qquad = -  2 \Delta  \mathrm{tr} \, \bigl( 
G (\sample{X}{i-1}; \theta)   
\partial_{\theta_{j_3}} 
{\Sigma}^{-1}_1 (\sample{X}{i-1}; \gamma, \sigma ) \bigr) |_{\theta = \trueparam}
 + \mathcal{E}^{j_3, (2)} (\Delta^2, \sample{X}{i-1}, \trueparam), 
\end{align} 
for $\mathcal{E}^{j_3, (2)} \in \mathscr{S}$. Hence, it follows from (\ref{L_sigma_1}) a (\ref{L_sigma_2}) that, as $n \to \infty$, $\Delta \to 0$, $n \Delta \to \infty$ and $n \Delta^4 \to 0$, 
\begin{align*}
 \tfrac{1}{\sqrt{n}} \sum_{i=1}^n \trueE [ L_i^{j_3, (1)} (\trueparam)
 + L_i^{j_3,  (2)} (\trueparam)   | \filtration{i-1}] 
 = \tfrac{1}{n} \sum_{i=1}^n \mathcal{E} (\sqrt{ n \Delta^4}, \sample{X}{i-1} , \trueparam) \probconv 0.
\end{align*}  
where $\mathcal{E} \in \mathscr{S}$. \vspace{0.2cm}

\noindent
{\bf Proof of (\ref{eq:normal_conv_3}).}
We check (\ref{eq:normal_conv_3}) for the following two cases:
(a)  $1 \le j_1, j_2 \le d_\beta$; 
(b)  $d_{\beta} + d_{\gamma} + 1 \le j_1, j_2 \le d_\theta$,  
since the proof for other cases are identical. Using (\ref{L_1_1}), (\ref{L_2}) and Lemmas \ref{lemma:aux_matrix}, \ref{ergodic_thm}, \ref{canonical_conv}, we obtain the following convergence for $1 \leq j_1, j_2 \leq d_{\beta}$: 
\begin{align} \label{eq:cov_beta}
& \tfrac{1}{n \Delta} \sum_{i=1}^n \trueE \bigl[L_i^{j_1} (\trueparam) L_i^{j_2} (\trueparam)   | \filtration{i-1} \bigr]  \\
& = \tfrac{4}{n} \sum_{i=1}^n \sum_{i_1, i_2, i_3, i_4 = 1}^d 
\Bigl\{ {\Sigma}^{-1}_{1, i_1i_3} (\sample{X}{i-1}; \gamma, \sigma) 
\partial_{\theta_{j_1}} \mu_1^{i_1}  (\sample{X}{i-1}; \theta )
{\Sigma}_{1, i_3 i_4} (\sample{X}{i-1};\gamma, \sigma) \nonumber \\ 
& \qquad \times {\Sigma}^{-1}_{1, i_2i_4}(\sample{X}{i-1}; \gamma, \sigma) 
\partial_{\theta_{j_2}} \mu_1^{i_2}  (\sample{X}{i-1} ; \theta) 
\Bigr\}  \Bigl|_{\theta= \trueparam} \nonumber \,  
+ \tfrac{1}{n} \sum_{i=1}^n \mathcal{E}(\Delta, \sample{X}{i-1}, \trueparam) \nonumber \\
%
%
& =  \tfrac{4}{n} \sum_{i=1}^n 
\bigl( \partial_{\theta_{j_1}} 
\mu_1(\sample{X}{i-1}; \theta) \bigr)^\top
{\Sigma}^{-1}_{1} (\sample{X}{i-1}; \gamma, \sigma) 
\partial_{\theta_{j_2}} \mu_1 
(\sample{X}{i-1}; \theta) 
\bigl|_{\theta = \trueparam }
+ \tfrac{1}{n} \sum_{i=1}^n \mathcal{E}(\Delta, \sample{X}{i-1}, \trueparam)  \nonumber \\
& = 
\tfrac{4}{n} 
\sum_{i=1}^n 
\bigl( \partial_{\theta_{j_1}} V_{R, 0} (
\sample{X}{i-1}, \beta)  \bigr)^\top 
\Sigma_{1,RR}^{-1} (\sample{X}{i-1}; \sigma)
\partial_{\theta_{j_2}} V_{R, 0} 
( \sample{X}{i-1}, \beta)   \bigl|_{\theta = \trueparam } 
+ \tfrac{1}{n} \sum_{i=1}^n \mathcal{E}(\Delta, \sample{X}{i-1}, \trueparam) 
\nonumber \\ 
& \probconv 4 I_{j_1 j_2} (\trueparam), \nonumber 
\end{align}
as $n \to \infty$, $\Delta \to 0$, $n \Delta \to \infty$, for $\mathcal{E} \in \mathscr{S}$. In the third equation of (\ref{eq:cov_beta}), we have applied Lemma \ref{lemma:aux_matrix} to obtain: 
\begin{align*}
& \tfrac{4}{n} \sum_{i=1}^n 
\bigl( \partial_{\theta_{j_1}} 
\mu_1(\sample{X}{i-1}; \theta) \bigr)^\top
{\Sigma}^{-1}_{1} (\sample{X}{i-1}; \gamma, \sigma) 
\partial_{\theta_{j_2}} \mu_1 
(\sample{X}{i-1}; \theta)  \\
& = \tfrac{4}{n} \sum_{i=1}^n 
\bigl( \partial_{\theta_{j_1}} 
V_{R, 0} (\sample{X}{i-1}, \beta) \bigr)^\top
\Lambda_{1, RR} (\sample{X}{i-1}; \gamma, \sigma) 
\partial_{\theta_{j_2}} V_{R, 0} (\sample{X}{i-1}, \beta)  \nonumber  \\ 
& \; + \tfrac{2}{n} \sum_{i=1}^n 
\bigl( \partial_{\theta_{j_1}} 
V_{R, 0} (\sample{X}{i-1}, \beta) \bigr)^\top
\Lambda_{1, RS} (\sample{X}{i-1}; \gamma, \sigma) 
\partial_{x_R}^{\top} V_{S, 0} (\sample{X}{i-1}, \gamma) 
\partial_{\theta_{j_2}}  V_{R, 0} (\sample{X}{i-1}, \beta )  \nonumber \\ 
& = \tfrac{4}{n} \sum_{i=1}^n 
\bigl( \partial_{\theta_{j_1}} 
V_{R, 0} (\sample{X}{i-1}, \beta) \bigr)^\top
\Lambda_{1, RR} (\sample{X}{i-1}; \gamma, \sigma) 
\Sigma_{1, RR} (\sample{X}{i-1}; \sigma)
\Sigma_{1, RR}^{-1} (\sample{X}{i-1}; \sigma)
\partial_{\theta_{j_2}} V_{R, 0} (\sample{X}{i-1}, \beta)  \nonumber  \\ 
& \; 
+ \tfrac{4}{n} \sum_{i=1}^n 
\bigl( \partial_{\theta_{j_1}} 
V_{R, 0} (\sample{X}{i-1}, \beta) \bigr)^\top
\Lambda_{1, RS} (\sample{X}{i-1}; \gamma, \sigma) 
\Sigma_{1, RS} (\sample{X}{i-1}; \gamma, \sigma)
\Sigma_{1, RR}^{-1} (\sample{X}{i-1}; \sigma)
\partial_{\theta_{j_2}}  V_{R, 0} (\sample{X}{i-1}, \beta )  \nonumber \\  
& = \tfrac{4}{n} \sum_{i=1}^n 
\bigl( \partial_{\theta_{j_1}} 
V_{R, 0} (\sample{X}{i-1}, \beta) \bigr)^\top
\Sigma_{1, RR}^{-1} (\sample{X}{i-1}; \sigma)
\partial_{\theta_{j_2}}  
V_{R, 0} (\sample{X}{i-1}, \beta). 
\end{align*}
We next consider case (b), $d_{\beta} + d_{\gamma} + 1 \leq j_1, j_2 \leq d_{\theta}$. 
(\ref{L_1_3}) and (\ref{L_2}) yield -- we omit the argument $(\sample{X}{i-1}; \gamma, \sigma)$  from the elements of the $\Sigma$ matrix below, and instead apply the substitution $(\sample{X}{i-1}; \gamma, \sigma)\leftrightarrow (i)$  to reduce the size of the expressions: 
 \begin{align*}
&  \tfrac{1}{n} \sum_{i=1}^n 
\trueE \bigl[L_i^{j_1} (\trueparam) L_i^{j_2} (\trueparam) | \filtration{i-1} \bigr] \nonumber \\
& = \tfrac{1}{n} \sum_{i=1}^n \Biggl\{  \sum_{i_1, i_2, i_3, i_4 = 1}^d  \partial_{\theta_{j_1}} {\Sigma}^{-1}_{1, i_1 i_2}(i)\, \partial_{\theta_{j_2}} {\Sigma}^{-1}_{1, i_3 i_4}(i) 
    \nonumber \\
& \qquad \qquad \qquad\qquad \times \Bigl( {\Sigma}_{1, i_1 i_2}(i) 
{\Sigma}_{1, i_3 i_4}(i) + {\Sigma}_{1, i_1 i_3}(i)  {\Sigma}_{1, i_2 i_4}(i) 
+ {\Sigma}_{1, i_1 i_4}(i)  {\Sigma}_{1, i_2 i_3}(i)  \Bigr) 
\nonumber \\[0.2cm]
& \qquad\qquad +  \sum_{i_1, i_2 = 1}^d 
  \partial_{\theta_{j_1}} 
  {\Sigma}^{-1}_{1, i_1 i_2}(i) 
  {\Sigma}^{-1}_{1, i_1 i_2}(i)  
  \partial_{\theta_{j_2}}  \log | {\Sigma}_1(i) |     \nonumber \\
& \qquad\qquad  +   \sum_{i_1, i_2 = 1}^d 
 \partial_{\theta_{j_2}} {\Sigma}^{-1}_{1, i_1 i_2}(i)  
 {\Sigma}^{-1}_{1, i_1 i_2}(i) 
 \partial_{\theta_{j_1}} 
  \log | {\Sigma}_1(i)   |  \nonumber \\
& \qquad\qquad  +  \partial_{\theta_{j_1}}  \log | {\Sigma}_1(i)  | 
\; \partial_{\theta_{j_2}}  \log | {\Sigma}_1(i)   |     \Biggr\}   \Biggr|_{\theta = \trueparam } 
+ \tfrac{1}{n} \sum_{i=1}^n 
\mathcal{E} (\Delta, \sample{X}{i-1}, \trueparam)  \nonumber \\[0.3cm] 
& = \tfrac{1}{n} \sum_{i=1}^n \biggr\{  \sum_{i_1, i_2, i_3, i_4 = 1}^d 
  \partial_{\theta_{j_1}} {\Sigma}^{-1}_{1, i_1 i_2}(i) 
  \partial_{\theta_{j_2}} {\Sigma}^{-1}_{1, i_3 i_4}(i)  
  {\Sigma}_{1, i_1 i_3} (i)   
  {\Sigma}_{1, i_2 i_4} (i)     \nonumber \\ 
& \qquad \qquad\qquad+  \sum_{i_1, i_2, i_3, i_4 = 1}^d 
 \partial_{\theta_{j_1}} {\Sigma}^{-1}_{1, i_1 i_2}(i)   
 \partial_{\theta_{j_2}} {\Sigma}^{-1}_{1, i_3 i_4}(i)   
 {\Sigma}_{1, i_1 i_4}(i)  
 {\Sigma}_{1, i_2 i_3}(i)   \biggr\}   \biggr|_{\theta = \trueparam }  \nonumber \\[0.1cm] 
& \qquad\qquad + \tfrac{1}{n} \sum_{i=1}^n \mathcal{E} (\Delta, \sample{X}{i-1}, \trueparam),   \nonumber  \\[0.3cm]
& =   \tfrac{2}{n} \sum_{i=1}^n 
 \mathrm{tr} \Bigl( \partial_{\theta_{j_1}} 
 {\Sigma}_1  (\sample{X}{i-1}; \theta) 
 {\Sigma}^{-1}_1  (\sample{X}{i-1}; \theta) 
\partial_{\theta_{j_2}} {\Sigma}_1  (\sample{X}{i-1}; \theta)
{\Sigma}^{-1}_1 (\sample{X}{i-1}; \theta ) \Bigr)  \Bigr|_{\theta = \trueparam }
 \\ & \qquad\qquad+ \tfrac{1}{n} \sum_{i=1}^n \mathcal{E} (\Delta, \sample{X}{i-1},  \trueparam),  \nonumber 
\end{align*} 
for $\mathcal{E} \in \mathscr{S}$,
where in the second equality we used (\ref{eq:deriv_logdet}) and in the last equality we used:
\begin{align*}
\partial_{\theta_j}
{\Sigma}^{-1}_{1, i_1 i_2}(i) 
=  -\sum_{l_1, l_2 = 1}^d {\Sigma}^{-1}_{1, i_1 l_1}(i)
\partial_{\theta_j} {\Sigma}_{1, l_1 l_2}(i)
{\Sigma}^{-1}_{1, l_2 i_2} (i)
\end{align*} 
Finally, Lemma \ref{ergodic_thm} yields that as $n \to \infty$, $\Delta \to 0$ and $n \Delta \to \infty$,
\begin{align*}
 \tfrac{1}{n} \sum_{i=1}^n \trueE \bigl[L_i^{j_1} (\trueparam) L_i^{j_2} (\trueparam) | \filtration{i-1} \bigr]
 \probconv4 I_{j_1 j_2 } (\trueparam),  \qquad   d_\beta + d_\gamma + 1 \leq j_1, j_2 \leq d_\theta. 
\end{align*} 

\noindent
{\bf Proof of (\ref{eq:normal_conv_4}).}
From (\ref{L_1_1}), (\ref{L_1_2}), (\ref{L_1_3}) and (\ref{L_2}), we deduce that, as $n \to \infty$, $\Delta \to 0$, $n \Delta \to \infty$,
\begin{align}
& M_{n, \Delta}^{j_1 j_1} M_{n, \Delta}^{j_2 j_2} M_{n, \Delta}^{j_3 j_3} M_{n, \Delta}^{j_4 j_4} \sum_{i=1}^n \trueE [L_{i}^{j_1} (\trueparam) L_{i}^{j_2} (\trueparam) L_{i}^{j_3} (\trueparam) 
 L_{i}^{j_4} (\trueparam) | \mathcal{F}_{t_{i-1}}]  \nonumber \\
& \qquad\qquad\qquad =  \frac{1}{n^2} \sum_{i=1}^n 
\mathcal{E} (1, \sample{X}{i-1}, \trueparam)  \probconv 0, 
\end{align} 
where $\mathcal{E} \in \mathscr{S}$.
\subsubsection{Proof of Lemma \ref{lemma:conv_cov}}  \label{pf_conv_cov} 
\noindent Recall $\textstyle{\ell_{n, \Delta} (\theta) = \ell_{n, \Delta}^{(1)} (\theta)  + \ell_{n, \Delta}^{(2)} (\theta)}$, where the definitions of $\ell_{n, \Delta}^{(i)} (\theta)$, $i = 1,2$, are given in (\ref{eq:l1}) and (\ref{eq:l2}). We divide the function $B_{n, \Delta} (\theta)$,  $\theta \in \Theta$, into two parts:
\begin{align*}
  B_{n, \Delta} (\theta)
  = M_{n, \Delta} \,  \partial_\theta^2 \ell_{n, \Delta}^{(1)} (\theta) \, M_{n, \Delta} 
  + M_{n, \Delta} \, \partial_\theta^2 \ell_{n, \Delta}^{(2)} (\theta) \, M_{n, \Delta} \equiv B_{n, \Delta}^{(1)} (\theta) + B_{n, \Delta}^{(2)} (\theta),
\end{align*} 
where we recall $M_{n, \Delta} = \mathrm{diag} (v_{n, \Delta})$ with  
\begin{align*}
 v_{n, \Delta} = 
 \Bigl[ 
\underbrace{\tfrac{1}{\sqrt{n \Delta}}, \ldots , \tfrac{1}{\sqrt{n \Delta}}}_{d_{\beta} - \mathrm{dim}}, 
\underbrace{\sqrt{\tfrac{\Delta}{n}}, \ldots, \sqrt{\tfrac{\Delta}{n}}}_{d_{\gamma} - \mathrm{dim}}, 
\underbrace{\tfrac{1}{\sqrt{n}}, \ldots, \tfrac{1}{\sqrt{n}}}_{d_{\sigma} - \mathrm{dim}} \Bigr]^{\top}, \qquad  d_{\theta} = d_{\beta} + d_{\gamma} + d_{\sigma}.
\end{align*}
We will show that, as $n \to \infty$, $\Delta \to 0$ and $n \Delta \to \infty$, 
\begin{align}
 &  
 B_{n, \Delta}^{(1)} \bigl(\trueparam + \lambda (\hat{\theta}_n - \trueparam)  \bigr)  \probconv 2 I (\trueparam);   \label{B_conv1} \\
 & B_{n, \Delta}^{(2)} \bigl(\trueparam + \lambda (\hat{\theta}_n - \trueparam)  \bigr)
  \xrightarrow{\mathbb{P}_{\trueparam}} 0, \label{B_conv2}
\end{align} 
uniformly in $\lambda \in (0,1]$. 
\\

\noindent
{\bf Proof of (\ref{B_conv1}).} 
We will compute the second  derivatives of $\contrastG{\theta}$, $\theta \in \Theta$. 
Let $1 \leq k_1, k_2 \leq d_{\beta}, \; 
d_{\beta} + 1 \leq k_3, k_4 \leq d_{\beta} + d_{\gamma}, 
\; d_{\beta} + d_{\gamma} + 1 \leq k_5, k_6 \leq d_{\theta}$. Using the expression of (\ref{T1_decomp}) and Lemma \ref{lemma:aux_matrix}, we have:  
\begin{align*}
& \tfrac{1}{n \Delta} \partial_{\theta_{k_1}} \partial_{\theta_{k_2}} \contrastG{\theta} 
 = \tfrac{1}{n} \sum_{i=1}^n  \,  \biggl\{ 2 
 \bigl( \partial_{\theta_{k_1}}  \mu_{1}  (x; \theta) \bigr)^{\top}  
 {\Sigma}^{-1}_1  (x; \gamma, \sigma ) 
 \partial_{\theta_{k_2}} \mu_{1}  (x; \theta)   \nonumber \\[0.2cm]
 & \qquad -\tfrac{2}{\sqrt{\Delta}} 
  \nu_{i} (\Delta; \trueparam )^{\top}
  {\Sigma}^{-1}_1 (x; \gamma, \sigma )  
  \partial_{\theta_{k_1}} \partial_{\theta_{k_2}} \mu_1 (x; \theta) \nonumber \\[0.2cm]
%
 & \qquad  - 2
  \sum_{\substack{1 \le j_1 \le d_R \\ 1 \le j_2 \le d}}  
 {\Sigma}^{-1}_{1, j_1 j_2} (x; \gamma, \sigma) 
 \bigl( V_{0}^{j_1} (x, \truebeta) - V_{0}^{j_1} (x, \beta) \bigr)
 \partial_{\theta_{k_1}} 
 \partial_{\theta_{k_2}} 
 \mu^{j_2}_{1}  (x; \theta)  \biggr\} \biggr|_{x = \sample{X}{i-1}};   \nonumber  \\[0.3cm] 
%
%
& \tfrac{\Delta}{n} \partial_{\theta_{k_3}} \partial_{\theta_{k_4}} \contrastG{\theta}
= \tfrac{1}{n} \sum_{i=1}^n \, \biggl\{ 2
\bigl( 
\partial_{\theta_{k_3}}  V_{S,0}  (x, \gamma)  \bigr)^{\top}
\Lambda_{1, SS}  (x; \gamma, \sigma )  
\partial_{\theta_{k_4}} V_{S,0}  (x, \gamma)   \nonumber \\[0.2cm] 
& \qquad + {\Delta} \, \nu_i (\Delta; \trueparam)^{\top}   
  \partial_{\theta_{k_3}}
  \partial_{\theta_{k_4}} 
  {\Sigma}^{-1}_1  (x ; \gamma, \sigma) \, 
  \nu_i  (\Delta; \trueparam)   \nonumber \\[0.2cm] 
& \qquad + 
 \bigl(V_{S,0}  (x,  \truegamma)  - V_{S,0} (x,  \gamma) \bigr)^{\top}  
 \partial_{\theta_{k_3}} 
 \partial_{\theta_{k_4}} \Lambda_{1, SS}   (x; \theta) \, 
 \bigl(V_{S,0}  (x, \truegamma)  
 - V_{S,0}  ( x, \gamma) \bigr) 
 \nonumber \\[0.2cm] 
& \qquad +    \sum_{j = d_R + 1}^{d} 
 \mathcal{E}_j^{(k_3, k_4)} (1, x, \theta)  \bigl(V_{0}^{j} (x, \truegamma)  - V_{0}^{j} (x, \gamma ) \bigr) 
 + 
 \mathcal{E}^{(k_3, k_4)} (\Delta, x, \theta)  \biggr\} \biggr|_{x = \sample{X}{i-1}};  
 \\[0.3cm]
%
%
& \tfrac{1}{n} \partial_{\theta_{k_5}}  
\partial_{\theta_{k_6}}  \contrastG{\theta} 
 = \tfrac{1}{n}  \sum_{i=1}^n   \, \biggl\{  
  \nu_i (\Delta ; \trueparam)^{\top}
  \partial_{\theta_{k_5}} 
  \partial_{\theta_{k_6}}  
  {\Sigma}^{-1}_1 (x; \gamma, \sigma) 
  \nu_i (\Delta ; \trueparam)   \nonumber \\[0.2cm]
& \qquad + \partial_{\theta_{k_5}}
   \partial_{\theta_{k_6}}
   \log | {\Sigma}_1 (x; \gamma, \sigma)| 
   + \tfrac{1}{\sqrt{\Delta}}  \sum_{j_1, j_2 = 1}^d  
  \nu_i^{j_1} (\Delta ; \trueparam)  
  \, \mathcal{E}_{j_1 j_2}^{(k_5, k_6)} (1 , x, \theta) \nonumber \\[0.2cm] 
& \qquad + 
 \tfrac{1}{\Delta} \bigl( 
  V_{S,0} (x, \truegamma) - V_{S,0} (x, \gamma) \bigr)^{\top}
 \partial_{\theta_{k_5}}
 \partial_{\theta_{k_6}} 
  \Lambda_{1, SS}  (x ; \gamma, \sigma) 
\bigl( V_{S,0} (x, \truegamma) - V_{S,0} (x, \gamma)  \bigr)  
\nonumber \\[0.2cm]
& \qquad  - 
 \bigl(
 {V_{S,0} (x, \truegamma) - V_{S,0} (x, \gamma)} \bigr)^{\top}
 \partial_{\theta_{k_5}}   \Lambda_{1, SS} (x; \gamma, \sigma) 
 \partial_{\theta_{k_6}} 
 \hat{V}_0 V_{S,0} (x, \theta) 
 \nonumber \\[0.2cm]
& \qquad - 
 \bigl( 
 {V_{S,0} (x, \truegamma) - V_{S,0} (x, \gamma)} \bigr)^{\top} 
 \Lambda_{1,SS}  (x; \gamma, \sigma)
 \partial_{\theta_{k_5}}
 \partial_{\theta_{k_6}} 
 \hat{V}_0 V_{S,0}  (x, \theta)  
 +   \mathcal{E}^{(k_5, k_6)}  (\sqrt{\Delta}, x, \theta) \biggr\} \biggr|_{x = \sample{X}{i-1}};   \\[0.2cm] 
%
%
&  \tfrac{1}{n} \partial_{\theta_{k_1}}
\partial_{\theta_{k_3}} \contrastG{\theta} 
= \tfrac{1}{n} \sum_{i=1}^n \,  \biggl\{ 
 {2}  
 \sum_{\substack{ 1 \le j_1 \le  d  \\ d_R +1 \le j_2 \le d }}   
 \partial_{\theta_{k_1}} \mu_{1}^{j_1} (x; \theta)  
 \Sigma^{-1}_{1, j_1 j_2} (x ; \gamma, \sigma)
 \partial_{\theta_3} V_{0}^{j_2} (x , \gamma) 
\nonumber \\[0.2cm]
& \qquad +  {2} \sum_{j_1, j_2 = 1}^d 
  \partial_{\theta_{k_3}} {\Sigma}^{-1}_{1, j_1 j_2} (x; \gamma, \sigma) 
  \, \partial_{\theta_{k_1}} \mu^{j_1}_{1}  (x;  \theta) 
  \, \frac{\mu_{\Delta}^{j_2} (x; \trueparam) - \mu_{\Delta}^{j_2} (x; \theta)}{\Delta^{\mathbf{1}_{\{j_2 \geq d_R + 1 \}}}} 
  \nonumber \\[0.2cm] 
& \qquad +  {2} \sum_{j_1, j_2 = 1}^d 
  {\Sigma}^{-1}_{1, j_1 j_2} (x; \gamma, \sigma) 
   \partial_{\theta_{k_1}}  \partial_{\theta_{k_3}} 
    \mu^{j_1}_{1} (x; \theta) 
    \frac{\mu_{\Delta}^{j_2} (x; \trueparam) - \mu_{\Delta}^{j_2} (x; \theta)}{\Delta^{\mathbf{1}_{\{j_2 \geq d_R +1 \}}}}  \nonumber \\[0.2cm]
& \qquad +
  \sum_{j_1, j_2 = 1}^d  
  \nu^{j_1}_i (\Delta; \trueparam) 
  \mathcal{E}_{j_1 j_2}^{(k_1, k_3)} (\sqrt{\Delta}, x,  \theta) 
  \biggr\}  \biggr|_{x = \sample{X}{i-1}} ;  \\[0.3cm]
%
%
& \tfrac{1}{n \sqrt{\Delta}}  
\partial_{\theta_{k_1}} 
\partial_{\theta_{k_5}} \contrastG{\theta}  
 = \tfrac{1}{n} \sum_{i=1}^n \, \biggl\{ - 2  \sum_{j_1, j_2 = 1}^d  
  \partial_{\theta_{k_5}} {\Sigma}^{-1}_{1, j_1 j_2} (x; \gamma, \sigma) 
  \partial_{\theta_{k_1}} \mu^{j_1}_{1} (x; \theta) 
 \frac{\mu_{\Delta}^{j_2} (x; \trueparam) - \mu_{\Delta}^{j_2} (x; \theta)}{\Delta^{ \frac{1}{2} + \mathbf{1}_{\{j_2 \geq d_R +1 \}}}}  \nonumber  \\[0.2cm] 
& \qquad - 2 \sum_{j_1, j_2 = 1}^d  
 {\Sigma}^{-1}_{1, j_1 j_2} (x; \gamma, \sigma) 
 \partial_{\theta_{k_1}} \partial_{\theta_{k_5}} \mu^{j_1}_{1} (x; \theta) 
 \frac{\mu_{\Delta}^{j_2} (x; \trueparam) - \mu_{\Delta}^{j_2} (x; \theta)}{\Delta^{ \frac{1}{2} + \mathbf{1}_{\{j_2 \geq d_R +1 \}}}}   \nonumber  \\[0.2cm]  
& \qquad +  {\sqrt{\Delta}}
 \bigl(
  \partial_{\theta_{k_1}} \mu_{1}  (x; \theta) \bigr)^{\top} 
  {\Sigma}^{-1}_1  (x; \gamma, \sigma)  
  \partial_{\theta_{k_5}}  \mu_{1} (x; \theta)  
  +   \sum_{j_1, j_2= 1}^d 
\nu_i^{j_1} (\Delta; \trueparam)  \mathcal{E}_{j_1 j_2}^{(k_1, k_5)}  (1, x, \theta) 
\biggr\} \Bigl|_{x = \sample{X}{i-1}};   \\[0.2cm] 
%
%
& \tfrac{\sqrt{\Delta}}{n}  
\partial_{\theta_{k_3}} \partial_{\theta_{k_5}}  \contrastG{\theta}  
 = \tfrac{1}{n} \sum_{i=1}^n  \, \biggl\{  
  {\sqrt{\Delta}} \, 
  \nu_i (\Delta ; \trueparam)^{\top}
  \partial_{\theta_{k_3}}
  \partial_{\theta_{k_5}}  
  {\Sigma}^{-1}_1  (x; \gamma, \sigma)
  \nu_i (\Delta ; \trueparam)     \nonumber \\[0.2cm]  
 & \qquad + \tfrac{1}{\sqrt{\Delta}}   
 \bigl( V_{S,0}  (x, \truegamma) - V_{S,0} (x, \gamma) \bigr)^{\top} 
 \partial_{\theta_{k_3}}
 \partial_{\theta_{k_5}}   
 \Lambda_{1, SS}  (x; \gamma, \sigma) 
 \bigl( V_{S,0}  (x, \truegamma) - V_{S,0} (x, \gamma) \bigr)      \nonumber \\[0.2cm] 
 & \qquad +   \sum_{j_1, j_2 = d_R + 1}^d 
\frac{V_0^{j_1} (x , \truegamma) - V_0^{j_1} (x, \gamma) }{\sqrt{\Delta}} 
 \mathcal{E}_{j_1 j_2}^{(k_3, k_5)}  (1, x, \theta) 
 + \sum_{j_1, j_2 = 1}^d  \nu_i^{j_1} (\Delta ; \trueparam) \,  \widetilde{\mathcal{E}}_{j_1 j_2}^{\, (k_3, k_5)} (1, x, \theta)  \nonumber \\[0.2cm]  
& \qquad  
 + {\mathcal{E}}^{ (k_3, k_5)} ( \sqrt{\Delta}, x, \theta) \biggr\} \biggr|_{x = \sample{X}{i-1}},     
\end{align*}
for $\theta = (\beta, \gamma, \sigma) \in \Theta$,  with $\mathcal{E}^{(m_1, m_2)},  \mathcal{E}^{(m_1, m_2)}_{j_1 j_2}, \widetilde{\mathcal{E}}^{\, (m_1, m_2)}_{j_1 j_2}  \in \mathscr{S}$, for all $ 1 \leq j_1, j_2 \leq d, \, 1 \le m_1, m_2 \le d_\theta$. 
Due to Lemmas \ref{lemma:aux_matrix}, \ref{ergodic_thm},  \ref{canonical_conv}, \ref{canonical_conv2} and the consistency of the estimator $\hat{\theta}_n$, we obtain the following convergences. As $n \to \infty, \; \Delta \to 0$, $n \Delta \to \infty$, 
\begin{align} 
\tfrac{1}{n \Delta} 
\partial_{\theta_{k_1}} \partial_{\theta_{k_2}} \contrastG{\theta}|_{\theta =  \trueparam + \lambda (\hat{\theta}_n - \trueparam)} 
& \probconv 2  \int 
  \bigl( \partial_{\theta_{k_1}} \mu_1  (x; \theta) \bigr)^{\top} 
  {\Sigma}^{-1}_1 (x; \gamma, \sigma )  
  \partial_{\theta_{k_2}}  \mu_1  (x; \theta) \Bigl|_{\theta = \trueparam}
   \,  \truedist (dx)  \nonumber \\
& 
= 2 \int 
\bigl( \partial_{\theta_{k_1}} V_{R, 0} (x, \beta) \bigr)^\top
\Sigma_{1, RR}^{-1} (x; \sigma) \partial_{\theta_{k_2}} V_{R, 0} (x, \beta) 
\Bigl|_{\theta = \trueparam} 
\truedist(dx) \nonumber  \\[0.1cm] 
%
\tfrac{\Delta}{n} 
\partial_{\theta_{k_3}} \partial_{\theta_{k_4}} 
 \contrastG{\theta}|_{\theta =  \trueparam + \lambda (\hat{\theta}_n - \trueparam)} 
& \probconv 2 
  \int 
  \bigl( \partial_{\theta_{k_3}} V_{S,0} (x, \gamma) \bigr)^{\top}
  \Lambda_{1,SS} (x; \gamma, \sigma)  
  \partial_{\theta_{k_4}} V_{S,0} (x, \gamma) 
 \Bigl|_{\theta = \trueparam} \,  \truedist (dx);   \nonumber \\[0.3cm]
%
\tfrac{1}{n} 
\partial_{\theta_{k_1}} \partial_{\theta_{k_3}} 
\contrastG{\theta}|_{\theta =  \trueparam + \lambda (\hat{\theta}_n - \trueparam)} 
&  \probconv 2 
\sum_{\substack{1 \le j_1 \le d \\ d_R +1 \le j_2 \le d}} \int 
\partial_{\theta_{k_1}} \mu_1^{j_1}  (x; \theta) 
{\Sigma}^{-1}_{1, j_1j_2}  (x; \gamma, \sigma)  
\partial_{\theta_{k_3}}  V_0^{j_2} (x, \gamma)
\Bigl|_{\theta = \trueparam} \, \truedist (dx)
= 0; \nonumber \\[0.1cm]   
%
\Bigl\{ \tfrac{1}{n \sqrt{\Delta}}  \partial_{\theta_{k_1}} \partial_{\theta_{k_5}} 
\contrastG{\theta}
& + \tfrac{\sqrt{\Delta}}{n}  
\partial_{\theta_{k_3}}  \partial_{\theta_{k_5}}  \contrastG{\theta} \Bigr\}  |_{\theta =  \trueparam + \lambda (\hat{\theta}_n - \trueparam)}
\probconv 0;  \nonumber \\[0.3cm]
%
 \partial_{\theta_{k_5}}  \partial_{\theta_{k_6}}  
\contrastG{\theta}|_{\theta =  \trueparam + \lambda (\hat{\theta}_n - \trueparam)} 
& \probconv 
%
\int
 \Bigl\{ 
 \mathrm{tr} \Bigl( \partial_{\theta_{k_5}}  \partial_{\theta_{k_6}}   {\Sigma}^{-1}_{1} (x; \gamma, \sigma) 
 {\Sigma}_{1} (x; \gamma, \sigma) \Bigr)
 +  \partial_{\theta_{k_5}}  \partial_{\theta_{k_6}} 
 \log |{\Sigma}_1 (x; \gamma, \sigma)| \Bigr\}  
 \Bigl|_{\theta = \trueparam} \,  \truedist (dx) \nonumber \\
 & \qquad \quad  =  \int 
\mathrm{tr} \Bigl(  
\partial_{\theta_{k_5}}  {\Sigma}_1  (x; \gamma, \sigma) 
\, {\Sigma}^{-1}_1 (x; \gamma, \sigma) 
\,  \partial_{\theta_{k_6}}  {\Sigma}_1 (x; \gamma, \sigma) 
\,  {\Sigma}^{-1}_1 (x; \gamma, \sigma) 
 \Bigr) 
 \Bigl|_{\theta = \trueparam} \, \truedist (dx),  \nonumber 
\end{align} 
uniformly in $\lambda \in [0,1]$, where we used 
\begin{align*}
& \partial_{\theta_{k_5}}  \partial_{\theta_{k_6}}  
\log |{\Sigma}_1 (x; \gamma, \sigma) | 
 =
-  \mathrm{tr} \bigl( \partial_{\theta_{k_5}}  \partial_{\theta_{k_6}}   {\Sigma}^{-1}_{1} (x; \gamma, \sigma) 
 {\Sigma}_{1} (x; \gamma, \sigma) \bigr)
- \mathrm{tr} 
  \Bigl( \partial_{\theta_{k_5}}  {\Sigma}_1 (x;  \gamma, \sigma) 
 \partial_{\theta_{k_6}}   {\Sigma}_1^{-1} (x; \gamma, \sigma)   \Bigr)
\end{align*} 
and
\begin{align*}
%
 & \mathrm{tr} 
  \Bigl( \partial_{\theta_{k_5}}  {\Sigma}_1 (x;  \gamma, \sigma) 
 \partial_{\theta_{k_6}}   {\Sigma}_1^{-1} (x; \gamma, \sigma)   \Bigr)  
 = -  \mathrm{tr} 
 \Bigl(
 \partial_{\theta_{k_5}}  {\Sigma}_1 (x;  \gamma, \sigma)
 \,  
 {\Sigma}^{-1}_1 (x; \gamma, \sigma) 
 \, 
 \partial_{\theta_{k_6}}  {\Sigma}_1  (x; \gamma, \sigma)
 \, 
 {\Sigma}^{-1}_1  (x; \gamma, \sigma)   \Bigr). 
\end{align*} 
%
Hence, we obtain the convergence (\ref{B_conv1}). 
\\

\noindent
{\bf Proof of (\ref{B_conv2}).} 
Let $1 \le l_1, l_2 \le d_\theta$. We have $\textstyle{M_{n, \Delta}^{l_1 l_1} M_{n, \Delta}^{l_2 l_2} \partial_{\theta_{l_1}}  \partial_{\theta_{l_2}} 
 \contrastW{\theta} = \sum_{l=1}^4 \Upsilon_l (\theta)}$, where we have set:
\begin{align}
& \Upsilon_1 (\theta)  =  - 2 \Delta M_{n, \Delta}^{l_1 l_1} M_{n, \Delta}^{l_2 l_2}   \sum_{i=1}^n \sum_{i_1, i_2 = 1}^{d} 
 \partial_{\theta_{l_1}} \partial_{\theta_{l_2}}  {G}_{i_1 i_2}  (\sample{X}{i-1}; \theta) \mathcal{H}_{(i_1, i_2)} (\Delta, \sample{X}{i-1}, \sample{X}{i}; \theta);   \nonumber \\
& \Upsilon_2 (\theta) =  - 2 \Delta  M_{n, \Delta}^{l_1 l_1} M_{n, \Delta}^{l_2 l_2} 
  \sum_{i=1}^n \sum_{i_1, i_2 = 1}^{d} 
   \partial_{\theta_{l_1}} {G}_{i_1 i_2}  (\sample{X}{i-1}; \theta) 
   \partial_{\theta_{l_2}} \mathcal{H}_{(i_1, i_2)}  (\Delta, \sample{X}{i-1}, \sample{X}{i}; \theta); \nonumber \\
& \Upsilon_3 (\theta) =   -2 \Delta  M_{n, \Delta}^{l_1 l_1} M_{n, \Delta}^{l_2 l_2}  \sum_{i=1}^n \sum_{i_1, i_2 = 1}^{d}   
\partial_{\theta_{l_2}} {G}_{i_1 i_2}  (\sample{X}{i-1}; \theta)  \partial_{\theta_{l_1}}  \mathcal{H}_{(i_1, i_2)} (\Delta, \sample{X}{i-1}, \sample{X}{i}; \theta); \nonumber \\
& \Upsilon_4 (\theta) =  -2 \Delta  M_{n, \Delta}^{l_1 l_1} M_{n, \Delta}^{l_2 l_2}  \sum_{i=1}^n \sum_{i_1, i_2 = 1}^{d}  
 {G}_{i_1 i_2} (\sample{X}{i-1}; \theta) 
   \partial_{\theta_{l_1}} \partial_{\theta_{l_2}}   \mathcal{H}_{(i_1, i_2)} (\Delta, \sample{X}{i-1}, \sample{X}{i}; \theta). \nonumber 
\end{align}
Notice that each term of $\Upsilon_l (\theta)$ has a factor $\Delta$ in the front. We obtain  
\begin{align*}
 \Upsilon_1 \bigl( \hat{\theta}_n - \lambda (\trueparam - \hat{\theta}_n) \bigr) \probconv 0, 
\end{align*} 
uniformly in $\lambda \in (0,1]$ from the same discussion we have shown the convergence (\ref{conv_weight}). We turn to terms $\Upsilon_2 \bigl( \theta \bigr)$ and $\Upsilon_3 \bigl( \theta \bigr)$. Recall $\textstyle{ \mathcal{H}_{(i_1, i_2)} (\Delta, \sample{X}{i-1}, \sample{X}{i}; \theta) = \sum_{l = 1}^5 \eta_{(i_1, i_2)}^{(l)} (\Delta, \sample{X}{i-1}, \sample{X}{i}; \theta)}$ in Section \ref{sec:conv_contrast_sigma}.  For $1 \leq k_1 \leq d_{\beta}$, $ d_{\beta} + 1 \leq k_2 \leq d_{\beta} + d_{\gamma}$, $d_{\beta} + d_{\gamma} + 1 \leq k_3 \leq d_{\theta}$, we have 
\begin{align*}
& \Delta\,  \partial_{\theta_{k_1}} \mathcal{H}_{(i_1, i_2)} (\Delta, \sample{X}{i-1}, \sample{X}{i}; \theta)
=  \sum_{j =1}^d  \mathcal{E}_{j}^{(k_1)} (\sqrt{\Delta^3}, \sample{X}{i-1}, \theta) \nu_i^j (\Delta ; \theta), \nonumber \\[0.3cm]
%
%
& \Delta\, \partial_{\theta_{k_2}} \mathcal{H}_{(i_1, i_2)} (\Delta, \sample{X}{i-1}, \sample{X}{i}; \theta)  
= \biggl\{  \sum_{j_1, j_2 = 1}^d \mathcal{E}_{j_1 j_2}^{(k_2)} (\Delta, x,  \theta) \nu_i^{j_1} (\Delta; \trueparam) \nu_i^{j_2} (\Delta;\trueparam) \nonumber \\
& \quad + \sum_{j=1}^d 
 \mathcal{E}_j^{(k_2)} (\sqrt\Delta, x, \theta) \nu_i^j (\Delta; \trueparam)  
  + \sum_{j=1}^d 
 \widetilde{\mathcal{E}}_j^{\, (k_2)} (\sqrt{\Delta}, x, \theta) 
 \frac{V_0^{j} (x,\truegamma) - V_0^{j} (x, \gamma) }{\sqrt\Delta}  \nonumber \\
& \quad + \sum_{j_1, j_2 =1}^d \widetilde{\mathcal{E}}_{j_1 j_2}^{\, (k_2)}  (\Delta, x,  \theta) 
\frac{V_0^{j_1} (x , \truegamma) - V_0^{j_1} (x, \gamma) }{\sqrt\Delta}
\frac{V_0^{j_2} (x, \truegamma) - V_0^{j_2} (x, \gamma) }{\sqrt\Delta}   + \mathcal{E}^{(k_2)} (\Delta, x, \theta) \biggr\} \Bigr|_{x = \sample{X}{i-1}} ;  \nonumber \\[0.3cm] 
& \Delta\, \partial_{\theta_{k_3}} \mathcal{H}_{(i_1, i_2)} (\Delta, \sample{X}{i-1}, \sample{X}{i}; \theta)  
= \biggl\{  \sum_{j_1, j_2 = 1}^d \mathcal{E}_{j_1 j_2}^{(k_3)} (\Delta, x,  \theta) 
\nu_i^{j_1} (\Delta; \trueparam) \nu_i^{j_2} (\Delta; \trueparam) 
\nonumber \\
& + \sum_{j_1, j_2 = d_R + 1}^d \widetilde{\mathcal{E}}_{j_1 j_2}^{\, (k_3)}  (\Delta, x,  \theta) 
  \frac{V_0^{j_1} (x , \truegamma) - V_0^{j_1} (x , \gamma) }{\sqrt\Delta}
  \frac{V_0^{j_2} (x , \truegamma) - V_0^{j_2} (x , \gamma) }{\sqrt\Delta}   \nonumber \\
& + \sum_{j = d_R + 1}^d {\mathcal{E}}_j^{(k_3)} (\Delta, x, \theta) \frac{V_0^{j} (x ,\truegamma) - V_0^{j} (x , \gamma) }{\sqrt\Delta}  
+ \sum_{j=1}^d \widetilde{\mathcal{E}}_j^{\, (k_3)} (\sqrt\Delta, x, \theta) \nu_i^j (\Delta;\trueparam) 
  + \mathcal{E}^{(k_3)} (\Delta, x,  \theta) \biggr\} \Bigr|_{x = \sample{X}{i-1}},   \nonumber  
\end{align*}
where $\mathcal{E}^{(m)}, \, \mathcal{E}^{(m)}_{j_1}, \,  \mathcal{E}^{(m)}_{j_1 j_2}, \, \widetilde{\mathcal{E}}^{\, (m)}_{j_1}, \,  \widetilde{\mathcal{E}}^{(m)}_{j_1 j_2} \in \mathscr{S}$, for $1 \le j_1, j_2 \le d$ and $1 \le m \le d_\theta$. Applying Lemmas \ref{ergodic_thm}, \ref{canonical_conv} and \ref{canonical_conv2}, we obtain 
\begin{align*}
    \Upsilon_k \bigl( \hat{\theta}_n - \lambda (\trueparam - \hat{\theta}_n) \bigr) \probconv 0, \qquad  k = 2,3,
\end{align*} 
uniformly in $\lambda \in (0,1]$. 
Finally, we consider the term $\Upsilon_4 (\theta)$. We compute the second order partial derivatives of $\mathcal{H}_{(i_1, i_2)} (\Delta, \sample{X}{i-1}, \sample{X}{i}; \theta)$ with respect to the parameter $\theta \in \Theta $ as follows. For $1 \leq k_1, k_2 \leq d_{\beta}, \; d_{\beta} + 1 \leq k_3, k_4 \leq d_{\beta} + d_{\gamma}, \; d_{\beta} + d_{\gamma} + 1 \leq k_5,k_6 \leq d_{\theta}$, 
\begin{align}
& \Delta \,\partial_{\theta_{k_1}} \partial_{\theta_{k_2}} \mathcal{H}_{(i_1, i_2)} (\Delta, \sample{X}{i-1}, \sample{X}{i}; \theta) 
 =  \biggl\{  
 \sum_{j=1}^d \mathcal{E}_j^{(k_1, k_2)}  (\sqrt{\Delta^3}, x,  \theta) \, \nu_i^j(\Delta; \trueparam)  \nonumber \\
& \qquad   + \sum_{j=d_R + 1}^d 
\widetilde{\mathcal{E}}_j^{(k_1, k_2)} (\sqrt{\Delta^3}, x, \theta) 
  \frac{V_0^{j} (x ,  \truegamma) - V_0^{j} (x ,  \gamma)}{\sqrt{\Delta}} 
  + \mathcal{E}^{(k_1, k_2)} (\Delta^2, x, \theta) 
  \biggr\} \biggr|_{x= \sample{X}{i-1}};   \nonumber \\[0.3cm] 
%
& \Delta\, \partial_{\theta_{k_3}} \partial_{\theta_{k_4}} \mathcal{H}_{(i_1, i_2)} (\Delta, \sample{X}{i-1}, \sample{X}{i}; \theta) 
= \biggl\{ 
\sum_{j_1, j_2 = 1}^d 
\mathcal{E}_{j_1 j_2}^{(k_3, k_4)}  (\Delta, x,  \theta)  
 \nu_i^{j_1} (\Delta ; \trueparam) \nu_i^{j_2} (\Delta ; \trueparam)  \nonumber \\
&  \qquad   + \sum_{j=1}^d \mathcal{E}_j^{(k_3, k_4)}  (\sqrt{\Delta}, x,  \theta) \nu_i^j(\Delta; \trueparam) 
     + \mathcal{E}^{(k_3, k_4)} (1, x,  \theta) \biggr\} \biggr|_{x = \sample{X}{i-1}};   \nonumber \\[0.3cm] 
%
& \Delta\, \partial_{\theta_{k_5}} \partial_{\theta_{k_6}} \mathcal{H}_{(i_1, i_2)} (\Delta, \sample{X}{i-1}, \sample{X}{i}; \theta)  = 
 \biggl\{ \sum_{j_1, j_2 = 1}^d \mathcal{E}_{j_1 j_2}^{(k_5, k_6)}  (\Delta, x,  \theta) 
 \nu_i^{j_1} (\Delta ; \trueparam) \nu_i^{j_2} (\Delta ; \trueparam)  \nonumber \\ 
& \qquad  +  \sum_{j_1, j_2 = d_R +1}^d 
\widetilde{\mathcal{E}}_{j_1 j_2}^{\, (k_5, k_6)}  (1, x, \theta)  
  \bigl( V_0^{j_1} (x, \truegamma)  -  V_0^{j_1} (x, \gamma)  \bigr) 
  \bigl( V_0^{j_2} (x, \truegamma)  -  V_0^{j_2} (x, \gamma)  \bigr) \nonumber \\
& \qquad  + \sum_{j=1}^d  \mathcal{E}_j^{(k_5, k_6)}  (\sqrt{\Delta}, x,  \theta) \nu_i^j(\Delta; \trueparam) 
+ \mathcal{E}^{(k_5, k_6)} (\Delta, x, \theta) \biggr\} \biggr|_{x =\sample{X}{i-1}} ;   \nonumber \\[0.3cm]
%
& \Delta\, \partial_{\theta_{k_1}} \partial_{\theta_{k_3}} \mathcal{H}_{(i_1, i_2)} (\Delta, \sample{X}{i-1}, \sample{X}{i}; \theta) 
 =  \sum_{j=1}^d  
  \mathcal{E}_j^{(k_1, k_3)}  ( \sqrt{\Delta^3}, \sample{X}{i-1}, \theta)  \nu_i^j (\Delta ; \trueparam) 
  + \mathcal{E}^{(k_1, k_3)} (\Delta,  \sample{X}{i-1}, \theta); 
  \nonumber \\[0.3cm]
 & \Delta\, \partial_{\theta_{k_1}} \partial_{\theta_{k_5}} \mathcal{H}_{(i_1, i_2)} (\Delta, \sample{X}{i-1}, \sample{X}{i}; \theta) 
 =  \sum_{j=1}^d  
 \mathcal{E}_j^{(k_1, k_5)}  (\sqrt{\Delta^3}, \sample{X}{i-1}, \theta)  \nu_i^j (\Delta ; \trueparam) 
  + \mathcal{E}^{(k_1, k_5)} (\Delta, \sample{X}{i-1}, \theta); \nonumber \\[0.3cm]
%
 & \Delta\, \partial_{\theta_{k_3}} \partial_{\theta_{k_5}}  \mathcal{H}_{(i_1, i_2)} (\Delta, \sample{X}{i-1}, \sample{X}{i}; \theta)  
= \sum_{j_1, j_2 = 1}^d 
\mathcal{E}_{j_1 j_2}^{(k_3, k_5)}  (\Delta, \sample{X}{i-1},  \theta)  \nu_i^{j_1} (\Delta ; \trueparam) \nu_i^{j_2} (\Delta ; \trueparam)  \nonumber \\
& \qquad  +  \sum_{j=1}^d   
\mathcal{E}_j^{(k_3, k_5)}  (\sqrt{\Delta}, \sample{X}{i-1}, \theta)  \nu_i^j (\Delta ; \trueparam) 
  + \mathcal{E}^{(k_3, k_5)} (1, \sample{X}{i-1}, \theta),  \nonumber 
\end{align}  
where $\mathcal{E}^{(m_1, m_2)},
\mathcal{E}_{j_1}^{(m_1, m_2)}, 
\mathcal{E}_{j_1 j_2}^{(m_1, m_2)},
\widetilde{\mathcal{E}}_{j_1}^{\, (m_1, m_2)},
\widetilde{\mathcal{E}}_{j_1 j_2}^{\, (m_1, m_2)} \in \mathscr{S}$ for all 
$1 \le j_1, j_2 \le d$ and all $1 \le m_1, m_2 \le d_\theta$. 
Hence, from Lemmas \ref{ergodic_thm}, \ref{canonical_conv} and the consistency of $\hat{\theta}_n$, we obtain 
\begin{align*}
 \Upsilon_4 \bigl( \hat{\theta}_n - \lambda (\trueparam - \hat{\theta}_n) \bigr) \probconv 0,
\end{align*} 
uniformly in $\lambda \in (0,1]$. Thus, as $n \to \infty, \; \Delta \to 0, \; n \Delta \to \infty$,
\begin{align*} 
 M_{n, \Delta}^{l_1 l_1} M_{n, \Delta}^{l_2 l_2} 
 \partial_{\theta_{l_1}} \partial_{\theta_{l_2}}
 \contrastW{\theta} |_{\theta =  \trueparam + \lambda (\hat{\theta}_n - \trueparam)}  \probconv 0, 
\end{align*}
uniformly in $\lambda \in (0,1]$,  which implies the convergence (\ref{B_conv2}).  

\section{Proof of Proposition \ref{prop:weak_app_hypo}} \label{appendix:weak_app_hypo}
\noindent We define
$ \bar{X}_\Delta^x 
:=  Z_\Delta^x + \bar{Z}_\Delta^x$, 
with 
${Z}_\Delta^x = [Z_{R, \Delta}^\top, Z_{S,\Delta}^\top]^{\top}$ 
and
$\bar{Z}_\Delta^x = [\mathbf{0}_{d_R}^\top, \bar{Z}_{S,\Delta}^\top]^{\top}$ 
 given as: 
\begin{align*}
 Z_{R,\Delta} & =  x_{R} + V_{R, 0} (x, \beta) \Delta +  \sum_{k=1}^{d_R} V_{R, k} (x, \sigma) B_{k, \Delta}
 + \sum_{k_1, k_2 = 0}^{d_R} \hat{V}_{k_1} V_{R, k_2}(x, \theta) 
\zeta_{k_1 k_2, \Delta};  \\ 
 Z_{S,\Delta} & = x_{S} + V_{S, 0} (x, \theta) \Delta  
+ \sum_{k = 0}^{d_R} \hat{V}_{k} V_{S, 0} (x, \theta) \zeta_{k0, \Delta}; 
\\[0.3cm]
\bar{Z}_{S,\Delta} 
& = \sum_{\alpha \in \{ 0,\ldots,d_R \}^2 \setminus \{0,0\} } \hat{V}_{\alpha_1} \hat{V}_{\alpha_2} V_{S, 0} (x,  \theta) \eta_{\alpha_1 \alpha_2, \Delta}. 
\end{align*}
We are able to show that $Z_\Delta^x $ has a third order of accuracy on a single step in the sense of weak approximation for $X_\Delta^x$. In fact, one can check that random variables $\zeta_{k_1 k_2, \Delta}, \; k_1, k_2 = 1, \ldots, d_R$ satisfy the moment conditions given in 
Lemma 2.1.5 in \cite{mil:21} and then, the scheme has local weak third order precision from Theorem 2.1.6 in \cite{mil:21}. Hence, it suffices to show that for $\varphi \in C_p^{\infty} (\mathbb{R}^d)$, there exist constants $C ,q > 0$ such that 
\begin{align} \label{eq:error_bound}
 \Bigl|  \mathbb{E}_{\theta} [ \varphi (\bar{X}_\Delta^x )] 
       - \mathbb{E}_{\theta} [ \varphi ({Z}_\Delta^x )]  \Bigr|
   \leq C (1 + |x|^q) \Delta^3 . 
\end{align} 
This is because once we have the bound (\ref{eq:error_bound}), we obtain: 
\begin{align*}
 \Bigl|  \mathbb{E}_{\theta} [ \varphi ({X}_\Delta^x )] 
       - \mathbb{E}_{\theta} [ \varphi (\bar{X}_\Delta^x)]  \Bigr|
 & \le
 \Bigl|  \mathbb{E}_{\theta} [ \varphi ({X}_\Delta^x )] 
       - \mathbb{E}_{\theta} [ \varphi ({Z}_\Delta^x )]  \Bigr|
+ \Bigl|  \mathbb{E}_{\theta} [ \varphi (\bar{X}_\Delta^x )] 
       - \mathbb{E}_{\theta} [ \varphi ({Z}_\Delta^x )]  \Bigr| \\[0.1cm] 
&  \le C (1 + |x|^q) \Delta^3.  
\end{align*}
Now, we define 
${\Delta}_{\bar{X}} = \bar{X}_\Delta^x  - x \in \mathbb{R}^d$
and 
${\Delta}_Z := {Z}_\Delta^x  - x \in \mathbb{R}^d$. Let $\varphi \in C_p^\infty (\mathbb{R}^d)$. We consider the Taylor expansions of $\varphi (\bar{X}_\Delta^x )$ and $\varphi (Z_\Delta^x )$ around $\varphi (x)$, respectively as: 
\begin{align*} 
 \varphi  ( \bar{X}_\Delta^x ) 
  = \varphi (x) 
  +  \sum_{k=1}^5 \sum_{\beta \in \{1, \ldots, d \}^k} \frac{1}{k!} \partial^{\beta} \varphi (x)
   \prod_{l=1}^k {\Delta}^{\beta_l}_{\bar{X}}  
  +  {R}^{\varphi}_{\bar{X}} (x, \Delta; \theta);   \\
 \varphi  (Z_\Delta^x) 
  = \varphi (x) 
 +  \sum_{k=1}^5 \sum_{\beta \in \{1, \ldots, d \}^k} \frac{1}{k!} \partial^{\beta} \varphi (x)
   \prod_{l=1}^k  {\Delta}^{\beta_l}_{Z}  
   + {R}^{\varphi}_{Z} (x, \Delta; \theta),    
\end{align*}
where ${R}^{\varphi}_{\bar{X}} (x, \Delta; \theta)$, ${R}^{\varphi}_{Z} (x, \Delta; \theta)$ are remainder terms specified as follows. Under  condition (\ref{assump:coeff}), there exist constants $C, q > 0$ such that 
\begin{align*} 
 | \mathbb{E}_{\theta} \bigl[ R^{\varphi}_{\bar{X}} (x, \Delta; \theta) \bigr] | 
 + | \mathbb{E}_{\theta} [ R^{\varphi}_{Z} (x, \Delta; \theta) ] | \leq C  (1 + |x|^q) \Delta^3,       
\end{align*} 
that comes from the polynomial growth of derivatives of $\varphi$ with estimates given as follows. For any $p \ge 1$, there exist constants $C,q >0$ such that
\begin{align*}
    \mathbb{E}_\theta [|\bar{X}_\Delta^x|^p ] \le C (1 + |x|^q) 
\end{align*}
and
\begin{align*}
\Bigl| \mathbb{E}_{\theta} 
\Bigl[ \prod_{l=1}^6 {\Delta}^{\beta_l}_{\bar{X}} 
\Bigr] \Bigr|  = \mathcal{O} (\Delta^3),  \qquad 
\Bigl| \mathbb{E}_{\theta} 
\Bigl[ \prod_{l=1}^6 {\Delta}^{\beta_l}_{Z} 
\Bigr] \Bigr| 
= \mathcal{O} (\Delta^3), 
\end{align*}
for any multi-index $\beta \in \{1, \ldots, d\}^6$. 
Thus, we have  
\begin{align*}
\mathbb{E}_{\theta} [ \varphi  ( \bar{X}_\Delta^x ) ]  
-  \mathbb{E}_{\theta} [ \varphi (Z_\Delta^x) ] 
& =  
 \sum_{k=1}^5 \sum_{\beta \in \{1, \ldots, d \}^k} 
 \frac{1}{k!}  \partial^{\beta} \varphi (x)
 \mathbb{E}_{\theta} 
 \Bigl[\prod_{l=1}^k {\Delta}^{\beta_l}_{\bar{X}}  -  \prod_{l=1}^k {\Delta}^{\beta_l}_{Z} \Bigr] \nonumber \\[0.1cm]
& \qquad +  {R}^{\varphi}_{\bar{X}}  (x, \Delta ; \theta)  - {R}^{\varphi}_{Z} (x, \Delta; \theta). 
\end{align*} 
It follows that
\begin{align}
& \sum_{l = 1}^d  
\mathbb{E}_{\theta} \Bigl[ {\Delta}^{l}_{\bar{X}}  -   {\Delta}^{l}_Z \Bigr] 
=  
\sum_{l = d_R + 1}^{d} \mathbb{E}_{\theta} \Bigl[  \bar{Z}_\Delta^{x, l}  \Bigr] 
= 0 ;  \nonumber \\ 
& \sum_{\beta \in \{1, \ldots, d \}^k} 
 \Bigl| 
 \mathbb{E}_{\theta} \Bigl[\prod_{l=1}^k  {\Delta}^{\beta_l}_{\bar{X}} -  \prod_{l=1}^k {\Delta}^{\beta_l}_Z \Bigr]   
 \Bigr| \leq c \Delta^3, \ \ k = 2, 3, 4, 5,  \nonumber   
\end{align}
for some constant $c > 0$ from the following properties of the random variable $\eta_{k_1 k_2 ,\Delta}$. For $1 \le k_1, k_2, k_3 \le d_R$,
\begin{align*}
& \mathbb{E} [\eta_{k_1 k_2, \Delta}] = 0,   \ \ 
\mathbb{E} [\eta_{k_1 k_2, \Delta} B_{k_3,\Delta}] = 0, \ \ 
| \mathbb{E}[ \eta_{k_1 k_2, \Delta}]| = \mathcal{O} (\Delta^2);    \nonumber \\
& \mathbb{E} [\eta_{k_1 0, \Delta} ] 
= \mathbb{E} [\eta_{0 k_1, \Delta} ]  = 0, \ \ 
\bigl| \mathbb{E} [ ( \eta_{k_1 0, \Delta}  +  \eta_{0 k_1, \Delta}  ) B_{k_3, \Delta}] \bigr| = \mathcal{O} (\Delta^3),  
\end{align*} 
Now, we have the bound (\ref{eq:error_bound}) and the proof is complete. 

\section{Proof of Lemma \ref{lemma:AE_scheme}} 
\label{appendix:density_scheme}
\noindent This section provides the small-time expansion for the proposed approximation scheme, given in Lemma \ref{lemma:AE_scheme}. 
For the derivation, we use Malliavin-Watanabe calculus and consider an asymptotic expansion of the Wiener functional $F^{\varepsilon} : \Omega \to \mathbb{R}^n, \, n \in \mathbb{N}$, that depends on a small parameter $\varepsilon \in (0,1)$. In our analysis, we typically assume $\varepsilon = \sqrt{\Delta}$ and then obtain an expansion with respect to $\sqrt{\Delta}$. Section \ref{appendix:density_scheme} is organised as follows. In Section \ref{intro_malliavin}, we introduce notations and quickly review some key results of Malliavin calculus (for more details see e.g.~\cite{nual:06, watanabe1987analysis}). The proof of Lemma \ref{lemma:AE_scheme} is given in Section \ref{pf_hypo_density}. 
Proofs of technical results required by Lemma \ref{lemma:AE_scheme} are collected in Section \ref{appendix:technical_Malliavin}. 
\subsection{Preliminaries from Malliavin Calculus} \label{intro_malliavin}
\noindent 
\textbf{-- Wiener space}

\noindent Let $(\Omega, \mathcal{F}, \mathbb{P})$ be the $m$-dimensional Wiener space, $m \in \mathbb{N}$, where 
$\Omega = \{ w : [0, \infty) \to \mathbb{R}^m  ; \; w(0) = 0, w \mathrm{\; 
is \; continuous}  \}$, $\mathcal{F}$ is the Borel field over $\Omega$ and $\mathbb{P}$ the Wiener measure. We denote by $B_t = (B_{1, t}, \ldots, B_{m,t}), \, t \geq 0$ the $m$-dimensional Brownian motion under $\mathbb{P}$. 
Let $H = L^2 ([0,\infty) ; \mathbb{R}^m)$ be a Hilbert space equipped with the inner product $\langle \cdot, \cdot \rangle_H$ given as:
$$ 
\langle h_1, h_2 \rangle_H =   \int_0^\infty h_{1,s} \cdot  h_{2, s} \, ds, \qquad   h_1, h_2 \in H.
$$
We denote by $L^p (\Omega ; \mathbb{R}^n)$, $p \in [1, \infty)$, $n \in \mathbb{N}$, the Banach space of Wiener functionals $F: \Omega \to \mathbb{R}^n$ such that $\| F \|_p := \{ \mathbb{E} [|F|^p] \}^{1/p} < \infty$, where $| \cdot |$ is the Euclidean norm. Similarly, $L^p (\Omega ; H)$ is the set of maps $f : \Omega \to H$ such that $\| f \|_H \in L^p (\Omega ; \mathbb{R}) :=
L^p (\Omega)$.
\\
 
\noindent
\textbf{-- Malliavin Derivative Operator} 

\noindent For $h \in H$, we define a Wiener integral $W(h)$ as $\textstyle{W(h):=  \int_0^{\infty} h_s \cdot d B_{s}}$. We introduce a class of Wiener functionals as:  
$$
\mathcal{S}:= \bigl\{F = \varphi \bigl(W(h_1), \ldots, W(h_l) \bigr); \; l \geq 1, \varphi \in C_p^{\infty} (\mathbb{R}^l), \; h_1, \ldots, h_l \in H \bigr\}. 
$$
%
%
The Malliavin derivative of $F \in \mathcal{S}$ is defined as: 
\begin{align*}
   D F  = \sum_{i=1}^l \partial_i \varphi  \bigl( W(h_1), \ldots, W(h_l) \bigr) h_i, \qquad  F \in \mathcal{S},   
\end{align*}
or written as a stochastic process $\{D_t F\}_{t \geq 0}$ with coordinates specified as:  
\begin{align*}
   D_{j, t} F  = \sum_{i=1}^l \partial_i \varphi  \bigl( W(h_1), \ldots, W(h_l) \bigr) h_{i,t}^j, \qquad t  \geq 0, \quad  1 \le j \le m. 
\end{align*} 
For example, one has for $0 < t < T$, 
\begin{align*}
D_{j_1, t} B_{j_2, T} = \mathbf{1}_{t \in [0,T], \; j_1 = j_2}, \quad  
D_{j_1, t} \int_0^T t d B_{j_2, t} 
= t \times \mathbf{1}_{t \in [0,T], \; j_1 = j_2}, \qquad 1 \le j_1 , j_2 \le d.
\end{align*}
The iterated derivative $D^k F$ is also defined as an $H^{\otimes k}$-valued random variable. 
It can be shown that for any $k \in \mathbb{N}$, operator $D^k$ is closable from $\mathcal{S}$ into $L^p (\Omega ;  H^{\otimes k})$ for any $\; p > 1$. We denote by $\mathbb{D}^{k,p}$, $k \in \mathbb{N}, \; p >1$, the completion of $\mathcal{S}$ with respect to the norm 
\begin{align*}
\| F \|_{k, p} 
:= \left\{ \mathbb{E} [|F|^p]  
 + \sum_{j=1}^k \mathbb{E} [ \|D^j F\|^p_{H^{\otimes j}}]   
  \right\}^{1/p}. 
\end{align*}
We define $\textstyle{\mathbb{D}^{\infty} = \bigcap_{k \in \mathbb{N}, \; p > 1}} \mathbb{D}^{k, p}$ and its dual $\textstyle{\mathbb{D}^{- \infty} = \bigcup_{k \in \mathbb{N}, \; p > 1}} \mathbb{D}^{-k, p} $, where $\mathbb{D}^{-k, p}$ is the dual space of $\mathbb{D}^{k, q}$ with $q = p / (p-1)$. The elements in $\mathbb{D}^{\infty}$ are referred to as \textit{smooth Wiener functionals in Malliavin sense}. Since the elements in $\mathbb{D}^{-\infty}$ are not Wiener functionals in the usual sense, they are called {\it generalised Wiener functionals}. For $\Phi \in \mathbb{D}^{-\infty}, \; G \in \mathbb{D}^{\infty}$, the coupling is defined as 
${}_{-\infty} \langle \Phi, G \rangle_{\infty} := \mathbb{E}[ \, \Phi\,  G \, ]$. 
In particular, ${}_{-\infty} \langle \Phi, 1 \rangle_{\infty} = \mathbb{E}[ \Phi]$ is called the {\it generalised expectation} of $\Phi$. 
\\

\noindent
\textbf{-- Divergence Operator and Duality Formula} 

\noindent The divergence operator $\delta : L^2 (\Omega ; H) \to L^2 (\Omega)$ is the adjoint of $D$ characterised by the duality formula: 
\begin{align*}  
   \mathbb{E} [F \delta (u)] 
   =\mathbb{E} \Bigl[ \int_0^\infty D_{s} F \cdot u_s \,   ds  \Bigr],
   \qquad  F \in \mathbb{D}^{1,2}.  
\end{align*}
When $\{u_t \}_{t \ge 0}$ is a square-integrable stochastic process, adapted to the filtration of Brownian motion, $\delta(u)$ coincides with the It\^o-integral. In this case, the duality formula is written as: 
\begin{align} \label{eq:base_duality}
  \mathbb{E} \Bigl[ F \int_0^\infty u_s \cdot dB_s \Bigr]  
 =   \mathbb{E} \Bigl[  \int_0^\infty D_{s} F \cdot u_s \,  ds  \Bigr]. 
\end{align}   
In particular, for 
$F = \varphi (G)$, $\varphi \in C_b^\infty (\mathbb{R}^N)$, where
$G = (G^1, \ldots, G^N)$ with $G^i \in \mathbb{D}^\infty$, formula (\ref{eq:base_duality}) is interpreted as  
\begin{align} \label{eq:duality_formula}
      \mathbb{E} \Bigl[ \varphi (G) \int_0^\infty u_s \cdot dB_s \Bigr]  
      =  \sum_{i = 1}^N \mathbb{E} \Bigl[ \partial_i \varphi (G)  \int_0^\infty D_{s} G^i \cdot u_s \,  ds  \Bigr], 
\end{align}
where we applied the chain rule of the Malliavin derivative.
\\

\noindent
\textbf{-- Non-degeneracy of Wiener Functional and Malliavin Integration by Parts} 

\noindent For a Wiener functional $F = (F^1, \ldots, F^N)$ with $F^i \in \mathbb{D}^{\infty}$, $1 \le i \le N$, the Malliavin covariance is defined as an $N \times N$ non-negative definite matrix with elements $\sigma^{F}_{ij}  = \langle DF^i, DF^j \rangle_H$, $1 \le i,j \le N$. When the matrix $\sigma^F$ is invertible a.s.~and $\|(\det \sigma^F)^{-1} \|_p < \infty$ for every $1 \leq p < \infty$, $F$ is said to be \textit{non-degenerate in Malliavin sense} (henceforth, we simply say non-degenerate). 
Now, we introduce the integration by parts on Wiener space that plays an important role in the proof of Lemma \ref{lemma:AE_scheme} and Theorem \ref{thm:density_approx_second}. Let $G \in \mathbb{D}^{\infty}$, $f \in C_b^{\infty} (\mathbb{R}^N)$ and $F \in (\mathbb{D}^{\infty})^N$ be non-degenerate, with the inverse of Malliavin covariance denoted by $\gamma^F \equiv (\sigma^F)^{-1}$. Then, for any multi-index $\alpha \in \{1, \ldots, N \}^k$, $k \in \mathbb{N}$, there exists $H_\alpha \in \mathbb{D}^\infty$ such that
\begin{align*}
  \mathbb{E} [\partial^{\alpha} f (F) G  ] 
  = \mathbb{E} [ f (F) H_{\alpha} (F, G) ], 
\end{align*}
where $H_\alpha $ is recursively defined as: 
\begin{align*} 
H_{(i)} (F,  G)   = \sum_{j=1}^N \delta 
 \bigl( \gamma^F_{ij} \,  G \, D F^j \bigr), \  1 \le i \le N, \ \ \ 
 H_\alpha (F,  G)   = 
 H_{(\alpha_k)} \bigl(F, H_{(\alpha_1, \ldots, \alpha_{k-1})} (F,  G) \bigr). 
\end{align*} 
In the proof of Lemma \ref{lemma:bd_E1_E2}, the following estimate on the stochastic weight $H_\alpha (F, G)$ (see e.g.~p.102 in \cite{nual:06}) is frequently used. For $\alpha \in \{1, \ldots, N\}^k$, $k \ge 1$ and $1 \le p < q < \infty$, there exist constants $c_1, c_2, \beta, \gamma, \lambda$ and integers $m, n$ such that 
\begin{align} \label{eq:estimate_weight}
 \left\| H_{\alpha} (F, G)  \right\|_p 
 & \le c_1 \,  \| \gamma^F D F \|_{k, \beta}^k  \, \| G \|_{k, q}  \nonumber \\[0.1cm]    
 & \le c_2 \, \| (\det \sigma^F)^{-1} \|_{\gamma}^m 
  \, \| D F \|_{k, \lambda}^n \,  \| G \|_{k, q}. 
\end{align}

\noindent
\textbf{-- Generalised Expectation}

\noindent Let $\mathscr{S}' (\mathbb{R}^N)$ be the space of real tempered distributions on $\mathbb{R}^N$. For $T \in \mathscr{S}' (\mathbb{R}^N)$ and a non-degenerate $F \in (\mathbb{D}^{\infty})^N$, the composition $T \circ F = T(F)$ is well-defined as an element of $\mathbb{D}^{-\infty} $, and the {generalised expectation} is defined as ${}_{-\infty} \langle T(F) , 1 \rangle_{\infty} = \mathbb{E} \bigl[ T (F) \bigr].$ 
In particular, for the Dirac measure $\textstyle{\delta_y}$ at $\textstyle{y \in \mathbb{R}^N}$ ($\textstyle{\delta_y \in \mathscr{S}' (\mathbb{R}^N)}$) and a non-degenerate Wiener functional $F \in (\mathbb{D}^{\infty})^N$, $\delta_y (F) \in \mathbb {D}^{-\infty}$ is well-defined and the probability law of $F$ has a smooth density $p^F(y)$ with respect to the Lebesgue measure $dy$ on $\mathbb{R}^N$. Then, it can be shown that the density is given as the following {generalised expectation}: 
\begin{align*}
p^F (y) =  {}_{-\infty} \langle \delta_y (F), 1 \rangle_{\infty} = \mathbb{E} [\, \delta_y (F) \, ].
\end{align*} 
Also, it holds that for any $f \in C_b^{\infty} (\mathbb{R}^N)$, $G \in \mathbb{D}^{\infty}$ and non-degenerate Wiener functional $F \in (\mathbb{D}^{\infty})^N$, 
\begin{align} \label{eq:expectaion_rep}
 \mathbb{E} [ f (F)  G] 
 & = \int_{\mathbb{R}^N} f (y) \, p^F(y) \, \mathbb{E}[G | F = y] dy \nonumber \\
 & = \int_{\mathbb{R}^N} f (y) {}_{-\infty} 
   \langle \delta_y (F), G \rangle_{\infty} dy 
  = \int_{\mathbb{R}^N} f (y) \,  \mathbb{E} [\, \delta_y (F) G \, ] \, dy, 
\end{align} 
that leads to 
\begin{align} \label{eq:conditional_density}
    \mathbb{E} [ \, \delta_y (F) G \,  ] 
    = p^F(y) \cdot \mathbb{E}[\, G | F = y \,],  
\end{align}
for almost all $y$ such that $p^F (y) > 0$. 
\\ 

\noindent 
\textbf{-- Asymptotic Expansion of Wiener Functional.} 

\noindent Assume that a Wiener functional $F$ depends on a small parameter $\varepsilon \in (0,1)$, \\ i.e.~$F = F^\varepsilon = (F^{\varepsilon,1},  \ldots, F^{\varepsilon, N})$. We consider the composition of a tempered distribution $T \in \mathscr{S}' (\mathbb{R}^N)$ and the Wiener functional 
$F^{\varepsilon}$ with $F^{\varepsilon,i} \in \mathbb{D}^{\infty}, \; 1 \le i \le N$,
and review an asymptotic expansion of the generalised expectation ${}_{-\infty} \langle T (F^{\varepsilon}), 1 \rangle_{\infty} 
= \mathbb{E} [\, T (F^{\varepsilon}) \,]$ 
using Malliavin-Watanabe calculus (\cite{watanabe1987analysis}). Now, suppose that $F^\varepsilon$ has the following asymptotic expansion. For any $J \in \mathbb{N}$,
\begin{align*}
    F^\varepsilon = \sum_{i = 0}^J \varepsilon^i f_i + \widetilde{f}_{J+1,\varepsilon},
\end{align*}
where $f_i \in \mathbb{D}^\infty$, $0 \le i \le J$ and 
\begin{align*}
 \limsup_{\varepsilon \downarrow 0} 
 \frac{\| \widetilde{f}_{J+1,\varepsilon}  \|_{k, p}}{\varepsilon^{J+1}}  < \infty,
\end{align*} 
for any $p \in (1, \infty)$ and $k \in \mathbb{N}$. 
Furthermore, we assume that $F^\varepsilon$ is uniformly non-degenerate with respect to $\varepsilon \in (0,1)$, i.e. 
\begin{align*}
\sup_{\varepsilon \in (0,1)} \| (\det \sigma^{F^{\varepsilon}})^{-1} \|_p < \infty, 
\end{align*}
for all $p \in (1, \infty)$, where we recall that $\sigma^{F^{\varepsilon}}$ is the Malliavin covariance of $F^{\varepsilon}$. According to 
\cite{ike:14}, we have an asymptotic expansion of $\mathbb{E} [\, T (F^{\varepsilon}) \,]$ as follows.  
For any $J \in \mathbb{N}$, it holds 
\begin{align} 
\label{eq:asymp_expn}
    \mathbb{E} [\, T (F^{\varepsilon}) \,] 
    = \sum_{i = 0}^J \varepsilon^i \cdot 
    \mathbb{E} [ \Phi_i ]  + R_{J + 1, \varepsilon}, 
\end{align}
where $\Phi_i \in \mathbb{D}^{-\infty}, \;  0 \le i \le J$ and 
$|R_{J + 1, \varepsilon}| \le C \varepsilon^{J + 1}$ for some positive constant $C >0$. In particular, $\Phi_i \in {\mathbb{D}}^{-\infty}, \; 0 \le i \le J$ are determined by the formal Taylor expansion, e.g.
\begin{align*}
   &  \Phi_0 = T (f_0),  \qquad   \Phi_1 = \sum_{i=1}^N  f_1^i (\partial_i T) (f_0);  \\
   &  \Phi_2 = \sum_{i=1}^N f_2^i (\partial_i T) (f_0) + \frac{1}{2!} \sum_{i_1, i_2=1}^N f_1^{i_1} f_1^{i_2}
       (\partial_{i_1} \partial_{i_2} T) (f_0).    
\end{align*}  
We note that in the case of $T = \delta_y \in \mathscr{S}' (\mathbb{R}^N)$, i.e.~the Dirac measure mass at $y \in \mathbb{R}^N$, (\ref{eq:asymp_expn}) yields the asymptotic expansion for the Lebesgue density of the law of $F^\varepsilon$, i.e.~$p^{F^{\varepsilon}} (y) = \mathbb{E}[ \, \delta_y (F^{\varepsilon}) \, ]$. Via the Malliavin integration by parts, the remainder term $R_{J + 1, \varepsilon}$ is bounded by some constant $C > 0$ independent of $\delta_y$, typically of $y \in \mathbb{R}^N$.
\subsection{Proof of Lemma \ref{lemma:AE_scheme}}  \label{pf_hypo_density}
\subsubsection{Step 1. Asymptotic Expansion}
We first recall the definition of random variables $\bar{X}_\Delta^{\varepsilon, x}$ in (\ref{eq:scheme_eps}) in the main text. We introduce: 
\begin{align} \label{eq:tilde_Y}
\widetilde{Y}^{\varepsilon} = 
m_{x, \theta, \varepsilon} (\bar{X}^{ \varepsilon, x}_{1})
 \equiv \widetilde{Y} + \sum_{l = 1}^3 
 \varepsilon^l \cdot  \widetilde{Y}^{(l)},
\end{align}
where we have set for $y = [y_R^\top, y_S^\top ]^\top \in \mathbb{R}^d$, 
\begin{align*}
m_{x, \theta,\varepsilon} (y):=
\left[ \begin{array}{c}
\frac{{y_{R} - x_{R} - \varepsilon^2 V_{R,0}(x, \beta)}}{\varepsilon}\\
\frac{y_{S} - x_{S} - \varepsilon^2 V_{S,0}(x, \gamma)  - \tfrac{\varepsilon^4}{2} \hat{V}_0 V_{S,0}(x , \theta)}{\varepsilon^3}
\end{array}
\right],
\end{align*}
and $\widetilde{Y}, \, \widetilde{Y}^{(l)}, \, 1 \leq l \leq 3$ are explicitly given as: 
\begin{align*}
 \widetilde{Y}
 & = 
 \begin{bmatrix}
 \widetilde{Y}_{R} \\[0.1cm] 
 \widetilde{Y}_{S}  
 \end{bmatrix}  
  = 
 \begin{bmatrix}
  \sum_{k=1}^{d_R} V_{R,k} (x,  \beta) B_{k, 1} \\[0,1cm]
  \sum_{k=1}^{d_R} \hat{V}_k V_{S,0} (x , \gamma) \zeta_{k 0, 1} 
 \end{bmatrix},  \ \ 
%
 \widetilde{Y}^{(1)}
 = 
 \begin{bmatrix}
  \sum_{k_1, k_2 = 1}^{d_R} 
     \hat{V}_{k_1} V_{R, k_2} (x , \theta) \zeta_{k_1 k_2, 1} \\[0.1cm] 
  \sum_{k_1, k_2 = 1}^{d_R}  \hat{V}_{k_1} \hat{V}_{k_2} V_{S,0} (x , \theta)  \eta_{k_1 k_2, 1} 
 \end{bmatrix}; \\[0.1cm] 
%
%
 \widetilde{Y}^{(2)} 
 & = 
 \begin{bmatrix}
\sum_{k=1}^{d_R}
\bigl\{ 
  \hat{V}_0 V_{R,k} (x,  \theta) \zeta_{0k, 1} 
  + \hat{V}_k V_{R,0} (x , \theta) \zeta_{k0, 1} 
\bigr\} \\[0.2cm] 
\sum_{k=1}^{d_R}
\bigl\{
\hat{V}_0 \hat{V}_k V_{S,0} (x,  \theta) \eta_{0k, 1} 
+ \hat{V}_k \hat{V}_0 V_{S,0} (x, \theta) 
\eta_{k0, 1} 
\bigr\}
 \end{bmatrix}, \  \  
\widetilde{Y}^{(3)} 
= 
\begin{bmatrix}
\tfrac{1}{2} \hat{V}_0 V_{R,0} (x, \theta) \\[0.1cm] 
\mathbf{0}_{d_S}
\end{bmatrix}. 
\end{align*}
Transformation of random variables with $\varepsilon = \sqrt{\Delta}$ yields: 
\begin{align} \label{X_density}
p_\Delta^{\bar{X}} (x, y; \theta) 
= \mathbb{P}_\theta \big[\bar{X}_1^{\sqrt{\Delta}, x} \in dy \big]/dy
= \frac{1}{\sqrt{\Delta^{d_R + 3 d_S}}} \, p^{\widetilde{Y}^{{\sqrt{\Delta}}}} (\xi ; \theta) |_{\xi = m_{x, \theta, {\sqrt{\Delta}}}(y)}, 
\end{align} 
and thus we consider the asymptotic expansion of
$
\textstyle{ p^{\widetilde{Y}^{\varepsilon}} (\xi ;\theta ) = \mathbb{E}_{\theta} [\delta_\xi ( \widetilde{Y}^{\varepsilon})] }
$
with respect to $\varepsilon$.  
The expansion is justified by the following result, with its proof contained in Section \ref{appendix:uniform_nondeg}.
\begin{lemma} \label{lemma:uniform_nondeg}
Under conditions (\ref{assump:param_space})--(\ref{assump:hypo2}), the family of variables $\{\widetilde{Y}^{\varepsilon} \}_{\varepsilon \in (0,1)}$ is uniformly non-degenerate, i.e.~for any $(x, \theta) \in \mathbb{R}^d \times \Theta$, it holds:
\begin{align*}
 \sup_{\varepsilon \in (0,1)} \|  (\det \sigma^{\widetilde{Y}^{\varepsilon}})^{-1} \|_p < \infty,  \ \ 
\end{align*} 
for all $p \in (1, \infty)$,  where $\sigma^{\widetilde{Y}^{\varepsilon}}$ is the Malliavin covariance of $\widetilde{Y}^{\varepsilon}$. 
\end{lemma}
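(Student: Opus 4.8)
The goal is to establish uniform-in-$\varepsilon$ non-degeneracy of the family $\{\widetilde{Y}^{\varepsilon}\}_{\varepsilon\in(0,1)}$, i.e.\ a uniform bound $\sup_{\varepsilon\in(0,1)}\|(\det\sigma^{\widetilde{Y}^{\varepsilon}})^{-1}\|_p<\infty$ on all $L^p$-norms of the inverse Malliavin determinant. The plan is to reduce the problem to the zeroth-order term $\widetilde{Y}=\widetilde{Y}^{(0)}$, whose Malliavin covariance is exactly the Gaussian covariance matrix $\Sigma(1,x;\theta)$ in \eqref{eq:Var}, and then perturb. Since $\widetilde{Y}^{(0)}$ is a non-degenerate Gaussian vector under \eqref{assump:hypo1}--\eqref{assump:hypo2}, its Malliavin covariance $\sigma^{\widetilde{Y}^{(0)}}$ is deterministic, positive definite, and independent of $\varepsilon$, so $\det\sigma^{\widetilde{Y}^{(0)}}\geq c_0>0$ for a constant depending only on $(x,\theta)$.

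First I would write the Malliavin covariance $\sigma^{\widetilde{Y}^{\varepsilon}}=\sigma^{(0)}+\varepsilon\,\Xi^{\varepsilon}$, where $\sigma^{(0)}=\Sigma(1,x;\theta)$ and the remainder $\Xi^{\varepsilon}$ collects all cross terms $\langle D\widetilde{Y}^{(i)},D\widetilde{Y}^{(j)}\rangle_H$ arising from the expansion $\widetilde{Y}^{\varepsilon}=\widetilde{Y}^{(0)}+\sum_{l=1}^{3}\varepsilon^l\widetilde{Y}^{(l)}$ in \eqref{eq:tilde_Y}. The key structural facts I would verify are: (i) each $\widetilde{Y}^{(l)}$, $0\le l\le 3$, lies in $(\mathbb{D}^\infty)^d$ with Sobolev norms bounded uniformly in $\varepsilon$ (they do not depend on $\varepsilon$ at all once written in terms of the fixed Brownian increments on $[0,1]$), so that $\Xi^{\varepsilon}$ has $L^p$-bounded entries uniformly in $\varepsilon$; (ii) consequently for a unit vector $v\in\mathbb{R}^d$ one has the quadratic-form bound $v^\top\sigma^{\widetilde{Y}^{\varepsilon}}v\geq c_0-\varepsilon\,|v^\top\Xi^{\varepsilon}v|$.

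The technical core is the standard Malliavin-calculus device for lower-bounding $\det\sigma^{\widetilde{Y}^{\varepsilon}}$ in $L^p$: rather than a deterministic pointwise bound (which fails because $\Xi^{\varepsilon}$ is random and unbounded), I would estimate the smallest eigenvalue via
\begin{align*}
\mathbb{P}\Big[\inf_{|v|=1} v^\top\sigma^{\widetilde{Y}^{\varepsilon}}v<\tfrac{c_0}{2}\Big]
\leq \mathbb{P}\Big[\varepsilon\sup_{|v|=1}|v^\top\Xi^{\varepsilon}v|>\tfrac{c_0}{2}\Big],
\end{align*}
and then control the right-hand side by a Chebyshev/Markov argument applied to arbitrarily high moments of $\|\Xi^{\varepsilon}\|$, using that these moments are finite uniformly in $\varepsilon$ by (i). This yields a bound on $\mathbb{P}[\det\sigma^{\widetilde{Y}^{\varepsilon}}<(c_0/2)^d]$ that decays faster than any power of $\varepsilon$, while on the complementary event the determinant is bounded below by a positive constant; combining the two regimes gives the desired uniform $L^p$-bound on $(\det\sigma^{\widetilde{Y}^{\varepsilon}})^{-1}$, following the scheme in \cite{watanabe1987analysis, ike:14}.

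The main obstacle, and the step requiring genuine care, is verifying the positive-definiteness of the leading covariance $\sigma^{(0)}=\Sigma(1,x;\theta)$ together with the uniform Sobolev estimates on the perturbation. The positive-definiteness is precisely where the hypo-ellipticity assumptions \eqref{assump:hypo1}--\eqref{assump:hypo2} enter: $\Sigma(1,x;\theta)$ is built from $a_R=V_RV_R^\top$ in the rough block and from $\sum_k\hat{V}_kV_{S,0}\,\hat{V}_kV_{S,0}^\top$ in the smooth block, and its invertibility follows (as already noted after \eqref{eq:Var}) from the spanning condition \eqref{assump:hypo2}. One must confirm that the constant $c_0$ depends only on $(x,\theta)$ and not on $\varepsilon$, and that the $\mathbb{D}^{k,p}$-norms of $\widetilde{Y}^{(l)}$ are finite—this uses the boundedness of the coefficient vector fields and their derivatives from \eqref{assump:coeff}, ensuring every $\hat{V}_{\alpha_1}\cdots\hat{V}_{\alpha_j}V$ appearing in the $\widetilde{Y}^{(l)}$ is a smooth, polynomially-growing functional multiplied by iterated integrals with finite moments of all orders.
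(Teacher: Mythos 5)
Your reduction to the Gaussian term and your structural facts (i)--(ii) are sound, and they match the first half of the paper's argument: the paper likewise splits the Malliavin derivative as $D_{k,s}\widetilde{Y}^{\varepsilon}=F^{(1)}_{k,s}+F^{(2),\varepsilon}_{k,s}$, where the deterministic part $F^{(1)}$ integrates to exactly $\Sigma(1,x;\theta)$ from (\ref{eq:Var}) and the remainder is $\mathcal{O}(\varepsilon)$ with moments bounded uniformly in $\varepsilon$ under (\ref{assump:param_space})--(\ref{assump:coeff}). The difference is in the final step, and there your argument has a genuine gap. From $\mathbb{P}\bigl[\det\sigma^{\widetilde{Y}^{\varepsilon}}<(c_0/2)^d\bigr]\le C_q\,\varepsilon^q$ you cannot conclude $\sup_{\varepsilon}\mathbb{E}\bigl[(\det\sigma^{\widetilde{Y}^{\varepsilon}})^{-p}\bigr]<\infty$: the expectation splits into the contribution of the good event (bounded by $(c_0/2)^{-dp}$) plus $\mathbb{E}\bigl[(\det\sigma^{\widetilde{Y}^{\varepsilon}})^{-p}\mathbf{1}_{\mathrm{bad}}\bigr]$, and on the bad event you have no lower bound whatsoever on the determinant, so this second term may be infinite no matter how small $\mathbb{P}[\mathrm{bad}]$ is; controlling it by H\"older would require finiteness of a higher inverse moment, which is circular. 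What is actually needed is a lower-tail estimate of the form $\mathbb{P}[\det\sigma^{\widetilde{Y}^{\varepsilon}}<\eta]\le C_q\,\eta^q$ uniformly in $\varepsilon$ (decay in the threshold $\eta$), and the eigenvalue-perturbation-plus-Markov bound only produces decay in $\varepsilon$.

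This is not a repairable technicality within your scheme of hypotheses: ``deterministic non-degenerate leading covariance plus an $\varepsilon$-small perturbation with all moments'' does \emph{not} imply finite inverse moments, even for a single fixed $\varepsilon$. Counterexample: take $d=1$ and $F^{\varepsilon}=B_{1,1}+\tfrac{\varepsilon}{2}\bigl(B_{1,1}^2-1\bigr)$. Then $D_sF^{\varepsilon}=(1+\varepsilon B_{1,1})\mathbf{1}_{s\in[0,1]}$, so $\sigma^{F^{\varepsilon}}=(1+\varepsilon B_{1,1})^2=1+\varepsilon\bigl(2B_{1,1}+\varepsilon B_{1,1}^2\bigr)$, which has precisely your structure ($\sigma^{(0)}=1$, perturbation with moments bounded uniformly in $\varepsilon\in(0,1)$); yet $\mathbb{E}\bigl[(\sigma^{F^{\varepsilon}})^{-p}\bigr]=\mathbb{E}\bigl[|1+\varepsilon B_{1,1}|^{-2p}\bigr]=\infty$ for every $p\ge \tfrac12$ and every fixed $\varepsilon>0$, because $1+\varepsilon B_{1,1}$ is a non-degenerate Gaussian with positive density at the origin. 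Any correct proof must therefore exploit finer structure of $\widetilde{Y}^{\varepsilon}$ than the matrix-level decomposition $\sigma^{(0)}+\varepsilon\,\Xi^{\varepsilon}$. This is exactly what the paper does: it keeps the decomposition at the level of the derivative \emph{process} $s\mapsto F^{(1)}_{k,s}+F^{(2),\varepsilon}_{k,s}$, verifies positivity associated with the Gram density $Q(s)=\sum_k F^{(1)}_{k,s}\bigl(F^{(1)}_{k,s}\bigr)^{\top}$ together with the uniform moment bound $\sup_{s\le 1,\,\varepsilon\in(0,1)}\mathbb{E}_{\theta}\bigl[|F^{(2),\varepsilon}_{k,s}|^p\bigr]\le C$, and then invokes Lemma 5 (equivalently Proposition 9) of \cite{bally:03}, which is the imported device that converts this process-level information into the uniform bound on $\|(\det\sigma^{\widetilde{Y}^{\varepsilon}})^{-1}\|_p$. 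The handling of the bad event -- the step your Chebyshev argument skips -- lives inside that cited lemma, so your elementary substitute for it does not close the proof.
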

\noindent Using Lemma \ref{lemma:uniform_nondeg} and (\ref{eq:asymp_expn}),  we get the following expansion of 
$ 
p^{\widetilde{Y}^{\varepsilon}} (\xi; \theta) 
$: for $J \ge 3$, 
\begin{align} \label{eq:expansion_GE}
p^{\widetilde{Y}^{\varepsilon}} (\xi; \theta) 
& = \mathbb{E}_{\theta} [ \delta_{\xi} (\widetilde{Y}^{ \varepsilon})] 
= \mathbb{E}_{\theta}[ \delta_{\xi} (\widetilde{Y})]
  +  \sum_{k=1}^{J-1} \varepsilon^k \cdot \mathbb{E}_{\theta} [g_k] 
  + \varepsilon^J \cdot  {R}_J  (x, \xi; \theta) \nonumber \\
& \equiv 
 p^{\widetilde{Y}} (\xi; \theta)   
 + \sum_{k = 1}^{J-1} \varepsilon^k \cdot \Xi_k (x, \xi; \theta) 
 + \varepsilon^J \cdot  R_J (x, \xi ; \theta),    
\end{align}
where $g_k, \, 1 \le k \le J-1$ are explicitly defined via the formal Taylor expansion of $\varepsilon \mapsto  \delta_{\xi} (\widetilde{Y}^{\varepsilon})$ and the term $R_J  (x, \xi ; \theta)$ is given as:
\begin{align} \label{eq:remainder_bd}
 {R}_J (x, \xi; \theta) 
 = \int_0^1 \frac{(1- \lambda)^{J-1}}{(J-1)!} 
\mathbb{E}_{\theta} \Bigl[ \frac{\partial^J}{\partial \eta^J}  \delta_{\xi} (\widetilde{Y}^{\eta}) \Bigr] |_{\eta = \varepsilon \lambda} \, d \lambda.
\end{align}
Note that $\xi \mapsto p^{\widetilde{Y}} (\xi; \theta)$ represents the density of Gaussian random variable $\widetilde{Y}$ whose covariance is non-degenerate under Conditions (\ref{assump:hypo1})--(\ref{assump:hypo2}). In particular, we have 
\begin{align}
   & \Xi_1 (x, \xi ; \theta)  
   =    
  \sum_{i=1}^{d_R} \sum_{j_1, j_2 = 1}^{d_R}  
  \hat{V}_{j_1} V_{j_2}^i (x,  \theta)  
  \mathbb{E}_{\theta} 
  \Bigl[ 
  \partial_i \delta_{\xi} (\widetilde{Y}) \zeta_{j_1 j_2,1} 
  \Bigr]  +  \sum_{i= d_R + 1}^{d} \sum_{j_1, j_2 = 1}^{d_R}  \hat{V}_{j_1} \hat{V}_{j_2} V_0^i (x,  \theta)   \mathbb{E}_{\theta} \Bigl[ \partial_i \delta_{\xi} (\widetilde{Y}) \eta_{j_1 j_2, 1} \Bigr];    \nonumber \\[0.2cm]
  & \Xi_2 (x, \xi ; \theta)
  =  \sum_{i=1}^{d_R} \sum_{j=1}^{d_R}
  \mathbb{E}_{\theta} \Bigl[ \partial_i \delta_{\xi} (\widetilde{Y}) 
  \bigl\{  \hat{V}_0 V_j^i (x, \theta)    \zeta_{0j, 1} +  \hat{V}_j V_0^i (x, \theta) \zeta_{j0, 1}   
  \bigr\}  
  \Bigr]  \nonumber \\ 
  & \quad + \sum_{i=d_R + 1}^{d} \sum_{j=1}^{d_R}
  \mathbb{E}_{\theta} \Bigl[ \partial_i \delta_{\xi} (\widetilde{Y}) 
  \bigl\{ \hat{V}_0 \hat{V}_j V_0^i (x,  \theta)  \eta_{0j, 1} 
   +  \hat{V}_j \hat{V}_0 V_0^i (x,  \theta)  \eta_{j0, 1} 
  \bigr\}
  \Bigr] 
  \nonumber \\ 
  & \quad + \tfrac{1}{2} \sum_{i_1, i_2 =1}^{d_R} 
  \sum_{j_1, j_2, j_3, j_4 = 1}^{d_R}
  \hat{V}_{j_1} V_{j_2}^{i_1}(x, \theta) 
  \hat{V}_{j_3} V_{j_4}^{i_2}(x, \theta) 
  \mathbb{E}_{\theta} 
  \Bigl[ \partial_{i_1} \partial_{i_2} \delta_{\xi} (\widetilde{Y}) 
  \zeta_{j_1 j_2, 1}  \zeta_{j_3 j_4,1} 
  \Bigr]  \nonumber \\
  & \quad + \sum_{\substack{1 \leq i_1 \leq d_R \\ d_R + 1 \leq  i_2 \leq d}} \sum_{j_1, j_2, j_3, j_4 = 1}^{d_R} 
  \hat{V}_{j_1} V_{j_2}^{i_1}(x ,  \theta) 
  \hat{V}_{j_3} \hat{V}_{j_4} V_0^{i_2}(x , \theta)
  \mathbb{E}_{\theta} 
  \Bigl[\partial_{i_1} \partial_{i_2} 
  \delta_{\xi} (\widetilde{Y}) \zeta_{j_1 j_2,1 } \eta_{j_3 j_4,1}
  \Bigr]  \nonumber \\
  & \quad + \tfrac{1}{2} \sum_{i_1, i_2 = d_R + 1}^d 
  \sum_{j_1, j_2, j_3, j_4 = 1}^{d_R} 
  \hat{V}_{j_1} \hat{V}_{j_2} V_0^{i_1}(x ,  \theta) 
  \hat{V}_{j_3} \hat{V}_{j_4} V_0^{i_2}(x ,  \theta)
   \mathbb{E}_{\theta} 
   \Bigl[ \partial_{i_1} \partial_{i_2} \delta_{\xi} (\widetilde{Y})  
   \eta_{j_1 j_2, 1} \eta_{j_3 j_4, 1} 
   \Bigr]. 
\end{align}
%
\subsubsection{Step 2. Expression for Expanded Terms}
\label{sec:step2_density}
We next consider the expression for the expanded terms $\Xi_l (x, \xi ; \theta), \, 1 \le l \le J-1$, and observe that they have the general form: 
\begin{align} \label{eq:Xi_form}
\Xi_l (x, \xi ; \theta) = \sum_{1 \le k \le \nu (l)} \sum_{\alpha \in 
\{1, \ldots,  d \}^k} \nu_\alpha (x,  \theta) \mathcal{H}^{\widetilde{Y}}_\alpha(\xi; \theta) p^{\widetilde{Y}} (\xi; \theta), 
\end{align}
where $\nu(l) \ge 1$ is some integer, $\nu_{\alpha} (\cdot, \theta) \in C_b^{\infty} (\mathbb{R}^d; \mathbb{R}), \,  \theta \in \Theta$ is given by the products of $V_j (\cdot ,  \theta), \; 0 \le j \le d_R$ and theirs partial derivatives, and $\mathcal{H}^{\widetilde{Y}}_\alpha(\xi; \theta)$ is Hermite polynomial defined in (\ref{eq:hermite_Y}) given in the main text. We here provide the representation for $\Xi_l (x, \xi; \theta), \, l = 1, 2$, and then discuss the case $l \ge 3$. To obtain an analytic expression for $\Xi_{l} (x, \xi; \theta), \, l = 1,  2$, we make use of the following result whose proof is found in Section \ref{pf_representation_weight}:  
\begin{lemma} \label{representation_weight}
Let $1 \leq j_1, j_2, j_3, j_4  \leq d_R$ and $1 \leq i_1, i_2 \leq d$. Under conditions (\ref{assump:hypo1})--(\ref{assump:hypo2}), it holds that
\begin{align}
%
%
& \mathbb{E}_{\theta} 
  [ \partial_{i_1} \delta_{\xi} (\widetilde{Y}) \zeta_{0j_1, 1} ]  \nonumber \\
& \quad = \sum_{i_2 = 1}^d 
 \Bigl\{ \tfrac{1}{2}  V_{j_1}^{i_2} (x, \theta) 
 \cdot \mathbf{1}_{1 \le i_2 \le d_R} 
 + \tfrac{1}{6} 
  \hat{V}_{j_1} V_0^{i_2} (x ,  \theta) 
  \cdot \mathbf{1}_{d_R + 1 \le i_2 \le d} \Bigr\} \cdot
\mathcal{H}_{(i_1, i_2)}^{\widetilde{Y}} (\xi; \theta)  p^{\widetilde{Y}} (\xi ; \theta);  \label{weight_2}  \\
& \mathbb{E}_{\theta} [ \partial_{i_1} \delta_{\xi} ( \widetilde{Y} ) \zeta_{j_10,1}] \nonumber \\ 
&\quad  = \sum_{i_2 = 1}^{d} \Bigl\{ \tfrac{1}{2} V_{j_1}^{i_2} (x, \theta) 
\cdot \mathbf{1}_{1 \le i_2 \le d_R}
+ \tfrac{1}{3}  \hat{V}_{j_1} V_0^{i_2} (x, \theta) \cdot \mathbf{1}_{d_R + 1 \le i_2 \le d}   \Bigr\} \cdot 
\mathcal{H}_{(i_1, i_2)}^{\widetilde{Y}} (\xi; \theta)  p^{\widetilde{Y}} (\xi ; \theta);   \label{weight_3} \\ 
& \mathbb{E}_{\theta} [ \partial_{i_1}  \delta_{\xi} (\widetilde{Y}) \eta_{0 j_1,1}]  \nonumber \\ 
& \quad =  \sum_{i_2 = 1}^{d} \Bigl\{ 
\tfrac{1}{6} V_{j_1}^{i_2} (x,  \theta)
\cdot \mathbf{1}_{1 \le i_2 \le d_R} 
+ \tfrac{1}{6} \hat{V}_{j_1} V_0^{i_2} (x,  \theta) \cdot \mathbf{1}_{d_R + 1 \le i_2 \le d}   \Bigr\} \cdot \mathcal{H}_{(i_1, i_2)}^{\widetilde{Y}} (\xi; \theta) p^{\widetilde{Y}} (\xi ; \theta);   \label{weight_7}  \\
& \mathbb{E}_{\theta} 
  [ \partial_{i_1} \delta_{\xi} (\widetilde{Y}) \eta_{j_1 0,1}]  \nonumber \\ 
& \quad  = \sum_{i_2 = 1}^{d} \Bigl\{ 
   \tfrac{1}{6}   V_{j_1}^{i_2} (x, \theta)
   \cdot \mathbf{1}_{1 \le i_2 \le d_R}
   + \tfrac{1}{8} \hat{V}_{j_1} V_0^{i_2} (x, \theta)
   \cdot \mathbf{1}_{d_R + 1 \le i_2 \le d} \Bigr\} \cdot 
   \mathcal{H}_{(i_1, i_2)}^{\widetilde{Y}} (\xi; \theta) 
   p^{\widetilde{Y}} (\xi ; \theta);  \label{weight_8} \\
&  \mathbb{E}_{\theta} \bigl[ \partial_{i_1} \delta_{\xi} (\widetilde{Y}) \zeta_{j_1 j_2,1} \bigr]
 = \sum_{i_2, i_3 = 1}^{d}  \biggl\{  \tfrac{1}{2} 
 V_{j_1}^{i_2} (x, \theta) V_{j_2}^{i_3} (x, \theta) \cdot \mathbf{1}_{1 \le i_2, i_3 \le d_R} \nonumber \\   
& \quad + \tfrac{1}{3}
\hat{V}_{j_1} V_0^{i_2} (x , \theta) V_{j_2}^{i_3}(x ,  \theta) 
\cdot \mathbf{1}_{{d_R + 1 \le i_2 \le d, 1 \le i_3 \le d_R}}  
+  \tfrac{1}{6} 
  V_{j_1}^{i_2}(x , \theta) 
  \hat{V}_{j_2} V_0^{i_3} (x , \theta) 
  \cdot \mathbf{1}_{{1 \le i_2 \le d_R, d_R + 1 \le i_3 \le d}} \nonumber \\
 & \quad +  \tfrac{1}{8}  
 \hat{V}_{j_1} V_0^{i_2} (x ,  \theta) 
 \hat{V}_{j_2} V_0^{i_3} (x , \theta)
 \cdot \mathbf{1}_{d_R + 1 \le i_2, i_3 \le d} 
 \biggr\} \cdot 
 \mathcal{H}_{(i_1, i_2, i_3)}^{\widetilde{Y}}
 (\xi ; \theta) p^{\widetilde{Y}} (\xi ; \theta);  \label{weight_4}  \\[0.1cm] 
&  \mathbb{E}_{\theta} \bigl[ \partial_{i_1} \partial_{i_2} \delta_{\xi} (\widetilde{Y}) \zeta_{j_1 j_2,1}  \zeta_{j_3 j_4, 1} \bigr]
  = \tfrac{1}{2} \mathcal{H}_{(i_1, i_2)}^{\widetilde{Y}} (\xi ; \theta)
   p^{\widetilde{Y}} (\xi ; \theta)
  \cdot \mathbf{1}_{j_1 = j_3, j_2 = j_4} + r_1 (x, \xi; \theta); \label{weight_5}  \\[0.1cm]
&  \mathbb{E}_{\theta} [\partial_{i_1} \partial_{i_2} \delta_{\xi} (\widetilde{Y})  \eta_{j_1 j_2,1} 
  \eta_{j_3 j_4,1} ]
  =  \tfrac{1}{12} \mathcal{H}_{(i_1, i_2)}^{\widetilde{Y}} (\xi ; \theta)  
   p^{\widetilde{Y}} (\xi ; \theta)
   \cdot \mathbf{1}_{j_1 = j_3, j_2 = j_4}  + r_2 (x, \xi; \theta),  \label{weight_6}
\end{align}
where the terms 
$r_i(x, \xi; \theta), \; i = 1, 2$ are given in the form of
\begin{align} \label{eq:r_remainder}
 \sum_{l=4, 6} \sum_{\substack{\alpha \in \{1, \ldots, d \}^l}} 
   v_{\alpha} (x,  \theta) \mathcal{H}^{\widetilde{Y}}_{\alpha} (\xi ; \theta)  p^{\widetilde{Y}} (\xi ; \theta)
\end{align}
for some $v_{\alpha} : \mathbb{R}^d \times \Theta \to \mathbb{R}$ being of the same structure as given in (\ref{eq:Xi_form}). 
\end{lemma}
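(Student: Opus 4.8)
The plan is to evaluate each generalised expectation by Malliavin integration by parts, exploiting that $\widetilde{Y}=\widetilde{Y}^{(0)}$ is a centred Gaussian vector depending only on the driving Brownian motion $B$, with covariance $\Sigma(1,x;\theta)$ that is non-degenerate under (\ref{assump:hypo1})--(\ref{assump:hypo2}). Consequently $\delta_\xi(\widetilde{Y})\in\mathbb{D}^{-\infty}$, the polynomial multipliers $\zeta_\bullet,\eta_\bullet$ lie in $\mathbb{D}^\infty$, and all couplings are well defined. The two ingredients I would use repeatedly are the duality formula (\ref{eq:duality_formula}) and the identity $\mathbb{E}_\theta[\partial^\zeta_\alpha\delta_\xi(\zeta)|_{\zeta=\widetilde{Y}}]=(-1)^{|\alpha|}\partial^\xi_\alpha p^{\widetilde{Y}}(\xi;\theta)=\mathcal{H}^{\widetilde{Y}}_\alpha(\xi;\theta)\,p^{\widetilde{Y}}(\xi;\theta)$, where the last equality is the definition (\ref{eq:hermite_Y}). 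The relevant Malliavin derivatives are deterministic: $D_{j,s}\widetilde{Y}^i=V_j^i(x,\theta)\mathbf{1}_{[0,1]}(s)$ for a rough index $1\le i\le d_R$, and $D_{j,s}\widetilde{Y}^i=\hat{V}_j V_0^i(x,\theta)(1-s)\mathbf{1}_{[0,1]}(s)$ for a smooth index $d_R+1\le i\le d$, since $\zeta_{j0,1}=I_{(j,0)}(1)=\int_0^1(1-s)\,dB_{j,s}$.

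For the first four identities (\ref{weight_2})--(\ref{weight_8}), each multiplier is a first-order Wiener integral: $\zeta_{0j_1,1}=\int_0^1 s\,dB_{j_1,s}$, $\zeta_{j_10,1}=\int_0^1(1-s)\,dB_{j_1,s}$, while $\eta_{0j_1,1}$ and $\eta_{j_10,1}$ are the prescribed linear combinations of these (see the definitions following (\ref{eq:zetas})). A single application of (\ref{eq:duality_formula}), the chain rule, and substitution of the kernels above turns $\mathbb{E}_\theta[\partial_{i_1}\delta_\xi(\widetilde{Y})\,\zeta_\bullet]$ into $\sum_{i_2}c_{i_2}\,\mathbb{E}_\theta[\partial_{i_1}\partial_{i_2}\delta_\xi(\widetilde{Y})]$, where each scalar $c_{i_2}$ is an elementary integral over $[0,1]$ of the multiplier's kernel against $D_{\cdot,s}\widetilde{Y}^{i_2}$. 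These integrals reproduce exactly the constants $\tfrac12,\tfrac16,\tfrac13,\tfrac18$ in the statement (for instance $\int_0^1 s\,ds=\tfrac12$ gives the rough part of $\zeta_{0j_1,1}$ and $\int_0^1 s(1-s)\,ds=\tfrac16$ its smooth part), after which the Ikeda--Watanabe identity converts $\mathbb{E}_\theta[\partial_{i_1}\partial_{i_2}\delta_\xi(\widetilde{Y})]$ into $\mathcal{H}^{\widetilde{Y}}_{(i_1,i_2)}(\xi;\theta)\,p^{\widetilde{Y}}(\xi;\theta)$.

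For (\ref{weight_4}) I would first note that $\widetilde{B},W$ are independent of $\widetilde{Y}$ and enter $\zeta_{j_1j_2,1}$ only through mean-zero odd terms, so they average out and leave a polynomial in $B$ equal to a (symmetrised) iterated It\^o integral built from $\int_0^1 B_{j_1,s}\,dB_{j_2,s}$; this is literal for $j_1=j_2$, where $\zeta_{jj,1}=\tfrac12(B_{j,1}^2-1)=I_{(j,j)}(1)$. Applying (\ref{eq:duality_formula}) twice, once to remove the outer $dB_{j_2,s}$ and once to the inner factor $B_{j_1,s}=\int_0^s dB_{j_1,r}$, produces third-order derivatives $\partial_{i_1}\partial_{i_2}\partial_{i_3}$, hence the third-order Hermite polynomials $\mathcal{H}^{\widetilde{Y}}_{(i_1,i_2,i_3)}$, with the four coefficients $\tfrac12,\tfrac13,\tfrac16,\tfrac18$ arising as the nested kernel integrals $\int_0^1 k_{\mathrm{out}}(s)\bigl(\int_0^s k_{\mathrm{in}}(r)\,dr\bigr)\,ds$ for $k\in\{1,\,1-s\}$ according to whether the attached index is rough or smooth. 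For the quadratic multipliers in (\ref{weight_5}) and (\ref{weight_6}) I would use the product formula for multiple Wiener integrals, equivalently the Wiener chaos decomposition: the zeroth-chaos (constant) component of $\zeta_{j_1j_2,1}\zeta_{j_3j_4,1}$ is precisely its covariance $\tfrac12\mathbf{1}_{j_1=j_3,\,j_2=j_4}$, and that of $\eta_{j_1j_2,1}\eta_{j_3j_4,1}$ is $\tfrac1{12}\mathbf{1}_{j_1=j_3,\,j_2=j_4}$, both read off from the moment conditions imposed on the $\zeta$'s and $\eta$'s (and on the $\xi$'s in (\ref{eq:rv})). This constant part multiplies $\mathbb{E}_\theta[\partial_{i_1}\partial_{i_2}\delta_\xi(\widetilde{Y})]=\mathcal{H}^{\widetilde{Y}}_{(i_1,i_2)}\,p^{\widetilde{Y}}$ and yields the stated main terms, while the second- and fourth-chaos components, handled by two and four further applications of duality, generate the order-$4$ and order-$6$ Hermite polynomials collected into the remainder $r_i$ of the form (\ref{eq:r_remainder}).

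The main obstacle will be the bookkeeping for the quadratic identities (\ref{weight_5})--(\ref{weight_6}): one must verify that after the chaos decomposition no first- or third-order Hermite contributions survive, so that the result splits cleanly into the second-order main term and an order-$4$/$6$ remainder, and that the coefficient functions multiplying the remainder polynomials inherit the product-of-$V_j$-and-derivatives structure recorded for $v_\alpha$. This requires combining the covariance relations among $\{\zeta_\bullet,\eta_\bullet\}$ with the parity of the independent increments $\widetilde{B},W$, which annihilate all odd cross terms, leaving only even contractions that match the same kernel integrals as in the scalar case. Throughout, every generalised-expectation manipulation and each integration-by-parts weight is legitimate because $\widetilde{Y}$ is non-degenerate by Lemma \ref{lemma:uniform_nondeg}, so the stochastic weights belong to every $L^p$.
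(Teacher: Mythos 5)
Your proposal is correct and follows essentially the same route as the paper's proof: one (or two) applications of the duality formula with the deterministic kernels $D_{j,s}\widetilde{Y}^{i}=V_j^i(x,\theta)\mathbf{1}_{[0,1]}(s)$ (rough) and $\hat{V}_jV_0^i(x,\theta)(1-s)\mathbf{1}_{[0,1]}(s)$ (smooth), the elementary kernel integrals producing the constants $\tfrac12,\tfrac13,\tfrac16,\tfrac18$, independence of $\widetilde{B},W$ from $\widetilde{Y}$ to kill the auxiliary-noise terms, and the Ikeda--Watanabe identity to convert $\mathbb{E}_\theta[\partial_\alpha\delta_\xi(\widetilde{Y})]$ into $\mathcal{H}^{\widetilde{Y}}_\alpha p^{\widetilde{Y}}$. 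The only cosmetic difference is in (\ref{weight_5})--(\ref{weight_6}), where you read the leading constants $\tfrac12$ and $\tfrac1{12}$ off as the zeroth-chaos component (the covariance given by the moment conditions), whereas the paper obtains the same constants by expanding the products of iterated integrals explicitly via It\^o's formula and tracking the $\widetilde{B}$ cross terms by hand; these are equivalent computations.
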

Thus, applying Lemma \ref{representation_weight} to (\ref{eq:expansion_GE}), we obtain expression for $\Xi_l (x, \xi ; \theta), \, l = 1, 2$ as: 
\begin{align} \label{eq:Xi_1_cf}
& \Xi_1 (x, \xi; \theta) \nonumber \\ 
& = \sum_{j_1, j_2 = 1}^{d_R} \sum_{i_1 =1}^{d}  
\biggl\{
\hat{V}_{j_1} V_{j_2}^{i_1} (x, \theta) 
\cdot \mathbf{1}_{1 \le i_1 \le d_R}
+ \tfrac{1}{3}  \hat{V}_{j_1} \hat{V}_{j_2} V_0^{i_1} (x, \theta) \cdot \mathbf{1}_{d_R + 1 \le i_1 \le d}  \biggr\} 
\cdot \widetilde{\Xi}^{i_1}_{j_1 j_2} (x, \xi; \theta), 
\end{align}
where 
\begin{align*}
\widetilde{\Xi}^{i_1}_{j_1 j_2} (x, \xi; \theta) 
& =  \tfrac{1}{2}  
\sum_{i_2, i_3 =1}^{d_R}  
V_{j_1}^{i_2}(x , \theta) 
V_{j_2}^{i_3}(x , \theta)  \, 
\mathcal{H}_{(i_1,i_2,i_3)}^{\widetilde{Y}} (\xi; \theta) p^{\widetilde{Y}} (\xi ; \theta) \\ 
&  \quad  +  \tfrac{1}{3}
\sum_{\substack{d_R + 1 \leq  i_2 \leq d \\ 1 \leq i_3 \leq d_R}}  
\hat{V}_{j_1} V_0^{i_2} (x , \theta) V_{j_2}^{i_3}(x ,  \theta)  
\, \mathcal{H}_{(i_1,i_2,i_3)}^{\widetilde{Y}} (\xi; \theta) p^{\widetilde{Y}} (\xi ; \theta)   
\\
& \quad +  \tfrac{1}{6} \sum_{\substack{1 \leq  i_2 \leq d_R \\ d_R + 1 \leq i_3 \leq d}}   
V_{j_1}^{i_2}(x , \theta) 
\hat{V}_{j_2} V_0^{i_3} (x , \theta) 
\mathcal{H}_{(i_1,i_2,i_3)}^{\widetilde{Y}} (\xi; \theta)  p^{\widetilde{Y}} (\xi ; \theta) \\
 & \quad+  \tfrac{1}{8}  \sum_{i_2, i_3 = d_R + 1}^d 
 \hat{V}_{j_1} V_0^{i_2} (x ,  \theta) 
 \hat{V}_{j_2} V_0^{i_3} (x , \theta)
 \mathcal{H}_{(i_1,i_2,i_3)}^{\widetilde{Y}} (\xi; \theta) p^{\widetilde{Y}} (\xi ; \theta),  
\end{align*}
and  
\begin{align} 
\Xi_{2} (x, \xi; \theta) 
& = \tfrac{1}{2} \sum_{i_1, i_2 =1}^{d_R} \sum_{k = 1}^{d_R} \Bigl( \hat{V}_{k} V_{0}^{i_1} (x,  \theta)
 + \hat{V}_{0} V_{k}^{i_1} (x, \theta) \Bigr) V_k^{i_2}( x ,  \theta )   
 \mathcal{H}_{(i_1,i_2)}^{\widetilde{Y}} (\xi; \theta)  p^{\widetilde{Y}} (\xi ; \theta) 
\nonumber  \\
& \quad  +  \tfrac{1}{4} \sum_{i_1, i_2 =1}^{d_R} \sum_{k_1, k_2 = 1}^{{d_R}}   
\hat{V}_{k_1} V_{k_2}^{i_1}(x ,  \theta) \hat{V}_{k_1} V_{k_2}^{i_2}(x ,  \theta)  \mathcal{H}_{(i_1,i_2)}^{\widetilde{Y}} (\xi; \theta)  p^{\widetilde{Y}} (\xi ; \theta)   \nonumber  \\ 
& \quad +  \sum_{\substack{1 \leq i_1 \leq d_R \\ d_R + 1 \leq i_2 \leq d}}  \sum_{k = 1}^{d_R} 
  \Bigl( \tfrac{1}{3} \hat{V}_{k} V_{0}^{i_1} (x ,  \theta)
 + \tfrac{1}{6} \hat{V}_{0} V_{k}^{i_1} (x ,  \theta) \Bigr) 
 \hat{V}_k V_0^{i_2}(x , \theta)  
\mathcal{H}_{(i_1,i_2)}^{\widetilde{Y}} (\xi; \theta) p^{\widetilde{Y}} (\xi ; \theta) \nonumber 
\\
& \quad  + \tfrac{1}{6} \sum_{\substack{1 \leq i_1 \leq d_R \\ d_R + 1 \leq i_2 \leq d}} 
\sum_{k=1}^{d_R} 
V_k^{i_1} (x ,  \theta)
\Bigl(\hat{V}_0 \hat{V}_k V_0^{i_2} (x , \theta) + \hat{V}_k \hat{V}_0 V_0^{i_2} (x ,  \theta) \Bigr)
\mathcal{H}_{(i_1,i_2)}^{\widetilde{Y}} (\xi; \theta)   
p^{\widetilde{Y}} (\xi ; \theta) 
\nonumber \\
& \quad  +  \tfrac{1}{6} \sum_{\substack{1 \leq i_1 \leq d_R \\ d_R + 1 \leq  i_2 \leq d}} \sum_{k_1, k_2 = 1}^{{d_R}} \hat{V}_{k_1} V_{k_2}^{i_1}(x,  \theta) \hat{V}_{k_1} \hat{V}_{k_2} V_0^{i_2}(x ,  \theta)
\mathcal{H}_{(i_1,i_2)}^{\widetilde{Y}} (\xi; \theta) p^{\widetilde{Y}} (\xi ; \theta) 
\nonumber \\
& \quad + \sum_{i_1, i_2 = d_R + 1}^d  
 \sum_{k=1}^{d_R} 
 \hat{V}_k V_0^{i_1}  (x, \theta)
  \Bigl( \tfrac{1}{6} \hat{V}_0 \hat{V}_k V_0^{i_2} (x,  \theta) 
 +   \tfrac{1}{8}\hat{V}_k \hat{V}_0 V_0^{i_2} (x,  \theta) \Bigr) 
 \mathcal{H}_{(i_1,i_2)}^{\widetilde{Y}} (\xi; \theta) p^{\widetilde{Y}} (\xi ; \theta)  
\nonumber  \\
& \quad  + \tfrac{1}{24} \sum_{i_1, i_2 = d_R + 1}^d \sum_{k_1, k_2 = 1}^{{d_R}} 
  \hat{V}_{k_1} \hat{V}_{k_2} V_0^{i_1}(x,  \theta) \hat{V}_{k_1} \hat{V}_{k_2} V_0^{i_2}(x ,  \theta) 
  \mathcal{H}_{(i_1,i_2)}^{\widetilde{Y}} (\xi; \theta) p^{\widetilde{Y}} (\xi ; \theta)  \nonumber \\ 
& \quad + \sum_{l=4,6} \sum_{\substack{\alpha \in \{1, \ldots, d \}^l}}  w_{\alpha} (x, \theta) \times 
\mathcal{H}^{\widetilde{Y}}_{\alpha} (\xi ; \theta)  p^{\widetilde{Y}} (\xi ; \theta),   
\label{eq:Xi_2_cf}
\end{align} 
with $w_{\alpha}$ having the same structure as $\nu_\alpha$ in (\ref{eq:Xi_form}). 
\\ 

We consider the terms $\Xi_l (x, \xi ; \theta)$ for $l \ge 3$, and observe that they have the general form as (\ref{eq:Xi_form}). 
First, from $\widetilde{Y}^\varepsilon$ given in (\ref{eq:tilde_Y}), each of the terms in ${\Xi}_l (x, \xi; \theta), \, l \ge 3$ is given as: 
\begin{align*}
 v_{\beta}^\alpha (x, \theta) \cdot  
\mathbb{E}_{\theta} 
 \Bigl[ \partial^{\alpha} \delta_{\xi} (\widetilde{Y})  
 I_\beta (1) \Bigr]
\end{align*} 
for some multi-indices $\alpha \in  \{1, \ldots, d\}^k, \; k \ge 1$ and $\beta \in \{0,1, \ldots, d_R \}^{\nu}, \; \nu \geq 1$, and $v_{\beta}^\alpha : \mathbb{R}^d \times \Theta \to \mathbb{R}$ being the same structure as $\nu_\alpha $ in (\ref{eq:Xi_form}) under condition (\ref{assump:coeff}). 
We note that the multiple stochastic integral $I_{\beta} (1)$ comes from application of It\^o's formula to the products of combinations of the following random variables: 
$$
B_{k_1, 1}, \quad   \widetilde{B}_{k_2, 2},  \quad   
I_{(k_3 , 0)} (1), \quad  I_{(0, k_4)} (1), \qquad  0 \le k_1, k_2, k_3, k_4 \le d_R.
$$
Then, iterative use of duality formula (\ref{eq:duality_formula}) (in particular, Lemma 2.5 in \cite{iguchi2021second}) yields 
\begin{align} \label{eq:ibp_general}
v_{\beta}^\alpha (x, \theta)  \cdot 
\mathbb{E}_{\theta} \Bigl[ \partial^{\alpha} \delta_{\xi} (\widetilde{Y})  I_\beta (1) \Bigr]
= 
\sum_{\gamma \in \{1, \ldots, d\}^{n(\beta)}} 
v_{\beta}^\alpha (x, \theta)  v_{\gamma} (x,  \theta) \cdot 
\mathbb{E}_{\theta} \Bigl[ \partial^{\alpha} \partial^{\gamma}  \delta_{\xi} (\widetilde{Y}) \Bigr], 
\end{align}
for some $v_{\gamma} : \mathbb{R}^d \times \Theta \to \mathbb{R}$ again characterised as $\nu_\alpha$ in (\ref{eq:Xi_form}), where $n(\beta)$ is the number of non-zero elements in $\beta$. Thus, we conclude that ${\Xi}_l (x, \xi; \theta), \, l \ge 3$, admits the general form (\ref{eq:Xi_form}) from the following formula (e.g., see Theorem 9.2 in \cite{ike:14}): 
%
\begin{align} 
\label{eq:deriv_density}
 \mathbb{E}_{\theta} [ \partial^{\beta} \delta_{\xi} (\widetilde{Y})]
 = (-1)^{|\beta|}  
 \partial^\xi_\beta p^{\widetilde{Y}} (\xi ; \theta) 
 = \mathcal{H}_{\beta}^{\widetilde{Y}} (\xi ; \theta)  
 \cdot p^{\widetilde{Y}} (\xi ; \theta), \quad \beta \in \{1, \ldots, d \}^l, \; l \geq 1.  
\end{align}
\subsubsection{Step 3. Specification of the Residual Term}
We consider the term $R_J  (x, \xi ; \theta)$ specified as (\ref{eq:remainder_bd}). Each of the terms in $\textstyle{\mathbb{E}_{\theta} [ \frac{\partial^J}{\partial \eta^J}  \delta_{\xi} (\widetilde{Y}^{\eta}) ] }$ is given as:   
\begin{align*}
   \mathbb{E}_{\theta}
   \Bigl[ \partial^{\alpha} \delta_{\xi} (\widetilde{Y}^{\eta}) 
    G_{\alpha}^{\eta} (x; \theta)
   \Bigr],   
\end{align*} 
where $\alpha \in \{ 1, \ldots , d \}^k, \;  k = \nu_J, \ldots, J$ for some integer $\nu_J < J$, and $G_{\alpha}^{\eta} (x; \theta)  \in \mathbb{D}^{\infty}$ is given by products of $\widetilde{Y}^{(\beta_1), \alpha_1}, \ldots, \widetilde{Y}^{(\beta_k), \alpha_k}$ with $\beta \in \{1, 2, 3\}^k$. In particular, under  conditions (\ref{assump:param_space})--(\ref{assump:coeff}), there exists a constant $c >0$ independent of $x \in \mathbb{R}^d$ and $\theta \in \Theta$ such that $\| G_{\alpha}^{\eta} (x; \theta) \|_{k,p} \leq c$ for any $k \in \mathbb{N}$ and $p \geq 1$. Since $\{ \widetilde{Y}^{ \eta} \}_{\eta \in (0,1)}$ is non-degenerate uniformly in $\eta \in (0,1)$ from Lemma \ref{lemma:uniform_nondeg}, one can apply the Malliavin integration by parts and obtain the following estimate due to the upper bound (\ref{eq:estimate_weight}): 
\begin{gather}
\Bigl| \mathbb{E}_{\theta} \Bigl[ \partial^{\alpha} \delta_{\xi} (\widetilde{Y}^{\eta}) G_{\alpha}^{\eta} (x; \theta)  \Bigr] \Bigl|
 =  \Bigl| \mathbb{E}_{\theta} \Bigl[ \mathbf{1}_{\widetilde{Y}^{\eta} \geq \xi} 
H_{(\alpha_1, \ldots, \alpha_k, 1, \ldots, d)} (\widetilde{Y}^{\eta}, G_{\alpha}^{\eta} (x ; \theta) ) \Bigr] \Bigl| \nonumber \\[0.1cm]
\leq  \bigl\| H_{(\alpha_1, \ldots, \alpha_k, 1, \ldots, d)} (\widetilde{Y}^{\eta}, G_{\alpha}^{\eta} (x; \theta) )  \bigr\|_p 
\leq c \, \| ( \det \sigma^{\widetilde{Y}^{\eta}} )^{-1} \|_q^m 
\leq C,  \nonumber 
\end{gather} 
for some $p, q \geq 2$ and positive constants $C, c, m$ independent of $\xi, x \in \mathbb{R}^d, \theta \in \Theta$. 
Thus, there exists a constant $C > 0$ such that %
\begin{align*}
    \sup_{x, \xi \in \mathbb{R}^d, \theta \in \Theta}
    \bigl| R_J (x, \xi ; \theta)  \bigr| 
    \leq  C.
\end{align*}
\subsubsection{Final Step. Complete Form of Density Expansion}
Finally, substituting $\sqrt{\Delta}$ for $\varepsilon$ in the formula (\ref{X_density}) with (\ref{eq:expansion_GE}), (\ref{eq:Xi_form}), (\ref{eq:Xi_1_cf}) and (\ref{eq:Xi_2_cf}), we obtain the density expansion (\ref{eq:density_expansion}) in the main text. We note that 
$\Xi_l (x, \xi ; \theta) |_{\xi = m_{x, \theta, \sqrt{\Delta}} (y)}$ corresponds to 
$ \Psi_l (\Delta, x, y ; \theta) \times p_\Delta^{\widetilde{X}^{\mrm{LG}}} (x, y ; \theta)$, $ 1 \le l \le J-1$. The proof of Lemma \ref{lemma:AE_scheme} is now complete.
\subsection{Proof of Technical Results for Lemma \ref{lemma:AE_scheme}}  \label{appendix:technical_Malliavin} 
\subsubsection{Proof of Lemma \ref{lemma:uniform_nondeg}}  \label{appendix:uniform_nondeg}
\noindent We will show that, for any $t > 0$, $(x, \theta) \in \mathbb{R}^d \times \Theta$, it holds that 
\begin{align*}
\textstyle{\sup_{\varepsilon \in (0,1)} \| (\det \sigma^{\widetilde{Y}^{\varepsilon}})^{-1} \|_p  < \infty }
\end{align*}
for all $p \in (1, \infty)$. The Malliavin derivative of $\widetilde{Y}^{ \varepsilon}$ is given via
$D_{k,s } \widetilde{Y}^{\varepsilon} 
\equiv  F_{k, s}^{(1)} +  F_{k, s}^{(2),  \varepsilon}$, $1 \le k \leq d_R$, $s > 0$, where  
\begin{align*}
F_{k,s}^{(1)}
& = 
\begin{bmatrix}
  D_{k,s} \widetilde{Y}_{R} \\
  D_{k,s} \widetilde{Y}_{S}
\end{bmatrix}  
= 
\begin{bmatrix}
 V_{R, k} (x, \theta)  \times \mathbf{1}_{s \in [0,1]} \\[0.1cm]  
 \hat{V}_k  V_{S, 0} (x,  \theta) (1 - s) \times \mathbf{1}_{s \in [0,1]} 
\end{bmatrix},  \ \ 
F_{k, s}^{(2),  \varepsilon}  
 = \sum_{l = 1}^3 \varepsilon^l \cdot D_{k, s} \widetilde{Y}^{(l)}. 
%
%
\end{align*}
Under conditions (\ref{assump:hypo1})--(\ref{assump:hypo2}),  there exists a  constant $c>0$ such that 
\begin{align} \label{eq:positive_definite}
  \sup_{ | \xi |  = 1, \xi \in \mathbb{R}^d}  \bigl\langle Q(s) \xi, \,  \xi \bigr\rangle \geq c,  \qquad    0 < s < 1,  
\end{align}
where $Q(s)$ is a $d \times d$ matrix with elements given as 
$$
Q_{ij}(s) = \sum_{k=1}^{d_R} F_{k, s}^{(1), i}  F_{k, s}^{(1), j},  
\qquad  1 \le i, j \le d.
$$ 
Notice that the matrix $Q(s)$ is independent of $\varepsilon \in (0,1)$ and the positive definiteness holds uniformly in $\varepsilon$. Furthermore, under conditions (\ref{assump:param_space})--(\ref{assump:coeff}), there exists a constant $C  > 0$ independent of $(x, \theta) \in \mathbb{R}^d \times \Theta$ such that for any $p \in (1, \infty)$, 
\begin{align} \label{eq:moment_bound}
\sup_{s \leq 1, \;  \varepsilon \in (0,1) } \mathbb{E}_{\theta} [|F_{k,s}^{(2),  \varepsilon}  |^p] 
\leq C. 
\end{align} 
From Lemma 5 (or equivalently Proposition 9) in \cite{bally:03}, (\ref{eq:positive_definite}) and (\ref{eq:moment_bound}) lead to  the following result. For every 
$(x, \theta) \in \mathbb{R}^d \times \Theta$, we have  $\textstyle{\sup_{\varepsilon \in (0,1)} \| (\det \sigma^{\widetilde{Y}^{\varepsilon}})^{-1} \|_p  < \infty }$ for any $p \in (1, \infty)$. The proof is now complete.
%
%
%
%
%
\subsubsection{Proof of Lemma \ref{representation_weight}}  \label{pf_representation_weight}
\noindent 
{\bf Derivation of (\ref{weight_2}):} The duality formula (\ref{eq:duality_formula}) and (\ref{eq:deriv_density}) yield 
\begin{align}
  & \mathbb{E}_{\theta} 
   \Bigl[ \partial_{i_1} \delta_{\xi} (\widetilde{Y})
   \int_0^1 s dB_{j, s} 
   \Bigr] 
   =  \sum_{i_2 = 1}^d  
   \mathbb{E}_{\theta}
   \Bigl[ \partial_{i_2} \partial_{i_1} \delta_{\xi} (\widetilde{Y}) \int_0^1 D_{j,s}\widetilde{Y}^{ i_2} s ds 
   \Bigr] \nonumber \\ 
   &  =  \tfrac{1}{2}  \sum_{i_2 = 1}^{d_R} V_j^{i_2} (x, \theta)  \mathbb{E}_{\theta} 
   \bigl[ 
   \partial_{i_2}  \partial_{i_1}  \delta_{\xi} (\widetilde{Y}) 
   \bigr]
   + \tfrac{1}{6}  \sum_{i_2 = d_R + 1}^d  \hat{V}_j V_0^{i_2} (x, \theta) 
   \mathbb{E}_{\theta} 
   \bigl[
   \partial_{i_2}  \partial_{i_1}  \delta_{\xi} (\widetilde{Y}) 
   \bigr] \nonumber \\
   %
    & =  \tfrac{1}{2}  \sum_{i_2 = 1}^{d_R}  V_j^{i_2} (x, \theta) 
    \mathcal{H}_{(i_1, i_2)}^{\widetilde{Y}} (\xi ; \theta )
    p^{\widetilde{Y}} (\xi; \theta)  
    + \tfrac{1}{6}  \sum_{i_2 = d_R + 1}^d   \hat{V}_j V_0^{i_2} (x, \theta) 
    \mathcal{H}_{(i_1, i_2)}^{\widetilde{Y}} (\xi ; \theta) 
    p^{\widetilde{Y}}  (\xi; \theta),  \nonumber  
\end{align}
where we used
\begin{align} \label{Mall_deriv_Y}
  D_{j_2, s} \widetilde{Y} = 
 \begin{bmatrix} 
  V_{R, j_2} (x,  \theta) \times \mathbf{1}_{s \in [0,1]}  \\
 \hat{V}_{j_2} V_{S,0} (x, \theta) (1-s) \times \mathbf{1}_{s \in [0,1]} 
 \end{bmatrix}. 
\end{align} 
\\

\noindent 
{\bf Derivation of (\ref{weight_3}):} The duality formula (\ref{eq:duality_formula}), (\ref{eq:deriv_density}) and (\ref{Mall_deriv_Y}) yield  
\begin{align}
    & \mathbb{E}_{\theta} 
     \Bigl[ \partial_{i_1}  \delta_{\xi} (\widetilde{Y}) \int_0^1 B_{j, s} ds \Bigr] 
    = \mathbb{E}_{\theta}
    \Bigl[ \partial_{i_1}  \delta_{\xi} (\widetilde{Y}) 
    \int_0^1 (1-s) dB_{j, s} \Bigr] \nonumber \\
    & =  \tfrac{1}{2}  \sum_{i_2 = 1}^{d_R} V_j^{i_2} (x, \theta) 
    \mathbb{E}_{\theta} \bigl[
    \partial_{i_2}  \partial_{i_1}  \delta_{\xi} (\widetilde{Y}) \bigr]
    + \tfrac{1}{3}  \sum_{i_2 = d_R + 1}^d  \hat{V}_j V_0^{i_2} (x, \theta) \mathbb{E}_{\theta} \bigl[ 
    \partial_{i_2}  \partial_{i_1}  \delta_{\xi} (\widetilde{Y})
    \bigr] \nonumber \\ 
    & =  \tfrac{1}{2}  \sum_{i_2 = 1}^{d_R}  V_j^{i_2} (x,  \theta) 
    \mathcal{H}_{(i_1, i_2)}^{\widetilde{Y}} (\xi; \theta) 
    p^{\widetilde{Y}} (\xi; \theta )
    + \tfrac{1}{3}  \sum_{i_2 = d_R + 1}^d   
    \hat{V}_j V_0^{i_2} (x, \theta) \mathcal{H}_{(i_1, i_2)}^{\widetilde{Y}} (\xi; \theta )  
    p^{\widetilde{Y}} (\xi ; \theta).  \nonumber 
\end{align} 
\noindent
{\bf Derivation of (\ref{weight_7}):}
We have 
\begin{align}
    \mathbb{E}_{\theta} 
    [ \partial_{i_1} \delta_{\xi} (\widetilde{Y}) \eta_{0j, 1}]
    = \mathbb{E}_{\theta}
     [ \partial_{i_1} \delta_{\xi} (\widetilde{Y}) \zeta_{j0,1 } ]
     - \tfrac{1}{3}  \mathbb{E}_{\theta} [ \partial_{i_1} \delta_{\xi} (\widetilde{Y}) B_{j,1}],    
\end{align}
and 
\begin{align}
   &  \mathbb{E}_{\theta}
    [ \partial_{i_1} \delta_{\xi} (\widetilde{Y}) B_{j, 1} ] 
    = \sum_{i_2=1}^d 
   \mathbb{E}_{\theta}
   [ \partial_{i_2} \partial_{i_1} \delta_{\xi} (\widetilde{Y}) 
    \int_0^1 D_{j, s} \widetilde{Y}^{i_2} ds ]  \nonumber \\  
  & =  \sum_{i_2 = 1}^{d_R}  V_j^{i_2} (x; \theta)  
  \mathcal{H}_{(i_1, i_2)}^{\widetilde{Y}} (\xi ; \theta)    
   p^{\widetilde{Y}} (\xi ; \theta)   
  + \tfrac{1}{2}  \sum_{i_2 = d_R + 1}^d   
   \hat{V}_j V_0^{i_2} (x, \theta)
   \mathcal{H}_{(i_1, i_2)}^{\widetilde{Y}} (\xi ; \theta ) 
   p^{\widetilde{Y}} (\xi ; \theta).  \label{eq:BM_weight}
\end{align} 
Using (\ref{weight_3}) and (\ref{eq:BM_weight}), we obtain 
\begin{align*} 
     & \mathbb{E}_{\theta} [ \partial_{i_1} \delta_{\xi} (\widetilde{Y}) \eta_{0j, 1}]  \nonumber \\ 
     & = \tfrac{1}{6} \sum_{i_2 = 1}^{d_R}  V_j^{i_2} (x, \theta)  \mathcal{H}_{(i_1, i_2)}^{\widetilde{Y}} (\xi ; \theta)
     p^{\widetilde{Y}} (\xi ; \theta) 
     + \tfrac{1}{6} \sum_{i_2 = d_R + 1}^d  
     \hat{V}_j V_0^{i_2} (x,  \theta) 
     \mathcal{H}_{(i_1, i_2)}^{\widetilde{Y}} (\xi ; \theta)
     p^{\widetilde{Y}} (\xi ; \theta). 
\end{align*} 

\noindent
{\bf Derivation of (\ref{weight_8}):} Making use of (\ref{weight_3}) and (\ref{eq:BM_weight}),  we obtain
\begin{align*}
    & \mathbb{E}_{\theta} [ \partial_{i_1} \delta_{\xi} (\widetilde{Y}) \eta_{j0, 1}] 
    = \tfrac{1}{2} \mathbb{E}_{\theta} 
    [ \partial_{i_1} \delta_{\xi}  (\widetilde{Y})  \zeta_{j0, 1} ]
    - \tfrac{1}{12} \mathbb{E}_{\theta} 
    [ \partial_{i_1} \delta_{\xi} (\widetilde{Y}) B_{j, 1}]   \\ 
    & = \tfrac{1}{6} \sum_{i_2 = 1}^{d_R}  V_j^{i_2} (x, \theta)  
   \mathcal{H}_{(i_1, i_2)}^{\widetilde{Y}} (\xi ; \theta)    
   p^{\widetilde{Y}} (\xi ; \theta) +  \tfrac{1}{8} \sum_{i_2 = d_R + 1}^d 
    \hat{V}_j V_0^{i_2} (x,  \theta) 
    \mathcal{H}_{(i_1, i_2)}^{\widetilde{Y}} (\xi ; \theta )
    p^{\widetilde{Y}} (\xi ; \theta). 
\end{align*}

\noindent 
 {\bf Derivation of (\ref{weight_4}):} Recall: for $1 \le j_1, j_2 \le d_R$, 
\begin{align} \label{eq:rv_def}
   \zeta_{j_1 j_2, 1} 
   = \tfrac{1}{2} 
    \bigl\{ B_{j_1, 1} B_{j_2, 1} -  \mathbf{1}_{j_1 = j_2} \bigr\}
   + \tfrac{1}{2} B_{j_1, 1} \widetilde{B}_{j_2, 1} \times  \mathbf{1}_{j_1 <  j_2}
   - \tfrac{1}{2} B_{j_2, 1} \widetilde{B}_{j_1, 1} \times  \mathbf{1}_{j_1 >  j_2}.
\end{align}
Since $\widetilde{B}$ and $B$ are independent, we have 
\begin{align*}
  \mathbb{E}_{\theta} 
  \Bigl[ 
   \partial_{i_1} \delta_{\xi} (\widetilde{Y})   \zeta_{j_1 j_2, 1} 
  \Bigr]
  & =  \tfrac{1}{2} 
  \mathbb{E}_{\theta} 
  \Bigl[ \partial_{i_1} \delta_{\xi} (\widetilde{Y})  
  \bigl\{ B_{j_1, 1} B_{j_2, 1} - \mathbf{1}_{j_1 = j_2} 
  \bigr\} 
  \Bigr] \nonumber \\ 
  & =  \tfrac{1}{2} \mathbb{E}_{\theta} 
   \Bigl[ \partial_{i_1} \delta_{\xi} (\widetilde{Y}) 
     \Bigl\{ 
     \int_0^1 B_{j_1,s } d B_{j_2,s } +  \int_0^1 B_{j_2,s } d B_{j_1, s}  \Bigr\}
   \Bigr]. 
\end{align*}
The duality formula (\ref{eq:duality_formula}) and the Malliavin derivative (\ref{Mall_deriv_Y}) give
\begin{align*}
    &\mathbb{E}_{\theta} 
    \Bigl[ \partial_{i_1} \delta_{\xi} (\widetilde{Y}) 
    \int_0^1 B_{j_1,s } d B_{j_2, s}  
    \Bigr] 
    =  \sum_{i_2=1}^d  \mathbb{E}_{\theta}
    \Bigl[ \partial_{i_1} \partial_{i_2} \delta_{\xi} (\widetilde{Y})
    \int_0^1  D_{j_2, s} \widetilde{Y}^{i_2}   B_{j_1,s } d s  
    \Bigr]   \nonumber    \\ 
    & =  \sum_{i_2 = 1}^{d_R}  
    \mathbb{E}_{\theta} 
    \Bigl[ \partial_{i_2}  \partial_{i_1} \delta_{\xi} (\widetilde{Y}) \int_0^1 B_{j_1,s} d s 
    \Bigr]  V_{j_2}^{i_2} (x , \theta)  \nonumber \\
    & \quad  + \sum_{i_2 = d_R + 1}^d  \mathbb{E}_{\theta} \Bigl[ \partial_{i_2}  \partial_{i_1} \delta_{\xi} (\widetilde{Y}) \int_0^1 (1-s) B_{j_1, s} d s \Bigr]  \hat{V}_{j_2} V_{0}^{i_2} (x ,  \theta)  \nonumber \\
    & =  
    \tfrac{1}{2} \sum_{i_2, i_3 = 1}^{d_R} 
    \mathbb{E}_{\theta} 
    \Bigl[
    \partial_{i_3} \partial_{i_2}  \partial_{i_1} \delta_{\xi} (\widetilde{Y})  \Bigr]  
    V_{j_1}^{i_2}(x , \theta) V_{j_2}^{i_3} (x , \theta)  \nonumber \\
    & 
    \quad + \tfrac{1}{3} \sum_{\substack{d_R + 1 \leq  i_2 \leq d \\ 1 \leq i_3 \leq d_R}}  
    \mathbb{E}_{\theta} \Bigl[ \partial_{i_3} \partial_{i_2}  \partial_{i_1} \delta_{\xi} (\widetilde{Y})  \Bigr] 
    \hat{V}_{j_1} V_0^{i_2} (x , \theta) V_{j_2}^{i_3} (x , \theta) 
    \nonumber  \\
    & \quad  +  \tfrac{1}{6} \sum_{\substack{1 \leq  i_2 \leq d_R \\ d_R + 1 \leq i_3 \leq d}}  
    \mathbb{E}_{\theta}
    \Bigl[ 
    \partial_{i_3} \partial_{i_2}  \partial_{i_1} \delta_{\xi} (\widetilde{Y})  
    \Bigr] V_{j_1}^{i_2}(x , \theta) \hat{V}_{j_2} V_0^{i_3} (x , \theta) \nonumber \\
    & \quad  +  
    \tfrac{1}{8}  \sum_{i_2, i_3 = d_R + 1}^d   
    \mathbb{E}_{\theta} \Bigl[ \partial_{i_3} \partial_{i_2}  \partial_{i_1} \delta_{\xi} (\widetilde{Y})  \Bigr] 
    \hat{V}_{j_1} V_0^{i_2} (x , \theta) 
    \hat{V}_{j_2} V_0^{i_3} (x , \theta),  %
\end{align*}
where we used 
\begin{align*}
   \int_0^1 B_{j_1, s} ds =  \int_0^1 (1-s) d B_{j_1, s},   \ \ 
   \int_0^1 (1-s) B_{j_1, s} ds =  \tfrac{1}{2} \int_0^1 (1-s)^2  d B_{j_1, s}. 
\end{align*}
Thus, we obtain (\ref{weight_4}). 
%
%
%
\\

\noindent 
{\bf Derivation of (\ref{weight_5}):}
From (\ref{eq:rv_def}) we have:
\begin{align*}
    & \mathbb{E}_{\theta} 
    \bigl[ \partial_{i_1} \partial_{i_2} \delta_{\xi} 
    (\widetilde{Y})  \zeta_{j_1 j_2,1} \zeta_{j_3 j_4, 1} 
    \bigr] 
    = \tfrac{1}{4} \mathbb{E}_{\theta} 
    \Bigl[
    \partial_{i_1} \partial_{i_2} \delta_{\xi} (\widetilde{Y}) 
    \bigl\{ B_{j_1,1} B_{j_2, 1} -  \mathbf{1}_{j_1 = j_2} \bigr\}
    \bigl\{ B_{j_3, 1} B_{j_4, 1} -  \mathbf{1}_{j_3 = j_4} \bigr\}  
    \Bigr]  \\
    & \quad + \tfrac{1}{4} 
    \mathbb{E}_{\theta} 
    \bigl[ \partial_{i_1} \partial_{i_2} \delta_{\xi} (\widetilde{Y}) 
    B_{j_1,1} B_{j_3, 1} \bigr] 
    \times \mathbf{1}_{j_1, j_3 <  j_2 = j_4}   
    - \tfrac{1}{4} \mathbb{E}_{\theta} 
    \bigl[ \partial_{i_1} \partial_{i_2} \delta_{\xi} (\widetilde{Y}) 
    B_{j_1,1} B_{j_4, 1} 
    \bigr]  \times \mathbf{1}_{j_1, j_4 <  j_2 = j_3}   \\
    & \quad - \tfrac{1}{4}
    \mathbb{E}_{\theta} 
    \bigl[ \partial_{i_1} \partial_{i_2} \delta_{\xi} (\widetilde{Y}) B_{j_2,1} B_{j_3, 1} 
    \bigr]  \times \mathbf{1}_{j_2, j_3 <  j_1 = j_4}   
    +  \tfrac{1}{4} \mathbb{E}_{\theta} 
    \bigl[ \partial_{i_1} \partial_{i_2} \delta_{\xi} (\widetilde{Y})
    B_{j_2,1} B_{j_4, 1} 
    \bigr] \times \mathbf{1}_{j_2, j_4 <  j_1 = j_3}. 
\end{align*} 
It\^o's formula yields 
\begin{align*} 
\bigl\{ B_{j_1,1} B_{j_2, 1} -  \mathbf{1}_{j_1 = j_2} \bigr\}
\bigl\{ B_{j_3, 1} B_{j_4, 1} -  \mathbf{1}_{j_3 = j_4} \bigr\} 
=  \bigl\{ I_{(j_1, j_2)}(1) +  I_{(j_2, j_1)}(1)  \bigr\}
\bigl\{ I_{(j_3, j_4)}(1) +  I_{(j_4, j_3)}(1) \bigr\}, 
\end{align*}
and 
\begin{align} \label{product_weinerInt}
  & I_{(j_1, j_2)}(1) I_{(j_3, j_4)}(1) 
  =  I_{(j_1 , j_2, j_3, j_4)} (1)
   + I_{(j_1, j_3, j_2, j_4)} (1)
   + I_{(j_1, j_3, j_4, j_2)} (1)
   + I_{(j_3, j_1, j_4, j_2)} (1)  \nonumber \\[0.1cm]
  &  + I_{(0, j_4, j_2)} (1)  \times \mathbf{1}_{j_1 = j_3} 
   + I_{(j_1, 0, j_4)} (1) \times  \mathbf{1}_{j_2 = j_3} 
   + I_{(0, j_4, j_2)} (1) \times  \mathbf{1}_{j_1 = j_3} \nonumber \\[0.1cm]
  & + I_{(j_3, 0, j_2)} (1) \times  \mathbf{1}_{j_1 = j_4} 
    + I_{(j_1, j_3, 0)} (1) \times \mathbf{1}_{j_2 = j_4}  
    + I_{(j_3, j_1, 0)} (1) \times \mathbf{1}_{j_2 = j_4}  \nonumber \\
  & + \tfrac{1}{2}  \times \mathbf{1}_{j_1 = j_3, j_2 =  j_4}.
\end{align}
Making use of (\ref{product_weinerInt}) and the duality formula (\ref{eq:duality_formula}), we have
\begin{align*}
   & \mathbb{E}_{\theta}  
   \bigl[ 
   \partial_{i_1} \partial_{i_2} \delta_{\xi} (\widetilde{Y})  
   \bigl\{ B_{j_1, 1} B_{j_2, 1} -  \mathbf{1}_{j_1 = j_2} \bigr\}
   \bigl\{ B_{j_3, 1} B_{j_4, 1} -  \mathbf{1}_{j_3 = j_4} \bigr\}  
   \bigr]   \\[0.3cm]
   & \qquad =  \mathcal{H}_{(i_1, i_2)}^{\widetilde{Y}} (\xi ; \theta)  p^{\widetilde{Y}} (\xi; \theta) 
   \times  \bigl\{  \mathbf{1}_{j_1 = j_3, j_2 = j_4} 
   +  \mathbf{1}_{j_1 = j_4, j_2 = j_3} 
   \bigr\} \nonumber \\
   & \qquad\qquad  + \sum_{k=2, 4} \sum_{\alpha \in \{1, \ldots, d\}^k} 
   w_{\alpha} (x ,\theta) \times (-1)^{|\alpha|} \, 
   \partial^\xi_{(\alpha_1, \ldots, \alpha_k, i_1, i_2)} 
   p^{\widetilde{Y}} (\xi; \theta), \nonumber 
\end{align*}
where $w_{\alpha} : \mathbb{R}^d \times \Theta \to \mathbb{R}$ is of the same structure as the one described for $\nu_\alpha$ in (\ref{eq:r_remainder}) under condition (\ref{assump:coeff}). We also have 
\begin{align*}
   &  \mathbb{E}_{\theta} \bigl[
   \partial_{i_1} \partial_{i_2} \delta_{\xi} (\widetilde{Y}) 
   B_{j_1,1} B_{j_3, 1} 
   \bigr]  \times \mathbf{1}_{j_1, j_3 <  j_2 = j_4}    \\[0.3cm] 
   & = \mathbb{E}_{\theta} \bigl[ 
   \partial_{i_1} \partial_{i_2} \delta_{\xi} (\widetilde{Y}) 
   \bigr]  \times \mathbf{1}_{j_1 = j_3 < j_2 = j_4}
   + \  \sum_{\alpha \in \{1, \ldots, d \}^2} \widetilde{w}_{\alpha} (x , \theta)   
   \mathbb{E}_{\theta} \bigl[ 
   \partial^{\alpha} \partial_{i_1} \partial_{i_2}  \delta_{\xi} (\widetilde{Y}) 
   \bigr] \nonumber \\
   & =  \mathcal{H}_{(i_1, i_2)}^{\widetilde{Y}} (\xi ; \theta)
   p^{\widetilde{Y}} (\xi; \theta) \times \mathbf{1}_{j_1 = j_3 < j_2 = j_4} 
   + \sum_{\alpha \in \{1, \ldots, d \}^2} 
   \widetilde{w}_{\alpha} (x , \theta) \times  (-1)^{|\alpha|} \, 
   \partial^\xi_{(\alpha_1, \ldots, \alpha_k, i_1, i_2)}
   p^{\widetilde{Y}} (\xi; \theta),
 \end{align*} 
with some $\widetilde{w}_\alpha : \mathbb{R}^d \times \Theta \to \mathbb{R}$ having the same property as the above $w_\alpha$, where we used $B_{j_1, 1} B_{j_2, 1} = I_{(j_1, j_2)}(1) + I_{(j_2, j_1)}(1) + \mathbf{1}_{j_1 = j_2}$ for $1 \le j_1, j_2 \le  d_R$ in the first equality. Thus, we obtain (\ref{weight_5}). 
%
\\

\noindent 
{\bf Derivation of (\ref{weight_6}):}  
From the definition of $\textstyle{\eta_{j_1 j_2, 1} = \frac{1}{3} \zeta_{j_1 j_2, 1} - \widetilde{\eta}_{j_1 j_2, 1}}$,  it holds that 
\begin{align}
   & \mathbb{E}_{\theta} \bigl[ 
   \partial_{i_1} \partial_{i_2} \delta_{\xi} (\widetilde{Y}) 
   \eta_{j_1 j_2, 1} \eta_{j_3 j_4, 1} \bigr] 
   =  \tfrac{1}{9} \mathbb{E}_{\theta} \bigl[ 
   \partial_{i_1} \partial_{i_2} \delta_{\xi} (\widetilde{Y})
   \zeta_{j_1 j_2, 1}  \zeta_{j_3 j_4, 1} \bigr]
   + \mathbb{E}_{\theta} 
   \bigl[ \partial_{i_1} \partial_{i_2} \delta_{\xi} (\widetilde{Y})
   \widetilde{\eta}_{j_1 j_2, 1} \widetilde{\eta}_{j_3 j_4, 1}
   \bigr]  \nonumber \\[0.2cm]
   & = \tfrac{1}{18} \mathcal{H}_{(i_1, i_2)}^{\widetilde{Y}} (\xi ; \theta)  p^{\widetilde{Y}} (\xi; \theta) \times \mathbf{1}_{j_1 = j_3, j_2 = j_4} 
   + \sum_{k=2, 4} \sum_{\alpha \in \{1, \ldots, d \}^k} 
   v_{\alpha} (x , \theta) \cdot   (-1)^{|\alpha|} 
   \partial^\xi_{(\alpha_1, \ldots, \alpha_k, i_1, i_2)} 
   p^{\widetilde{Y}} (\xi; \theta) \nonumber \\
   & \qquad + \tfrac{1}{36} \mathcal{H}_{(i_1, i_2)}^{\widetilde{Y}} (\xi ; \theta)  p^{\widetilde{Y}} (\xi; \theta)  \times \mathbf{1}_{j_1 = j_3, j_2 = j_4} \nonumber \\[0.3cm]
   & =  \tfrac{1}{12} \mathcal{H}_{(i_1, i_2)}^{\widetilde{Y}} (\xi ; \theta)  p^{\widetilde{Y}} (\xi; \theta) \times  \mathbf{1}_{j_1 = j_3, j_2 = j_4}
   + \sum_{k=2, 4} 
   \sum_{\alpha \in \{1, \ldots, d \}^k} v_{\alpha} (x , \theta) \cdot   (-1)^{|\alpha|} \partial^\xi_{(\alpha_1, \ldots, \alpha_k, i_1, i_2)}
   p^{\widetilde{Y}} (\xi; \theta), \nonumber  
\end{align}
 where we used $\textstyle{\mathbb{E} [\widetilde{\eta}_{j_1 j_2, 1} \widetilde{\eta}_{j_3 j_4, 1}] = \frac{1}{36} \times \mathbf{1}_{j_1 = j_3, j_2 = j_4}}$ and (\ref{weight_5}). 
\section{Proof of Proposition \ref{prop:density_scheme}}
\label{appendix:local_wa_ver1}
\subsection{Proof of Properties (i)--(iii)}
\label{appendix:sub_wa_ver1}
We exploit the density expansion (\ref{eq:density_expansion}) obtained in Lemma \ref{lemma:AE_scheme}. Property (iii) holds immediately by identifying $R^J_2 (x, y ; \theta)$ with the residual term $R^J (x, y; \theta)$ in Lemma \ref{lemma:AE_scheme}. We make use of the following notation: for a multi-index $\alpha \in \{1, \ldots, d \}^l, \, l \ge 1$,
\begin{align*} 
 & \| \alpha  \|_R \equiv (\mrm{\#  \, of \, elements \, taking \, in \, the \, rough \, coordinates \, } \{1, \ldots,  d_R \});   \\
 & \| \alpha  \|_S \equiv (\mrm{\#  \, of \, elements \, taking \, in \, the \, smooth \, coordinates \, } \{d_R + 1, \ldots,  d \}).  
\end{align*}   
To prove Properties (i)--(ii), we first introduce the following result about the expression for $\Psi_3$ in the density expansion (\ref{eq:density_expansion}): 
\begin{lemma} \label{lemma:psi_3}
Let $\Delta > 0$, $x, y \in \mathbb{R}^d$ and $\theta \in \Theta$. It holds that
\begin{align}
\Psi_3 (\Delta, x, y;\theta) 
= \Psi_3^{\mrm{weak}} (\Delta, x, y; \theta)
 + \widetilde{\Psi}_3 (\Delta, x, y; \theta), 
\end{align} 
where $\Psi_3^{\mrm{weak}} (\Delta, x, y; \theta)$ is given in (\ref{eq:psi3_w}) in the main text, and each of the terms in $\widetilde{\Psi}_3 (\Delta, x, y; \theta)$ is given as: 
\begin{align} \label{eq:rem_rep}
   v_{\alpha} (x , \theta) 
   \mathcal{H}_{\alpha} (\Delta, x, y; \theta), 
\end{align} 
for a multi-index $\alpha \in \{1, \ldots, d \}^{l}, \; l \ge 3$, 
and the function $v_{\alpha}: \mathbb{R}^d \times \Theta \to \mathbb{R}$ 
is determined from the products of $V_j (\cdot ; \theta), \; 0 \le j \le d_R$ and theirs partial derivatives. 
%
\end{lemma}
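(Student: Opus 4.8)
The plan is to identify $\Psi_3(\Delta, x, y; \theta)$ with the normalised $\varepsilon^3$-coefficient $\Xi_3(x,\xi;\theta)$ from the expansion (\ref{eq:expansion_GE}), evaluated at $\xi = m_{x,\theta,\sqrt{\Delta}}(y)$ and divided by $p^{\widetilde{X}^{\mathrm{LG}}}_\Delta$. Using the representation $\widetilde{Y}^\varepsilon = \widetilde{Y} + \sum_{l=1}^3 \varepsilon^l \widetilde{Y}^{(l)}$ from (\ref{eq:tilde_Y}), the term $\Xi_3$ is read off from the formal Taylor expansion of $\varepsilon \mapsto \delta_\xi(\widetilde{Y}^\varepsilon)$ and splits, according to the partitions of $3$, into three groups after taking $\mathbb{E}_\theta[\cdot]$: the deterministic first-derivative contribution $\mathbb{E}_\theta[\sum_i \widetilde{Y}^{(3),i}\,\partial_i\delta_\xi(\widetilde{Y})]$; the mixed second-derivative contribution $\mathbb{E}_\theta[\sum_{i_1,i_2}\widetilde{Y}^{(1),i_1}\widetilde{Y}^{(2),i_2}\,\partial_{i_1}\partial_{i_2}\delta_\xi(\widetilde{Y})]$; and the cubic third-derivative contribution $\tfrac{1}{3!}\mathbb{E}_\theta[\sum_{i_1,i_2,i_3}\widetilde{Y}^{(1),i_1}\widetilde{Y}^{(1),i_2}\widetilde{Y}^{(1),i_3}\,\partial_{i_1}\partial_{i_2}\partial_{i_3}\delta_\xi(\widetilde{Y})]$.

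First I would evaluate the deterministic group. Since $\widetilde{Y}^{(3)} = [\tfrac{1}{2}\hat{V}_0 V_{R,0}(x,\theta)^\top, \mathbf{0}_{d_S}^\top]^\top$ carries no randomness, formula (\ref{eq:deriv_density}) applies directly and gives $\mathbb{E}_\theta[\partial_i\delta_\xi(\widetilde{Y})] = \mathcal{H}_{(i)}^{\widetilde{Y}}(\xi;\theta)\,p^{\widetilde{Y}}(\xi;\theta)$, so this group equals $\tfrac{1}{2}\sum_{i=1}^{d_R}\hat{V}_0 V_0^i(x,\theta)\,\mathcal{H}_{(i)}^{\widetilde{Y}}(\xi;\theta)\,p^{\widetilde{Y}}(\xi;\theta)$. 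Because the smooth entries of $\widetilde{Y}^{(3)}$ vanish, only the rough-index first-order Hermite polynomials $\mathcal{H}_{(i)}$, $1\le i\le d_R$, arise; evaluating at $\xi = m_{x,\theta,\sqrt{\Delta}}(y)$ and normalising yields exactly $\Psi_3^{\mrm{weak}}$ in (\ref{eq:psi3_w}), and this is the sole source of first-order Hermite polynomials in $\Xi_3$.

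For the remaining two groups I would apply the Malliavin integration-by-parts machinery already used in Lemma \ref{representation_weight}, namely the duality formula (\ref{eq:duality_formula}) iterated down to deterministic weights, followed by (\ref{eq:ibp_general}) and (\ref{eq:deriv_density}), to cast every term in the form $v_\alpha(x,\theta)\,\mathcal{H}_\alpha^{\widetilde{Y}}(\xi;\theta)\,p^{\widetilde{Y}}(\xi;\theta)$ with $v_\alpha$ built from the $V_j$ and their derivatives, as in (\ref{eq:Xi_form}). The crucial bookkeeping invariant is that each application of the duality formula to a single It\^o integral raises the derivative order of $\delta_\xi(\widetilde{Y})$ by one, while each It\^o contraction (of the type appearing in (\ref{product_weinerInt})) lowers the Wiener-chaos degree of the weight by exactly two; hence the order $|\alpha|$ of every surviving Hermite polynomial equals (number of initial derivatives) plus the chaos degree of the stochastic weight. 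Now $\widetilde{Y}^{(1)}$ is a pure second-chaos functional (its entries are the degree-two variables $\zeta_{j_1 j_2,1}$ and $\eta_{j_1 j_2,1}$), whereas $\widetilde{Y}^{(2)}$ is first-chaos (its entries $\zeta_{0k,1},\zeta_{k0,1},\eta_{0k,1},\eta_{k0,1}$ are linear in the Gaussian integrals). Thus the mixed group has (derivatives, chaos) $=(2,\{3,1\})$ and produces only Hermite orders $5$ and $3$, while the cubic group has $(3,\{6,4,2,0\})$ and produces orders $9,7,5,3$. Every term beyond $\Psi_3^{\mrm{weak}}$ therefore carries a Hermite polynomial of order $|\alpha|\ge 3$, which is precisely the asserted form (\ref{eq:rem_rep}) of $\widetilde{\Psi}_3$.

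The main obstacle is making this parity/lower-bound argument rigorous rather than merely formal: one must confirm that no first- or second-order Hermite polynomial can be generated by the mixed and cubic groups once all contractions are carried out. This rests on the explicit chaos decompositions of $\zeta$ and $\eta$ together with the fact that a contraction strictly reduces the chaos degree by two and hence preserves parity, so that all orders produced by these groups are odd and at least $3$. The attendant enumeration of the contracted terms (and the precise coefficients $v_\alpha$) is lengthy but routine, and can be suppressed since only the order $|\alpha|$ matters for the statement. Assembling the three groups, evaluating at $\xi = m_{x,\theta,\sqrt{\Delta}}(y)$ and dividing by $p^{\widetilde{X}^{\mathrm{LG}}}_\Delta$ then delivers the decomposition $\Psi_3 = \Psi_3^{\mrm{weak}} + \widetilde{\Psi}_3$.
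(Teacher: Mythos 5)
Your proposal is correct and follows essentially the same route as the paper's proof: identify $\Psi_3$ with the $\varepsilon^3$-coefficient $\Xi_3$ of the Malliavin expansion, obtain $\Psi_3^{\mrm{weak}}$ from the deterministic term $\widetilde{Y}^{(3)}$ via (\ref{eq:deriv_density}), and reduce the mixed $\widetilde{Y}^{(1)}\widetilde{Y}^{(2)}$ and cubic $(\widetilde{Y}^{(1)})^{3}$ groups by the duality formula and (\ref{eq:ibp_general}) to Hermite polynomials of order at least $3$. Your chaos-degree (and parity) bookkeeping is simply a reformulation of the paper's count $k + n(\beta) \ge 3$, since $n(\beta)$, the number of non-zero indices of the iterated integral $I_\beta(1)$, coincides with the Wiener-chaos degree of the stochastic weight.
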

We provide the proof in Section \ref{appendix:psi_3}. To determine the residual term $R_1^J (\Delta, x, y ; \theta)$, we introduce the following result whose proof is contained in Section \ref{appendix:wa_hermite}: 
\begin{lemma} \label{lemma:wa_hermite}
Let $\Delta > 0$, $x \in \mathbb{R}^d$, $\theta \in \Theta$ and $\alpha \in \{1, \ldots, d \}^l, \, l \ge 1$. For any $\varphi \in C_b^\infty (\mathbb{R}^d; \mathbb{R})$, there exist a constant $C > 0$ and $q \ge 1$ so that: 
\begin{align} 
\left| \int_{\mathbb{R}^d} \varphi (y) \mathcal{H}_\alpha (\Delta, x, y; \theta) p_\Delta^{\widetilde{X}^{\mrm{LG}}} (x, y ; \theta) dy  \right| \le C (1 + |x|^q) \cdot \sqrt{\Delta^{\| \alpha \|_R + 3 \| \alpha \|_S}}. 
\end{align}
\end{lemma}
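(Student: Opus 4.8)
The statement to prove is Lemma~\ref{lemma:wa_hermite}: for any test function $\varphi \in C_b^\infty(\mathbb{R}^d;\mathbb{R})$ and multi-index $\alpha$, the integral of $\varphi(y)\,\mathcal{H}_\alpha(\Delta,x,y;\theta)$ against the local Gaussian density is of order $\sqrt{\Delta^{\|\alpha\|_R + 3\|\alpha\|_S}}$, up to a polynomial factor in $|x|$.

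\textbf{Overall approach.} The plan is to reduce everything to a moment computation under the local Gaussian law. Recall that $\mathcal{H}_\alpha(\Delta,x,y;\theta) = \mathcal{H}_\alpha^{\widetilde{Y}}(\xi;\theta)|_{\xi = m_{x,\theta,\sqrt{\Delta}}(y)}$, where $\mathcal{H}_\alpha^{\widetilde{Y}}$ is the Hermite polynomial associated with the Gaussian density $p^{\widetilde{Y}}$ of $\widetilde{Y}=\widetilde{Y}^{(0)}$, and from the defining relation \eqref{eq:hermite_Y} these polynomials satisfy the standard orthogonality/moment identity $\mathbb{E}[\mathcal{H}_\alpha^{\widetilde{Y}}(\widetilde{Y};\theta)\,g(\widetilde{Y})] = \mathbb{E}[\partial_\alpha^\xi g(\widetilde{Y})]$ for smooth $g$. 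First I would change variables from $y$ to $\xi = m_{x,\theta,\sqrt{\Delta}}(y)$. The key point is that under the local Gaussian density $p_\Delta^{\widetilde{X}^{\mathrm{LG}}}(x,\cdot;\theta)$, the transformed variable $\xi$ is exactly distributed as $\widetilde{Y}=\widetilde{Y}^{(0)}$ (this is precisely the content of \eqref{eq:exp2}), so the integral becomes $\mathbb{E}[\tilde\varphi(\widetilde{Y})\,\mathcal{H}_\alpha^{\widetilde{Y}}(\widetilde{Y};\theta)]$, where $\tilde\varphi(\xi)=\varphi(y)$ with $y$ recovered from $\xi$ through the inverse of $m$.

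\textbf{Key steps.} After this reduction I would apply the Hermite moment identity to write
\begin{align*}
\int_{\mathbb{R}^d}\varphi(y)\,\mathcal{H}_\alpha(\Delta,x,y;\theta)\,p_\Delta^{\widetilde{X}^{\mathrm{LG}}}(x,y;\theta)\,dy
= \mathbb{E}\bigl[\,\partial_\alpha^\xi\tilde\varphi(\widetilde{Y})\,\bigr].
\end{align*}
Then the whole matter is to track the $\Delta$-dependence hidden inside $\tilde\varphi$. The map $\xi\mapsto y$ inverting $m_{x,\theta,\sqrt{\Delta}}$ scales each rough coordinate of $\xi$ by $\sqrt{\Delta}$ and each smooth coordinate by $\sqrt{\Delta^3}$ (reading off \eqref{eq:muLG} with $\varepsilon=\sqrt{\Delta}$), so $y = x + O(\Delta) + (\text{rough shift})\sqrt{\Delta}\,\xi_R + (\text{smooth shift})\sqrt{\Delta^3}\,\xi_S$. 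Consequently, each differentiation $\partial_{\xi^{i}}$ acting on $\tilde\varphi(\xi)=\varphi(y(\xi))$ produces, by the chain rule, a factor $\partial_{y^{i'}}\varphi$ multiplied by $\partial_{\xi^i}y^{i'}$, and this Jacobian entry carries a factor $\sqrt{\Delta}$ when $i$ is a rough index and $\sqrt{\Delta^3}$ when $i$ is a smooth index. Collecting the factors over all indices in $\alpha$ yields exactly $\prod_{i}\sqrt{\Delta}^{\,\mathbf{1}_{i\in\text{rough}}}\cdot\sqrt{\Delta^3}^{\,\mathbf{1}_{i\in\text{smooth}}} = \sqrt{\Delta^{\|\alpha\|_R + 3\|\alpha\|_S}}$, which is the claimed order. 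Boundedness of $\varphi$ and all its derivatives (membership in $C_b^\infty$), together with the fact that the remaining Gaussian moments $\mathbb{E}[\,|\widetilde{Y}|^q\,]$ are finite and that the drift-shift contributions only introduce additional positive powers of $\Delta$ and polynomial-in-$x$ terms, gives the bound $C(1+|x|^q)$ on the prefactor.

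\textbf{Main obstacle.} The routine part is the chain-rule bookkeeping; the delicate point is ensuring the $\Delta$-counting is sharp rather than merely an upper bound, i.e.~that no cross-terms from higher derivatives of $\varphi$ (where several $\partial_{\xi}$ hit the same $\varphi$-derivative) produce a \emph{lower} power of $\Delta$ than claimed. Because every application of $\partial_{\xi^i}$ contributes at least one Jacobian factor of the stated size regardless of whether it lands on $\varphi$ or on a previously generated factor, the minimal power is controlled by the number and type of indices in $\alpha$, so the exponent $\|\alpha\|_R + 3\|\alpha\|_S$ is indeed the worst (smallest-power, hence dominant) case; I would verify this by a careful induction on $|\alpha|$, peeling off one derivative at a time and checking that each step multiplies the order by at least $\sqrt{\Delta}$ (rough) or $\sqrt{\Delta^3}$ (smooth). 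The uniform-in-$\theta$ boundedness of the relevant vector-field coefficients appearing in the Jacobian (guaranteed by \eqref{assump:coeff} and the positive-definiteness of $\Sigma(1,x;\theta)$ from \eqref{assump:hypo1}--\eqref{assump:hypo2}) is what keeps the constant $C$ uniform and the polynomial growth in $x$ under control.
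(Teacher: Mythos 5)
Your proposal is correct and follows essentially the same route as the paper's proof: change variables to $\xi = m_{x,\theta,\sqrt{\Delta}}(y)$, use the identity $\mathcal{H}_\alpha^{\widetilde{Y}}(\xi;\theta)\,p^{\widetilde{Y}}(\xi;\theta) = (-1)^{|\alpha|}\partial^\xi_\alpha p^{\widetilde{Y}}(\xi;\theta)$ to integrate by parts onto $\varphi$ composed with the inverse map $h_{x,\theta,\sqrt{\Delta}}$, pick up one factor of $\sqrt{\Delta}$ per rough index and $\sqrt{\Delta^3}$ per smooth index from the Jacobian, and conclude with moment bounds on $\widetilde{X}^{\mathrm{LG},x}_\Delta$. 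Your concern about sharpness of the $\Delta$-counting under repeated differentiation is moot: $h_{x,\theta,\sqrt{\Delta}}$ is affine in $\xi$ with a constant diagonal Jacobian, so the iterated chain rule yields exactly $\sqrt{\Delta^{\|\alpha\|_R+3\|\alpha\|_S}}\,\partial^z_\alpha\varphi(z)\big|_{z=h_{x,\theta,\sqrt{\Delta}}(\xi)}$ with no cross-terms, which is precisely the paper's computation.
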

Since $\Psi_l (\Delta, x, y ; \theta), \, l \ge 4$, involves Hermite polynomials $\mathcal{H}_\alpha (\Delta, x, y; \theta)$ with $\alpha \in \{1, \ldots, d \}^j, \, j \ge 2$ from the expression of $\Xi_{l} (x, \xi; \theta)$ in the expansion (\ref{eq:expansion_GE}), Lemma \ref{lemma:wa_hermite} yields:
\begin{align}
\left| \sum_{l = 4}^{J-1} 
\Delta^{\tfrac{l}{2}} 
\int_{\mathbb{R}^d} \varphi (y) \, 
\Psi_l (\Delta, x, y; \theta) \, 
p_\Delta^{\widetilde{X}^{\mrm{LG}}} (x, y ; \theta) 
dy \right| \le C (1 + |x|^q) \Delta^3, 
\end{align}
for some $C > 0$ and $q \ge 1$, which indicates that $\Delta^{{l}/{2}} \Psi_l (\Delta, x, y; \theta), \, 4 \le l \le J-1,$ are included in the residual $R_1^J (\Delta, x, y ; \theta)$. We also write: for $\Delta > 0$, $x, y \in \mathbb{R}^d$ and $\theta \in \Theta$, 
\begin{align*}
\Psi_l (\Delta, x, y; \theta) \equiv  \Psi_l^{\mrm{weak}} (\Delta, x, y; \theta) 
+ \widetilde{\Psi_l} (\Delta, x, y; \theta), \quad l = 1,2,
\end{align*}
where $\widetilde{\Psi_l} (\Delta, x, y; \theta)$ is explicitly determined from the definition of $\Psi_l (\Delta, x,  y; \theta)$ and $\Psi_l^{\mrm{weak}} (\Delta, x, y; \theta)$ in Lemma \ref{lemma:AE_scheme} and Proposition \ref{prop:density_scheme} given in the main text, respectively. Notice that every term in $\Delta^{{l}/{2}} \cdot \widetilde{\Psi}_l (\Delta, x,y; \theta), \, 1 \le l \le 3$, contains  
\begin{align}
 \Delta^{\tfrac{l}{2}} \cdot \mathcal{H}_\alpha (\Delta, x, y ; \theta), 
 \quad \alpha \in \{1 \ldots, d \}^j, \ \ j \ge 1, 
\end{align}
with $l$ and $\alpha$ satisfying $l + \| \alpha \|_R + 3 \| \alpha  \|_S \ge 6$. Thus, Lemma \ref{lemma:wa_hermite} yields:  
\begin{align}
\left| \sum_{l = 1}^{3} 
\Delta^{\tfrac{l}{2}} \int_{\mathbb{R}^d} \varphi (y) 
\, \widetilde{\Psi}_l (\Delta, x, y; \theta) \, 
p_\Delta^{\widetilde{X}^{\mrm{LG}}} (x, y ; \theta) 
dy \right| \le C (1 + |x|^q) \Delta^3,
\end{align}
for some $C > 0$ and $q \ge 1$. The proof is now complete. 
\subsubsection{Proof of Lemma \ref{lemma:psi_3}}
\label{appendix:psi_3}
Notice from the proof of Lemma \ref{lemma:AE_scheme} in Section \ref{pf_hypo_density} that 
$$ 
\Psi_3  (\Delta,x, y; \theta) p_\Delta^{\widetilde{X}^{\mrm{LG}}} (x, y; \theta)
= \Xi_3 (x, \xi; \theta) |_{\xi = m_{x, \theta, \sqrt{\Delta}} (y)},
$$
where $\Xi_3 (x, \xi; \theta)$ is determined from the expansion (\ref{eq:expansion_GE}), and given as: 
\begin{align}  \label{eq:Xi_3}  
\Xi_3 (x, \xi; \theta) 
& = \tfrac{1}{2} \sum_{i =1}^{d_R} \hat{V}_0 V_0^i  (x, \theta) 
   \mathbb{E}_{\theta} 
   \Bigl[ \partial_i \delta_\xi (\widetilde{Y})
   \Bigr] 
   + \widetilde{\Xi}_{3} (x, \xi ; \theta) \nonumber \\ 
& = \tfrac{1}{2} \sum_{i =1}^{d_R} \hat{V}_0 V_0^i  (x, \theta) 
   \mathcal{H}_{(i)}^{\widetilde{Y}}  (\xi; \theta) p^{\widetilde{Y}} (\xi; \theta) + \widetilde{\Xi}_{3} (x, \xi; \theta), 
\end{align}
where we have applied the formula (\ref{eq:deriv_density}) to the first term in the right hand side of (\ref{eq:Xi_3}), and we have set: 
\begin{align} \label{eq:Xi_g3}
 \widetilde{\Xi}_{3} (x, \xi; \theta) 
  & = \sum_{i_1, i_2 = 1}^d  
\mathbb{E}_{\theta} 
\Bigl[ \partial_{i_1} \partial_{i_2} \delta_{\xi} (\widetilde{Y})  \widetilde{Y}^{ (1), i_1} \widetilde{Y}^{ (2), i_2} \Bigr]  \nonumber \\ 
& \quad  
+ \tfrac{1}{3!} \sum_{1 \leq i_1, i_2, i_3  \leq d} 
 \mathbb{E}_{\theta} 
 \Bigl[ \partial_{i_1} \partial_{i_2} \partial_{i_3} \delta_{\xi} (\widetilde{Y}) \widetilde{Y}^{(1), i_1} \widetilde{Y}^{(1), i_2} \widetilde{Y}^{(1), i_3}
 \Bigr].  
\end{align}  
We will now show that each of the terms in 
$\widetilde{\Xi}_3 (x, \xi; \theta)$ 
%
is given as: 
\begin{align} \label{eq:target}
\nu_\alpha (x, \theta) \mathcal{H}_\alpha^{\widetilde{Y}} (\xi; \theta) p^{\widetilde{Y}} (\xi; \theta), 
\end{align}
where the multi-index $\alpha$ and $\nu_\alpha (\cdot, \cdot)$ are the same as those in (\ref{eq:rem_rep}). From (\ref{eq:Xi_g3}), each of the terms in 
$\widetilde{\Xi}_3 (x, \xi; \theta)$ is written as: 
\begin{align} \label{eq:term_Xi3}
\mathbb{E}_{\theta} 
 \Bigl[ \partial^{\alpha} \delta_{\xi} (\widetilde{Y})  
 I_\beta (1) \Bigr]
 v_{\beta}^\alpha (x,  \theta), 
\end{align} 
for some multi-indices $\alpha \in  \{1, \ldots, d\}^k, \; k = 2,3$ and $\beta \in \{0,1, \ldots, d_R \}^{\nu}, \; \nu \geq 1$ such that 
$k +  n(\beta) \geq 3$, and $v_{\beta}^\alpha :  \mathbb{R}^d \times \Theta \to \mathbb{R}$ being the same structure as $\nu_\alpha$ in (\ref{eq:rem_rep}). We recall $n (\beta)$ is the number of non-zero elements in the multi-index $\beta$. Thus, applying (\ref{eq:ibp_general}) and (\ref{eq:deriv_density}) to (\ref{eq:term_Xi3}), we conclude that the term (\ref{eq:term_Xi3}) is of the form (\ref{eq:target}). The proof is now complete. 
\subsubsection{Proof of Lemma \ref{lemma:wa_hermite}}
\label{appendix:wa_hermite}
 It follows from the definition of Hermite polynomials and the integration by parts that for $\varphi \in C_p^\infty (\mathbb{R}^d; \mathbb{R})$, 
\begin{align}
 & \int_{ \mathbb{R}^d } \varphi (y) \mathcal{H}_{\alpha}^{\widetilde{Y}} 
 \bigl( m_{x, \theta, \sqrt{\Delta}}(y)  ; \theta \bigr) \times \tfrac{1}{\sqrt{\Delta^{d_R + 3d_S}}} p^{\widetilde{Y}} \bigl( m_{x, \theta, \sqrt{\Delta}} (y)  ; \theta \bigr)  dy \nonumber  \\
 & = \int_{\mathbb{R}^d} \varphi \bigl( h_{x, \theta, \sqrt{\Delta}} (\xi)  \bigr) 
 (-1)^{| \alpha |}
 \times \partial^\xi_\alpha p^{\widetilde{Y}} (\xi ; \theta) d \xi \nonumber \\ 
 & = \sqrt{\Delta^{ \| \alpha \|_R  + 3 \| \alpha \|_S }} \int_{\mathbb{R}^d} \partial^z_{\alpha} 
 \varphi ( z ) |_{z= h_{x, \theta, \sqrt{\Delta}} (\xi)}
 \times  p^{\widetilde{Y}} (\xi ; \theta) d \xi \nonumber  \\ 
 & = \sqrt{\Delta^{\| \alpha \|_R  + 3 \| \alpha \|_S }} \cdot 
 \mathbb{E}_\theta  \bigl[ 
 \partial^\alpha
 \varphi \bigl( \widetilde{X}^{\mrm{LG}, x}_\Delta \bigr) 
 \bigr], 
 \label{eq:remainder_ibp}  
\end{align}
where we have used the change of variables in the second line, and the function $h_{x, \theta, \sqrt{\Delta}}: \mathbb{R}^d \to \mathbb{R}^d$ is defined as: 
\begin{align}
\xi = (\xi_R, \xi_S) \mapsto h_{x, \theta, \sqrt{\Delta}} (\xi) 
= \begin{bmatrix}
x_R + V_{R, 0} (x, \beta) \Delta + \sqrt{\Delta} \xi_R \\[0.2cm] 
x_S + V_{S, 0} (x, \gamma) \Delta + \tfrac{\Delta^2}{2} \hat{V}_0 V_{S, 0} (x, \theta) + \sqrt{\Delta^3} \xi_S 
\end{bmatrix}. 
\end{align}
Thus, from the property of the test function $\varphi 
\in C_p^\infty (\mathbb{R}^d ; \mathbb{R})$ and (\ref{eq:remainder_ibp}), we have that: 
\begin{align*}
\left| \int_{\mathbb{R}^d} \varphi (y) \mathcal{H}_\alpha (\Delta, x, y; \theta) p_\Delta^{\widetilde{X}^{\mrm{LG}}} (x, y ; \theta) dy  \right|
= \sqrt{\Delta^{\| \alpha \|_R + 3 \| \alpha \|_S }} 
\cdot 
\Bigl| \mathbb{E}_\theta  \bigl[ 
\partial^\alpha
\varphi \bigl( \widetilde{X}^{\mrm{LG}, x}_\Delta \bigr) \bigr] \Bigr|  
\le C ( 1 + |x|^q ), 
\end{align*} 
for some constants $C > 0$ and $q \ge  1$, where we exploited the following bound: under Assumptions (\ref{assump:param_space})--(\ref{assump:coeff}), for any
$\Delta > 0$, $x \in \mathbb{R}^d$, $\theta \in \Theta$, and $p \ge 1$, there exists $C > 0$ so that
\begin{align}
\mathbb{E}_\theta \Bigl[ 
\bigl| \widetilde{X}^{\mrm{LG}, x}_\Delta \bigr|^p \Bigr] \le C (1 + |x|^p).  
\end{align}
The proof of Lemma \ref{lemma:wa_hermite} is now complete.
%
\subsection{Proof of Property (iv)}
\noindent From the properties of Hermite polynomials, we have 
\begin{align*}
\int_{\mathbb{R}^d} 
\mathcal{H}_\alpha^{\widetilde{Y}}  (\xi; \theta)
\, p^{\widetilde{Y}} (\xi ; \theta) \, d\xi = 0,  \qquad  \theta \in \Theta. 
\end{align*}
%
%
Due to the definition of $\Psi^{\mrm{weak}} (\Delta, x, y ; \theta)$ involving $\mathcal{H}_\alpha^{\widetilde{Y}}  (\xi; \theta) |_{\xi = m_{x, \theta, \sqrt{\Delta}}(y)}$, change of variables gives: 
\begin{align*}
\int_{\mathbb{R}^d} 
 \Psi^{\mrm{weak}} (\Delta, x, y ; \theta) 
 \, p^{\widetilde{X}^{\mathrm{LG}}}_\Delta (x, y; \theta) dy = 0, 
\end{align*}
and now the proof is complete. 
\section{Proof of Proposition \ref{prop:local_WA_ver2}} \label{appendix:diff_density_weak}
\noindent A Taylor expansion of the exponential function gives: %
\begin{align} 
\label{eq:Taylor_exp}
  \exp \left( K (z) \right) - (1 + z) 
 = \sum_{i = 7}^{\nu} c_i z^i  
 + g (z) \tfrac{z^{8}}{7!},  \ \  z \in \mathbb{R},   
\end{align}
for some integer $\nu > 7$ and constants $c_i$, where 
\begin{align*}
    g (z) := \int_0^1 
     \exp \left( s K(z) \right) ds.
\end{align*}
Notice that given the definition of the function $z \mapsto K(z)$, there exists a constant $C > 0$ so that $\textstyle{\sup_{z \in \mathbb{R}} |g (z)| \leq C }$. Thus, we have 
\begin{align*} 
    & \left|
    \int_{\mathbb{R}^d}  \varphi (y)
    p^{\widetilde{X}^{\mathrm{LG}}}_\Delta (x, y ; \theta) \left( 1 + 
    \Psi^{\mrm{weak}} (\Delta, x, y; \theta) \right) dy 
    - \int_{\mathbb{R}^d}  \varphi (y)  
    \bar{p}^{\, \mathrm{II}}_\Delta (x, y ; \theta) dy  \right|  \nonumber \\
    & \leq \sum_{i=7}^{\nu} 
    \left| 
    \int_{\mathbb{R}^d} \varphi (y) c_i \bigl\{ \Psi^{\mrm{weak}} (\Delta, x, y ; \theta) \bigr\}^i p^{\widetilde{X}^{\mathrm{LG}}}_\Delta (x, y ; \theta)  dy 
    \right|  \nonumber \\ 
    & \qquad  +  \left| 
    \int_{\mathbb{R}^d} \varphi (y) 
    g \bigl(  \Psi^{\mrm{weak}} (\Delta, x, y ; \theta)  \bigr) 
    \tfrac{\bigl\{ \Psi^{\mrm{weak}} (\Delta, x, y ; \theta) \bigr\}^{8}}{7!}
    p^{\widetilde{X}^{\mathrm{LG}}}_\Delta (x, y ; \theta)  dy 
    \right|  \nonumber \\ 
    & \leq C (1 + |x|^q) \Delta^3,
\end{align*}
for some constants $C > 0$ $q \ge 1$, where we made use of
$\textstyle{\sup_{y \in \mathbb{R}^d} 
 \bigl| g \bigl(  \Psi^{\mrm{weak}} (\Delta, x, y ; \theta)  \bigr) \bigr| < \infty}$, of the polynomial growth of the test function $\varphi$, and of the following estimates:
\begin{itemize}
\item Under  conditions (\ref{assump:param_space})-(\ref{assump:coeff}), there exist constants $C > 0$ and $q_1, q_2 \ge 1$ independent of $x, y \in \mathbb{R}^d$ and $\theta \in \Theta$ such that  
\begin{align*}
\bigl| \Psi^{\mrm{weak}} (\Delta, x, y ; \theta)  \bigr|  
\leq  C (1 + |x|^{q_1} + |y|^{q_2}) \sqrt{\Delta}.    
\end{align*}
\item Under  conditions (\ref{assump:param_space})-(\ref{assump:coeff}), for $p \geq 1$, there exists a constant $C > 0$ such that
\begin{align*}
 \mathbb{E}_\theta \bigl[ \bigl| \widetilde{X}_{\Delta}^{\mathrm{LG}, x} \bigr|^p \bigr] 
 \leq  C (1 + |x|^p).   
 \end{align*}
\end{itemize}
The proof is now complete. 
\section{Proof of Theorem \ref{thm:density_approx_second}} \label{sec:pf_density_approx_second}
Let $\Delta > 0$, $x, y \in\mathbb{R}^{d_R}$ and $\theta \in \Theta$. We have that 
\begin{align} 
\label{eq:density_decomp}
   &   p_{e,\Delta}^{X} (x , y; \theta) 
   - \bar{p}^{\, \mathrm{II}, (M)}_{e,\Delta} (x , y; \theta) \nonumber \\ 
   & \qquad =  
    \big(
      p_{e,\Delta}^{X} (x , y; \theta) - \bar{p}^{\, \mathrm{I}, (M)}_{e,\Delta} (x , y; \theta)  
    \big)
   + \big(
        \bar{p}^{\, \mathrm{I}, (M)}_{e,\Delta} (x , y; \theta) 
        - \bar{p}^{\, \mathrm{II}, (M)}_{e,\Delta} (x , y; \theta) 
     \big).
\end{align} 
For the first term on the right-hand-side of (\ref{eq:density_decomp}), we obtain the upper bound (\ref{eq:density_approx_second}) given in the main text under conditions (\ref{assump:param_space})--(\ref{assump:hypo1}) from \cite{iguchi:21-2}. For the second term, we have 
\begin{align} 
\label{eq:density_decomp_20}
 & \bar{p}^{\, \mathrm{I}, (M)}_{e,\Delta} (x , y; \theta) 
 - \bar{p}^{\, \mathrm{II},(M)}_{e,\Delta} (x , y; \theta) = E_1 + E_2, 
\end{align}
where we have defined
\begin{align*}
 & E_1 = \int_{\mathbb{R}^{d_R \times (M-1)}} 
  \Bigl\{ 
  \prod_{i=1}^{M-1} \bar{p}_{e,\delta_M}^{\, \mathrm{I}} (x_{i-1}, x_i; \theta)
  \Bigr\} 
  \mathscr{R}_{\delta_M} (x_{M-1}, y; \theta)\, 
  d x_1 \cdots d x_{M-1}; 
  \\[0.2cm] 
  & E_2 =  \sum_{k=1}^{M-1} 
  \int_{\mathbb{R}^{d_R \times (M-1)}} 
  \Bigl\{
     \prod_{i=1}^{M-k-1} \bar{p}_{e,\delta_M}^{\, \mathrm{I}}
      (x_{i-1}, x_i; \theta)
  \Bigr\} \nonumber  \\
 & \qquad  \times 
  \mathscr{R}_{\delta_M} (x_{M-k-1}, x_{M-k}; \theta) 
  \Bigl\{
     \prod_{i=1}^{k} 
     \bar{p}_{e,\delta_M}^{\, \mathrm{II}} (x_{M-k+(i-1)}, 
     x_{M - k + i}; \theta) 
  \Bigr\}\,d x_1 \cdots d x_{M-1}. 
\end{align*} 
The following term is used in the above expressions:
\begin{align}
\mathscr{R}_s (z_1, z_2 ; \theta)   
 := \bar{p}_{e,s}^{\, \mathrm{I}} (z_1, z_2; \theta)  
    - \bar{p}_{e,s}^{\, \mathrm{II}} (z_1, z_2; \theta), 
    \qquad s \ge 0, \; z_1, z_2 \in \mathbb{R}^{d_R}, \; \theta \in \Theta.
\end{align} 
We obtain the following bounds for $E_1$ and $E_2$.
\begin{lemma} 
\label{lemma:bd_E1_E2}
Assume that conditions (\ref{assump:param_space})--(\ref{assump:hypo1}) hold. There exist constants $C_1, C_2, c_1, c_2 > 0$, $q \geq d_R/2$ and non-decreasing functions $h_1(\cdot)$, $h_2(\cdot)$ such that for any $x, y \in \mathbb{R}^{d_R}$, $\theta \in \Theta$,  
\begin{align}
 |E_1| 
 & \leq C_1 (\delta_M)^3 \frac{h_1 (\Delta)}{\Delta^{q}} e^{ - c_1 \frac{|y - x|^2 }{\Delta}},  \label{eq:E1}\\ 
|E_2| 
 & \leq C_2 (M-1)  (\delta_M)^3 \frac{h_2 (\Delta)}{\Delta^{q}} e^{ - c_2 \frac{|y - x|^2 }{\Delta}}. \label{eq:E2}
\end{align}
\end{lemma}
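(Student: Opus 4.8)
The plan is to reduce both estimates to a single pointwise, Aronson-type Gaussian bound on the one-step discrepancy $\mathscr{R}_s(z_1,z_2;\theta) = \bar{p}^{\,\mathrm{I}}_{e,s}(z_1,z_2;\theta) - \bar{p}^{\,\mathrm{II}}_{e,s}(z_1,z_2;\theta)$, and then to propagate it through the convolutions defining $E_1$ and $E_2$. First I would observe that, in the elliptic case, both one-step schemes share the same Gaussian factor $p^{\widetilde{X}^{\mathrm{EM}}}_{e,s}(z_1,z_2;\theta)$, so by the Taylor expansion (\ref{eq:Taylor_exp}) of $\exp(K(\cdot))$ one can write
\begin{align*}
\mathscr{R}_s(z_1,z_2;\theta) = - p^{\widetilde{X}^{\mathrm{EM}}}_{e,s}(z_1,z_2;\theta)\Bigl\{ \sum_{i=7}^{\nu} c_i\,\bigl(\Psi_e^{\mrm{weak}}(s,z_1,z_2;\theta)\bigr)^i + g\bigl(\Psi_e^{\mrm{weak}}\bigr)\,\tfrac{(\Psi_e^{\mrm{weak}})^{8}}{7!}\Bigr\}.
\end{align*}
The key point is that the elliptic correction $\Psi_e^{\mrm{weak}}$, which by the analogue of (\ref{eq:Phi1}) is a sum of terms of explicit orders $\sqrt s$, $s$ and $\sqrt{s^3}$ with coefficients $v_\alpha(z_1,\theta)$ bounded uniformly in $\theta$ (by compactness of $\Theta$ and (\ref{assump:coeff})) times Hermite polynomials in the normalised increment, so its seventh power carries an explicit prefactor $s^{7/2}=s^3\cdot s^{1/2}$, while $g$ is bounded. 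Absorbing the polynomial growth of the Hermite factors into a slightly wider Gaussian then yields, uniformly in $\theta$,
\begin{align*}
|\mathscr{R}_s(z_1,z_2;\theta)| \le C\, s^{3}\,\tfrac{\bar h(s)}{s^{q_0}}\, e^{-c|z_2-z_1|^2/s}, \qquad q_0 \ge d_R/2,
\end{align*}
with $\bar h$ non-decreasing, the spare factor $s^{1/2}$ being absorbed into $\bar h$.

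The second ingredient I would establish is a family of Gaussian bounds on the iterated kernels. For the non-negative scheme $\bar{p}^{\,\mathrm{II}}_{e,s} \le p^{\widetilde{X}^{\mathrm{EM}}}_{e,s}\exp(K(\Psi_e^{\mrm{weak}}))$ with $\exp(K(\cdot))$ bounded, so a single step already obeys a Gaussian upper bound; for the signed scheme one bounds $|\bar{p}^{\,\mathrm{I}}_{e,s}| \le p^{\widetilde{X}^{\mathrm{EM}}}_{e,s}\,|1+\Psi_e^{\mrm{weak}}|$ and again absorbs the polynomial into the Gaussian. Iterating these one-step bounds over consecutive steps by the Chapman--Kolmogorov-type convolution inequality for Gaussian kernels (the convolution of kernels of widths $\propto s_1$ and $\propto s_2$ is dominated by a kernel of width $\propto s_1+s_2$, up to a controlled constant) shows that the products $\prod |\bar{p}^{\,\mathrm{I}}_{e,\delta_M}|$ and $\prod \bar{p}^{\,\mathrm{II}}_{e,\delta_M}$ over a time span $t$ satisfy a bound $\le C\,\tfrac{h(t)}{t^{q}}\,e^{-c|z_2-z_1|^2/t}$, uniformly in $\theta$. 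The Malliavin integration-by-parts representation of the density together with the weight estimate (\ref{eq:estimate_weight}) provides the route to these one-step bounds and to keeping the constants controlled as steps are composed; the $\bar{p}^{\,\mathrm{I}}$-product bound is essentially the content behind (\ref{eq:bd_conv_ver1}) from \cite{iguchi:21-2}, while the analogous estimate for the version-II products is the genuinely new piece.

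With these in hand, $E_1$ is bounded by placing the single-step estimate on $\mathscr{R}_{\delta_M}$ at the last step and the product bound over the span $(M-1)\delta_M = \Delta-\delta_M$ on the remaining factors, then applying the Gaussian convolution inequality once more to merge the two time spans into a single Gaussian over $\Delta=M\delta_M$; the prefactor $\delta_M^3$ is inherited directly from $\mathscr{R}_{\delta_M}$, giving (\ref{eq:E1}). For $E_2$, each of the $M-1$ summands has the same shape --- a product of $\bar{p}^{\,\mathrm{I}}$ factors, one $\mathscr{R}_{\delta_M}$, and a product of $\bar{p}^{\,\mathrm{II}}$ factors --- and is bounded identically by $C\,\delta_M^3\,\tfrac{h(\Delta)}{\Delta^q}\,e^{-c|y-x|^2/\Delta}$; summing over $k=1,\dots,M-1$ produces the stated factor $M-1$ in (\ref{eq:E2}). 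The main obstacle is the second ingredient: obtaining the Gaussian bounds for the signed kernel $\bar{p}^{\,\mathrm{I}}$ and, crucially, iterating the convolution inequality $\mathcal{O}(M)$ times while keeping the aggregated constant, the power $q\ge d_R/2$, and the non-decreasing function $h$ \emph{independent of $M$} --- so that only the explicit $M-1$ prefactor survives in $E_2$. This forces one to exploit the $\bigl(1+\mathcal{O}(\sqrt{\delta_M})\bigr)$ multiplicative structure of the correction factors, whose $M$-fold product remains bounded, rather than crudely compounding a per-step constant, and to track the negative powers of $\delta_M$ generated at each step against the gain from each convolution.
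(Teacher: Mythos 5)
Your decomposition of $E_1$ and $E_2$ (products of one-step kernels with a single $\mathscr{R}_{\delta_M}$ inserted, Gaussian bound on $\mathscr{R}_{\delta_M}$, Gaussian bounds on the iterated kernels, then merge via the convolution inequality) is exactly the paper's strategy, and your first ingredient is actually sound and more elementary than the paper's: the paper derives the one-step bound on $\mathscr{R}_{\delta_M}$ by Malliavin integration by parts, whereas writing $\mathscr{R}_s = -p^{\widetilde{X}^{\mathrm{EM}}}_{e,s}\,\Gamma(\Psi_e^{\mrm{weak}})$, using $\Gamma(z)=\mathcal{O}(z^7)$ from (\ref{eq:Taylor_exp}), and absorbing the Hermite polynomials into a wider Gaussian gives the same estimate (\ref{eq:bd_R}) pointwise, granted the uniform ellipticity implicit in (\ref{assump:hypo1}).

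The genuine gap is in your second ingredient, and it sits precisely at what you yourself call ``the genuinely new piece.'' You assert that iterating one-step Gaussian bounds through Chapman--Kolmogorov convolutions yields an $M$-uniform Gaussian bound on the $k$-fold product of $\bar{p}^{\,\mathrm{II}}_{e,\delta_M}$ kernels, and your proposed mechanism for keeping constants $M$-independent --- that the per-step correction factors are $1+\mathcal{O}(\sqrt{\delta_M})$ so their $M$-fold product stays bounded --- is quantitatively false: $(1+c\sqrt{\delta_M})^M \approx \exp(c\sqrt{M\Delta})\to\infty$ as $M\to\infty$ with $\Delta$ fixed. Worse, even with per-step factors $1+\mathcal{O}(\delta_M)$, composing pointwise Gaussian bounds for kernels whose mean and covariance vary in space (the Euler kernel is not a heat-kernel semigroup) without losing a constant at each step is exactly the hard parametrix/Aronson problem; it cannot be dismissed as an application of (\ref{eq:conv_gaussian}). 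The paper does not iterate pointwise bounds at all for the version-II products: it represents the whole $k$-step kernel as a single generalised expectation $\mathbb{E}_\theta[\delta_y(\widetilde{X}^{\mathrm{EM},(M),x}_{e,k\delta_M})\,G^k_{\delta_M,x,\theta}]$, where $G^k$ is the accumulated multiplicative weight $\prod_{l}\exp(K(\pi_e(\cdots)))$ along the Euler Markov chain, performs one Malliavin integration by parts over the full $k$-step functional (Kusuoka--Stroock giving the factor $(k\delta_M)^{-d_R/2}$ with interval-level constants via (\ref{eq:estimate_weight})), and then establishes $M$-uniformity where it can actually be controlled: at the level of Sobolev norms of the weight, $\|G^k_{\delta_M,x,\theta}\|_{j,2k}\le C$ with $C$ independent of $k$ and $M$ (estimate (\ref{eq:bd_exp_second}) in spirit, eq.~(\ref{eq:bd_exp_weight})), whose proof uses the splitting $e^{K(z)}=1+z+\Gamma(z)$, the bound on products of $(1+\pi_e)$ weights from Lemma~2 of \cite{igu:22}, and $\Gamma(z)=\mathcal{O}(z^7)$. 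Without either this Malliavin route or a full parametrix argument (which you would have to supply, not sketch), the $M$-uniform bound on the iterated $\bar{p}^{\,\mathrm{II}}$ kernel --- and hence (\ref{eq:E2}) --- is not established; note that for the iterated $\bar{p}^{\,\mathrm{I}}$ products the paper simply cites \cite{iguchi:21-2}, as you may too.
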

\noindent We provide the proof in Section  \ref{appendix:density_approx_second}. 
Hence, we have 
\begin{align*} 
 \big|\,
   \bar{p}_{e,\Delta}^{\, \mathrm{I}, (M)} (x , y; \theta) 
   - \bar{p}^{\, \mathrm{II}, (M)}_{e,\Delta} (x , y; \theta) 
 \,\big| 
& \leq M  C  (\delta_M)^3 \frac{h (\Delta)}{\Delta^{q}} e^{ - c \frac{|y - x|^2 }{\Delta}} 
= \frac{C'}{M^2} \frac{h(\Delta)}{\Delta^{q}} e^{ - c \frac{|y - x|^2 }{\Delta}},
\end{align*}
where $h(\cdot)$ is a non-decreasing function and $C, C', c > 0$ are some constants independent of $\theta \in \Theta, \Delta > 0$ and $x, y \in \mathbb{R}^{d_R}$. The proof is complete.

\subsection{Proof of Lemma \ref{lemma:bd_E1_E2}} \label{appendix:density_approx_second}
\subsubsection{Upper Bound (\ref{eq:E1})} \label{sec:bd_E_1}
Term $E_1$ is given as: 
\begin{align*}
    E_1 = \int_{\mathbb{R}^{d_R}}
     E_{1,1} (x, x_{M-1} ; \theta) 
     E_{1,2} (x_{M-1}, y; \theta)  d x_{M-1},
\end{align*}
where 
\begin{align*}
   E_{1,1} (x,  x_{M-1} ; \theta)
   & = \int_{\mathbb{R}^{d_R \times (M-2)}}  
   \prod_{i=1}^{M-1}
   \bar{p}^{\, \mathrm{I}}_{e, \delta_M} (x_{i-1}, x_i; \theta) dx_1 \cdots dx_{M-2};  \\
   E_{1,2} (x_{M-1}, y; \theta)
   & = \mathscr{R}_{\delta_M}  (x_{M-1}, y ; \theta). 
\end{align*}
From now on, we will derive upper bounds for terms $E_{1,1} (x, x_{M-1}; \theta)$ 
and 
$E_{1,2} (x_{M-1}, y; \theta)$. From \cite{iguchi:21-2}, we have:
\begin{align} 
\label{eq:bd_E11}
    \left| E_{1,1} (x,  x_{M-1} ; \theta) \right| 
    \leq  C \frac{h (\Delta)}{ \left( (M-1)\delta_M \right)^{d_R / 2} } 
    e^{-c  \frac{|x_{M-1} - x|^2}{(M-1) \delta_M}  }
\end{align}
for a non-decreasing function $h (\cdot)$ and constants $C, c > 0$ independent of $x, x_{M-1} \in \mathbb{R}^{d_R}$ and $\theta \in \Theta$. 
We consider the term $E_{1,2} (x_{M-1}, y; \theta)$. First, we note that there exists a random variable $\pi_{e} (t, x, B_t ; \theta)$, $x \in \mathbb{R}^{d_R}$, $t > 0$, $\theta \in \Theta$ such that for any appropriate test functions $f: \mathbb{R}^{d_R} \to \mathbb{R}$,
\begin{align*}
    \int_{\mathbb{R}^{d_R}} f (y) \bar{p}_{e, t}^{\, \mathrm{I}} (x, y ; \theta) dy 
    & = 
    \int_{\mathbb{R}^{d_R}} f (y) {p}_{e,t}^{\widetilde{X}^{\mathrm{EM}}} (x, y ; \theta)  \bigl\{ 1 + \Psi_e^{\mrm{weak}} (t, x, y ; \theta)  \bigr\} dy \nonumber \\
    & = \mathbb{E}_\theta \bigl[ f (\widetilde{X}_{e,t}^{\mathrm{EM},x})
     \bigl\{ 1 + \pi_{e} (t, x, B_t ; \theta) \bigr\}  \bigr], 
\end{align*}
where $\widetilde{X}_{e,t}^{\mathrm{EM},x}$ is the one-step Euler-Maruyama scheme with step-size $t$ starting from $x \in \mathbb{R}^{d_R}$. 
The stochastic weight $\pi_{e} (t, x, B_t ; \theta)$ is interpreted as the infinite-dimensional version of the weight function 
$y \mapsto \Psi_e^{\mrm{weak}} (t, x, y; \theta)$ and is given as: 
\begin{align} \label{eq:stochastic_weight} 
    \pi_{e} (t, x, B_t ; \theta) 
    & = \Psi_e^{\mrm{weak}} 
    (t, x, \widetilde{X}^{\mathrm{EM},x}_{e,t} ; \theta)  \\ 
    &\,\,= 
    \bigg\{\, \tfrac{\sqrt{t}}{2}  \sum_{i_1, i_2, i_3 =1}^{d_R} \sum_{k_1, k_2 = 1}^{d_R} \hat{V}_{k_1} V_{k_2}^{i_1} (x, \theta) V_{k_1}^{i_2}(x, \theta) V_{k_2}^{i_3}(x, \theta) \mathcal{H}_{(i_1,i_2,i_3)}^{\widetilde{Y}_{e}} (\xi ; \theta) \nonumber \\[0.2cm]
    & \quad +   \tfrac{t}{2} \sum_{i_1, i_2 =1}^{d_R} \sum_{k = 1}^{d_R} \Bigl( \hat{V}_{k} V_{0}^{i_1} (x,  \theta)
    + \hat{V}_{0} V_{k}^{i_1} (x, \theta) \Bigr) V_k^{i_2}(x, \theta)   \mathcal{H}^{\widetilde{Y}_{e}}_{(i_1,i_2)} (\xi; \theta)  \nonumber \\
    & \quad  +  \tfrac{t}{4} \sum_{i_1, i_2 =1}^{d_R} \sum_{k_1, k_2 = 1}^{d_R}
    \hat{V}_{k_1} V_{k_2}^{i_1}(x, \theta) \hat{V}_{k_1} V_{k_2}^{i_2}(x, \theta)  \mathcal{H}^{\widetilde{Y}_{e}}_{(i_1,i_2)}(\xi ; \theta)  \nonumber \\[0.2cm]   
    & \quad + \tfrac{\sqrt{t^3} }{2}  \sum_{i = 1}^{d_R} 
     \hat{V}_{0} V_{0}^{i}(x , \theta)   \mathcal{H}^{\widetilde{Y}_{e}}_{(i)}(\xi ; \theta)  
     \, \bigg\} 
    \bigg|_{\xi = \tfrac{1}{\sqrt{t}} 
     \sum_{k = 1}^{d_R} V_k (x ,  \theta) B_{k, t}},  \nonumber 
\end{align}
where we used: 
\begin{align*}
    m_{x, \theta, \sqrt{t}, 1} (\widetilde{X}_{e, t}^{\mathrm{EM},x})
    = \frac{\widetilde{X}_{e, t}^{\mathrm{EM},x} - x - V_0 (x ,  \theta) t}{\sqrt{t}} 
    = \tfrac{1}{\sqrt{t}} 
     \sum_{k = 1}^{d_R} V_k (x,  \theta) B_{k, t}. 
\end{align*}
%
%
%
We define $\Gamma : \mathbb{R} \to \mathbb{R}$ as 
$\Gamma (z) \equiv \exp \left( K (z) \right) - (1 + z)$. Then, 
$\mathscr{R}_{\delta_M} (x, y ; \theta)$, $x, y \in \mathbb{R}^{d_R}$, $\theta \in \Theta$, is given as: 
\begin{align*}
    \mathscr{R}_{\delta_M} (x, y ; \theta) 
    & = p_{e, \delta_M}^{\widetilde{X}^{\mathrm{EM}}} (x, y ; \theta) 
     \cdot \Gamma 
     \bigl( 
     \Psi_e^{\mrm{weak}} (\delta_M, x, y ;  \theta) 
     \bigr) \nonumber \\ 
    & = p_{e, \delta_M}^{\widetilde{X}^{\mathrm{EM}}} (x, y ; \theta)
     \cdot \mathbb{E}_{\theta} 
     \left[ 
     \Gamma \bigl( 
     \pi_{e} (\delta_M, x,  B_{\delta_M}  ;  \theta) 
     \bigr)
     | \widetilde{X}_{e, \delta_M}^{\mathrm{EM}, x} = y
     \right] \nonumber \\ 
    & = \mathbb{E}_\theta 
    \left[ \delta_y  (\widetilde{X}_{e, \delta_M}^{\mathrm{EM},x}) 
    \, \Gamma  
      \bigl( \pi_{e} (\delta_M, x,  B_{\delta_M}  ;  \theta)   \bigr) 
    \right],   
\end{align*}
where we have used (\ref{eq:stochastic_weight}) in the second line and (\ref{eq:conditional_density}) in the last line.  
Using the Malliavin integration by parts formula and H\"older's inequality, we obtain  
\begin{align*}
    | \mathscr{R}_{\delta_M} (x, y ; \theta) |
    & = \left| 
    \mathbb{E}_{\theta} 
    \bigl[
    \mathbf{1}_{\widetilde{X}_{e, \delta_M}^{\mathrm{EM}, x} \geq y} 
    H_{(1, \ldots, d_R)}  \bigl( \widetilde{X}_{e, \delta_M}^{\mathrm{EM}, x}, 
        \Gamma \bigl( 
        \pi_{e} (\delta_M, x,  B_{\delta_M}  ;  \theta) 
        \bigr) 
    \bigr]  
    \right| \\ 
    & \leq  
    \bigl\|  \mathbf{1}_{\widetilde{X}_{e, \delta_M}^{\mathrm{EM}, x} \geq y} \bigr\|_2  
    \bigl\| 
    H_{(1, \ldots, d_R)}  
    \bigl(
        \widetilde{X}_{e, \delta_M}^{\mathrm{EM}, x},  
        \Gamma \bigl(  
        \pi_{e} (\delta_M, x,  B_{\delta_M}  ;  \theta) 
        \bigr) 
    \bigr\|_2. 
\end{align*}
%
%
%
%
%
%
%
From Lemma F.3 in \cite{iguchi:21-2}, we have 
\begin{align*}  
    \bigl\|  \mathbf{1}_{\widetilde{X}_{e, \delta_M}^{\mathrm{EM}, x} \geq y} \bigr\|_2    
   \leq h (\Delta) e^{- c \tfrac{|y-x|^2}{\delta_M}},
\end{align*}   
where $h(\cdot)$ is a non-decreasing function and $c > 0$ is some constant.  Furthermore, the upper bound (\ref{eq:estimate_weight}) with Kusuoka-Stroock's estimate (Theorem 1.20 in \cite{kusu:84}) yields 
\begin{align*} 
    \bigl\|  H_{(1, \ldots, d_R)}  
    \bigl( \widetilde{X}_{e, \delta_M}^{\mathrm{EM}, x},   
    \Gamma \bigl(  
    \pi_{e} (\delta_M, x,  B_{\delta_M}  ;  \theta) 
    \bigr)
    \bigr\|_2
    & \leq \frac{C_1}{(\delta_M)^{d_R/2}} 
    \| \Gamma \bigl( 
    \pi_{e} (\delta_M, x,  B_{\delta_M}  ;  \theta) 
    \bigr)  \|_{d_R, 4} \nonumber \\   
    & \leq 
    \frac{C_2}{(\delta_M)^{d_R/2}} (\delta_M)^3
\end{align*}
for some constants $C_1 = C(\Delta), C_2 = C_2(\Delta) > 0$ that are independent of $(x, \theta) \in \mathbb{R}^{d_R} \times \Theta$ under conditions (\ref{assump:param_space})--(\ref{assump:coeff}). In the last inequality, we used the following bound: 
\begin{align} \label{eq:bd_weight_diff}
    \| \Gamma \bigl( 
    \pi_{e} (\delta_M, x,  B_{\delta_M}  ;  \theta) 
    \bigr)  \|_{d_R, 4}
    \le C (\delta_M)^3,
\end{align}
for some constant $C > 0$ independent of $x$ and $\theta$. 
(\ref{eq:bd_weight_diff}) is immediately obtained from the Taylor expansion (\ref{eq:Taylor_exp}) and the following estimate: Let $j \in \mathbb{N}$, $p = 2k, \; k \in \mathbb{N}$. Under conditions (\ref{assump:param_space})--(\ref{assump:coeff}), there exists a constant $C> 0$ such that for any $(x, \theta) \in \mathbb{R}^{d_R} \times \Theta$ and integer $i \geq 7$, 
\begin{align} \label{eq:bd_poly_weight}
  \Bigl\| 
  \bigl\{ 
  \pi_{e} (\delta_M, x,  B_{\delta_M}  ;  \theta) 
  \bigr\}^i
  \Bigr\|_{j,p} \leq C (\delta_M)^3.
\end{align}  
Notice that the estimate (\ref{eq:bd_poly_weight}) is immediately obtained by recursive application of H\"older's inequality with the following estimate. There exists constant $C >0$ independent of $(x, \theta) \in \mathbb{R}^{d_R} \times \Theta$ such that 
\begin{align*}
    \bigl\| 
    \pi_{e} (\delta_M, x,  B_{\delta_M}  ;  \theta)
    \bigr\|_{k,q} 
    \leq C  \sqrt{\delta_M}, 
\end{align*}
for any $k \in \mathbb{N}$, $q \in (1, \infty)$. 
Hence, we have
\begin{align} \label{eq:bd_R}
    \left| \mathscr{R}_{\delta_M} (x_{M-1}, y; \theta)  \right| 
    \leq C (\delta_M)^3 \frac{h(\Delta)}{(\delta_M)^{d_R / 2}} 
     e^{ -c \tfrac{|y - x_{M-1}|^2}{\delta_M} },
\end{align}
for a non-decreasing function $h(\cdot)$ and constants $C, c > 0$ independent of $x_{M-1}, y \in \mathbb{R}^{d_R}$ and $\theta \in \Theta$. 
Thus, from estimates (\ref{eq:bd_E11}), (\ref{eq:bd_R}), we obtain the upper bound for $E_1$ as: 
\begin{align*}
    \left|E_1 \right| 
    & \leq C_1 (\delta_M)^3 \frac{h (\Delta)}{\left( (M-1) \delta_M \right)^{d_R/2} (\delta_M)^{d_R/2} } \int_{\mathbb{R}^{d_R}} 
     e^{ -c_1 \tfrac{|x_{M-1} - x|^2}{(M-1)\delta_M}}
     e^{ -c_2 \tfrac{|y - x_{M-1}|^2}{\delta_M} }
     d x_{M-1}  \\[0.1cm]
    & \leq C_2 (\delta_M)^3 \frac{h (\Delta)}{\Delta^{d_R/2}}  
     e^{ -c_3 \tfrac{|y - x|^2}{\Delta} }, 
\end{align*}
where we have made use of the following result. There exist constants $C, c > 0$ such that for any constants $c_1, c_2 > 0$ and $0 < r < s$, $x, y \in \mathbb{R}^{d_R}$,  
\begin{align} \label{eq:conv_gaussian}
 \frac{1}{r^{{d_R}/2}} 
 \int_{\mathbb{R}^{d_R}} 
e^{ - c_1 \frac{|z-x|^2}{r} }
 e^{ - c_2 \frac{|y-z|^2}{s-r} } dz  
 \leq C \left( \frac{s-r}{s} \right)^{d_R/2} e^{ - c \frac{|y-x|^2}{s}}.
\end{align} 
The proof is now complete. 
\subsubsection{Upper Bound (\ref{eq:E2})}  \label{sec:bd_E_2}
\noindent Term $E_2$ is written as: 
\begin{align*}
    E_2 
    = \sum_{k=1}^{M-1} 
    \int_{\mathbb{R}^{d_R}}  F_{1,k} (x, x_{M-k-1} ; \theta)  
    F_{2,k} (x_{M-k-1}, y ; \theta)
    d x_{M-k-1}, 
\end{align*}
where 
\begin{align*}
 & F_{1,k} (x, x_{M-k-1} ; \theta)  
   :=  
   \int_{\mathbb{R}^{d_R \times (M- k -1)}} 
   \prod_{i=1}^{M-k-1}   
   \bar{p}^{\, \mathrm{I}}_{e, \delta_M} (x_{i-1}, x_{i};  \theta) 
   d x_1 \cdots d x_{M-k-2}; \\ 
 & F_{2,k} (x_{M-k-1}, y; \theta) 
 := \int_{\mathbb{R}^{d_R}} 
   \mathscr{R}_{\delta_M} (x_{M-k-1}, x_{M-k} ;\theta)
   \widetilde{F}_{2,k} (x_{M - k}, y ; \theta) d x_{M-k}, 
\end{align*}
for $1 \le j \le M-1$, where we have set: 
\begin{align} \label{eq:bd_F1}
     \widetilde{F}_{2,k} (x_{M - k}, y ; \theta) 
     = \int_{\mathbb{R}^{d_R \times (k-1) }} 
     \prod_{i=1}^k \bar{p}^{\, \mathrm{II}}_{e, \delta_M} (x_{M- k + (i-1)}, x_{ M - k + i };  \theta)  d x_{M- k + 1} \cdots 
     d x_{M-1},
\end{align}
under the notation $x_{M} = y \in \mathbb{R}^{d_R}$. From \cite{iguchi:21-2}, we  have
\begin{align*} 
 \left| F_{1, k} (x, x_{M-k-1}; \theta)  \right|  
 \leq C \frac{h (\Delta)}{((M - k- 1) \delta_M)^{d_R/2}}
     e^{ - c \tfrac{|x_{M-k-1} - x|^2 }{ (M - k- 1) \delta_M} },  
\end{align*}
for some non-decreasing function $h (\cdot)$ and constants $C, c > 0$ independent of $(x, \theta) \in \mathbb{R}^{d_R} \times \Theta$ under conditions (\ref{assump:param_space})--(\ref{assump:coeff}). 
To obtain an upper bound for $\widetilde{F}_{2,k} (x_{M-k-1}, y; \theta)$, we will express it as a generalised expectation. First, from the Markov property of Euler-Maruyama scheme and (\ref{eq:stochastic_weight}) we have for a measurable function $f : \mathbb{R}^{d_R} \to \mathbb{R}$, 
\begin{align} \label{eq:markov}
    & \int_{\mathbb{R}^{d_R}} \int_{\mathbb{R}^{d_R}} 
    f (y) \bar{p}^{\, \mathrm{II}}_{e, \delta_M} (x, \xi ; \theta) 
    \bar{p}^{\, \mathrm{II}}_{e, \delta_M} (\xi, y ; \theta)  d \xi dy \nonumber  \\ 
    & = \int_{\mathbb{R}^{d_R}} 
     \mathbb{E}_\theta 
     \bigl[ f (\widetilde{X}^{\mathrm{EM},\xi}_{e, \delta_M}) 
      \, 
      e^{ K \bigl( \Psi_e^{\mrm{weak}} (\delta_M, \xi, \widetilde{X}^{\mathrm{EM},\xi}_{e, \delta_M}; \theta ) \bigr)}
     \bigr]  
    \, \bar{p}^{\, \mathrm{II}}_{e, \delta_M} (x, \xi ; \theta) 
    \, d\xi  \nonumber \\
    & = \mathbb{E}_\theta 
    \Bigl[
     \mathbb{E}_\theta 
     \bigl[ f (\widetilde{X}^{\mathrm{EM},\xi}_{e, \delta_M}) 
      \, 
      e^{ K \bigl( \Psi_e^{\mrm{weak}} (\delta_M, \xi, \widetilde{X}^{\mathrm{EM},\xi}_{e, \delta_M}; \theta ) \bigr)}
     \bigr] |_{\xi = \widetilde{X}^{\mathrm{EM},x}_{e, \delta_M} } 
     \, 
     \times 
     e^{ K \bigl( \Psi_e^{\mrm{weak}} (\delta_M, x, 
     \widetilde{X}^{\mathrm{EM},x}_{e, \delta_M}; \theta ) \bigr) }
    \Bigr] \nonumber  \\ 
    & = \mathbb{E}_\theta 
    \Bigl[
     \mathbb{E}_\theta 
     \bigl[ f (\widetilde{X}^{\mathrm{EM},(M), x}_{e, 2 \delta_M}) 
      \, 
     e^{ K \bigl( \Psi_e^{\mrm{weak}} (\delta_M, \widetilde{X}^{\mathrm{EM}, x}_{e, \delta_M}, \widetilde{X}^{\mathrm{EM},(M), x}_{e, 2 \delta_M}; \theta) \bigr) } 
    | \mathcal{F}_{\delta_M} \bigr] 
    \, 
    \times 
     e^{ K \bigl( \Psi_e^{\mrm{weak}} (\delta_M, x , \widetilde{X}^{\mathrm{EM},x}_{e, \delta_M}; \theta ) \bigr) }
    \Bigr] \nonumber \\
    & = \mathbb{E}_\theta 
    \Bigl[
     f (\widetilde{X}^{\mathrm{EM},(M), x}_{e, 2 \delta_M}) 
      \, 
     e^{ K \bigl( 
     \pi_{e}
     ( \delta_M, \widetilde{X}^{\mathrm{EM}, x}_{e, \delta_M}, B_{ \delta_M}^{[2]} ; \theta) 
     \bigr) } 
     e^{ K \bigl( 
     \pi_{e}
     ( \delta_M, x, B_{ \delta_M}^{[1]} ; \theta) 
     \bigr) }
    \Bigr], 
\end{align}
where $\widetilde{X}_{e, k \delta_M}^{\mathrm{EM}, (M), x}$, $0 \le k \le M$,
is the Markov chain of Euler-Maruyama scheme at time $k \delta_M$, starting from point $x \in \mathbb{R}^{d_R}$ with time step of length $\delta_M$, and $B^{[l]}_{\delta_M} := B_{l \delta_M}  - B_{(l-1) \delta_M}$, $1 \le l \le M$.
Since it follows from (\ref{eq:expectaion_rep}) and (\ref{eq:stochastic_weight}) that
\begin{align} \label{eq:density2_ge}
    \bar{p}_{e, \delta_M}^{\, \mathrm{II}} (x_{M-1}, y ; \theta ) 
    = \mathbb{E}_\theta 
    \Bigl[ 
    \delta_y (\widetilde{X}_{e, \delta_M}^{\mathrm{EM}, x_{M-1}})
    \, e^{K \bigl( 
     \Psi_e^{\mrm{weak}} (\delta_M, x_{M-1}, \widetilde{X}_{e, \delta_M}^{\mathrm{EM}, x_{M-1}}; \theta)  \bigr) }  
    \Bigr],
\end{align}
(\ref{eq:density2_ge}) and recursive application of (\ref{eq:markov}) yield:
\begin{align*}
    \widetilde{F}_{2,k} (x_{M-k}, y; \theta) 
    & = 
    \mathbb{E}_\theta 
    \Bigl[
    \delta_y ( \widetilde{X}_{e, k \delta_M}^{\mathrm{EM}, (M), x_{M-k}} )  
    \,\prod_{l=1}^{k} 
    e^{  K \bigl( 
      \pi_{e}
      \bigl(\delta_M, \widetilde{X}_{e, (l-1) \delta_M}^{\mathrm{EM}, (M), x_{M-k}}, 
         B^{[M-k + l]}_{\delta_M} ; \theta 
      \bigr)  
    \bigr) }
    \Bigr]    \\ 
    & \equiv 
    \mathbb{E}_\theta 
    \left[
    \delta_y ( \widetilde{X}_{e, k \delta_M}^{\mathrm{EM}, (M), x_{M-k}} )  
    \, G^k_{\delta_M, x_{M-k}, \theta}  
    \right]. 
\end{align*}
Use of the Malliavin integration by parts formula and H\"older's inequality gives: %
\begin{align*}
   | \widetilde{F}_{2, k} (x_{M-k}, y; \theta) |  
   & = \Bigl| \mathbb{E}_\theta [
   \mathbf{1}_{ \widetilde{X}^{\mathrm{EM}, (M), x_{M-k}}_{e, k \delta_M}  \geq y}  
   H_{(1, \ldots, d_R)} 
      \bigl(
      \widetilde{X}^{\mathrm{EM}, (M), x_{M-k}}_{e, k \delta_M}, G^{k}_{\delta_M, x_{M-k}; \theta}
      \bigr)
   ] 
   \Bigr|   \nonumber   \\ 
   & \leq  
   \bigl\| \mathbf{1}_{ \widetilde{X}^{\mathrm{EM}, (M), x_{M-k}}_{e, k \delta_M}  \geq y } \bigr\|_2
   \left\| 
   H_{(1, \ldots, d_R)} \bigl( \widetilde{X}^{\mathrm{EM}, (M), x_{M-k}}_{e, k \delta_M}, G^{k}_{\delta_M, x_{M-k}; \theta}  \bigr) 
   \right\|_2. 
\end{align*} 
By Lemma F.1 in \cite{iguchi:21-2} again, we have 
\begin{align*}
      \bigl\| 
      \mathbf{1}_{ \widetilde{X}^{\mathrm{EM}, (M), x_{M-k}}_{e, k \delta_M}  \geq y  } 
      \bigr\|_2 
      \leq   h (\Delta)  e^{-c \frac{y - x_{M-k}}{k \delta_M}},
\end{align*} 
for a non-decreasing function $h(\cdot)$ and a constant $c>0$ that are independent of $x_{M-k}, y \in \mathbb{R}^{d_R}$ and $\theta \in \Theta$ under conditions (\ref{assump:param_space})--(\ref{assump:coeff}). The upper bound (\ref{eq:estimate_weight}) with Kusuoka-Stroock's estimate (Theorem 1.20 in \cite{kusu:84}) yields that there exist constants $C_1 , C_2, c > 0$ independent of $x_{M-k}, y \in \mathbb{R}^{d_R}$ and $\theta \in \Theta$ such that 
\begin{align*}
 \left\| 
 H_{(1, \ldots, d_R)}
 \bigl( 
   \widetilde{X}^{\mathrm{EM}, (M), x_{M-k}}_{e, k \delta_M} 
   ,  G^{k}_{\delta_M, x_{M-k}, \theta}
 \bigr) 
 \right\|_2 
 & \leq  \tfrac{C_1}{\bigl(k\delta_M \bigr)^{d_R/2}} 
  \left\|  G^{k}_{\delta_M, x_{M-k}, \theta}      \right\|_{d_R, 4} \nonumber \\
 & \leq  \tfrac{C_2}{\bigl(k\delta_M \bigr)^{d_R/2}},
\end{align*}
where on the last inequality we have made use of the estimate derived as follows. Let $j, k \in \mathbb{N}$. Under conditions (\ref{assump:param_space})--(\ref{assump:coeff}), there exists a constant $C = C (\Delta) > 0$ independent of $x_{M-k}$ and $\theta$ such that 
\begin{align} \label{eq:bd_exp_weight}
    \| G^k_{\delta_M, x_{M-k}, \theta} \|_{j, 2k} \leq C.
\end{align} 
We will give the proof of (\ref{eq:bd_exp_weight}) in the end of this subsection. Thus, the upper bound of $\widetilde{F}_{2,k} (x_{M-k}, y ; \theta)$ is given as:
\begin{align*}
    | \widetilde{F}_{2,k} (x_{M-k}, y ; \theta) | 
    \leq C \frac{h (\Delta)}{(k \delta_M)^{d_R/2}}
     e^{-c \tfrac{| y - x_{M-k}|^2}{k \delta_M}},
\end{align*}
for a non-decreasing constant $h (\cdot)$ and some constants $C, c > 0$ independent of $x_{M-k}, y$ and $\theta$.  Combining the estimate of $\widetilde{F}_{2,k} (x_{M-k}, y ; \theta)$ with that of (\ref{eq:bd_R}), we obtain:  
\begin{align} \label{eq:bd_F2}
    & |F_{2,k} (x_{M-k-1}, y; \theta) | \nonumber  \\  
    &\leq  C_1 (\delta_M)^3 \frac{h (\Delta)}{(\delta_M)^{d_R/2} (k \delta_M)^{d_R/2}}\int_{\mathbb{R}^{d_R}} 
     e^{- c_1 \tfrac{| x_{M-k} - x_{M - k- 1} |^2}{\delta_M} }
     e^{- c_2 \tfrac{| y  - x_{M - k} |^2}{ k \delta_M} }
     d x_{M-k}  \nonumber  \\
    & \leq 
    C_2 (\delta_M)^3 \frac{h (\Delta)}{( ( k + 1) \delta_M)^{d_R /2} }
      e^{- c_3 \tfrac{| y - x_{M - k- 1} |^2}{ (k + 1) \delta_M} }, 
\end{align}
for some positive constants $C_1, C_2, c_1, c_2, c_3$ independent of $x_{M-k-1} \in \mathbb{R}^{d_R}$, where we used estimate (\ref{eq:conv_gaussian}) on the last inequality. Finally, thanks to (\ref{eq:conv_gaussian}) and the upper bounds (\ref{eq:bd_F1}), (\ref{eq:bd_F2}), we have 
\begin{align*}
    |E_2| 
    & \leq \sum_{k = 1}^{M-1} 
      \int_{\mathbb{R}^{d_R}} 
      | F_{1, k} (x_{M-k-1} ; \theta) | 
      | F_{2, k} (x_{M-k-1} ; \theta) | d x_{M-k-1} \\ 
    & \leq  C (M-1) (\delta_M)^3 \frac{h (\Delta)}{ \Delta^{d_R/2} }
      e^{- c  \tfrac{| y - x  |^2}{ \Delta }}, 
\end{align*}
for a non-decreasing function $h (\Delta)$ and some constants $C, c > 0$ independent of $x, y \in \mathbb{R}^{d_R}$ and $\theta \in \Theta$. The proof is now complete. 
\\

\noindent
({\it Proof of} (\ref{eq:bd_exp_weight})).
Recall that $\Gamma (z) := e^{K(z)} - (1 + z), \, z \in \mathbb{R}$, with Taylor expansion (\ref{eq:Taylor_exp}). We have
\begin{align*}
    & G^k_{\delta_M, x_{M-k}, \theta}
     = \prod_{l=1}^k 
    e^{
    K \bigl( 
      \pi_{e}
      ( 
        \delta_M, \widetilde{X}_{e, (l-1) \delta_M}^{\mathrm{EM}, (M), x_{M-k}},  B^{[M-k + l]}_{\delta_M} ; \theta 
      )  
    \bigr)  
    }  \\ 
    & = \prod_{l=1}^k 
     \Bigl\{ 1 
     + \pi_{e}
      ( 
        \delta_M, \widetilde{X}_{e, (l-1) \delta_M}^{\mathrm{EM}, (M), x_{M-k}},  B^{[M-k + l]}_{\delta_M} ; \theta 
      )
     + \Gamma \bigl(  
     \pi_{e}
      ( 
    \delta_M, \widetilde{X}_{e, (l-1) \delta_M}^{\mathrm{EM}, (M), x_{M-k}},  B^{[M-k + l]}_{\delta_M} ; \theta 
      ) 
      \bigr)
     \Bigr\} \\
     & \equiv \prod_{l=1}^k g^l_{\delta_M, x_{M-k}, \theta}. 
\end{align*}
Similarly, we define
\begin{align*}
    \widetilde{G}^k_{\delta_M, x_{M-k}, \theta} 
    & = \prod_{l=1}^k 
     \Bigl\{ 1 + 
     \pi_{e}
      ( 
       \delta_M, \widetilde{X}_{e, (l-1) \delta_M}^{\mathrm{EM}, (M), x_{M-k}},  B^{[M-k + l]}_{\delta_M} ; \theta 
      )
     \Bigr\} 
     \equiv \prod_{l=1}^k \widetilde{g}^l_{\delta_M, x_{M-k}, \theta}.
\end{align*}
Note that for any $j, e \in \mathbb{N}$
\begin{align} \label{eq:G_norm}
    \| {G}^k_{\delta_M, x_{M-k}, \theta}   \|_{j, 2e}
    \le \| \widetilde{G}^k_{\delta_M, x_{M-k}, \theta}   \|_{j, 2e} 
     + \| G^k_{\delta_M, x_{M-k}, \theta} 
    - \widetilde{G}^k_{\delta_M, x_{M-k}, \theta}   \|_{j, 2e}.
\end{align}
 Due to Lemma 2 in \cite{igu:22}, the first term of the right-hand side of (\ref{eq:G_norm}) 
is bounded by some positive constant $C = C(\Delta)$ independent of $(x_{M-k}, \theta) \in \mathbb{R}^{d_R} \times \Theta$ under conditions (\ref{assump:param_space})--(\ref{assump:coeff}). 
We consider the second term of the right hand side of (\ref{eq:G_norm}). Since it holds  
\begin{align*}
    &  G^k_{\delta_M, x_{M-k}, \theta} 
            - \widetilde{G}^k_{\delta_M, x_{M-k}, \theta}   \\ 
    & = \sum_{m = 1}^k  
      \biggl\{ \prod_{l=1}^{k-m} g^l_{\delta_M, x_{M-k}, \theta}  \biggr\}
      \, \Gamma \bigl( 
      \pi_{e }  
      (\delta_M, \widetilde{X}_{e, (k-m) \delta_M}^{\mathrm{EM}, (M), x_{M-k} }, B_{\delta_M}^{[M- m + 1]}   ; \theta)  
      \bigr) 
      \, 
      \biggl\{ \prod_{l=1}^{m-1} \widetilde{g}^{k - l +1}_{\delta_M, x_{M-k}, \theta}  \biggr\},
\end{align*}
we have
\begin{align*}
& \| G^k_{\delta_M, x_{M-k}, \theta} 
        - \widetilde{G}^k_{\delta_M, x_{M-k}, \theta}  \|_{j, 2e} \\ 
& \le \sum_{m = 1}^k 
\left\| \prod_{l=1}^{k-m} g^l_{\delta_M, x_{M-k}, \theta} \right\|_{j, p}
\bigl\| 
\Gamma \bigl( 
\pi_{e } (\delta_M, \widetilde{X}_{e, (k-m) \delta_M}^{\mathrm{EM}, (M), x_{M-k} }, B_{\delta_M}^{[M- m + 1]}   ; \theta)  
\bigr)
\bigr\|_{j, q}
\left\| \prod_{l=1}^{m-1} \widetilde{g}^{k - l +1}_{\delta_M, x_{M-k}, \theta} \right\|_{j, r}, 
\end{align*}
for some positive even integers $p,q,r$, where we used H\"older's inequality.   
Under conditions (\ref{assump:param_space})--(\ref{assump:coeff}), Lemma 2 in \cite{igu:22} and the uniformly boundedness property of the function $z \mapsto \exp (K(z))$ yield 
\begin{align*}
 \left\| \prod_{l=1}^{k-m} g^l_{\delta_M, x_{M-k}, \theta} \right\|_{j, p}
 \le C_1,   \quad 
 \left\| \prod_{l=1}^{m-1} \widetilde{g}^{k - l +1}_{\delta_M, x_{M-k}, \theta} \right\|_{j, r}
 \le C_2, \ \ 1 \le m \le k, 
\end{align*}
for some  constants $C_1, C_2 > 0$ independent of $x_{M-k}$ and $\theta$. 
Furthermore, applying a similar discussion as the one for obtaining the bound (\ref{eq:bd_weight_diff}), we have: 
\begin{align*}  
\Bigl\|
\Gamma  
\bigl( 
\pi_{e }  
(\delta_M, \widetilde{X}_{e, (k-m) \delta_M}^{\mathrm{EM}, (M), x_{M-k} }, B_{\delta_M}^{[M- m + 1]}   ; \theta)  
\bigr)  
\Bigr\|_{j, q} \le C (\delta_M)^3, 
 \end{align*} 
for some constant $C > 0$ independent of $x_{M-k}$ and $\theta$ under conditions (\ref{assump:param_space})--(\ref{assump:coeff}). 
The proof is now complete. 
\section{Additional Simulation Study Under High Frequency Observations Regime} \label{sec:fn} 
We consider the stochastic FitzHugh-Nagumo model used to describe the dynamics of action potential generation within an neuronal axon. The model is specified as a $2$-dimensional hypo-elliptic SDE driven by a scalar Brownian motion ($d = 2$, $d_S = d_R =1$):
\begin{align}
  \begin{aligned} \label{eq:FN}
    d X_t & = 
    \left[ \begin{array}{c}
      d X_{R,t} \\[0.1cm]
      d X_{S,t}
    \end{array} \right]
     = 
     \left[  \begin{array}{c}
        \gamma X_{S,t} - X_{R,t} + \alpha \\[0.1cm]
        \tfrac{1}{\varepsilon}
      \bigl( X_{S,t} - (X_{S,t})^3 - X_{R,t} - s \bigr)   
     \end{array}
     \right] dt 
     +
     \left[  \begin{array}{c}
     \sigma  \\[0.1cm]
     0  
     \end{array}
     \right] d B_t , \\ 
  \end{aligned}
\end{align}
where  $\gamma, \alpha, \varepsilon, s, \sigma$ are the parameters. Typically, $s$ is fixed in experiments, thus we will estimate parameter $\theta = (\gamma, \alpha, \varepsilon, \sigma)$ under the complete observations regime by making use of new contrast estimator and compare it against the local Gaussian contrast estimator proposed by \cite{glot:20}. The $\Phi_2$ term in the new contrast (\ref{eq:contrast}) for the model (\ref{eq:FN}) is given as: for $\Delta > 0, y \in \mathbb{R}^2, x = [x_R, x_S ]^\top \in \mathbb{R}^2$,  
\begin{gather*}
\Phi_2 (\Delta, x, y ; \theta) 
= - \tfrac{\sigma^2}{2} \times \mathcal{H}_{(1,1)} (\Delta, x, y; \theta)
+ \Bigl\{ \tfrac{\sigma^2}{2 \varepsilon} 
- \tfrac{\sigma^2}{6 \varepsilon^2} \bigl(1 - 3 (x_S)^2 \bigr) \Bigr\}
\times \mathcal{H}_{(1, 2)} (\Delta, x, y; \theta)  \\
+ 
\Bigl\{ 
- \tfrac{\sigma^2}{8 \varepsilon^2} 
+ \tfrac{\sigma^2}{8 \varepsilon^3} \bigl(1 - 3 (x_S)^2 \bigr) \Bigr\}
\times \mathcal{H}_{(2, 2)} (\Delta, x, y; \theta),  
\end{gather*}
with the Hermite polynomials $\mathcal{H}_{(i_1, i_2)} (\Delta, x, y; \theta)\, 1 \le i_1, i_2 \le 2$  defined in (\ref{eq:hermite}) in the main text. 
In our experiment, we fix $s= 0.01$ and set the true  parameter to
$$
\trueparam = (\gamma^\dagger, \alpha^\dagger, \varepsilon^\dagger, \sigma) = (1.50, 0.30, 0.10, 0.60).
$$ 
We generate `correct' synthetic datasets $Y_{\mathrm{FN}}$ over the time period $[0,100]$ under $\trueparam$ by using the local Gaussian scheme (\ref{eq:LG}) given in the main text with very small discretisation step $10^{-4}$. We will consider the following three scenarios for the number of datasets $n$ and the observation time step $\Delta_n$, by subsampling from the synthetic datasets $Y_{\mathrm{FN}}$: 
\begin{description}
\item[FN-1.] $(n, \Delta_n) = (5000, 0.02)$ with the time interval of observations $T \equiv n \Delta_n = 100$. 
\item[FN-2.] $(n, \Delta_n) = (10000, 0.01)$ with $T  = 100$.
\item[FN-3.] $(n, \Delta_n) = (20000, 0.005)$ with $T  = 100$.
\end{description}
We use the adaptive moments (Adam) optimiser to minimise the contrast estimators with the following algorithmic specifications:
(step-size) = $0.01$, (exponential decay rate for the first moment estimates) = $0.9$, (exponential decay rate for the second moment estimates) = $0.999$, (additive term for numerical stability) = $1 \times 10^{-8}$ and (number of iterations) = $20,000$. 
In Table \ref{table:est_fn}, we summarise the mean and standard deviation of the two contrast estimators as computed by $20$ trajectories under experimental design $\textbf{FN-3}$. 
For the drift parameters, the local Gaussian and the new contrast estimators provide similar accuracy, and give a very accurate estimation for $\varepsilon$, i.e.~the parameter at the drift of the smooth component, as also anticipated by the rapid convergence rate $\textstyle{\Delta_n^{1/2}{n}^{-1/2}}$ towards normality in Theorem~\ref{thm:asymp_norm}. 
Estimation of the diffusion parameter $\sigma$ is improved via use of the new contrast. We focus on the estimation  of $\sigma$. In Figure \ref{fig:rmes_fn}, we plot the root mean squared errors (RMSEs) for the two contrast estimators for $\sigma$ computed by $20$ replicates under the settings $\textbf{FN-1,2,3}$. The figure illustrates that the new estimator gives faster convergence to the true parameter than the local Gaussian contrast, which (arguably) relates to the weaker requirement of $\Delta_n=o(n^{-1/3})$ versus $\Delta_n=o(n^{-1/2})$ to obtain asymptotic normality for the estimators provided by the analytical results in this work -- see Theorem \ref{thm:asymp_norm} and Remark \ref{rem:vs_LocalGaussian} in the main text. 
\begin{table}
 \caption{\small Mean and standard deviation (in bracket) from $20$ parameter estimator replicates in the stochastic FitzHugh-Nagumo model (\ref{eq:FN}) under  scenario \textbf{FN-3}.}
 \label{table:est_fn}
 \centering 
  \begin{tabular}{cccc}
  \hline  
  Parameter & True value & {Local Gaussian contrast} & {New contrast} \\  
  \hline 
  $\gamma$ & 1.50 & 1.5085 (0.0711) 
  & 1.5040 (0.0717)  \\ 
  $\alpha$ & 0.30 & 0.3163 (0.0741) 
  & 0.3158 (0.0749) \\ 
  $\varepsilon$ & 0.10 & 0.1001 (0.0000) & 0.1000 (0.0001) \\
  $\sigma $ & 0.60 & 0.5918 (0.0012) & 0.6001 (0.0017)  \\
  \hline
  \end{tabular}
\end{table}

\begin{figure}[H] 
\caption{\small Root mean squared errors (RMSEs) obtained from $20$ replications of estimates for $\sigma$ in the stochastic FitzHugh-Nagumo model (\ref{eq:FN}), under scenarios \textbf{FN-1,2,3}.}
\label{fig:rmes_fn} 
\centering
\includegraphics[keepaspectratio, width=4.4cm]{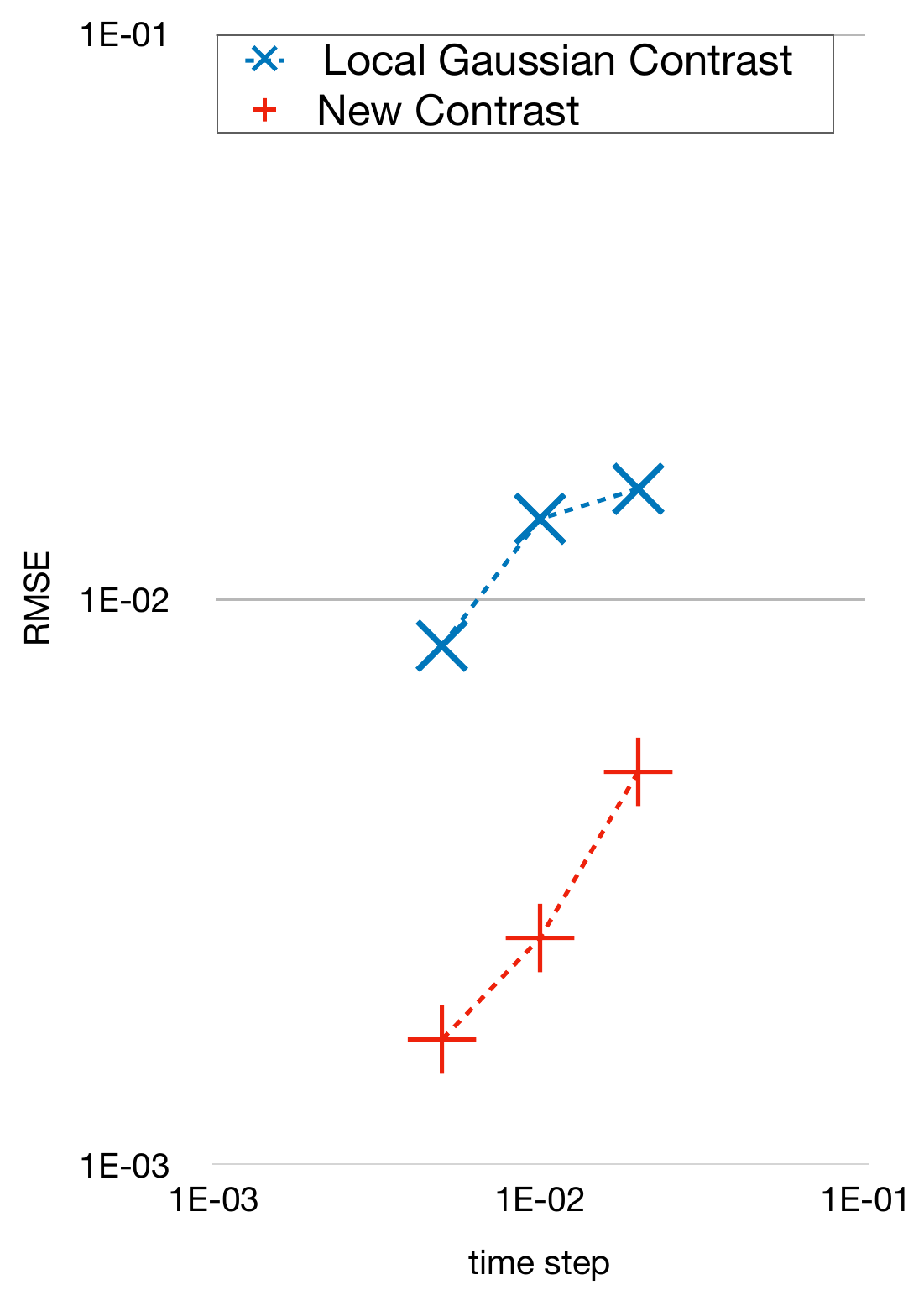}
\end{figure} 
\bibliographystyle{rss}
\bibliography{bernoulli} 

\end{document}